\providecommand{\main}{.}
\title{A Friendly Smoothed Analysis of the Simplex Method\thanks{Submitted to the editors June 2018}}
\author{Daniel Dadush \thanks{Centrum Wiskunde \& Informatica, The Netherlands. Supported by NWO Veni grant 639.071.510. Email: \href{mailto:dadush@cwi.nl}{dadush@cwi.nl}.}
\and Sophie Huiberts \thanks{Centrum Wiskunde \& Informatica, The Netherlands. Email: \href{mailto:sophie.huiberts@gmail.com}{sophie.huiberts@gmail.com}.}}
\begin{document}

\noteswarning

\maketitle

\begin{abstract}
Explaining the excellent practical performance of the simplex method for linear
programming has been a major topic of research for over 50 years. One of the
most successful frameworks for understanding the simplex method was given by
Spielman and Teng (JACM `04), who developed the notion of smoothed analysis.
Starting from an arbitrary linear program (LP) with $d$ variables and $n$
constraints, Spielman and Teng analyzed the expected runtime over random
perturbations of the LP, known as the smoothed LP, where variance $\sigma^2$
Gaussian noise is added to the LP data. In particular, they gave a two-stage
shadow vertex simplex algorithm which uses an expected $\widetilde{O}(d^{55}
n^{86} \sigma^{-30} + d^{70}n^{86})$ number of simplex pivots to solve the
smoothed LP. Their analysis and runtime was substantially improved by Deshpande
and Spielman (FOCS `05) and later Vershynin (SICOMP `09). The fastest current
algorithm, due to Vershynin, solves the smoothed LP using an expected
$O\big(\log^2 n \cdot \log\log n \cdot (d^3\sigma^{-4} + d^5\log^2 n + d^9\log^4
d)\big)$ number of pivots, improving the dependence on $n$ from polynomial to
poly-logarithmic.

While the original proof of Spielman and Teng has now been substantially
simplified, the resulting analyses are still quite long and complex and the
parameter dependencies far from optimal. In this work, we make substantial
progress on this front, providing an improved and simpler analysis of shadow
simplex methods, where our algorithm requires an expected
\[
O(d^2 \sqrt{\log n} ~ \sigma^{-2} + d^3 \log^{3/2} n)
\]
number of simplex pivots. We obtain our results via an improved \emph{shadow
bound}, key to earlier analyses as well, combined with improvements on
algorithmic techniques of Vershynin. As an added bonus, our analysis is
completely \emph{modular} and applies to a range of perturbations, which, aside
from Gaussians, also includes Laplace perturbations.  
\end{abstract}

\textbf{Key words.} Linear Programming, Shadow Vertex Simplex Method,
Smoothed Analysis.

\textbf{AMS subject classifications.}
  52B99, 
  68Q87, 
  68W40 

\section{Introduction}

The simplex method for linear programming (LP) is one of the most important
algorithms of the 20th century. Invented by Dantzig in
1947~\cite{report/Dantzig48,col/aapa/Dantzig51}, it remains to this day one of
the fastest methods for solving LPs in practice. The simplex method is not one
algorithm however, but a class of LP algorithms, each differing in the choice of
\emph{pivot rule}. At a high level, the simplex method moves from vertex to
vertex along edges of the feasible polyhedron, where the pivot rule decides which
edges to cross, until an optimal vertex or unbounded ray is found.  Important
examples include Dantzig's most negative reduced cost~\cite{col/aapa/Dantzig51},
the Gass and Saaty parametric objective~\cite{jour/nrlq/GS55} and Goldfarb's
steepest edge~\cite{jour/sparsemc/Goldfarb76} method. We note that for solving
LPs in the context of branch \& bound and cutting plane methods for integer
programming, where the successive LPs are ``close together'', the dual steepest
edge method~\cite{jour/mapr/FG92} is the \emph{dominant} algorithm in
practice~\cite{col/smo/BFGRW00,jour/docmath/Bixby12}, due its observed ability
to quickly re-optimize.

The continued success of the simplex method in practice is remarkable for two
reasons. Firstly, there is no known polynomial time simplex method for LP.
Indeed, there are exponential examples for almost every major pivot rule
starting with constructions based on \emph{deformed products}~\cite{report/KM70,
jour/dm/Jeroslow73,col/pc/AC78,jour/dam/GS79,jour/mapr/Murty80,report/Goldfarb83,jour/cm/AZ98},
such as the Klee-Minty cube~\cite{report/KM70}, which defeat most classical
pivot rules, and more recently based on Markov decision
processes (MDP)~\cite{conf/stoc/FHZ11,conf/ipco/Friedmann11}, which notably defeat
randomized and history dependent pivot rules. Furthermore, for an LP with $d$
variables and $n$ constraints, the fastest provable (randomized) simplex method requires
$2^{O(\sqrt{d\log(2 +(n-d)/d)})}$
pivots~\cite{conf/stoc/Kalai92,jour/algorithmica/MSW96,conf/stoc/HZ15}, while
the observed practical behavior is linear $O(d+n)$~\cite{jour/ms/Shamir87}.
Secondly, it remains
the most popular way to solve LPs despite the tremendous progress for polynomial
time methods~\cite{jour/dansssr/Khachiyan79}, most notably, interior point
methods~\cite{jour/comb/Karmarkar84,jour/mapr/Renegar88,jour/sjopt/Mehrotra92,conf/focs/LS14}.
How can we explain the simplex method's excellent practical performance?

This question has fascinated researchers for decades. An immediate question is
how does one model instances in ``practice'', or at least instances where
simplex should perform well? The research on this subject has broadly speaking
followed three different lines: the analysis of average case LP models, where
natural distributions of LPs are studied, the smoothed analysis of
arbitrary LPs, where small random perturbations are added to the LP data, and
work on structured LPs, such as totally unimodular systems and MDPs (Markov Decision Processes). We review
the major results for the first two lines in the next section, as they are the
most relevant to the present work, and defer additional discussion to the
related work section. To formalize the model, we consider LPs in $d$ variables and $n$
constraints of the following form:
\begin{equation} \begin{split}
\max &~ \vec{c}^\T \vec{x} \hspace{3em} \\
     &~ \vec A \vec x \leq \vec b .
\end{split}\end{equation}
We denote the feasible polyhedron $\vec A \vec x \leq \vec b$ by $P$. We now introduce
relevant details for the simplex methods of interest to this work.

\paragraph{Parametric Simplex Algorithms}

While a variety of pivot rules have been studied, the most successfully analyzed in
theory are the so-called parametric simplex methods, due to the useful geometric
characterization of the paths they follow. The first such method, and the main
one used in the context of smoothed analysis, is the parametric objective method
of Gass and Saaty~\cite{jour/nrlq/GS55}, dubbed the \emph{shadow (vertex)
simplex} method by Borgwardt~\cite{thesis/Borgwardt77}.  Starting at a
\emph{known} vertex $\vec v$ of $P$ maximizing an objective $\vec c'$, the
parametric objective method computes the path corresponding to the sequence of
maximizers for the objectives obtained by interpolating $\vec c' \rightarrow
\vec c$~\footnote{This path is well-defined under mild non-degeneracy
assumptions}. The name shadow vertex method is derived from the fact that the
visited vertices are in correspondence with those on the projection of $P$ onto
$W := \linsp(\vec c, \vec c')$, the 2D convex polygon known as the shadow of
$P$ on $W$ (see figure~\ref{fig:shadow} for an illustration). In particular, the number
of vertices traversed by the method is bounded by the number of vertices of the
projection, known as the \emph{size of the shadow}.

An obvious problem, as with most simplex methods, is how to initialize the
method at a feasible vertex if one exists. This is generally referred to as the
Phase I problem, where Phase II then corresponds to finding an optimal solution.
A common Phase I adds artificial variable(s) to make feasibility trivial and
applies the simplex method to drive them to zero.

A more general method, popular in the context of average case analysis, is the
self-dual parametric simplex method of Dantzig~\cite{Dantzig59}. In this method,
one \emph{simultaneously} interpolates the objectives $\vec{c}' \rightarrow
\vec{c}$ and right hand sides $\vec{b}' \rightarrow \vec{b}$ which has the
effect of combining Phase I and II. Here $\vec{c}'$ and $\vec{b}'$ are chosen to
induce a \emph{known} initial maximizer. While the polyhedron is no longer
fixed, the breakpoints in the path of maximizers (now a piecewise linear curve)
can be computed via certain primal and dual pivots. This procedure was in fact
generalized by Lemke~\cite{jour/ms/Lemke65} to solve linear complementarity
problems. We note that the self dual method can roughly speaking be simulated in a
higher dimensional space by adding an interpolation variable $\lambda$,
i.e.~$\vec A \vec x \leq \lambda \vec b + (1-\lambda) \vec b', 0 \leq \lambda \leq
1$, which has been the principal approach in smoothed analysis.

\subsection{Prior Work}

Here we present the main works in both average case and smoothed analysis which
inform our main results, presented in the next section. A common theme in these
works, which all study parametric simplex methods, is to first obtain a bound on
the expected parametric path length, with respect to some distribution on
interpolations and LPs, and then find a way to use the bounds algorithmically.
This second step can be non-obvious, as it is often the case that one cannot
directly find a starting vertex on the path in question.  We now present the
main random LP models that have been studied, presenting path length bounds and
algorithms. Lastly, as our results are in the smoothed analysis setting, we
explain the high level strategies used to prove smoothed (shadow) path bounds.

\paragraph{Average case Models} The first model, introduced in the seminal work of
Borgwardt \cite{thesis/Borgwardt77,jour/zor/Borgwardt82,Borgwardt87,jour/mor/Borgwardt99},
examined LPs of the form $\max \vec c^\T \vec x, \vec{A} \vec x \leq \vec 1$,
possibly with $\vec x \geq \vec 0$ constraints (note that this model is always
feasible at $\vec 0$), where the rows of $\vec A \in \R^{n \times d}$ are drawn
i.i.d.~from a rotationally symmetric distribution (RSD) and $\vec c \in \R^d
\setminus \{\vec 0\}$ is fixed and non-zero. Borgwardt proved tight bounds on the
expected shadow size of the feasible polyhedron when projected onto any fixed
plane. For general RSD, he proved a sharp $\Theta(d^2
n^{1/(d-1)})$~\cite{Borgwardt87,jour/mor/Borgwardt99} bound, tight for rows drawn
uniformly from the sphere, and for Gaussians a sharp $\Theta(d^{1.5}\sqrt{\log
n})$ bound~\cite{Borgwardt87}, though this last bound is only known
to hold asymptotically as $n \rightarrow \infty$ (i.e.,~very large compared to $d$).
On the algorithmic side, Borgwardt~\cite{jour/zor/Borgwardt82} gave a
\emph{dimension by dimension} (DD)
algorithm which optimizes over such polytopes
by traversing $d-2$ different shadow vertex paths. The DD algorithm proceeds by
iteratively solving the restrictions $\max \vec c^\T \vec x, \vec A \vec x \leq
1, x_i = 0, i \in \set{k+1,\dots,d}$, for $k \geq 2$, which are all of RSD type.

For the next class, Smale~\cite{jour/mapr/Smale83} analyzed the standard self dual method
for LPs where $\vec A$ and $(\vec c, \vec b)$ are chosen from independent RSM
distributions, where Megiddo \cite{jour/mapr/Megiddo86} gave the best known
bound of $f(\min \set{d,n})$ iterations, for some exponentially large function
$f$. Adler~\cite{report/Adler83} and Haimovich~\cite{report/Haimovich83}
examined a much weaker model where the data is fixed, but where the signs of all
the inequalities, including non-negativity constraints, are flipped uniformly at
random. Using the combinatorics of hyperplane arrangements, they achieved a
remarkable bound of $O(\min \set{d,n})$ for the expected length of parametric
paths. These results were made algorithmic shortly thereafter
~\cite{jour/mapr/Todd86,jour/jacm/AM85,jour/jc/AKS87}, where it was shown
that a lexicographic version of the parametric self dual simplex method\footnote{These
works use seemingly different algorithms, though they were shown to be
equivalent to a lexicographic self-dual simplex method by
Meggiddo~\cite{jour/mmor/Megiddo85}.} requires $\Theta(\min \set{d,n}^2)$
iterations, where tightness was established in \cite{jour/jacm/AM85}. While
these results are impressive, a notable criticism of the symmetry model is that
it results in infeasible LPs almost surely once $n$ is a bit larger than $d$.

\paragraph{Smoothed LP Models} The \emph{smoothed} analysis framework,
introduced in the breakthrough work of Spielman and Teng~\cite{jour/jacm/ST04},
helps explain the performance of algorithms whose worst-case examples are in
essence \emph{pathological}, i.e.,~which arise from very brittle structures in
instance data. To get rid of these structures, the idea is to add a small amount
of noise to the data, quantified by a parameter $\sigma$, where the general goal
is then to prove an expected runtime bound over any \emph{smoothed instance}
that scales inverse polynomially with $\sigma$. Beyond the simplex method,
smoothed analysis has been successfully
applied to many other algorithms such as interior point
methods~\cite{jour/mapr/ST03}, Gaussian elimination~\cite{jour/siamjmaa/SST06},
Lloyd's $k$-means algorithm~\cite{jour/jacm/AMR11}, the $2$-OPT heuristic for the
TSP~\cite{jour/algorithmica/ERV14}, and much more.

The smoothed LP model, introduced by~\cite{jour/jacm/ST04}, starts with any base LP
\begin{equation*}
\max \vec c^\T \vec x,~ \vecb A \vec x \leq \vecb b, \hspace{5em}
\text{(Base LP)}
\end{equation*}
$\vecb A \in \R^{n \times d}$, $\vecb b \in \R^n$, $\vec c \in \R^d \setminus
\set{\vec 0}$, where the rows of $(\vecb A, \vecb b)$ are normalized to have $\ell_2$
norm at most $1$.  From the base LP, we generate the smoothed LP by adding
Gaussian perturbations to both the constraint matrix $\vecb A$ and the right hand
side $\vecb b$.  Precisely, the data of the smoothed LP is
\begin{equation*}
\vec A = \vecb A + \hat{\vec A},~ \vec b = \vecb b + \hat{\vec b},~ \vec c,
\hspace{5em} \text{(Smoothed LP Data)}
\end{equation*}
where the perturbations $\hat{\vec A}$,$\hat{\vec b}$ have i.i.d.~mean $0$,
variance $\sigma^2$ Gaussian entries. Note that the objective is not
perturbed in this model, though we require that $\vec c \neq \vec 0$. An LP
algorithm is said to have polynomial smoothed complexity if for any base LP data
$\vecb A,\vecb b,\vec c$ as above, we have
\begin{equation*}
\E_{\hat{\vec A},\hat{\vec b}}[T(\vec A,\vec b,\vec c)] = \poly(n,d,1/\sigma),
\hspace{5em} \text{(Smoothed Complexity)}
\end{equation*}
where $T(\vec A,\vec b,\vec c)$ is the runtime of the algorithm on a given
smoothed instance. Crucially, this complexity measure allows for an inverse
polynomial dependence on $\sigma$, the perturbation size. Focusing on
the simplex method, $T$ will measure the number of simplex pivots used
by the algorithm as a proxy for the runtime.

Spielman and Teng~\cite{jour/jacm/ST04} proved the first polynomial smoothed
complexity bound for the simplex method. In particular, they gave a two phase
shadow vertex method which uses an expected $\widetilde{O}(d^{55} n^{86}
\sigma^{-30} + d^{70}n^{86})$ number of pivots. This bound was substantially
improved by Deshpande and Spielman~\cite{conf/focs/DS05} and
Vershynin~\cite{jour/siamjc/Vershynin09}, where Vershynin gave the fastest such
method requiring an expected \[O\big(\log^2 n \cdot \log\log n \cdot
(d^3\sigma^{-4} + d^5\log^2 n + d^9\log^4 d)\big)\] number of pivots.

In all these works, the complexity of the algorithms is reduced in a black box
manner to a shadow bound for \emph{smoothed unit LPs}. In particular, a smoothed
unit LP has a base system $\vecb A \vec x \leq \vec 1$, where $\vecb A$ has
row norms at most $1$, and smoothing is performed only to $\vecb A$. Here the
goal is to obtain a bound on the expected shadow size with respect to any fixed
plane. Note that if $\vecb A$ is the zero matrix, then this is exactly
Borgwardt's Gaussian model, where he achieved the asymptotically tight bound of
$\Theta(d^{1.5} \sqrt{\log n})$ as $n\to\infty$~\cite{Borgwardt87}. For smoothed unit LPs, Spielman and
Teng~\cite{jour/jacm/ST04} gave the first bound of $O(d^3 n \sigma^{-6} + d^6 n
\log^3 n)$. Deshpande and Spielman~\cite{conf/focs/DS05} derived a bound of $O(d
n^2\log n \sigma^{-2} + d^2 n^2 \log^2n)$, substantially improving the dependence
on $\sigma$ while doubling the dependence on $n$. Lastly, Vershynin~\cite{jour/siamjc/Vershynin09} achieved a
bound of $O(d^3 \sigma^{-4} + d^5 \log^2 n)$, dramatically improving the
dependence on $n$ to poly-logarithmic, though still with a worse dependence on
$\sigma$ than~\cite{conf/focs/DS05}.

Before discussing the high level ideas for how these bounds are proved, we
overview how they are used algorithmically. In this context,
\cite{jour/jacm/ST04} and \cite{jour/siamjc/Vershynin09} provide two different
reductions to the unit LP analysis, each via an interpolation method.
Spielman and Teng first solve the smoothed LP with respect to an artificial
``somewhat uniform'' right hand side $\vec b'$, constructed to force a randomly
chosen basis of $\vec A$ to yield a vertex of the artificial system. From here
they use the shadow vertex method to compute a maximizer for right hand side $\vec b'$,
and continue via interpolation to derive an optimal solution for $\vec b$.  Here
the analysis is quite challenging, since in both steps the LPs are not quite
smoothed unit LPs and the used shadow planes correlate with the perturbations.
To circumvent these issues, Vershynin uses a \emph{random vertex} (RV)
algorithm, which starts with $\vec b' = \vec{1}$ (i.e.~a unit LP) and adds a random additional
set of $d$ inequalities to the system to induce an ``uncorrelated known
vertex''. From this random vertex, he proceeds similarly to Spielman and Teng,
but now at every step the LP is of smoothed unit type and the used shadow planes
are (almost) independent of the perturbations. In Vershynin's approach, the main
hurdle was to give a simple shadow vertex algorithm to solve unit LPs, which
correspond to the Phase 1 problem. An extremely simple method for this was in
fact already given in the early work of Borgwardt~\cite{Borgwardt87}, namely,
the dimension by dimension (DD) algorithm. The application of the DD algorithm
in the smoothed analysis context was however only discovered much
later by Schnalzger~\cite{thesis/Schnalzger14}. As it is both simple and not
widely known, we will describe the DD algorithm and its analysis in
Section~\ref{sec:algorithms}.

We note that beyond the above model, smoothed analysis techniques have been used
to analyze the simplex method in other interesting settings.
In~\cite{jour/siamjc/BCMRR15}, the successive shortest path algorithm for
min-cost flow, which is a shadow vertex algorithm, was shown to be efficient
when only the objective (i.e.~edge costs) is perturbed.
In~\cite{conf/stoc/KS06}, Kelner and Spielman used smoothed analysis techniques
to give a ``simplex like'' algorithm which solves arbitrary LPs in polynomial
time. Here they developed a technique to analyze the expected shadow size when
only the right hand side of an LP is perturbed.

\paragraph{Shadow Bounds for Smoothed Unit LPs}

Let $\vec{a}_1,\dots,\vec{a}_n \in \R^d, i \in [n]$, denote the rows of the
constraint matrix of the smoothed unit LP $\vec A \vec x \leq \vec 1$. The goal
is to bound the expected number of vertices in the projection of the feasible
polyhedron $P$ onto a fixed 2D plane $W$. As noticed by Borgwardt, by a
duality argument, this number of vertices is upper bounded by the number of edges
in the \emph{polar polygon} (see figure~\ref{fig:shadow} for an illustration).
Letting $Q := \conv(\vec{a}_1,\dots,\vec{a}_n)$, the convex hull of the rows,
the polar polygon can be expressed as $D := Q \cap W$.

We overview the different approaches used in~\cite{jour/jacm/ST04,conf/focs/DS05,
jour/siamjc/Vershynin09} to bound the number of edges of $D$. Let $\vec{u}_\theta$,
$\theta \in [0,2\pi]$, denote an angular parametrization of the unit circle in
$W$, and let $\vec{r}_\theta = \vec {u}_\theta \cdot \R_{\geq 0}$ denote the
corresponding ray. Spielman and Teng~\cite{jour/jacm/ST04} bounded the
probability that any two nearby rays $\vec {r}_\theta$ and $\vec {r}_{\theta+\eps}$
intersect different edges of $D$ by a linear function of $\eps$. Summing this
probability over any fine enough discretization of the circle upper bounds the
expected number of edges of $D$~\footnote{One must be a bit more careful
when $D$
does not contain the origin, but the details are similar.}. Their probability
bound proceeds in two steps, first they estimate the probability that the
Euclidean distance between the intersection of $\vec r_\theta$ with its corresponding
edge and the boundary of that edge is small (the \emph{distance lemma}), and
second they estimate the probability that angular distance is small compared to
Euclidean distance (the \emph{angle of incidence bound}).
Vershynin~\cite{jour/siamjc/Vershynin09} avoided the use of the angle of
incidence bound by measuring the intersection probabilities with respect to the
``best'' of three different viewpoints, i.e.~where the rays emanate from a
well-chosen set of three equally spaced viewpoints as opposed to just the
origin. This gave a much more efficient reduction to the distance lemma, and in
particular allowed Vershynin to reduce the dependence on $n$ from linear to
poly-logarithmic. Deshpande and Spielman~\cite{conf/focs/DS05} bounded different
probabilities to get their shadow bound. Namely, they bounded the probability
that nearby objectives $\vec{u}_\theta$ and $\vec{u}_{\theta+\eps}$ are
maximized at different vertices of $D$. The corresponding discretized sum over
the circle directly bounds the number of vertices of $D$, which is the same as
the number of edges.

\paragraph{Complexity in two dimensions} In two dimensions, the shadow size
reduces to the complexity of the convex hull. The smoothed complexity
of the convex hull was first considered by Damerow and
Sohler~\cite{damerow2004extreme},
who obtained an upper bound of $O((1+\sigma^{-d})\log^{3d/2} n)$
expected non-dominated points in $d$ variables,
bounding the smoothed complexity of a two-dimensional convex hull by
$O((1+\sigma^{-2})\log^{3/2} n)$ vertices.
Schnalzger~\cite{thesis/Schnalzger14} proved a complexity bound of
$O(\sigma^{-2} + \log n)$ based on the discretized sum
approach of Spielman and Teng~\cite{jour/jacm/ST04}.
The best general bounds were proved by Devillers, Glisse, Goaoc, and Thomasse
\cite{devillers2016smoothed}, who considered both i.i.d. Gaussian noise
and i.i.d. uniform noise drawn from a scaling of the sphere $\delta\bbS^{d-1}$.
In the Gaussian case, they prove an upper bound of
$O((1+\sigma^{-1})\sqrt{\log n})$ and lower bounds of $\Omega(n)$ for $0 \leq \sigma \leq \frac{1}{n^2}$,
$\Omega(\frac{1}{\sqrt{\sigma}}\sqrt[4]{\log(n\sqrt{\sigma})})$ for
$\frac{1}{n^2}\leq\sigma\leq\frac{1}{\sqrt{\log n}}$ and
$\Omega(\sqrt{\ln n})$ for $\frac{1}{\sqrt{\log n}} \leq \sigma$.

\FloatBarrier

\subsection{Results}

While the original proof of Spielman and Teng has now been substantially
simplified, the resulting analyses are still complex and the parameter
improvements have not been uniform. In this work, we give a ``best of all
worlds'' analysis, which is both much simpler and improves all prior parameter
dependencies. Our main contribution is a substantially improved shadow bound,
presented below.

{
\renewcommand{\arraystretch}{1.3}
\begin{table}
\label{fig:shadow-bound-hist}
\centering
\begin{tabular}{|l|l|l|}
\hline Works & Expected Number of Vertices & Model \\
\hline \cite{jour/mor/Borgwardt99} & $\Theta(d^2 n^{1/(d-1)})$ & RSD \\
\hline \cite{Borgwardt87} & $\Theta(d^{3/2} \sqrt{\log n} )$ & Gaussian, $n \rightarrow \infty$ \\
\hline \cite{jour/jacm/ST04} & $O(d^3 n \sigma^{-6} + d^6 n \log^3 n)$ & Smooth \\
\hline \cite{conf/focs/DS05} & $O(d n^2\log n ~ \sigma^{-2} + d^2 n^2 \log^2 n )$ & Smooth \\
\hline \cite{jour/siamjc/Vershynin09} & $O(d^3 \sigma^{-4} +  d^5 \log^2 n )$ & Smooth \\
\hline \textcolor{red}{This paper} & $\textcolor{red}{O(d^2 \sqrt{\log n} ~ \sigma^{-2} +
 d^{2.5}\log^{3/2} n (1+\sigma^{-1}))}$ & Smooth \\
\hline
\end{tabular}
\caption{Shadow Bounds. Logarithmic factors are simplified.
The Gaussian, $n \rightarrow \infty$ lower bound applies in
the smoothed model as well.}
\end{table}
}
We note that some of the bounds below (including ours) only hold for $d \geq 3$.
Recalling the models, the results in the Table~\ref{fig:shadow-bound-hist} bound
the expected number of vertices in the projection of a random polytope $\vec
A\vec x \leq \vec 1$, $\vec A \in \R^{n \times d}$, onto any fixed
$2$-dimensional plane. The models differ in the class of distributions examined
for $\vec A$. In the RSD model, the rows of $\vec A$ are distributed i.i.d.
according to an \emph{arbitrary} rotationally symmetric distribution. In the
Gaussian model, the rows of $\vec A$ are i.i.d. mean zero standard Gaussian
vectors.  Note that this is a special case of the RSD model.  In the smoothed
model, the rows of $\vec A$ are $d$-dimensional Gaussian random vectors with
standard deviation $\sigma$ centered at vectors of norm at most $1$, i.e.~the
expected matrix $\E[\vec A]$ has rows of $\ell_2$ norm at most $1$.  The $n
\rightarrow \infty$ in the table indicates that that bound only holds for $n$
large enough (compared to $d$).  The Gaussian, $n\to\infty$ model is a special
case of the smoothed analysis model, and hence the $\Omega(d^{3/2}\sqrt{\log
n})$ bound also holds in the smoothed model for $n$ big enough.

As can be seen, our new shadow bound yields a substantial improvement over
earlier smoothed bounds in all regimes of $\sigma$ and is also competitive in
the Gaussian model. For small $\sigma$, our bound
improves the dependence on $d$
from $d^3$ to $d^2$, achieves the same $\sigma^{-2}$ dependence
as~\cite{conf/focs/DS05}, and improves the dependence on $n$ to $\sqrt{\log n}$.
For $\sigma \geq 1$, our bound becomes $O(d^{2.5} \log^{3/2} n)$, which in
comparison to Borgwardt's optimal (asymptotic) Gaussian bound is only off by a
$d \log n $ factor. Furthermore, our proof is substantially simpler than
Borgwardt's and holds for all $n$ and $d$.
No interesting lower bounds for the small $\sigma$ regime are known for
$d\geq 3$, though the results of
\cite{devillers2016smoothed,devillers2015smoothed,Borgwardt87} suggest that
the correct lower bound might be much lower than current upper bounds.  We leave these questions as open problems.
{\renewcommand{\arraystretch}{1.3}
\begin{center}
\begin{table}
\centering
\label{fig:running times}
\begin{tabular}{|l|l|l|l|}
\hline Works & Expected Number of Pivots & Model & Algorithm \\
\hline \cite{Borgwardt87,thesis/Schnalzger14} &
$d\cdot\mbox{max shadow size}$
& Multiple\!\! & DD + Int. LP\!\! \\
\hline \cite{Borgwardt87,thesis/Hofner95,jour/mor/Borgwardt99} & $O(d^{2.5} n^{1/(d-1)})$ & \begin{tabular}{@{}c@{}}RSD, \\ $n\to\infty$\end{tabular} & DD \\
\hline \cite{jour/siamjc/Vershynin09} &
$O\big(\log^3 n \cdot (d^3\sigma^{-4}\! + d^5\log^2 n + d^9\log^4 d)\big)$\!\! & Smooth & RV + Int.LP \\
\hline \textcolor{red}{This paper} & $\textcolor{red}{O(d^2 \sqrt{\log n}
~ \sigma^{-2} + d^3 \log^{3/2}n)}$ & Smooth & \begin{tabular}{@{}c@{}}Symmetric RV\!\!\!\!\! \\ + Int. LP\end{tabular} \\
\hline
\end{tabular}
\caption{Runtime bounds. Logarithmic factors are simplified.}
\end{table}
\end{center}}

An interesting point of our analysis is that it is \emph{completely modular},
and that it gives bounds for perturbations other than Gaussians. In fact, in
our approach it is easier to obtain bounds for Laplace perturbations (see
Section~\ref{sec:shadow-bounds}) than for the Gaussian distribution. The range
of analyzable perturbations still remains limited however, our analysis doesn't
extend to bounded perturbations such as uniform $[-1/\sigma,1/\sigma]$ for
example. As is well known, LPs in practice tend to be sparse and hence don't
follow a Gaussian distribution (which yields a totally dense constraint matrix).
It is thus of considerable interest to understand the smoothed behavior of the
simplex method under wider classes of perturbations, such as perturbations with
restricted support.

From the algorithmic perspective, we describe the two phase interpolation
approach of Vershynin~\cite{jour/siamjc/Vershynin09}, which we instantiate using
two different Phase 1 algorithms to solve unit LPs. As a warmup,
we first describe Schnalzger's application of the dimension by dimension (DD)
algorithm~\cite{thesis/Schnalzger14}, as it yields the simplest known Phase 1
algorithm and is not widely known. Following this, we introduce a new variant of
Vershynin's random vertex (Symmetric RV) algorithm which induces an artificial
(degenerate) random vertex by adding $2d-2$ inequalities placed symmetrically
around a randomly chosen objective. The symmetry condition ensures that this
random vertex optimizes the chosen objective with probability $1$. Vershynin's
original approach added $d$ random inequalities, which only guaranteed that
the induced vertex optimized the chosen objective with $1-1/\poly(d)$
probability. Via a more careful analysis of the RV algorithm combined with the
additional guarantees ensured by our variant, we derive a substantially improved
complexity estimate. Specifically, our Symmetric RV algorithm runs in time
$O(d^2\sqrt{\log n}~\sigma^{-2} + d^3\log^{3/2} n)$, which is faster than both
the original RV algorithm and Borgwardt's dimension by dimension algorithm in
all parameter regimes. We defer further discussion of this to
section~\ref{sec:algorithms} of the paper.

\FloatBarrier

\subsection{Techniques: Improved Shadow Bound}
\label{sec:proof-sketch}

We now give a detailed sketch of the proof of our improved shadow bound.  Proofs
of all claims can be found in section~\ref{sec:shadow-bounds}. The outline of the
presentation is as follows. To begin, we explain our general edge counting
strategy, where we depart from the previously discussed analyses. In particular, we
adapt the approach of Kelner and Spielman (KS)~\cite{conf/stoc/KS06}, who analyzed a smoothing model where only the right-hand side is perturbed, to
the present setting. Following this, we present a parametrized shadow bound,
which applies to any class of perturbations for which the relevant
parameters are bounded. The main motivation of the abstraction in the
parametrized model is us to clearly identify the relevant properties of the
perturbations we need to obtain shadow bounds. Lastly, we give the high-level
idea of how we estimate the relevant quantities in the KS approach within the
parametrized model.

\paragraph{Edge Counting Strategy} Our goal is to compute a bound
on the expected number of edges in the polygon $Q \cap W$, where $W$ is the
two-dimensional shadow plane, $Q := \conv(\vec{a}_1,\dots,\vec{a}_n)$ and
$\vec{a}_1,\dots,\vec{a}_n \in \R^d$ are the smoothed constraints of a unit LP.
Recall that this is an upper bound on the shadow size.

In~\cite{conf/stoc/KS06}, Kelner and Spielman developed a very elegant and
useful alternative strategy to bound the expected number of edges, which can be
applied to many distributions over 2D convex polygons. Whereas they analyzed the
geometry of the primal shadow polygon, the projection of $P$ onto $W$,
we will instead work with the geometry of the polar polygon $Q \cap W$. The analysis begins with the
following elementary identity:

\begin{equation}
\label{eq:perim-to-edges1}
\E[{\rm perimeter}(Q \cap W)] = \E[\sum_{\vec e \in {\rm edges}(Q \cap W)} {\rm length}(\vec e)] \text{ .}
\end{equation}

Starting from the above identity, the approach first derives a good upper bound on the
perimeter and a lower bound on the right-hand side in terms of the number
of edges and the minimum edge length. The bound on the number of edges is then
derived as the ratio of the perimeter bound and the minimum edge length.

We focus first on the perimeter upper bound. Since $Q \cap W$ is convex, any containing
circle has larger perimeter. Furthermore, we clearly have $Q \cap W \subseteq
\pi_W(Q)$, where $\pi_W$ is the orthogonal projection onto $W$. Combining these
two observations, we derive the first useful inequalities:

\begin{equation}
\label{eq:perim-bound}
\E[{\rm perimeter}(Q \cap W)] \leq \E[2\pi \max_{\vec{x} \in Q \cap W} \norm{\vec{x}}] \leq
\E[2\pi \max_{i \in [n]} \|\pi_W(\vec a_i)\|] \text{ .}
\end{equation}

To extract the expected number of edges from the right hand side
of~\eqref{eq:perim-to-edges1}, we first note that every edge of $Q \cap W$ is derived
from a facet of $Q$ intersected with $W$ (see Figure~\ref{fig:shadow} for an
illustration). Assuming non-degeneracy, the possible facets of $Q$ are $F_I :=
\conv(\vec a_i : i \in I)$, where $I
\subseteq [n]$ is any subset of size $d$. Let $E_I$ denote the event that $F_I$
induces an edge of $Q \cap W$, more precisely, that $F_I$ is a facet of $Q$ and that
$F_I \cap W \neq \emptyset$. From here, we get that
\begin{equation}
\label{eq:perim-to-edges2}
\begin{split}
\E[\sum_{\vec e \in {\rm edges}(Q \cap W)} {\rm length}(\vec e)]
&= \sum_{\abs{I}=d} \E[{\rm length}(F_I \cap W) \mid E_I] \Pr[E_I] \\
&\geq \min_{\abs{I}=d} \E[{\rm length}(F_I \cap W) \mid E_I] \cdot \sum_{\abs{I}=d} \Pr[E_I] \\
&= \min_{\abs{I}=d} \E[{\rm length}(F_I \cap W) \mid E_I] \cdot \E[\abs{{\rm
edges}(Q \cap W)}]~.
\end{split}
\end{equation}

Combining~\eqref{eq:perim-to-edges1},~\eqref{eq:perim-bound},~\eqref{eq:perim-to-edges2},
we derive the following fundamental bound:
\begin{equation}
\label{eq:main-bound}
\E[\abs{{\rm edges}(Q \cap W)}] \leq \frac{\E[2\pi \max_{i \in [n]} \norm{\pi_W(\vec
a_i)}]}{\min_{\abs{I}=d} \E[{\rm length}(F_I \cap W) \mid E_I]}~.
\end{equation}

In the actual proof, we further restrict our attention to potential edges having
probability $\Pr[E_I] \geq 2\binom{n}{d}^{-1}$ of appearing, which helps
control how extreme the conditioning on $E_I$ can be. Note that the edges appearing with
probability smaller than $2\binom{n}{d}^{-1}$ contribute at most $2$ to the expectated
number of edges, and
hence can be ignored. Thus our task now directly reduces to showing that the
maximum perturbation is not too large on average, an easy condition, while
ensuring that the edges that are not too unlikely to appear are reasonably long on
average, the more difficult condition.

We note that applying the KS approach already improves the situation with
respect to the maximum perturbation size compared to earlier analyses,
as~\cite{jour/jacm/ST04,conf/focs/DS05,jour/siamjc/Vershynin09} all require a
bound to hold with high probability as opposed to on expectation. For this purpose, they
enforced the condition $1/\sigma \geq \sqrt{d \log n}$ (for Gaussian
perturbations), which we do not require here.

\paragraph{Bound for Parametrized Distributions}
We now present the parameters of the pertubation distributions we use to obtain
our bounds on the enumerator and denominator of \ref{eq:main-bound}. We also
discuss how these parameters behave for the Gaussian and Laplace distribution.

Let us now assume that $\vec{a}_1,\dots,\vec{a}_n \in \R^d$ are independently
distributed. As before we assume that the centers $\vecb{a}_i : =
\E[\vec{a}_i]$, $i \in [n]$, have norm at most $1$. We denote the perturbations
by $\vech{a}_i := \vec a_i-\vecb a_i$, $i \in [n]$. We will assume for
simplicity of the presentation that all the perturbations
$\vech{a}_1,\dots,\vech{a}_n$ are i.i.d. according to a distribution
with probability density $\mu$ (in general, they could each have a distinct
distribution).

At a high-level, the main properties we require of the distribution are
that it be smooth and that it have sufficiently strong tail bounds. We formalize these requirements via
the following $4$ parameters, where we let $\vec X \sim \mu$ below:

\begin{enumerate}
\item $\mu$ is an $L$-log-Lipschitz probability density function, that is, \\
$\abs{\log \mu(\vec x) - \log \mu(\vec y)} \leq L \norm{\vec
x-\vec y}$, $\forall \vec x,\vec y \in \R^d$.
\item The variance of $\vec X$, when restricted to any line $l \subset \R^d$, is at least $\tau^2$.
\item The cutoff radius $R_{n,d} > 0$ is such that $\Pr[\norm{\vec X} \geq R_{n,d}] \leq \frac{1}{d\binom{n}{d}}$.
\item The $n$-th deviation $r_n$ is such that, for all $\vec \theta \in \R^d$, $\norm{\vec \theta}=1$, and $\vec X_1,\dots,\vec X_n$
i.i.d., we have $\E[\max_{i \in [n]}
\abs{\langle \vec X_i, \vec \theta \rangle}] \leq r_n$.
\end{enumerate}
We refer the reader to subsection~\ref{sub:dist-params} for more formal
definitions of these parameters. We note that these parameters naturally arise
from our proof strategy and directly expose the relevant quantities for our
shadow bound.

The first two parameters are smoothness related while the last two relate to
tail bounds. Using these four parameters, we will derive appropriate bounds
for the enumerator and denominator in \eqref{eq:main-bound}.
Assuming the above parameter bounds for
$\vech{a}_1,\dots,\vech{a}_n$, our main ``plug and play'' bound on the expected shadow
size is as follows (see Theorem~\ref{thm:abstractedbound}):
\begin{equation}
\label{eq:main-shadow-bnd}
\E[\abs{{\rm edges}(\conv(\vec a_1,\dots,\vec a_n) \cap W)}] = O(\frac{d^{1.5} L}{\tau}(1+R_{n,d})(1+r_n)) \text{ .}
\end{equation}

We can use this parametrized bound to prove the shadow bound for Gaussian and
Laplace distributed noise. For the variance $\sigma^2$ Gaussian distribution in
$\R^d$, it is direct to verify that $\tau = \sigma$ for any line (since every
line restriction results in a 1D variance $\sigma^2$ Gaussian), and from
standard tail bounds that $R_{n,d} = O(\sigma \sqrt{d \log n})$ and $r_n =
O(\sigma \sqrt{\log n})$. The only parameter that cannot be bounded directly is
the log-Lipschitz parameter $L$, since $\norm{\vec{x}/\sigma}^2/2$, the log of
the Gaussian density, is quadratic. For Laplace distributed perturbations
however, this last difficulty is completely avoided. Here a comparably sized Laplace
perturbation (i.e.~same expected norm) has density proportional to
$e^{-(\sqrt{d}/\sigma) \norm{\vec x}}$ which is by definition log-Lipshitz
with $L=\sqrt{d}/\sigma$. The other parameters are somewhat worse, it can be
shown that $R_{n,d} = O(\sigma \sqrt{d} \log n)$, $r_n = O(\sigma \log n)$ and
$\tau \geq \sigma/\sqrt{d}$, where in particular $\tau$ is a $\sqrt{d}$-factor
smaller than the Gaussian. Thus, for Laplace perturbations our parametrized
bound applies directly and yields a bound of $O(d^{2.5}\sigma^{-2})$ is the
small $\sigma$ regime.

To apply our analysis to the Gaussian setting, we start with the fact, noted in
all prior analyses, that the Gaussian is locally smooth within any fixed radius.
In particular, within radius $R_{n,d}$ of the mean, the Gaussian density is
$O(\sqrt{d \log n}/\sigma)$-log-Lipschitz. As events that happen with
probability $\ll \binom{n}{d}^{-1}$ have little effect on the expected shadow
bound (recall that the shadow is always bounded by $\binom{n}{d}$), one can hope
to condition on each perturbation living inside the $R_{n,d}$ radius ball. This
is in fact the approach taken in the prior
analyses~\cite{jour/jacm/ST04,conf/focs/DS05,jour/siamjc/Vershynin09}. This
conditioning however does not ensure full log-Lipshitzness and causes problems
for points near the boundary. Furthermore, the conditioning may also decrease
line variances for lines near the boundary.

To understand why this is problematic, we note that the main role of the
smoothness parameters $L$ and $\tau$ is to ensure enough ``wiggle-room'' to
guarantee that edges induced by any fixed basis are long on expectation. Using
the above conditioning, it is clear that edges induced by facets whose
perturbations occur close to the $R_{n,d}$ boundary must be dealt with
carefully. To avoid such difficulties altogether, we leverage the local
log-Lipshitzness of the Gaussian in a ``smoother'' way. Instead of conditioning,
we simply replace the Gaussian density with a globally $O(\sqrt{d \log
n}/\sigma)$-log-Lipshitz density which has statistical distance $\ll
\binom{n}{d}^{-1}$ to the Gaussian (thus preserving the shadow bound) and also
yields nearly identical bounds for the other parameters. This distribution will
consist of an appropriate gluing of a Gaussian and Laplace density, which we
call the Laplace-Gaussian distribution (see section~\ref{sec:shadow-gaussian}
for details). Thus, by slightly modifying the distribution, we are able to use
our parametrized model to obtain shadow bounds for Gaussian perturbations in a
black box manner.



\paragraph{Bounding the Perimeter and Edge Length} We now briefly describe
how the perimeter and minimum edge length in \eqref{eq:main-bound} are bounded in our parametrized
perturbation model to obtain \eqref{eq:main-shadow-bnd}. As this is the most technical part of the analysis, we refer
the reader to the proofs in section~\ref{sec:shadow-bounds} and give only a
very rough discussion here. As above, we will assume that the perturbations
satisfy the bounds given by $L,\tau,R_{n,d},r_n$.

For the perimeter bound, we immediately derive the bound
\[
\E[\max_{i \in [n]} \norm{\pi_W(\vec{a}_i)}]
\leq 1 + \E[\max_{i \in [n]} \|\pi_W(\vech{a}_i)\|] \leq 1 + 2r_n
\]
by the triangle inequality. From here, we must bound the minimum expected edge
length, which requires the majority of the work. For this task, we provide a
clean analysis, which shares high-level similarities with the Spielman and Teng
distance lemma, though our task is simpler. Firstly, we only need to
show that an edge is large on average, whereas the distance lemma has the more
difficult task of proving that an edge is unlikely to be small. Second, our
conditioning is much milder.  Namely, the distance lemma conditions a facet
$F_I$ on intersecting a specified ray $\vec{r}_\theta$, whereas we only
condition $F_I$ on intersecting $W$.  This conditioning gives the edge much more
``wiggle room'', and is the main leverage we use to get the factor $d$
improvement.

Let us fix $F := F_{[d]} = \conv(\vec a_1,\dots,\vec a_d)$ as the potential
facet of interest, under the assumption that $E := E_{[d]}$, i.e.,
that $F$ induces an edge of $Q \cap W$, has probability at least $2\binom{n}{d}^{-1}$. Our
analysis of the edge length conditioned on $E$ proceeds as follows:

\begin{enumerate}
\item Show that if $F$ induces an edge, then under this conditioning $F$ has
\emph{small diameter} with good probability, namely its vertices are all at
distance at most $O(1 + R_{n,d})$ from each other
(Lemma~\ref{lem:boundedprojectedsimplex}). This uses the tailbound defining
$R_{n,d}$ and the fact that $E$ occurs with non-trivial probability.
\item Condition on $F$ being a facet of $Q$ by fixing its containing affine
hyperplane $H$ (Lemma~\ref{lem:conditiononhyperplane}). This is standard and
is achieved using a change of variables analyzed by Blaschke (see
section~\ref{sec:blaschke} for details).
\item Let $l := H \cap W$ denote the line which intersects $F$ to form an edge of
$Q \cap W$. Show that on average the longest chord of $F$ parallel to $l$ is long.
We achieve the bound $\Omega(\tau/\sqrt{d})$ (Lemma~\ref{lem:heightofsimplex})
using that the vertices of $F$
restricted to lines parallel to $l$ have variance at least $\tau^2$.
\item Show that on average $F$ is pierced by $l$ through a chord that is not
too much shorter than the longest one. Here we derive the final bound
on the expected edge length of
\[
\E[{\rm length}(F \cap W) \mid E] = \Omega((\tau/\sqrt{d}) \cdot 1/(d L(1 + R_{n,d}))) \text{ (Lemma~\ref{lem:convexcombinationspace})}
\]
using the fact that the distribution of the vertices is $L$-log-Lipschitz and that
$F$ has diameter $O(1+R_{n,d})$.
\end{enumerate}
This concludes the high-level discussion of the proof.

\subsection{Related work}

\paragraph{Structured Polytopes} An important line of work has been to
study LPs with good geometric or combinatorial properties. Much work has been
done to analyze primal and dual network simplex algorithms for fundamental
combinatorial problems on flow polyhedra such as bipartite
matching~\cite{jour/or/Hung83}, shortest
path~\cite{jour/networks/DGKK79,jour/or/GHK90}, maximum
flow~\cite{jour/mapr/GH90,jour/mapr/GGT91} and minimum cost
flow~\cite{report/Orlin84,jour/algorithmica/GH92,jour/mapr/OPT93}. Generalizing
on the purely combinatorial setting, LPs where the constraint matrix $\vec A \in \Z^{n
\times d}$ is totally unimodular (TU), i.e.~the determinant of any square
submatrix of $\vec A$ is in $\set{0,\pm 1}$, were analyzed by Dyer and
Frieze~\cite{jour/mapr/DF94}, who gave a random walk based simplex algorithm
which requires $\poly(d,n)$ pivots. Recently, an improved random walk approach was
given by Eisenbrand and Vempala~\cite{jour/mapr/EV17}, which works in the more
general setting where the subdeterminants are bounded in absolute value by
$\Delta$, who gave an $O(\poly(d,\Delta))$ bound on the number of Phase II pivots (note
that there is no dependence on $n$). Furthermore, randomized variants of the
shadow vertex algorithm were analyzed in this setting
by~\cite{conf/stacs/BGR15,jour/dcg/DH16}, where in
particular~\cite{jour/dcg/DH16} gave an expected $O(d^5 \Delta^2 \log(d \Delta))$
bound on the number of Phase I and II pivots.
Another interesting class of structured polytopes
comes from the LPs associated with Markov Decision Processes (MDP), where
simplex rules such as Dantzig's most negative reduced cost correspond to
variants of policy iteration. Ye~\cite{jour/mor/Ye11} gave polynomial bounds for
Dantzig's rule and Howard's policy iteration for MDPs with a fixed discount
rate, and Ye and Post~\cite{jour/mor/PY15} showed that Dantzig's rule converges
in strongly polynomial time for deterministic MDPs with variable discount rates.

\paragraph{Diameter Bounds} Another important line of research has been to
establish diameter bounds for polyhedra, namely to give upper bounds on the
shortest path length between any two vertices of a polyhedron as a function of
the dimension $d$ and the number of inequalities $n$. For any simplex method pivoting on
the vertices of a fixed polytope, the diameter is clearly a lower bound on the
worst-case number of pivots. The famous Hirsch conjecture from 1957,
posited that for polytopes (bounded polyhedra) the correct bound should be
$n-d$. This precise bound was recently disproven by
Santos~\cite{jour/am/Santos12}, who gave a $43$ dimensional counter-example,
improved to $20$ in~\cite{jour/plms/MSW15}, where the diameter is about $1.05(n-d)$ (these counter-examples can also be extended to infinite families).
However, the possibility of a polynomial (or even linear) bound is still left
open, and is known as the polynomial Hirsch conjecture. From this standpoint,
the best general results are the $O(2^d n)$ bound by
Barnette~\cite{jour/dm/Barnette74} and Larman~\cite{jour/plms/Larman70}, and the
quasi-polynomial bound of Kalai and
Kleitman~\cite{jour/bams/KK92}, recently refined by
Todd~\cite{jour/sjdm/Todd14} and Sukegawa~\cite{jour/dcg/sukegawa18} to $(n-d)^{\log_2 O(d/\log d)}$. As above, such bounds have been studied for
structured classes of polytopes. In particular, the diameter of polytopes with
bounded subdeterminants was studied by various
authors~\cite{jour/mapr/DF94,jour/dcg/BSEHN14,jour/dcg/DH16}, where the best
known bound of $O(d^3 \Delta^2 \log(d \Delta))$ was given
in~\cite{jour/dcg/DH16}. The diameters of other classes such as $0/1$
polytopes~\cite{jour/mapr/Naddef89}, transportation
polytopes~\cite{jour/mor/Balinski84,jour/combinatorica/BHS06,jour/jct/LKOS09,jour/mapr/BLF17}
and flag polytopes~\cite{jour/mor/AB14} have also been studied.

\subsection{Organization}

Section~\ref{sec:prelims} contains basic definitions and background material.
The proofs of our shadow bounds are given in section~\ref{sec:shadow-bounds},
\ref{sec:shadow-laplace}
and~\ref{sec:shadow-gaussian} respectively. The details regarding the
two phase shadow vertex algorithm we use, which rely in an almost
black box way on the shadow
bound, are presented in section~\ref{sec:algorithms}.

\section{Preliminaries}
\label{sec:prelims}

\subsection{Notation and basic definitions}
\hspace{1em}\vspace{1em}

\begin{itemize}
 \item Vectors are printed in bold to contrast with scalars: $\vec x =
(x_1,\dots,x_d) \in \R^d$. The space $\R^d$ comes with a standard basis $\vec
e_1,\dots,\vec e_d$. We abbreviate $\vec 1 := (1,1,\dots,1)$ and $\vec 0 :=
(0,0,\dots,0)$. Vector inequalities are defined coordinate-wise: $\vec v \leq \vec w$ if and only if $v_i \leq w_i$ for all $i \leq d$.
 \item We abbreviate $[n] := \set{1,\dots,n}$ and $\binom{[n]}{d} = \set{I
\subset [n] \mid \abs{I} = d}$. For $a, b \in \R$ we have intervals $[a,b] =
\set{r \in \R : a \leq r \leq b}$ and $(a, b) = \set{r \in \R : a < r < b}$.
\item For $x > 0$, we define $\log x$ to be the logarithm base $e$ of $x$.
\item For a set $C \subseteq \R^n$, we denote its topological boundary by
$\partial C$.
\item For a set $A$, we use the notation $\mathbbm{1}[\vec x \in
A]$ to denote the indicator function of $A$, i.e., $\mathbbm{1}[\vec x \in A] = 1$
if $\vec x \in A$ and $0$ otherwise.
\item For $A, B \subset \R^d$ we write the Minkowski sum $A + B =
\set{\vec a + \vec b : \vec a \in A, \vec b \in B}$. For a vector $\vec v \in
\R^d$ we write $A + \vec v = A + \set{\vec v}$. For a set of scalars $S \subset
\R$ we write $\vec v \cdot S = \set{s\vec v : s \in S}$.
\item The inner product of $\vec x$ and $\vec y$ is written with as $\vec
x^{\mathsf{T}} \vec y = \sum_{i=1}^d x_iy_i$. We use the $\ell_2$-norm
$\norm{\vec x}_2 = \sqrt{\inner{\vec x}{\vec x}}$ and the $\ell_1$-norm
$\norm{\vec x}_1 = \sum_{i=1}^d \abs{x_i}$. Every norm without subscript is the
$\ell_2$-norm. The unit sphere in $\R^d$ is $\bbS^{d-1} = \set{\vec x \in \R^d : \norm{\vec x} = 1}$ and the unit ball is $\calB_2^d = \set{\vec x \in \R^d : \norm{\vec x} \leq 1}$.
 \item A set $V + \vec p$ is an affine subspace if $V \subset \R^d$ is a linear subspace. If $S \subset \R^d$ then the affine hull $\aff(S)$ is the smallest affine subspace containing $S$. We say $\dim(S) = k$ if $\dim(\aff(S)) = k$.
 \item For any linear or affine subspace $V \subset \R^d$ the orthogonal projection onto $V$ is denoted by $\pi_V$.
 \item For a linear subspace $V \subseteq \R^d$, we denote its orthogonal complement by $V^\perp = \set{\vec x \in \R^d : \inner{\vec v}{\vec x} = 0,\, \forall\, \vec v \in V}$. For $\vec v \in \R^d$ we abbreviate $\vec v^\perp := \linsp(\vec v)^\perp$.
 \item We write $\vol_k(S)$ for the $k$-dimensional volume of $S$. The $1$-dimensional volume of a line segment $l$ will also be written as $\length(l)$.
 \item We say vectors $\vec a_1,\dots,\vec a_k$ in $\R^d$ are \emph{affinely independent} if there is no $(k-2)$-dimensional affine subspace containing all of $\vec a_1,\dots,\vec a_k$. Algebraically, $\vec a_1,\dots,\vec a_k$ are affinely independent if the system $\sum_{i\leq k} \lambda_i \vec a_i = \vec 0, \sum_{i\leq k} \lambda_i = 0$ has no non-trivial solution.
 \item For $\vec A \in \R^{n\times d}$ a matrix and $B \subset [n]$ we write $\vec A_B \in \R^{\abs{B}\times d}$ for the submatrix of $\vec A$ consisting of the rows indexed in $B$, and for $\vec b \in \R^n$ we write $\vec b_B$ for the restriction of $\vec b$ to the coordinates indexed in $B$.
\end{itemize}

\subsection{Convexity}
A polyhedron $P$ is of the form $P = \set{\vec x \in \R^d : \vec A\vec x \leq
\vec b}$ for $\vec A \in \R^{n\times d}, \vec b \in \R^n$. A face $F \subseteq
P$ is a convex subset such that  if $\vec x,\vec y \in P$ and for $\lambda \in
(0,1)$  $\lambda \vec x + (1-\lambda)\vec y \in F$, then $\vec x, \vec y \in F$.
In particular, a set $F$ is a face of the polyhedron $P$ if and only if there
exists $I \subset [n]$ such that $F$ coincides with $P$ intersected with $\vec
a_i^\T \vec x = b_i, \forall i \in I$. A zero-dimensional face is called a
vertex, one-dimensional face is called an edge, and a $\dim(P)-1$-dimensional
face is called a facet. We use the notation $\vertices(P)$ to denote the set of
vertices of $P$ and $\edges(P)$ for the set of edges of $P$.

A set $S \subset \R^d$ is convex if for all $\vec x,\vec y \in S, \lambda
\in [0,1]$ we have $\lambda \vec x + (1-\lambda)\vec y \in S$. We write
$\conv(S)$ to denote the convex hull of $S$, which is the intersection of all
convex sets $T \supset S$. In a $d$-dimensional vector space, the convex hull equals

\[
\conv(S) = \set{\sum_{i=1}^{d+1} \lambda_i \vec s_i : \lambda_1,\dots,\lambda_{d+1} \geq 0, \sum_{i=1}^{d+1} \lambda_i = 1, \vec s_1,\dots,\vec s_{d+1} \in S}.
\]
For $\vec x, \vec y \in \R^d$ the line segment between $\vec x$ and $\vec y$ is
denoted $[\vec x, \vec y] = \conv(\set{\vec x,\vec y})$.

We will need the following classical comparison inequality for surface areas of
convex sets (see for example \cite[Chapter 7]{bonnesen1987theory}).

\begin{lemma}[Monotonicity of Surface Area] \label{lem:monotonicity}
If $K_1 \subseteq K_2 \subset \R^d$ are compact full-dimensional convex sets,
then $\vol_{d-1}(\partial K_1) \leq \vol_{d-1}(\partial K_2)$.
\end{lemma}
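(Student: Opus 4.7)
The plan is to exploit the metric (nearest-point) projection onto $K_1$. Let $\pi: \R^d \to K_1$ send each point $\vec{x}$ to its unique nearest point in $K_1$, which is well-defined since $K_1$ is closed and convex. I would first recall the classical fact that $\pi$ is $1$-Lipschitz: applying the variational characterization $\inner{\vec{x} - \pi(\vec{x})}{\vec{w} - \pi(\vec{x})} \leq 0$ for all $\vec{w} \in K_1$ to the pair $(\vec{x}, \vec{y})$ twice, at $\vec{w} = \pi(\vec{y})$ and $\vec{w} = \pi(\vec{x})$ respectively, and summing, yields $\norm{\pi(\vec{x}) - \pi(\vec{y})}^2 \leq \inner{\vec{x} - \vec{y}}{\pi(\vec{x}) - \pi(\vec{y})}$, and Cauchy--Schwarz then gives $\norm{\pi(\vec{x}) - \pi(\vec{y})} \leq \norm{\vec{x} - \vec{y}}$.

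The key geometric step is to show that $\pi$ restricted to $\partial K_2$ surjects onto $\partial K_1$. Given any $\vec{y} \in \partial K_1$, pick an outward unit normal $\vec{v}$ to a supporting hyperplane of $K_1$ at $\vec{y}$, which exists by the supporting hyperplane theorem and the full-dimensionality of $K_1$. The ray $\set{\vec{y} + t\vec{v} : t \geq 0}$ starts in $K_2$, since $\vec{y} \in K_1 \subseteq K_2$, and must eventually leave $K_2$ because $K_2$ is bounded; so there exists $t^{\star} \geq 0$ with $\vec{z} := \vec{y} + t^{\star}\vec{v} \in \partial K_2$ (taking $t^{\star} = 0$ if $\vec{y} \in \partial K_2$ already). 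Since $\vec{z} - \vec{y} = t^{\star}\vec{v}$ lies in the outward normal cone of $K_1$ at $\vec{y}$, the variational characterization forces $\pi(\vec{z}) = \vec{y}$, establishing $\partial K_1 \subseteq \pi(\partial K_2)$.

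The proof then concludes via a standard fact from geometric measure theory: a $1$-Lipschitz map does not increase $(d-1)$-dimensional Hausdorff measure $\mathcal{H}^{d-1}$. Using that for the boundary of a compact full-dimensional convex body the classical surface area coincides with $\mathcal{H}^{d-1}$ (the boundary is locally a graph of a convex function, hence locally Lipschitz and $(d-1)$-rectifiable), we obtain
\[
\vol_{d-1}(\partial K_1) = \mathcal{H}^{d-1}(\partial K_1) \leq \mathcal{H}^{d-1}(\pi(\partial K_2)) \leq \mathcal{H}^{d-1}(\partial K_2) = \vol_{d-1}(\partial K_2).
\]

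The main obstacle I anticipate is invoking the measure-theoretic step cleanly, namely the non-expansion of $\mathcal{H}^{d-1}$ under $1$-Lipschitz maps together with the identification of $\mathcal{H}^{d-1}(\partial K)$ with the classical surface area; everything else is elementary convex geometry. As a backup that sidesteps Hausdorff measure entirely, I would fall back on Cauchy's surface area formula $\vol_{d-1}(\partial K) = c_d \int_{\bbS^{d-1}} \vol_{d-1}(\pi_{\vec{u}^{\perp}}(K))\, d\vec{u}$: the inclusion $K_1 \subseteq K_2$ implies $\pi_{\vec{u}^{\perp}}(K_1) \subseteq \pi_{\vec{u}^{\perp}}(K_2)$ for every direction $\vec{u}$, which for convex bodies in $\R^{d-1}$ gives the pointwise inequality of the integrands, and integrating over $\bbS^{d-1}$ yields the conclusion.
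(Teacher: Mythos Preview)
Your proposal is correct, but the paper does not actually prove this lemma at all: it merely states it as a classical fact with a reference to Bonnesen and Fenchel, \emph{Theory of Convex Bodies}, Chapter~7. So there is nothing to compare against on the paper's side.

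Both of your arguments are standard and valid. The nearest-point projection argument is clean once one accepts the measure-theoretic input you flagged (non-expansion of $\mathcal{H}^{d-1}$ under $1$-Lipschitz maps and the identification of $\mathcal{H}^{d-1}(\partial K)$ with surface area for convex bodies). Your Cauchy's formula backup is the more elementary route and avoids geometric measure theory entirely; this is essentially the argument one would find in the cited reference. Either would be perfectly acceptable as a self-contained proof to insert in place of the citation.
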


\subsection{Random Variables}~

For a random variable $X \in \R$, we denote its expectation (mean) by $\E[X]$
and its variance by $\Var(X) := \E[(X-\E[X])^2]$. For a random vector $\vec X
\in \R^n$, we define its expectation (mean) $\E[\vec X] :=
(\E[X_1], \dots, \E[X_n])$ and its variance (expected squared distance from the
mean) $\Var(\vec X) := \E[\norm{\vec X - \E[\vec X]}^2]$.

For jointly distributed $X \in \Omega_1, Y \in \Omega_2$, we will often minimize
the expectation of $X$ over instantiations $y \in A \subset \Omega_2$. For this,
we use the notation
\begin{displaymath}
\min_{Y \in A} \E[X\mid Y] := \min_{y \in A}\E[X\mid Y=y].
\end{displaymath}
If $\mu$ is a probability density function, we write $x \sim \mu$ to
denote that $x$ is a random variable distributed with probability density
$\mu$.

For an event $E \subseteq \Omega$ in a measure space, we write $E^c :=
\Omega \setminus E$ to denote its complement.

\subsubsection{Gaussian distribution}
\label{sub:gaussian}
\begin{definition}
The \emph{Gaussian distribution}\index{Gaussian distribution} or \emph{normal distribution}\index{normal distribution|see {Gaussian distribution}} $N_d(\vecb a, \sigma)$\index{$N_d(\vecb a, \sigma)$|see {Gaussian distribution}} in $d$ variables with mean $\vecb a$ and standard deviation $\sigma$ has density
$(2\pi)^{-d/2} e^{-\norm{\vec x - \bar{\vec a}}^2/(2\sigma^2)}$. We abbreviate $N_d(\sigma) = N_d(\vec 0, \sigma)$.
\end{definition}

Important facts about the Gaussian distribution include:
\begin{itemize}
 \item Given a $k$-dimensional affine subspace $W \subseteq \R^d$, if $\vec X$ is $N_d(\vecb a, \sigma)$-distributed then both the orthogonal projection $\pi_W(\vec X)$ and the restriction of $\vec X$ to $W$ are $N_k(\pi_W(\vecb a), \sigma)$-distributed in $W$.
 \item For $\vec X \sim N_d(\vecb a,\sigma)$ we have $\E[\vec X] = \vecb a$ and $\E[(\inner{(\vec X - \vecb a)}{\vec\theta})^2] = \sigma^2$ for all $\vec \theta \in \bbS^{d-1}$.
 \item The expected squared distance to the mean is $\E[\norm{\vec X - \vecb a}^2] = d\sigma^2$.
 \item The moment generating function of $X \sim N_1(0,\sigma)$ is
$\E[e^{\lambda X}] = e^{\lambda^2\sigma^2/2}$, for all $\lambda \in \R$, and that
of $X^2$ is $\E[e^{\lambda X^2}] = 1/\sqrt{1-2\lambda\sigma}$ for $\lambda < 1/(2\sigma)$. \index{moment generating function}
\end{itemize}

We will need the following tail bound for Gaussian random variables. We include a proof for completeness.

\begin{lemma}[Gaussian tail bounds]
\label{lem:gaussian-tails}
For $\vec{X} \in \R^d$ distributed as $N_d(\vec{0},\sigma)$, $t \geq 1$,
\begin{equation}
\label{eq:gauss-full-d}
\Pr[\norm{\vec{X}} \geq t \sigma \sqrt{d}] \leq e^{-(d/2)(t-1)^2} \text{ .}
\end{equation}
For $\vec \theta \in \bS^{d-1}$ and $t \geq 0$,
\begin{equation}
\label{eq:gauss-1d}
\Pr[\abs{\pr{\vec X}{\vec \theta}} \geq t \sigma] \leq 2e^{-t^2/2} \text{ .}
\end{equation}
\end{lemma}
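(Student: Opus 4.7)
The plan is to prove both bounds by Chernoff-style arguments, i.e.\ by applying Markov's inequality to the moment generating functions (MGFs) already recorded in the preliminaries for $N_1(0,\sigma)$ variables.

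For \eqref{eq:gauss-1d}, I would first observe that $Y := \pr{\vec X}{\vec\theta}$ is distributed as $N_1(0,\sigma)$: by the projection property of Gaussians stated just above, restricting $\vec X \sim N_d(\vec 0,\sigma)$ to the line $\linsp(\vec\theta)$ yields a one-dimensional Gaussian with standard deviation $\sigma$, and $\inner{\vec X}{\vec\theta}$ is exactly this restriction since $\norm{\vec\theta}=1$. For any $\lambda > 0$, Markov on $e^{\lambda Y}$ combined with the MGF identity $\E[e^{\lambda Y}] = e^{\lambda^2\sigma^2/2}$ gives
\[
\Pr[Y \geq t\sigma] \;\leq\; e^{-\lambda t\sigma}\,\E[e^{\lambda Y}] \;=\; \exp\!\bigl(\lambda^2\sigma^2/2 - \lambda t\sigma\bigr).
\]
Optimizing at $\lambda = t/\sigma$ yields $e^{-t^2/2}$. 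The same bound for the lower tail follows by applying the argument to $-Y$, and a union bound over the two tails gives the factor $2$.

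For \eqref{eq:gauss-full-d}, I would write $\norm{\vec X}^2 = \sum_{i=1}^d X_i^2$ where the $X_i \sim N_1(0,\sigma)$ are independent. For $\lambda \in \bigl(0, 1/(2\sigma^2)\bigr)$, by independence and the MGF identity $\E[e^{\lambda X_i^2}] = (1-2\lambda\sigma^2)^{-1/2}$,
\[
\Pr[\norm{\vec X}^2 \geq t^2 d\sigma^2]
\;\leq\; e^{-\lambda t^2 d\sigma^2}\,(1-2\lambda\sigma^2)^{-d/2}.
\]
The natural choice is $\lambda = (1 - t^{-2})/(2\sigma^2)$ (valid since $t \geq 1$), which makes $1-2\lambda\sigma^2 = t^{-2}$ and reduces the right-hand side to $\exp\!\bigl(-(t^2-1)d/2 + d\log t\bigr)$.

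Finally, I would use the elementary inequality $\log t \leq t-1$ (for all $t > 0$) to conclude
\[
-(t^2-1)d/2 + d\log t \;\leq\; -(t^2-1)d/2 + d(t-1) \;=\; -(d/2)(t-1)^2,
\]
since $(t^2-1)/2 - (t-1) = (t-1)^2/2$. This yields \eqref{eq:gauss-full-d} and completes the proof. The only mildly delicate step is the choice of $\lambda$ and the algebraic manipulation turning $-(t^2-1)d/2 + d\log t$ into $-(d/2)(t-1)^2$ via $\log t \leq t-1$; this is the closest thing to an obstacle, but it is routine rather than substantive.
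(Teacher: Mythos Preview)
Your proof is correct and follows essentially the same approach as the paper: both parts use the Chernoff method with the same MGFs, the same optimizing choice of $\lambda$ (namely $\lambda = t/\sigma$ for the one-dimensional tail and $\lambda = (1-t^{-2})/(2\sigma^2)$ for the norm), and the same inequality $\log t \leq t-1$ to finish. The only cosmetic differences are that the paper first rescales to $\sigma = 1$ and proves \eqref{eq:gauss-full-d} before \eqref{eq:gauss-1d}.
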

\begin{proof} By homogeneity, we may without loss of generality assume that $\sigma=1$.
\vspace{1em}
\paragraph{ Proof of~\eqref{eq:gauss-full-d}}
\begin{align*}
\Pr[\norm{\vec{X}} \geq \sqrt{d} t]
&= \min_{\lambda \in (0,1/2)} \Pr[e^{\lambda \norm{\vec{X}}^2} \geq e^{\lambda t^2 d}] \\
&\leq \min_{\lambda \in (0,1/2)} \E[e^{\lambda \norm{\vec{X}}^2}] e^{-\lambda t^2 d} \qquad \text{(Markov's inequality)}\\
&= \min_{\lambda \in (0,1/2)} \left(\prod_{i=1}^d\E[e^{\lambda X_i^2}]\right) e^{-\lambda t^2 d} \qquad \text{(Independence of coefficients)}\\
&= \min_{\lambda \in (0,1/2)} \left(\frac{1}{1-2\lambda}\right)^{d/2} e^{-\lambda t^2 d} \\
&\leq e^{-(d/2)(t^2-2\log t-1)}  \qquad \text{(setting $\lambda = \frac{1}{2}(1-1/t^2)$)} \\
&\leq e^{-(d/2)(t-1)^2}  \quad \text{ (since $\log t \leq t-1$ for $t \geq 1$).}
\end{align*}

\paragraph{ Proof of~\eqref{eq:gauss-1d}}
\begin{align*}
\Pr[\abs{\pr{\vec{X}}{\vec \theta}} \geq t] &= 2 \Pr[\pr{\vec{X}}{\vec \theta} \geq t] \\
&\leq 2 \min_{\lambda > 0} \E[e^{\lambda \pr{\vec{X}}{\vec \theta}}] e^{-\lambda t} \\
&= 2 \min_{\lambda > 0} e^{\lambda^2/2 - \lambda t} \leq 2e^{-t^2/2} \text{ , setting $\lambda = t$.}
\end{align*}
\end{proof}

\subsubsection{Laplace distribution}
\label{sub:laplace}
\index{Laplace distribution}
Our shadow bounds will hold for a general class of distributions with
bounds on certain parameters. We illustrate this for the
$d$-dimensional Laplace distribution.

\begin{definition}
 The \emph{Laplace distribution} $L_d(\vecb a,\sigma)$ or \emph{exponential distribution} in $\R^d$ with mean vector $\bar{\vec a}$ has probability
density function
\[
\frac{\sqrt{d}^d}{(d-1)!\sigma^d \vol_{d-1}(\bbS^{d-1})} e^{-\norm{\vec x - \bar{\vec a}}\sqrt{d}/\sigma}.
\]
We abbreviate $L_d(\sigma) = L_d(\vec 0,\sigma)$.
We have normalized the distribution to have expected norm $\sqrt{d}\sigma$.
Additionally, the variance along any direction is $\sigma^2 (1+\frac{1}{d})$.
\end{definition}

The norm of a Laplace distributed random variable follows a \emph{Gamma distribution}.

\begin{definition}
\index{Gamma distribution}
The Gamma distribution $\Gamma(\alpha,\beta), \alpha \in \N, \beta \in \R,$ on the non-negative
real numbers has probability density
$\frac{\beta^\alpha}{(\alpha - 1)!}t^{\alpha-1}e^{-\beta t}$. The moment
generating function of the Gamma distribution is $\E_{X \sim \Gamma(\alpha,\beta)}[e^{\lambda X}] =
(1-\lambda/\beta)^{-\alpha}$ for $\lambda < \beta$.
\end{definition}

One can generate a $d$-dimensional Laplace distribution
$L_d(\sigma)$ as the product of an independent scalar and vector. The
vector $\vec \theta$ is sampled uniformly from the sphere $\bbS^{d-1}$.
The scalar $s \sim \Gamma(d,\sqrt{d}/\sigma)$ is sampled from the Gamma
distribution. The product $s\vec \theta$ has a $L_d(\sigma)$-distribution.

We will need the following tail bound for Laplace distributed random variables.
We include a proof for completeness.

\begin{lemma}[Laplace tail bounds]
\label{lem:laplace-tails}
For $\vec X \in \R^d$, $d \geq 2$, distributed as $(\vec{0},\sigma)$-Laplace and $t \geq 1$,
\begin{equation}
\label{eq:laplace-full-d}
\Pr[\norm{\vec X} \geq t \sigma \sqrt{d}] \leq e^{-d(t-\log t -1)} \text{ .}
\end{equation}
In particular, for $t \geq 2$,
\begin{equation}
\label{eq:laplace-full-d-2}
\Pr[\norm{\vec X} \geq t \sigma \sqrt{d}] \leq e^{-dt/7} \text{ .}
\end{equation}
For $\vec \theta \in \bS^{d-1}$, $t \geq 0$,
\begin{equation}
\label{eq:laplace-1d}
\Pr[\abs{\pr{\vec X}{\vec \theta}} \geq t \sigma] \leq
\begin{cases}
2 e^{-t^2/16} :  0 \leq t \leq 2\sqrt{d} \\
e^{-\sqrt{d}t/7} : t \geq 2\sqrt{d}
\end{cases}\text{ .}
\end{equation}
\end{lemma}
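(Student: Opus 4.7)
The plan is to prove the three inequalities in sequence, using the decomposition $\vec X = s \vec\theta_0$ described in Section~\ref{sub:laplace}, where $s \sim \Gamma(d,\sqrt d/\sigma)$ and $\vec\theta_0 \sim \mathrm{Unif}(\bS^{d-1})$ are independent.

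For~\eqref{eq:laplace-full-d}, I would observe that $\norm{\vec X} = s$ is Gamma-distributed with moment generating function $(1 - \lambda\sigma/\sqrt d)^{-d}$, and apply the standard Markov--Chernoff argument. Substituting $u = \lambda\sigma/\sqrt d \in (0,1)$ turns the optimization into
\[
\Pr[\norm{\vec X} \geq t\sigma\sqrt d] \leq \min_{u \in (0,1)} (1-u)^{-d}\, e^{-utd},
\]
whose minimum is attained at $u = 1 - 1/t$ (valid for $t \geq 1$) and evaluates to $e^{-d(t-\log t -1)}$. For~\eqref{eq:laplace-full-d-2}, I would verify elementarily that the function $h(t) := t - \log t - 1 - t/7$ satisfies $h(2) = 1 - \log 2 - 2/7 > 0$ and $h'(t) = 6/7 - 1/t > 0$ for $t \geq 7/6$, so $h$ is nonnegative on $[2,\infty)$.

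For the projection bound~\eqref{eq:laplace-1d}, rotational symmetry of the density lets me assume $\vec\theta = \vec e_1$, and the decomposition rewrites $\pr{\vec X}{\vec\theta} = sU$ with $U := (\vec\theta_0)_1$ independent of $s$. For $t \geq 2\sqrt d$, I would use the trivial $\abs{sU} \leq s$ and invoke~\eqref{eq:laplace-full-d-2} with parameter $t/\sqrt d \geq 2$, giving $e^{-\sqrt d\, t/7}$. For $0 \leq t \leq 2\sqrt d$, I would apply a union split
\[
\Pr[s\abs{U} \geq t\sigma] \leq \Pr[s \geq R] + \Pr[\abs{U} \geq t\sigma/R]
\]
at threshold $R = 2\sigma\sqrt d$, controlling the Gamma tail via~\eqref{eq:laplace-full-d-2} and the uniform-on-sphere coordinate $U$ via the standard bound $\Pr[\abs{U}\geq \epsilon] \leq 2e^{-(d-1)\epsilon^2/2}$ (which, if needed, one derives from the representation $U = G_1/\norm{\vec G}$ with $\vec G \sim N_d(\vec 0, 1)$). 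Taking $\epsilon = t/(2\sqrt d) \leq 1$ produces $2e^{-(d-1)t^2/(8d)} \leq 2e^{-t^2/16}$ since $d \geq 2$.

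The main technical subtlety is nailing down the constant $1/16$ in the body regime: one has to check that the $\Pr[s\geq R]$ term is absorbable into the uniform-on-sphere contribution, which uses that $t \leq 2\sqrt d$ forces $t^2/16 \leq d/4 < 2d/7$. The first two inequalities are standard Chernoff computations; essentially all the care goes into balancing the thresholds and constants in the last case.
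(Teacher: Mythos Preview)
Your approach is essentially the paper's: the same Chernoff bound on the Gamma-distributed norm, the same monotonicity check for $t-\log t-1\geq t/7$ on $[2,\infty)$, and the same decomposition $\vec X=s\vec\theta_0$ with a union split at threshold $R=2\sigma\sqrt d$ for the projection bound.

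There is one genuine constant-level gap. The sphere-concentration bound you invoke, $\Pr[\abs{U}\geq\epsilon]\leq 2e^{-(d-1)\epsilon^2/2}$, carries a prefactor $2$ that prevents the absorption from closing. At $d=2$ one has $(d-1)/(8d)=1/16$ exactly, so your sphere term already equals $2e^{-t^2/16}$ and there is no room left for the Gamma tail $e^{-2d/7}$; you end up with at best $3e^{-t^2/16}$, not $2e^{-t^2/16}$. The paper avoids this by bounding the spherical cap via the surface-area monotonicity argument (Lemma~\ref{lem:monotonicity}), which gives the sharper
\[
\Pr[\abs{U}\geq\epsilon]\leq(1-\epsilon^2)^{(d-1)/2}\leq e^{-(d-1)\epsilon^2/2}
\]
with no prefactor. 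Then $e^{-(d-1)t^2/(8d)}+e^{-d/4}\leq e^{-t^2/16}+e^{-t^2/16}=2e^{-t^2/16}$ follows from $t\leq 2\sqrt d$, exactly as you outline in your last paragraph. The Gaussian representation $U=G_1/\norm{\vec G}$ you suggest does not obviously deliver the constant you need here; replace it with the surface-area comparison and the rest of your argument goes through unchanged.
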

\begin{proof} By homogeneity, we may without loss of generality assume that $\sigma=1$.
\vspace{1em}

\paragraph{Proof of~\eqref{eq:laplace-full-d}}
\begin{align*}
\Pr[\norm{\vec X} \geq \sqrt{d} t] &= \min_{\lambda \in (0,\sqrt{d})} \Pr[e^{\lambda\norm{\vec X}} \geq e^{\lambda\sqrt{d} t }] \\
&\leq \min_{\lambda \in (0,\sqrt{d})} \E[e^{\lambda
\norm{\vec X}}] e^{-\lambda\sqrt{d} t} \qquad \text{(Markov's inequality)} \\
&\leq \min_{\lambda \in (0,\sqrt{d})} (1-\lambda/\sqrt{d})^{-d}
e^{-\lambda \sqrt{d} t} \\
&= e^{-d(t-\log t-1)} \text{ , setting $\lambda = \sqrt{d}(1-1/t)$.}
\end{align*}
For the case $t \geq 2$, the desired inequality follows from the fact that
$t-\log t-1 \geq t/7$ for $t \geq 2$, noting that $(t-\log t -1)/t$ is an
increasing function on $t \geq 1$.

\paragraph{Proof of~\eqref{eq:laplace-1d}} For $t \geq 2\sqrt{d}$, we
directly apply equation~\eqref{eq:laplace-full-d-2}:
\begin{align*}
\Pr[\abs{\pr{\vec X}{\vec \theta}} \geq t \sigma] &\leq \Pr[\norm{\vec X} \geq t \sigma]
\leq e^{-\sqrt{d}t/7} .
\end{align*}
For $t \leq 2\sqrt{d}$, express $\vec X = s \cdot \vec \omega$ for $s \sim
\Gamma(d,\sqrt{d}/\sigma)$, $\vec\omega \in \bbS^{d-1}$ uniformly sampled.
\begin{align*}
 \Pr[\abs{\pr{s \vec \omega}{\vec \theta}} \geq t \sigma] &\leq
\Pr[\abs{\pr{\vec \omega}{\vec \theta}} \geq t/(2\sqrt{d})] + \Pr[\abs{s} \geq 2\sqrt{d}\sigma] \\
 &\leq \Pr[\abs{\pr{\vec \omega}{\vec \theta}} \geq t/(2\sqrt{d})] + e^{-d/4}.
\end{align*}

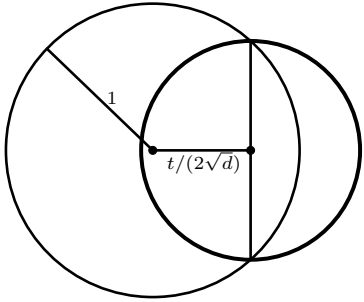
\begin{figure}[ht]
\centering
\definecolor{uuuuuu}{rgb}{0.26666666666666666,0.26666666666666666,0.26666666666666666}
\definecolor{ududff}{rgb}{0.5333333333333333,0.5333333333333333,0.5333333333333333}
\begin{tikzpicture}[line cap=round,line join=round,>=triangle 45,x=2.0cm,y=2.0cm]
\clip(-1.1349613321827652,-1.122903714025955) rectangle (1.5236261177954022,1.0915989725928248);
\draw [line width=1.pt] (0.,0.)-- (0.6504276358303109,0.);
\draw [line width=1.pt] (0.,0.) circle (0.9760091934107995);
\draw [line width=1.pt] (0.,0.)-- (-0.7036899537401479,0.6763241786507327);
\draw [line width=1.5pt] (0.6504276358303109,0.) circle (0.7276935042795091);
\draw [line width=1.pt] (0.6504276358303109,0.7276935042795091)-- (0.6504276358303109,-0.727693504279509);
\begin{scriptsize}
\draw [fill=black] (0.,0.) circle (1.5pt);
\draw [fill=black] (0.6504276358303109,0.) circle (1.5pt);
\draw[color=black] (0.3394000821703834,-0.08229745651812818) node {$t/(2\sqrt{d})$};
\draw[color=black] (-0.2704763261765013,0.35) node {$1$};
\end{scriptsize}
\end{tikzpicture}
\caption{The small sphere has at least as much surface area as combined surface area of the enclosed sphere cap and the opposite cap together
by the monotonicity of surface area (Lemma \ref{lem:monotonicity}).}
\label{fig:spherecap}
\end{figure}
For the first term we follow~\cite[Lemma 2.2]{jour/flge/Ball97}, where the second
line is illustrated in Figure~\ref{fig:spherecap}:
\begin{align*}
\Pr[\abs{\pr{\vec \omega}{\vec \theta}} \geq t/(2\sqrt{d})]
&= \frac{\vol_{d-1}(\set{\vec \omega \in \bbS^{d-1} : \abs{\pr{\vec \omega}{\vec
\theta}} \geq t/(2\sqrt{d})})}{\vol_{d-1}(\bbS^{d-1})} \\
&\leq \frac{\vol_{d-1}(\sqrt{1-\frac{t^2}{4d}}\bbS^{d-1})}{\vol_{d-1}(\bbS^{d-1})} \\
&= (1-\frac{t^2}{4d})^{(d-1)/2} \\
&\leq e^{-t^2(d-1)/(8d)} \leq e^{-t^2/16}.
\end{align*}
The desired conclusion follows since $e^{-t^2/16}+e^{-d/4} \leq 2e^{-t^2/16}$ for
$0 \leq t \leq 2\sqrt{d}$.
\end{proof}

\subsection{Change of variables}
\label{sec:blaschke}

In section~\ref{sec:shadow-bounds} we make use of a change of variables that
was analyzed by Blaschke~\cite{jour/bmsrs/Blaschke35}, and is standard in the study of convex hulls.

\index{change of variables}
Recall that a change of variables affects a probability distribution.
Let the vector $\vec y \in \R^d$ be a random variable with density $\mu$. If $\vec y =
\phi(\vec x)$ and $\phi$ is invertible, then the induced density on $\vec x$ is
\[
\mu(\phi(\vec x)) \abs*{\det \left(\frac{\partial \phi(\vec x)}{\partial \vec x}\right) },
\]
where $\abs*{\det \left(\frac{\partial \phi(\vec x)}{\partial \vec x}\right)  }$
is the Jacobian of $\phi$. We describe a particular change of variables which
has often been used for studying convex hulls, and, in particular, by
Borgwardt~\cite{Borgwardt87} and Spielman and Teng \cite{jour/jacm/ST04} for deriving shadow bounds.

For affinely independent vectors $\vec a_1,\dots,\vec a_d \in \R^d$ we have the coordinate transformation
\[
(\vec a_1,\dots,\vec a_d) \mapsto (\vec \theta, t, \vec b_1,\dots,\vec b_d),
\]
where $\vec \theta \in \bbS^{d-1}$ and $t \geq 0$ satisfy $\inner{\vec \theta}{\vec a_i} = t$ for every $i \in \set{1,\dots,d}$ and the vectors $\vec b_1,\dots,\vec b_d \in \R^{d-1}$ parametrize the positions of $\vec a_1,\dots,\vec a_d$ within the hyperplane $\set{\vec x \in \R^d \mid \inner{\vec \theta}{\vec x} = t}$. We coordinatize the hyperplanes as follows:

Fix a reference vector $\vec v \in \bbS^{d-1}$, and pick an isometric embedding $h : \R^{d-1} \to \vec v^\perp$. For
any unit vector $\vec \theta \in \bbS^{d-1}$, define the map $R'_{\vec \theta} :
\R^d \to \R^d$ as the unique map that rotates $\vec v$ to $\vec \theta$ along ${\rm
span}(\vec v, \vec \theta)$ and fixes
the orthogonal subspace $\linsp(\vec v,\vec \theta)^\perp$.
We define $R_{\vec \theta} = R'_{\vec \theta} \circ h$. The change of variables from $\vec \theta \in \bbS^{d-1},t > 0, \vec b_1,\dots,\vec b_d \in \R^{d-1}$ to  $\vec a_1,\dots,\vec a_d$ takes the form
\[
(\vec a_1,\dots,\vec a_d) = (R_{\vec \theta} \vec b_1 + t\vec \theta, \dots, R_{\vec \theta} \vec b_d + t\vec \theta).
\]
The change of variables as specified above is not uniquely defined when $\vec a_1,\dots,\vec a_d$ are affinely dependent, when $t = 0$ or when $\vec \theta = - \vec v$.\index{$R_{\vec\theta}$}

\begin{theorem} \label{prelim:blaschke} \index{Blaschke's theorem} \index{change of variables}
 Let $\vec \theta \in \bbS^{d-1}$ be a unit vector, $t > 0$ and
$\vec b_1,\dots,\vec b_d \in \R^{d-1}$. Consider the map \[(\vec \theta, t, \vec b_1,\dots,\vec b_d) \mapsto (\vec a_1,\dots,\vec a_d) = (R_{\vec \theta} \vec b_1 + t\vec \theta, \dots, R_{\vec \theta} \vec b_d + t\vec \theta).\]
 The Jacobian of this map equals
 \[\abs*{\det \left(\frac{\partial \phi(\vec x)}{\partial \vec x}\right)  } = (d-1)!\vol_{d-1}(\conv(\vec b_1,\dots,\vec b_d)).\]
\end{theorem}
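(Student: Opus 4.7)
The plan is to verify this Jacobian formula by direct computation after using rotational symmetry to reduce to the case $\vec\theta = \vec v$. Both the source (parameterized by $\vec\theta \in \bbS^{d-1}$, $t>0$, and $\vec b_1,\dots,\vec b_d \in \R^{d-1}$) and the target $(\R^d)^d$ have dimension $d^2$, so the Jacobian is a square matrix. For any orthogonal $U$ of $\R^d$ with $U\vec v = \vec\theta$, post-composition by $U$ on the target is an isometry, while the induced change in $R_{\vec\theta}$ differs from $U\circ R_{\vec v}$ only by a rotation of $\vec\theta^\perp$, which acts isometrically on $(\vec b_1,\dots,\vec b_d)$ and preserves $\vol_{d-1}(\conv(\vec b_i))$. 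It therefore suffices to compute the Jacobian at a point $(\vec v, t, \vec b_1, \dots, \vec b_d)$ where the map is simplest.

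At $\vec\theta = \vec v$ we have $R_{\vec\theta} = h$. Choose the standard basis so that $\vec v = \vec e_d$ and $h(\vec b) = (b_1,\dots,b_{d-1},0)$, giving $\vec a_i = (b_{i1},\dots,b_{i,d-1},t)$ at this basepoint. Parameterize $\vec\theta$ locally by its first $d-1$ coordinates with $\theta_d = \sqrt{1-\sum_{j<d}\theta_j^2}$, so $\partial \theta_d/\partial \theta_j = 0$ at $\vec e_d$. The key computation is $\partial R_{\vec\theta}\vec b_i/\partial \theta_j$: writing $\vec\theta = \cos(\alpha)\vec e_d + \sin(\alpha)\vec e_j$ for motion along the $j$-th coordinate, $R'_{\vec\theta}$ rotates by angle $\alpha$ in the $(\vec e_d,\vec e_j)$-plane; applied to $h(\vec b_i)$, the $\vec e_j$-component $b_{ij}\vec e_j$ rotates to $-\sin(\alpha)b_{ij}\vec e_d + \cos(\alpha)b_{ij}\vec e_j$, while the rest of $h(\vec b_i)$ is fixed. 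Differentiating at $\alpha=0$ (where $\alpha = \theta_j$ to first order) and combining with the trivial partials of $t\vec\theta$ and the linear dependence on $\vec b_i$ yields
\[
\frac{\partial \vec a_i}{\partial b_{ik}} = \vec e_k \quad (k<d), \qquad \frac{\partial \vec a_i}{\partial t} = \vec e_d, \qquad \frac{\partial \vec a_i}{\partial \theta_j} = t\vec e_j - b_{ij}\vec e_d.
\]

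Arrange the $d^2 \times d^2$ Jacobian by separating, for each $\vec a_i$, its first $d-1$ coordinates from its $d$-th coordinate. The $b_{ik}$-columns form an identity block of size $d(d-1)$ in the first part and zero in the second. The $t$-column is zero in the first part and all-ones in the second. The $\theta_j$-column has entry $t$ in row $(i,j)$ of the first part for every $i$, and $-b_{ij}$ in row $i$ of the second. Using the $b$-columns as pivots, column operations clear the $t$ entries in the $\theta$-columns without altering the determinant or the second part. The computation collapses to the absolute value of the $d \times d$ minor whose $i$-th row is $(1,-b_{i1},\dots,-b_{i,d-1})$, which is, up to sign, the classical augmented simplex-volume determinant equal to $(d-1)!\,\vol_{d-1}(\conv(\vec b_1,\dots,\vec b_d))$.

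The main obstacle is the infinitesimal rotation derivative, which hinges on recognizing $R'_{\vec\theta}$ as first-order the rotation in $\linsp(\vec v,\vec\theta)$ by their angular separation; once that identity is in hand, the block reduction is mechanical. A secondary subtlety is the rotation-invariance argument: one must check that two different reference rotations $R_{\vec\theta}$ and $R_{\vec\theta'}$ produce Jacobians differing only by the determinant of an isometry on the $\vec b_i$-parameters, so that the formula transports correctly from the basepoint to general $\vec\theta$.
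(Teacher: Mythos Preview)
The paper does not prove this theorem at all: it is stated as a classical result attributed to Blaschke~\cite{jour/bmsrs/Blaschke35} and simply cited. Your proposal therefore cannot be compared to a proof in the paper, but it is a correct self-contained verification of the stated formula.

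Your computation at the basepoint $\vec\theta=\vec v=\vec e_d$ is right: the partial derivatives are as you list them, the column operations reduce the $d^2\times d^2$ Jacobian to the $d\times d$ block with rows $(1,-b_{i1},\dots,-b_{i,d-1})$, and the absolute value of that determinant is indeed $(d-1)!\,\vol_{d-1}(\conv(\vec b_1,\dots,\vec b_d))$.

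The only place to tighten is your rotational-reduction paragraph. The induced change of coordinates on the source, $(\vec\theta,t,\vec b_i)\mapsto (U\vec\theta,t,W(\vec\theta)\vec b_i)$ with $W(\vec\theta)=R_{U\vec\theta}^{-1}UR_{\vec\theta}$, is not an isometry of the product because $W$ depends on $\vec\theta$; its differential has an off-diagonal block coming from $\partial_{\vec\theta}W$. What is true, and sufficient, is that this differential is block lower-triangular with orthogonal diagonal blocks, hence has determinant $\pm 1$. So the source action is volume-preserving, which is all you need to transport the Jacobian value from $\vec\theta=\vec v$ to general $\vec\theta$. Replace ``acts isometrically'' with ``acts volume-preservingly'' (or spell out the block-triangular structure) and the argument is complete.
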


\subsection{Shadow vertex algorithm}
We briefly introduce the shadow vertex algorithm. For proofs of the statements below,
see \cite{thesis/huiberts}. An alternative exposition about the shadow vertex algorithm
can be found in \cite{Borgwardt87}.

Let $P = \set{\vec x \in \R^d : \vec A\vec x \leq \vec b}$ be a polyhedron,\index{$P$}
and let $\vec a_1,\dots,\vec a_n \in \R^d$ correspond to the rows of $\vec A$. We
call a set $B \subseteq [n]$ a basis of $\vec A \vec x \leq \vec b$ if $\vec A_B$
is invertible. This implies that $\abs{B} = d$.
We say $B$ is a feasible basis if $\vec x_B = \vec A_B^{-1}
\vec b_B$ satisfies $\vec A \vec x_B \leq \vec b$.
The point $\vec x_B$ is a vertex of $P$.
We say a feasible basis $B$ is optimal for an objective
$\vec c \in \R^d$ if $\vec c^{\T} \vec A_B^{-1} \geq \vec 0$, which happens if
and only if $\max_{\vec x \in P} \inner{\vec c}{\vec x} = \inner{\vec c}{\vec
x_B}$.

\begin{algorithm}
\caption{Shadow vertex algorithm for non-degenerate polyhedron and shadow.}
\label{alg:shadow-simplex}
\begin{algorithmic}
\REQUIRE $P = \set{ \vec{x} \in \R^d : \vec A\vec{x} \leq \vec{b} }$,
    $\vec{c}, \vec{d} \in \R^d$,
    feasible basis $B \subseteq [n]$ optimal for $\vec{d}$.
\ENSURE Return optimal basis $B \subseteq [n]$ for $\vec{c}$ or \emph{unbounded}.
\STATE  $\lambda_0 \gets 0$.
\STATE  $i \gets 0$.
  \LOOP
  \STATE   $i \gets i + 1$.
  \STATE   $\lambda_i :=$ maximum $\lambda \leq 1$ such that $\vec c_\lambda^\T
\vec A_B^{-1} \geq \vec 0$. 
    \IF{$\lambda_{i} = 1$}
      \RETURN B.
    \ENDIF
  \STATE   $k := k \in B$ such that $(\vec c_{\lambda_{i}}^{\mathsf{T}} \mathbf A_B^{-1})_k = 0$.
  \STATE   $\vec x_B := \vec A_B^{-1}\vec b_B$.
  \STATE   $s_i :=$ supremum $s > 0$ such that $\vec A(\vec x_B -  s \vec
A_B^{-1} \vec e_k) \leq \vec b$. 
    \IF{\emph{$s_i = \infty$}}
      \RETURN \emph{unbounded}.
    \ENDIF
   \STATE  $j := j \in [n] - B$ such that $\inner{\vec a_j}{(\vec x_B -  s_i \vec A_B^{-1} \vec e_k)} = b_j$.
   \STATE  $B \gets B \cup \set{j}\setminus\set{k}$.
  \ENDLOOP
\end{algorithmic}
\end{algorithm}

The shadow vertex algorithm is a pivot rule for the simplex method. Given a
feasible basis $B \subseteq [n]$, an objective $\vec d \in \R^d$ for which $B$
is optimal, and an objective function $\vec c \in \R^d$ to optimize, where $\vec
c$ and $\vec d$ are linearly independent, the shadow
vertex algorithm (Algorithm \ref{alg:shadow-simplex}) specifies which pivot
steps to take to reach an optimal basis for $\vec c$.  We note that there are
many possible choices for starting objective $\vec d$.

We parametrize $\vec c_\lambda := (1-\lambda) \vec d + \lambda \vec c$ and start at
$\lambda = 0$. The shadow vertex rule increases $\lambda$ until there are $j \neq
k \in [n]$ such that a new feasible basis $B \cup \set{j} \setminus \set{k}$ is optimal for $\vec c_\lambda$,
and repeat with increased $\lambda$ and new basis $B$ until $\lambda = 1$.

The index $k \in B$ is such that the coordinate
for $k$ in $\vec c_\lambda^{\T} \vec A_B^{-1}$ first lowers to $0$, and $j \not \in B$ is
such that $B \cup \set{j} \setminus \set{k}$ is a feasible basis: we follow the edge
$\vec A_B^{-1} \vec b_B - \vec A_B^{-1} \vec e_k \R_+$ until we hit the first constraint
$\vec a_j^\T \vec x \leq b_j$, and then replace $k$ by $j$ to get the new basis
$B \cup \set{j} \setminus \set{k}$.

Changing the current basis from $B$ to $B \cup \set{j} \setminus \set{k}$ is called a pivot step. As soon as $\lambda = 1$ we have $\vec c_\lambda = \vec c$, at which moment the current basis is optimal for our objective $\vec c$. If at some point no choice of $j$ exists, then an unbounded ray has been found. \\

\begin{figure}[t]
 \centering
 \ifarxiv
  \includegraphics[width=0.45\textwidth]{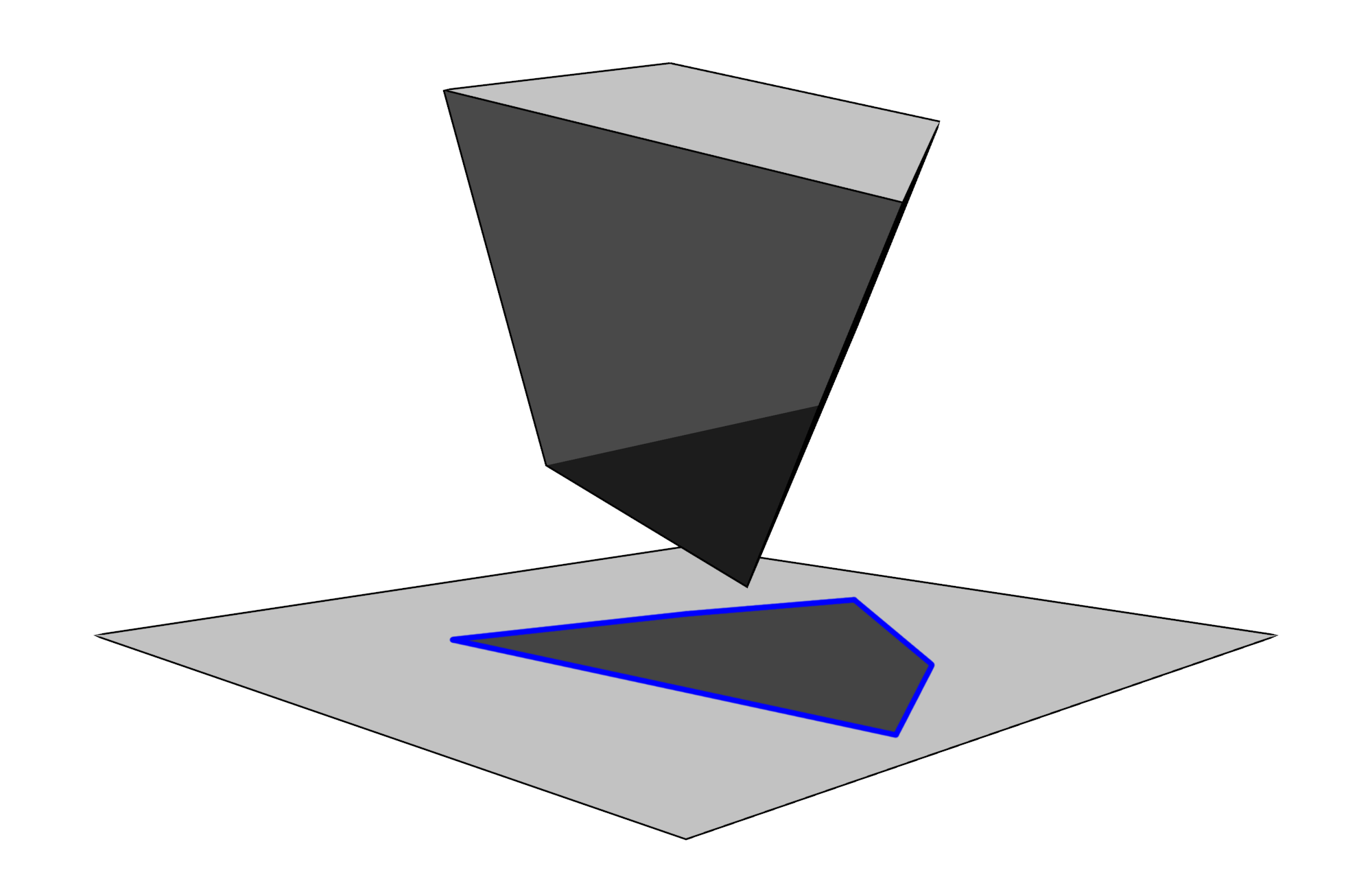}
  \includegraphics[width=0.45\textwidth]{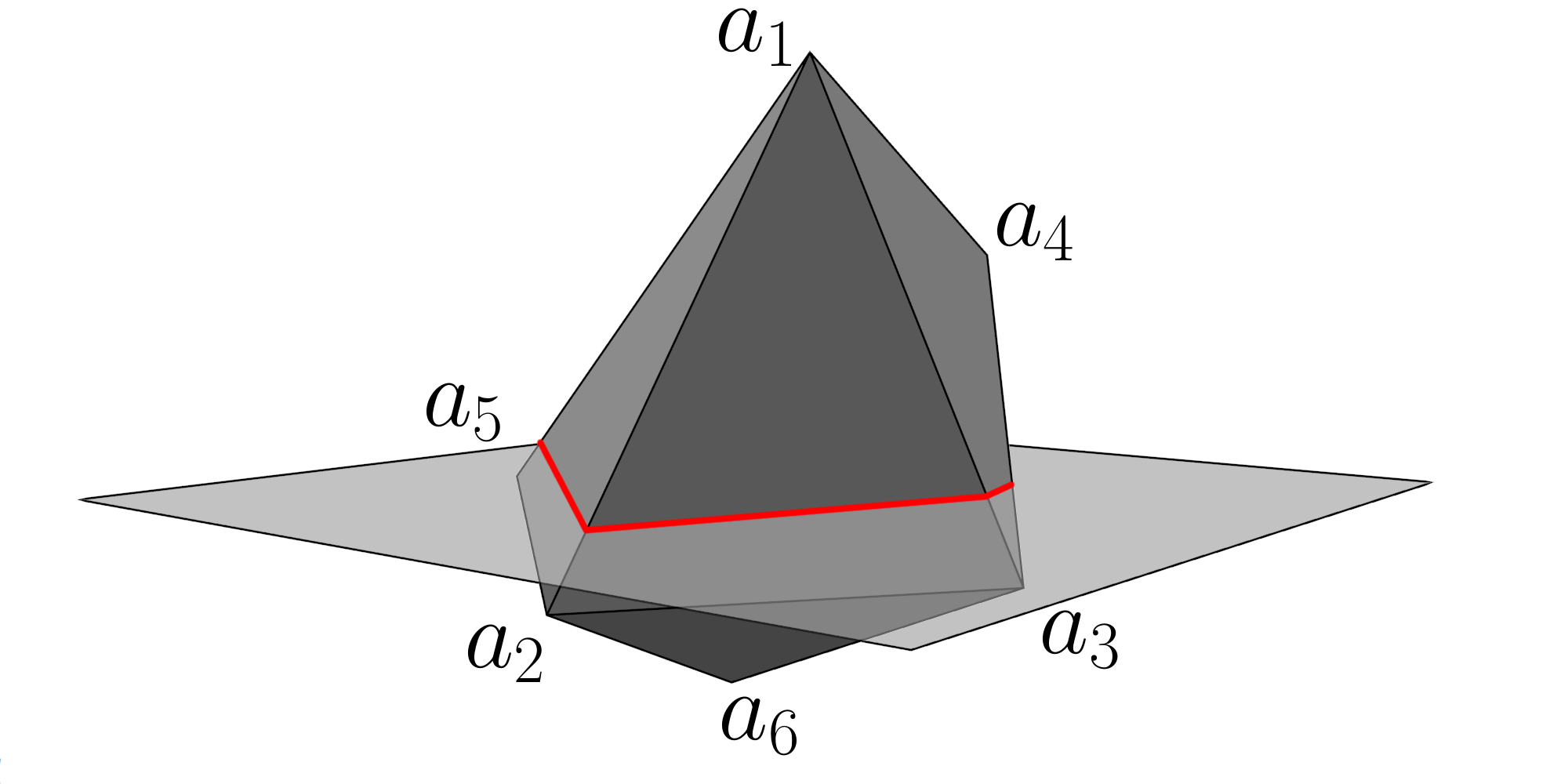}
 \else
  \includegraphics[width=0.45\textwidth]{\main/images/shadow-blue}
  \includegraphics[width=0.45\textwidth]{\main/images/intersect-revised}
 \fi
 \caption{On the left, a polytope and its shadow. On the right, the corresponding polar polytope intersected with the plane. There are as many edges marked blue as there are edges marked red. \label{fig:shadow}}
\end{figure}

\begin{definition}
We say that the system $\vec A \vec x \leq \vec b$ is \emph{non-degenerate}
if $n \geq d$, any $B \in \binom{[n]}{d}$ is a basis, and every vertex of the corresponding
polyhedron $P$ is tight at exactly $d$ linearly independent inequalities.
When the description $\vec A \vec x \leq \vec b$ is clear, we say that $P$ is
non-degenerate to mean that its describing system is.
\end{definition}

\begin{definition}
We say that the shadow of $P$ on a two-dimensional linear subspace $W$ is
\emph{non-degenerate} if $\dim(\pi_W(P))=2$ and for every face $F$ of $P$ such that $\pi_W(F)$ is
a face of $\pi_W(P)$ and $\dim(\pi_W(F)) \leq 1$, we have that $\dim(\pi_W(F)) = \dim(F)$.
\end{definition} \index{degenerate!shadow}\index{non-degenerate shadow}

If both the polyhedron and the shadow are non-degenerate, each pivot step can
be performed in $O(nd)$ time (see the pseudo-code for
Algorithm~\ref{alg:shadow-simplex}). Under the distribution models we examine,
degeneracy occurs with probability $0$.

The shadow vertex rule is called as such because the visited vertices are in
correspondence with vertices on the relative boundary of the orthogonal projection
$\pi_W(P)$ of $P$ onto $W = \linsp(\vec d, \vec c)$, where we denote $\pi_W(P)$
as the shadow of $P$ on $W$. See Figure~\ref{fig:shadow}.
We call the total number of vertices of the projection the
\emph{shadow size}, and it is the key geometric estimate
in our analysis of the simplex method.

\begin{lemma} \label{lem:shadow-vertices-are-vertices-of-shadow}
For a polyhedron $P = \set{\vec x \in \R^d: \vec A \vec x \leq \vec b}$
having a non-degenerate shadow on $W$, the vertices of $P$ optimizing
objectives in $W \setminus \set{\vec 0}$ are in one-to-one correspondence with the
vertices of $\pi_W(P)$ under the map $\pi_W$.
\end{lemma}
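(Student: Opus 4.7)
The plan is to build the bijection from the identity that $\vec w^{\T} \vec x = \vec w^{\T} \pi_W(\vec x)$ whenever $\vec w \in W$, since $\vec x - \pi_W(\vec x) \in W^\perp$. This gives immediately that
\[
\max_{\vec x \in P} \vec w^{\T} \vec x = \max_{\vec y \in \pi_W(P)} \vec w^{\T} \vec y
\qquad\text{and}\qquad
\pi_W\bigl(\arg\max_{\vec x \in P} \vec w^{\T} \vec x\bigr) = \arg\max_{\vec y \in \pi_W(P)} \vec w^{\T} \vec y
\]
for any $\vec w \in W \setminus \set{\vec 0}$. Thus optimal faces of $P$ for directions in $W$ correspond, under $\pi_W$, to optimal faces of the two-dimensional polygon $\pi_W(P)$. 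I define the candidate bijection $\Phi(\vec v) := \pi_W(\vec v)$ on the set of vertices $\vec v$ of $P$ that maximize some $\vec w \in W \setminus \set{\vec 0}$.

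The first step is to show $\Phi$ lands in $\vertices(\pi_W(P))$. Given such a vertex $\vec v$ with witness $\vec w$, let $F := \arg\max_{\vec x \in P} \vec w^{\T}\vec x$, which is a face of $P$ containing $\vec v$ and with $\pi_W(F)$ the optimal face of $\pi_W(P)$ for $\vec w$. Since $\pi_W(P)$ is two-dimensional, $\pi_W(F)$ has dimension $0$ or $1$, so non-degeneracy of the shadow forces $\dim(F) = \dim(\pi_W(F))$. If this common dimension is $0$, then $F = \set{\vec v}$ and $\pi_W(\vec v)$ is a $0$-face (a vertex) of $\pi_W(P)$. If it is $1$, then $F$ is an edge of $P$ with $\vec v$ as an endpoint and $\pi_W|_F$ is an affine bijection onto the edge $\pi_W(F)$ of $\pi_W(P)$, so $\pi_W(\vec v)$ is an endpoint of that edge, again a vertex of $\pi_W(P)$.

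The second step is injectivity and surjectivity, which I would carry out in tandem. For surjectivity, any vertex $\vec u$ of $\pi_W(P)$ is the unique maximizer on $\pi_W(P)$ of some $\vec w \in W \setminus \set{\vec 0}$; then $F := \arg\max_{\vec x \in P} \vec w^{\T}\vec x$ satisfies $\pi_W(F) = \set{\vec u}$, so non-degeneracy gives $\dim(F) = 0$, i.e.\ $F = \set{\vec v}$ for a unique vertex $\vec v$ of $P$, which by construction lies in the domain of $\Phi$ and satisfies $\Phi(\vec v) = \vec u$. For injectivity, suppose $\Phi(\vec v_1) = \Phi(\vec v_2) = \vec u$; then with the same $\vec w$ witnessing that $\vec u$ is a vertex of $\pi_W(P)$, both $\vec v_1, \vec v_2$ lie in the optimal face $F$ (using the identity $\vec w^{\T}\vec v_i = \vec w^{\T}\pi_W(\vec v_i) = \vec w^{\T}\vec u$), and non-degeneracy again collapses $F$ to a single point, forcing $\vec v_1 = \vec v_2$.

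The only step that requires any care is the well-definedness argument when $\dim(F) = 1$, because a priori a vertex of $P$ sitting on an edge $F$ need not remain extreme in $\pi_W(F)$; this is handled precisely by invoking the hypothesis $\dim(\pi_W(F)) = \dim(F)$, which forces $\pi_W|_F$ to be an affine isomorphism and hence to send endpoints to endpoints. Everything else reduces to the normal-cone identity in the first paragraph, so no further obstacles appear.
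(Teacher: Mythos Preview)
The paper does not actually prove this lemma; it states it and defers the proof to an external reference (\cite{thesis/huiberts}). Your argument is correct and is the standard one: the identity $\vec w^\T \vec x = \vec w^\T \pi_W(\vec x)$ for $\vec w \in W$ matches optimal faces of $P$ with optimal faces of $\pi_W(P)$, and the non-degeneracy hypothesis is invoked exactly where it is needed, to force $\dim F = \dim \pi_W(F)$ so that $\pi_W|_F$ is an affine isomorphism (handling both the well-definedness in the edge case and the collapse to a single point for injectivity and surjectivity).
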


We will consider non-degenerate polyhedra of the form $\set{\vec x \in \R^d : \mathbf
A\vec x \leq \vec 1}$,
in which case $\vec 0$ is always contained in the polyhedron. The problem thus
has a known feasible solution. It is instructive to look at the geometry of shadow
paths on such polyhedra from a \emph{polar perspective}. For any non-degenerate polyhedron $P = \set{\vec
x \in \R^d : \mathbf A\vec x \leq \vec 1}$, we look at the polar polytope, defined as the
convex hull $Q := \conv(\vec a_1,\ldots,\vec a_n)$ of the
constraint vectors. For any index-set $I \subseteq [n], \abs{I}=d$, if the (unique) solution
$\vec x_I$ to the equations \[\inner{\vec a_i}{\vec x} = 1 \qquad \forall i \in
I \] is a vertex of the original polyhedron $P$, then the set
$\conv(\vec a_i: i \in I)$ forms a facet of the polytope $\conv(\vec
a_1,\ldots,\vec a_n)$. Conversely, if $\conv(\vec a_i: i\in I)$ induces a facet of
$Q\ := \conv(\vec 0, \vec a_1,\ldots,\vec a_n)$ (note the inclusion of $\vec 0$), then
$\vec x_I$ is a vertex of $P$. The addition of $\vec 0$ to the polar of $P$
allows us to detect unboundedness. Precisely, the facets of the extended polar
$\conv(\vec 0, \vec a_1,\dots, \vec a_n)$ containing $\vec 0$ are in one
to one correspondence with unbounded edges of $P$. $P$ is bounded, i.e.~a
polytope, if and only if $\vec 0$ is in the interior of $Q$. In this case
$Q=Q'$, and hence every facet of $Q$ is associated to a vertex of $P$.

In the polar perspective, a pivot step moves from one facet of $Q'$ to a
neighboring facet. The shadow vertex algorithm moves the objective $\vec
c_\lambda$ along the line segment $[\vec d,\vec c]$ and keeps track of which
facet of $Q'$ is intersected by the ray $\vec c_\lambda \R^+$. If we move to a
facet of $Q'$ containing $\vec 0$, we may conclude that the LP with objective
$\vec c$ is in fact unbounded. Since we can only visit such facets at the end of
a shadow path, we will be able to control the length of shadow paths using only
the geometry of $Q$, which will help simplify our analyses. The main bound on
the size of the shadow we will use is given in the following lemma.

\begin{lemma} \label{lem:polar}
 Let $P = \set{\vec x \in \R^d: \vec A \vec x \leq \vec 1}$ be a non-degenerate
 polyhedron with a non-degenerate shadow on $W$. Then
 \[\abs{\vertices(\pi_W(P))} \leq \abs{{\rm edges}(\conv(\vec a_1,\dots, \vec a_n) \cap W)}.\]
\end{lemma}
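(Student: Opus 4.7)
My plan is to exploit the polar correspondence between vertices of $P$ and facets of $Q' := \conv(\vec 0, \vec a_1, \dots, \vec a_n)$ sketched just before the lemma, and construct an injective map from $\vertices(\pi_W(P))$ to $\edges(Q \cap W)$.

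I will start by invoking Lemma \ref{lem:shadow-vertices-are-vertices-of-shadow}, which identifies each vertex $\vec v \in \vertices(\pi_W(P))$ with a unique vertex $\vec x_B$ of $P$ that maximizes some objective $\vec c \in W \setminus \{\vec 0\}$. Under non-degeneracy of $P$, this vertex has a unique basis $B \in \binom{[n]}{d}$, and by the polar discussion the set $F_B := \conv(\vec a_i : i \in B)$ is a facet of $Q'$. Because $\vec x_B$ corresponds to an honest (bounded) vertex rather than an unbounded edge of $P$, we have $\vec 0 \notin F_B$. I will then observe that $F_B \subseteq Q \subseteq Q'$ and $F_B$ is a $(d-1)$-dimensional face of the larger $Q'$, so $F_B$ must in fact also be a facet of $Q$: any half-space exposing $F_B$ on $Q'$ exposes it on $Q$ as well. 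Next, because $\vec x_B$ optimizes some $\vec c \in W$ with $\vec c^\T \vec x_B = 1 > 0$, the ray $\R_{\geq 0}\vec c$ pierces $F_B$ at the point $\vec c / \|\vec c\|^2$-style scaling that lies in $W$, so $F_B \cap W \neq \emptyset$.

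Next I will promote $F_B \cap W$ to a genuine edge of the polygon $Q \cap W$. Since $F_B$ is a facet of $Q$, it lies on a supporting hyperplane $H_B$ of $Q$, and $F_B \cap W = (H_B \cap W) \cap (Q \cap W)$ is the intersection of $Q \cap W$ with an affine line in $W$ (generically $H_B \cap W$ is $1$-dimensional because $W \not\subseteq H_B$ under non-degeneracy of the shadow). Non-degeneracy of the shadow rules out the situation $\dim(F_B \cap W) = 0$: if $F_B \cap W$ were a single point then the corresponding face of $\pi_W(P)$ would drop in dimension, contradicting the hypothesis. Therefore $F_B \cap W$ is a line segment lying in $\partial(Q \cap W)$, i.e.\ an edge of $Q \cap W$.

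It only remains to verify that the assignment $\vec v \mapsto F_B \cap W$ is injective. Two distinct vertices of $\pi_W(P)$ correspond to two distinct bases $B \neq B'$, hence to two distinct facets $F_B \neq F_{B'}$ of $Q$ lying on two distinct supporting hyperplanes. Under non-degeneracy of the shadow these two facets produce distinct edges of $Q \cap W$, since each edge of $Q \cap W$ determines uniquely the facet of $Q$ supporting it. Summing over vertices of $\pi_W(P)$ then yields the stated inequality. The only subtle step is the one in the middle paragraph, where I need to rule out tangential intersections of $F_B$ with $W$ using the non-degeneracy of the shadow; everything else is a bookkeeping exercise on the polar correspondence.
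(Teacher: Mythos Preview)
The paper does not actually prove this lemma in the text; all proofs in the shadow-vertex subsection are deferred to \cite{thesis/huiberts}. Your approach is exactly the one the paper's polar discussion sets up, and its skeleton is correct, but two of your justifications are incomplete as written.

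Your injectivity step asserts that ``each edge of $Q \cap W$ determines uniquely the facet of $Q$ supporting it.'' The paper lists this as a \emph{separate} almost-sure non-degeneracy condition (item~3 at the start of Section~\ref{sec:shadow-bounds}); it is not a stated consequence of the primal-shadow non-degeneracy hypothesis you are given in the lemma. Likewise, your claim that $\dim(F_B \cap W)=1$ because otherwise ``the corresponding face of $\pi_W(P)$ would drop in dimension'' does not cleanly match the definition of shadow non-degeneracy, which speaks about faces of $P$ and their projections, not about faces of $Q$.

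Both gaps close via the normal-cone identification you are implicitly reaching for. If $\vec v = \pi_W(\vec x_B)$ is a vertex of the two-dimensional polyhedron $\pi_W(P)$, its normal cone in $W$ is two-dimensional and equals $W \cap \cone(\vec a_i : i \in B) = \R_{\geq 0}\cdot(F_B \cap W)$ (an objective $\vec c \in W$ is maximized at $\vec x_B$ iff $\vec c^\T \vec A_B^{-1} \geq \vec 0$). Since $F_B \cap W$ also lies on the line $H_B \cap W$ (note $H_B = \{\vec y : \vec y^\T \vec x_B = 1\}$ misses $\vec 0$, so $W \not\subseteq H_B$), it must be a genuine segment, hence an edge of $Q \cap W$. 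For injectivity, distinct vertices of $\pi_W(P)$ have normal cones with disjoint two-dimensional interiors, so the segments $F_B \cap W$ generating those cones from the origin cannot coincide. As a minor correction, the ray $\R_{\geq 0}\vec c$ meets $F_B$ at $\vec c/\sum_i \lambda_i$ where $\vec c = \sum_{i \in B} \lambda_i \vec a_i$, not at anything involving $\|\vec c\|^2$.
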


The number of pivot steps taken in a shadow path is bounded from above
by the number of edges in the intersection $\conv(\vec a_1,\dots,\vec
a_n) \cap \linsp(\vec d, \vec c)$. Hence it suffices that we prove an upper bound
on this geometric quantity. The following theorem summarizes the properties we will
use of the shadow vertex algorithm.

\begin{theorem} \label{thm:geometric-characterization-shadow-path-size}
Let $P = \set{\vec x \in \R^d: \vec A \vec x \leq \vec b}$ denote a
non-degenerate polyhedron, and let $\vec a_1,\dots,\vec a_n \in \R^d$ be the rows
of $\vec A$. Let $\vec c, \vec d \in \R^d$ denote two
objectives inducing a non-degenerate shadow on $P$, and let $W =
\linsp(\vec d,\vec c)$. Given a feasible basis $I \in \binom{[n]}{d}$ for
$\vec A \vec x \leq \vec b$ which is optimal
for $\vec d$, Algorithm~\ref{alg:shadow-simplex} (shadow vertex) finds a feasible basis $J \in \binom{[n]}{d}$
optimal for $\vec c$ or declares unboundedness in a number of pivot steps bounded by
$\abs{\vertices(\pi_W(P))}$, where $\pi_W$ is the orthogonal projection
onto $W$. In particular, when $\vec b = \vec 1$, the number of pivots is at most
\[
\abs{\edges(\conv(\vec a_1,\dots,\vec a_n) \cap W)}~.
\]
\end{theorem}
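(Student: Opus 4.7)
\medskip

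\noindent\textbf{Proof plan.}
My plan is to show that the sequence of bases $B_0, B_1, \ldots$ produced by Algorithm~\ref{alg:shadow-simplex} yields a sequence of \emph{distinct} vertices of $P$, each of which is the unique maximizer of some objective in $W \setminus \{\vec 0\}$, and then invoke Lemma~\ref{lem:shadow-vertices-are-vertices-of-shadow} (and Lemma~\ref{lem:polar} for the second part) to bound the path length by the number of vertices of the shadow $\pi_W(P)$.

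\medskip

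\noindent\textbf{Step 1: each visited basis optimizes an objective in $W$.} By construction, at iteration $i$ the algorithm maintains the invariant that the current basis $B = B_i$ is a feasible basis optimal for the objective $\vec c_{\lambda_i} = (1-\lambda_i)\vec d + \lambda_i \vec c \in W$. This is true at initialization since $B_0$ is optimal for $\vec d = \vec c_0$, and the update rule chooses $\lambda_{i+1}$ precisely as the largest value $\lambda \le 1$ for which $\vec c_\lambda^\T \vec A_{B_i}^{-1} \geq \vec 0$ still holds, then swaps in an index $j$ so that the new basis is optimal for $\vec c_{\lambda_{i+1}}$. Thus every vertex $\vec x_{B_i}$ visited by the algorithm optimizes some $\vec c_{\lambda_i} \in W \setminus \{\vec 0\}$ over $P$.

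\medskip

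\noindent\textbf{Step 2: the visited vertices are distinct.} Because the shadow is non-degenerate on $W$, the map $\pi_W$ restricted to the vertices of $P$ that optimize objectives in $W \setminus \{\vec 0\}$ is injective, by Lemma~\ref{lem:shadow-vertices-are-vertices-of-shadow}. Moreover, by polyhedral non-degeneracy, for each $\lambda \in [0,1]$ at most one basis among those selected by the algorithm can be optimal for $\vec c_\lambda$: if two successive bases $B_i$ and $B_{i+1}$ both optimized $\vec c_{\lambda_{i+1}}$, then $\vec c_{\lambda_{i+1}}$ would be tight at more than $d$ linearly independent inequalities at the pivot point, contradicting non-degeneracy. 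Consequently $\lambda_0 < \lambda_1 < \cdots$, the parametrized path $\vec x_{B_0}, \vec x_{B_1}, \ldots$ visits pairwise distinct vertices of $P$, and by injectivity of $\pi_W$ their images $\pi_W(\vec x_{B_i})$ are pairwise distinct vertices of $\pi_W(P)$.

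\medskip

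\noindent\textbf{Step 3: termination and bound.} Each pivot either advances $\lambda$ to a strictly larger value in $(0,1]$ or returns \emph{unbounded} when no pivot index $j$ exists (i.e.~the edge $\vec x_{B_i} - s \vec A_{B_i}^{-1} \vec e_k$ leaves $P$ along an unbounded ray). When $\lambda_i = 1$, the current basis is optimal for $\vec c$ and the algorithm returns. Since each iteration produces a new vertex of $\pi_W(P)$ by Step~2, the total number of pivot steps is at most $|\vertices(\pi_W(P))|$. For the special case $\vec b = \vec 1$, Lemma~\ref{lem:polar} gives
\[
|\vertices(\pi_W(P))| \leq |\edges(\conv(\vec a_1,\dots,\vec a_n) \cap W)|,
\]
which yields the second bound.

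\medskip

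\noindent\textbf{Main obstacle.} The bookkeeping in Step~2 is where some care is needed: one must verify that the pivot rule together with non-degeneracy of both $\vec A \vec x \leq \vec b$ and the shadow truly ensures $\lambda_i$ is strictly increasing and that the visited vertices inject into $\vertices(\pi_W(P))$. Everything else is a direct consequence of the shadow vertex algorithm's definition and the previously established lemmas; the correspondence with the shadow is then the key geometric content, already packaged in Lemmas~\ref{lem:shadow-vertices-are-vertices-of-shadow} and~\ref{lem:polar}.
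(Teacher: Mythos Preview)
The paper does not actually give a proof of this theorem: it is stated as a summary of standard background on the shadow vertex method, with the remark ``For proofs of the statements below, see \cite{thesis/huiberts}.'' Your sketch is therefore not in competition with a proof in the paper but rather filling in what the paper treats as known, and your overall strategy --- show each visited basis optimizes some $\vec c_\lambda \in W$, argue the $\lambda_i$ are strictly increasing so the vertices are distinct, then invoke Lemma~\ref{lem:shadow-vertices-are-vertices-of-shadow} and Lemma~\ref{lem:polar} --- is exactly the standard one.

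That said, your Step~2 contains a genuine muddle. You write that if $B_i$ and $B_{i+1}$ both optimize $\vec c_{\lambda_{i+1}}$ then ``$\vec c_{\lambda_{i+1}}$ would be tight at more than $d$ linearly independent inequalities at the pivot point, contradicting non-degeneracy.'' But two successive bases \emph{always} both optimize the objective at the transition value --- that is what makes it a transition --- and it is points, not objectives, that are tight at inequalities. Polyhedral non-degeneracy alone does not give you strict monotonicity of the $\lambda_i$; what you need is \emph{shadow} non-degeneracy. The clean argument is: if $\lambda_{i+1} = \lambda_i$, then the edge $[\vec x_{B_i}, \vec x_{B_{i+1}}]$ of $P$ is orthogonal to $\vec c_{\lambda_i}$ and hence to all of $W$ (since it is already orthogonal to one direction in $W$ and the edge direction lies in $W$'s orthogonal complement forces this), so $\pi_W$ collapses this edge to a point, contradicting the shadow non-degeneracy assumption that $\dim(\pi_W(F)) = \dim(F)$ for one-dimensional faces $F$ appearing in the shadow. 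You correctly flag this step as the main obstacle, so you are aware it needs care; just replace the appeal to polyhedral non-degeneracy with the shadow non-degeneracy condition and the argument goes through.
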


\section{Shadow Bounds}
\label{sec:shadow-bounds}

In this section, we derive our new and improved shadow bounds for Laplace and Gaussian
distributed perturbations. We achieve these results by first proving a shadow
bound for parametrized distributions as described in the next section, and then
specializing to the case of Laplace and Gaussian perturbations. The bounds we obtain are
described below.

\begin{theorem} \label{thm:gaussian} \index{Gaussian shadow bound}
Let $W \subset \R^d$ be a fixed two-dimensional subspace, $ n \geq d \geq 3$ and let $\vec
a_1,\dots,\vec a_n \in \R^d$, be independent Gaussian random vectors with
variance $\sigma^2$ and centers of norm at most $1$. Then the expected number of
edges is bounded by \[\E[\abs{\edges(\conv(\vec a_1,\dots, \vec a_n) \cap W)}]
\leq \mathcal{D}_g(n,d,\sigma),\]
where the function $\calD_g(d,n,\sigma)$ is defined as
\[\calD_g(d,n,\sigma) := O(d^2\sqrt{\log n}~\sigma^{-2} +
d^{2.5}\log n~\sigma^{-1} + d^{2.5}\log^{1.5} n).\]
\end{theorem}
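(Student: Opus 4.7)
The plan is to reduce Theorem~\ref{thm:gaussian} to the parametrized shadow bound~\eqref{eq:main-shadow-bnd} by supplying bounds for the four distributional parameters $L,\tau,R_{n,d},r_n$ of the $N_d(\vecb a_i,\sigma)$ density. Three of them fall out of elementary facts about the Gaussian. Every one-dimensional restriction of $N_d(\vecb a_i,\sigma)$ is itself a one-dimensional $\sigma$-Gaussian, so $\tau = \sigma$. The tail bound~\eqref{eq:gauss-full-d} of Lemma~\ref{lem:gaussian-tails} gives $R_{n,d} = O(\sigma\sqrt{d\log n})$ after choosing $t = \Theta(\sqrt{\log n})$ so that $e^{-(d/2)(t-1)^2} \leq 1/(d\binom{n}{d})$, and the one-dimensional version~\eqref{eq:gauss-1d} together with a union bound over $n$ variables yields $r_n = O(\sigma\sqrt{\log n})$.

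The obstacle, and the main technical content of the theorem, is the log-Lipschitz parameter. Since $-\log\mu_i(\vec x) = \|\vec x-\vecb a_i\|^2/(2\sigma^2) + \mathrm{const}$ has gradient of norm $\|\vec x-\vecb a_i\|/\sigma^2$, no finite global $L$ exists for a true Gaussian, even though within the ball of radius $R_{n,d}$ around $\vecb a_i$ the density is $O(\sqrt{d\log n}/\sigma)$-log-Lipschitz. The prior-work fix of conditioning on this ball distorts $\tau$ near its boundary, so I would instead swap each $\mu_i$ for the Laplace--Gaussian surrogate density $\tilde\mu_i$ constructed in Section~\ref{sec:shadow-gaussian}: $\tilde\mu_i$ agrees with $\mu_i$ inside the $R_{n,d}$-ball around $\vecb a_i$ and is extended outside by a matched-slope exponential tail, so that it is globally $L = O(\sqrt{d\log n}/\sigma)$-log-Lipschitz while still enjoying line-variance $\Omega(\sigma^2)$ and the same $R_{n,d}$ and $r_n$ up to constants. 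The key sanity check is that $\|\mu_i - \tilde\mu_i\|_{\mathrm{TV}} \leq \Pr_{\vec X\sim\mu_i}[\|\vec X-\vecb a_i\|\geq R_{n,d}] \leq 1/(d\binom{n}{d})$; since the shadow size is deterministically at most $\binom{n}{d}$, a union bound over the $n$ perturbations shows that the expected edge count under $\mu$ and under $\tilde\mu$ differ by only $O(1/d)$. Establishing all four properties of $\tilde\mu_i$ simultaneously is the delicate part of the argument.

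With the four parameter bounds in hand for $\tilde\mu_i$, I would feed them into~\eqref{eq:main-shadow-bnd} to obtain
\[
\E\bigl[|\edges(\conv(\vec a_1,\dots,\vec a_n)\cap W)|\bigr]
= O\!\left(\frac{d^{1.5}L}{\tau}(1+R_{n,d})(1+r_n)\right)
= O\!\left(\frac{d^{2}\sqrt{\log n}}{\sigma^{2}}\bigl(1+\sigma\sqrt{d\log n}\bigr)\bigl(1+\sigma\sqrt{\log n}\bigr)\right).
\]
Expanding the right-hand product produces the summands $d^2\sqrt{\log n}\,\sigma^{-2}$, $d^2\log n\,\sigma^{-1}$, $d^{2.5}\log n\,\sigma^{-1}$, and $d^{2.5}\log^{3/2}n$; the second is dominated by the third, and the three surviving terms are exactly $\calD_g(d,n,\sigma)$. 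Once the Laplace--Gaussian surrogate has been built and certified, the rest of the theorem is a mechanical invocation of the parametrized bound.
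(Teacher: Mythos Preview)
Your approach is exactly the paper's: replace each Gaussian by the Laplace--Gaussian surrogate of Section~\ref{sec:shadow-gaussian}, verify its four parameters (Lemma~\ref{lem:propertiesofLG}), transfer the shadow bound back to the true Gaussian via a coupling argument (Lemma~\ref{lem:mightaswellLG}), and invoke Theorem~\ref{thm:abstractedbound}. The expansion of the product at the end is also identical.

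There is one quantitative slip in the transfer step. With the splice radius taken at exactly $R_{n,d}$, your single-coordinate bound $\|\mu_i-\tilde\mu_i\|_{\mathrm{TV}}\le 1/(d\binom{n}{d})$ gives, after the union bound over the $n$ coordinates and multiplication by the deterministic cap $\binom{n}{d}$ on the edge count, a transfer error of order $n/d$, not $1/d$. The paper avoids this by splicing at $4\sigma\sqrt{d\log n}$ rather than at the minimal $R_{n,d}$; this pushes the single-coordinate tail down to $n^{-4d}$ (Lemma~\ref{lem:lg-tails}), so the total transfer cost is at most~$1$, while the asymptotics of $L,\tau,R_{n,d},r_n$ are unchanged. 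A second minor imprecision: the Gaussian and Laplace--Gaussian densities are not \emph{equal} inside the ball but only proportional (their normalizers differ), so the coupling goes through the conditional distributions, as in Lemma~\ref{lem:mightaswellLG}, rather than through a pointwise identity of densities.
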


Our bound applies more generally for distributions satisfying certain parameters.
We illustrate this with a shadow bound for perturbations distributed according to
the Laplace distribution. This will serve as a good warm-up exercise for the
slightly more involved analysis of the Gaussian distribution.

\begin{theorem} \label{thm:laplace} \index{Laplace shadow bound}
Let $W \subset \R^d$ be a fixed two-dimensional subspace, $ n \geq d \geq 3$ and let $\vec
a_1,\dots,\vec a_n \in \R^d$, be independent Laplace distributed random vectors with
parameter $\sigma$ and centers of norm at most $1$. Then the expected number of
edges is bounded by \[\E[\abs{\edges(\conv(\vec a_1,\dots,\vec a_n) \cap W)}]
= O(d^{2.5}\sigma^{-2} + d^{3}\log n~\sigma^{-1} + d^{3}\log^2 n).\]
\end{theorem}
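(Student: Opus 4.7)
The plan is to invoke the parametrized shadow bound \eqref{eq:main-shadow-bnd} (Theorem~\ref{thm:abstractedbound}) with the four distribution parameters $L$, $\tau$, $R_{n,d}$, $r_n$ instantiated for the $d$-dimensional Laplace distribution $L_d(\sigma)$. Once these four parameters are bounded, the theorem follows by direct substitution and arithmetic.

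For the log-Lipschitz parameter, since the Laplace density satisfies $\mu(\vec x) \propto e^{-\sqrt{d}\norm{\vec x}/\sigma}$, we have $\log\mu(\vec x)-\log\mu(\vec y) = -(\sqrt{d}/\sigma)(\norm{\vec x}-\norm{\vec y})$, so the reverse triangle inequality yields $L = \sqrt{d}/\sigma$. For the tail parameters, Lemma~\ref{lem:laplace-tails} does the work directly: solving $e^{-dt/7} \leq 1/(d\binom{n}{d})$ via \eqref{eq:laplace-full-d-2} with $t = \Theta(\log n)$ gives $R_{n,d} = O(\sigma\sqrt{d}\log n)$, while a standard tail-integration on \eqref{eq:laplace-1d} applied to the $n$ copies of the one-dimensional subexponential variable $\inner{\vec X_i}{\vec\theta}$ produces $\E[\max_{i\leq n}\abs{\inner{\vec X_i}{\vec\theta}}] = O(\sigma \log n)$, hence $r_n = O(\sigma\log n)$.

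The most delicate parameter is the line variance $\tau$. For a line $\ell = \set{\vec p + t\vec u : t \in \R}$ with $\vec u \in \bbS^{d-1}$ and $\vec p \perp \vec u$, the conditional density of $\vec X \sim L_d(\sigma)$ restricted to $\ell$ is, up to normalization, $e^{-\sqrt{d(\norm{\vec p}^2+t^2)}/\sigma}$. This one-parameter family is symmetric in $t$ and log-concave, and interpolates smoothly between the $1$D Laplace (when $\norm{\vec p}=0$, variance $2\sigma^2/d$) and a locally-Gaussian behavior with variance $\Theta(\norm{\vec p}\sigma/\sqrt{d})$ when $\norm{\vec p} \gg \sigma/\sqrt{d}$, which only grows as $\norm{\vec p} \to \infty$. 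A short case-split on whether the Gaussian-like curvature width $\sqrt{\norm{\vec p}\sigma}/d^{1/4}$ exceeds the Laplace tail scale $\sigma/\sqrt{d}$ shows that the variance along $\ell$ is $\Omega(\sigma^2/d)$ uniformly in $\vec p$, so $\tau = \sigma/\sqrt{d}$ is valid.

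Plugging into \eqref{eq:main-shadow-bnd} yields
\[
O\!\left(\frac{d^{1.5}L}{\tau}(1+R_{n,d})(1+r_n)\right) = O\!\left(\frac{d^{2.5}}{\sigma^{2}}\bigl(1+\sigma\sqrt{d}\log n\bigr)\bigl(1+\sigma\log n\bigr)\right),
\]
and expanding the product and collecting the dominant terms produces $O(d^{2.5}\sigma^{-2} + d^{3}\log n\cdot \sigma^{-1} + d^{3}\log^{2} n)$, as claimed. The main obstacle is the line-variance estimate above, where one has to treat lines that lie arbitrarily close to or far from the origin uniformly; everything else reduces to routine moment and tail calculations for the gamma distribution.
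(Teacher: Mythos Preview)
Your proposal is correct and follows the same overall strategy as the paper: instantiate Theorem~\ref{thm:abstractedbound} with the four Laplace parameters and expand the product. The only substantive difference is in how you bound the line variance $\tau$. You argue directly, analyzing the one-parameter family of conditional densities $t\mapsto e^{-\sqrt{d}\sqrt{\|\vec p\|^2+t^2}/\sigma}$ and splitting into the Laplace-like regime ($\|\vec p\|$ small) versus the Gaussian-like regime ($\|\vec p\|$ large). The paper instead appeals to the general fact that any $L$-log-Lipschitz density has line variance at least $1/(eL^2)$ (Lemma~\ref{lem:linevariance-vs-lipschitz}), so $\tau \geq \sigma/\sqrt{de}$ drops out immediately from $L=\sqrt{d}/\sigma$ with no case analysis. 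Your route extracts more structural information about the Laplace distribution but is more work; the paper's route is a one-line application of a lemma already available in the framework. Both yield $\tau=\Omega(\sigma/\sqrt{d})$ and the remaining arithmetic is identical.
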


The proofs of Theorems~\ref{thm:gaussian} and~\ref{thm:laplace} are given in
respectively subsections~\ref{sec:shadow-gaussian} and~\ref{sec:shadow-laplace}.

\subsection{Parametrized Shadow Bound}
In this section, we prove a shadow bound theorem for any noise distribution
that has non-trivial bounds on certain parameters. The parameters we will use are defined below.

\subsubsection{Distribution parameters}
\label{sub:dist-params}
\index{distribution parameters}
\begin{definition}\label{def:log-Lipschitz}
A distribution with density $\mu$ on $\R^d$ is \emph{$L$-log-Lipschitz} if for all $\vec x, \vec y \in \R^d$ we have $\abs{\log(\mu(\vec x)) - \log(\mu(\vec y))} \leq L\norm{\vec x-\vec y}$. Equivalently, $\mu$ is $L$-log-Lipschitz if $\mu(\vec x)/\mu(\vec y) \leq \exp(L\norm{\vec x-\vec y})$ for all $\vec x, \vec y \in \R^d$. \index{log-Lipschitz}\index{$L$|see {log-Lipschitz}}
\end{definition}

\begin{definition} \label{def:linestddev}
 Given a probability distribution with density $\mu$ on $\R^d$, we define the
line variance $\tau^2$ as the infimum of the variances when restricted to any fixed line $l \subset \R^d$:
 \[\tau^2 = \inf_{\mbox{\scriptsize \emph{line} $l \subset \R^d$}} \Var(\vec X \sim \mu \mid \vec X \in l).\]
\end{definition}

Both the log-Lipschitz constant and the minimal line variance relate to how
``spread out'' the probability mass is. The log-Lipschitzness of a random
variable gives a lower bound on the line variance, which we prove in
Lemma~\ref{lem:linevariance-vs-lipschitz}.

\begin{definition} \label{def:deviation}
 Given a distribution with probability density $\mu$ on $\R^d$ with expectation $\E_{\vec X \sim \mu}[\vec X] = \vec y$ we define the $n$-th deviation $\deviation$ to be the smallest number such that for any unit vector $\vec \theta \in \R^d$,
 \[\int_{\deviation}^\infty \Pr_{\vec X \sim \mu}[\abs{\inner{(\vec X - \vec y)}{\vec \theta}} \geq t] dt \leq \deviation/n.\]
 Note that as $\deviation$ increases to $\infty$, the left-hand side goes to $0$ and the right-hand side goes to $\infty$. We see that there must exist a number satisfying this inequality, so $\deviation$ is well-defined.
\end{definition}

The $n$-th deviation will allow us to give bounds on the expected maximum size $\E[\max_{i \leq n} \abs{\inner{\vec x_i}{\vec\theta}}]$ of $n$ separate perturbations in a given direction $\vec\theta$. We formalize this in Lemma~\ref{lem:deviation-vs-expectednorm}.

\begin{definition} \label{def:cutoffnorm}
 Given a distribution with probability density $\mu$ on $\R^d$ with expectation $\E_{\vec x \sim \mu}[\vec x] = \vec y$, we define, for all $1 > p > 0$, the cutoff radius $R(p)$ as the smallest number satisfying
 \[\Pr_{\vec x \sim \mu}[\norm{\vec x - \vec y} \geq R(p)] \leq p.\]
\end{definition}
The cutoff radius of interest is $\cutoffnorm := R(\frac{1}{d\binom{n}{d}})$. The cutoff radius tells us how concentrated the probability mass of the random variable is, while the log-Lipschitzness tells us how spread out the probability mass is. These quantities cannot both be arbitrarily good (small) at the same time. We formalize this notion in Lemma~\ref{lem:lipschitz-vs-cutoff}.

\begin{lemma}\label{lem:deviation-vs-expectednorm}
If $\vec x_1,\dots,\vec x_n$ are each distributed with mean $\vec 0$ and $n$-th deviation at most $r_n$, then for any $\vec \theta \in \bbS^{d-1}$,
\[
\E[\max_{i\in [n]} \abs{\inner{\vec \theta}{\vec x_i}}] \leq 2 r_n .
\]
\end{lemma}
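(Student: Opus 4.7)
The plan is to use the standard tail-integral identity $\E[Y] = \int_0^\infty \Pr[Y \geq t]\,dt$ for the nonnegative random variable $Y := \max_{i \in [n]} |\langle \vec\theta, \vec x_i\rangle|$, then split the integral at the threshold $r_n$ and bound the two pieces separately. The point of the splitting is that the $n$-th deviation is defined precisely so that the tail integral beyond $r_n$ is small, while below $r_n$ the integrand is bounded by $1$ trivially.

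For the ``small $t$'' piece, $\int_0^{r_n} \Pr[Y \geq t]\,dt \leq \int_0^{r_n} 1\,dt = r_n$. For the ``large $t$'' piece, the key move is a union bound followed by Fubini: for each $t \geq r_n$,
\[
\Pr[Y \geq t] = \Pr\Bigl[\exists\,i:\, |\langle\vec\theta,\vec x_i\rangle| \geq t\Bigr] \leq \sum_{i=1}^n \Pr\bigl[|\langle\vec\theta,\vec x_i\rangle| \geq t\bigr],
\]
so that
\[
\int_{r_n}^\infty \Pr[Y \geq t]\,dt \leq \sum_{i=1}^n \int_{r_n}^\infty \Pr\bigl[|\langle\vec\theta,\vec x_i\rangle| \geq t\bigr]\,dt.
\]
By hypothesis each $\vec x_i$ has mean $\vec 0$ and $n$-th deviation at most $r_n$, so Definition~\ref{def:deviation} applied to the unit vector $\vec\theta$ bounds each summand by $r_n/n$. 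Summing over the $n$ indices gives exactly $r_n$, and combining the two pieces yields $\E[Y] \leq r_n + r_n = 2r_n$, as desired.

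There is no real obstacle here; the only small subtlety is that the $\vec x_i$ need not be identically distributed, but the definition of $n$-th deviation is applied separately to each $\vec x_i$ (the bound $r_n$ being valid for every one of them), so the union bound plus the interchange of sum and integral goes through without issue. The argument is essentially a one-line application of the layer-cake formula together with the observation that the definition of $r_n$ was calibrated so that each tail contribution is exactly $1/n$ of $r_n$, yielding the clean factor of $2$.
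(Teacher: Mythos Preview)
Your proof is correct and follows essentially the same approach as the paper: both use the layer-cake formula $\E[Y]=\int_0^\infty \Pr[Y\geq t]\,dt$, split the integral at $r_n$, bound the first piece trivially by $r_n$, and handle the second piece via a union bound together with the defining inequality of the $n$-th deviation to obtain another $r_n$. Your remark that the $\vec x_i$ need not be identically distributed is a correct reading of the hypothesis and the paper's argument uses the bound in exactly the same per-coordinate fashion.
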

\begin{proof}
We rewrite the expectation as
\[
\E[\max_{i \in [n]} \abs{\inner{\vec \theta}{\vec x_i}}] = \int_0^\infty \Pr[\max_{i \in [n]} \abs{\inner{\vec \theta}{\vec x_i}} \geq t] \d t .
\]
We separately bound the integral up to $r_n$ and from $r_n$ to $\infty$. Since a probability is at most $1$ we have
\[
\int_0^\deviation \Pr[\max_{i \in [n]} \abs{\inner{\vec \theta}{\vec x_i}} \geq t] \d t \leq \deviation,
\]
and by definition of the $n$-th deviation and the union bound:
\begin{align*}
\int_\deviation^\infty \Pr[ \max_{i \in [n]} \abs{\inner{\vec \theta}{\vec x_i}} \geq t] \d t &\leq \sum_{i\in [n]} \int_\deviation^\infty \Pr[\abs{\inner{\vec \theta}{\vec x_i}} \geq t] \\
&\leq \deviation.
\end{align*}
Together these estimates yield the desired inequality,
\[
\E[\max_{i \leq n} \abs{\inner{\vec \theta}{\vec x_i}}] \leq 2\deviation.
\]
\end{proof}

\begin{lemma} \label{lem:linevariance-vs-lipschitz}
 If a distribution with probability density $\mu$ is $L$-log-Lipschitz, then its line variance satisfies
$\tau \geq 1/(\sqrt{e}L)$.
\end{lemma}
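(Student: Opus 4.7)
The plan is to reduce the problem to a one-dimensional variance bound and then establish that bound via a simple differential inequality on the tail. Fix any line $l \subset \R^d$, written as $\set{\vec p + t\vec v : t \in \R}$ with $\vec v$ a unit vector. The density of $\mu$ conditioned on $l$ is, up to normalization, $h(t) \propto \mu(\vec p + t\vec v)$; since $\norm{t_1 \vec v - t_2 \vec v} = \abs{t_1 - t_2}$, the $L$-log-Lipschitz property of $\mu$ transfers directly to $h$. It therefore suffices to show that every $L$-log-Lipschitz probability density $h$ on $\R$ has variance at least $1/(eL^2)$; write $Y \sim h$ and $m := \E[Y]$.

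The crux is the pointwise bound $h(a) \leq L \Pr[Y \geq a]$ (and symmetrically $h(a) \leq L \Pr[Y \leq a]$) for every $a \in \R$. This follows by integrating the log-Lipschitz lower bound:
\[
\Pr[Y \geq a] = \int_a^\infty h(t)\, dt \;\geq\; h(a) \int_a^\infty e^{-L(t-a)}\, dt \;=\; h(a)/L.
\]
Define the tail functions $G(t) := \Pr[Y \geq m+t]$ and $H(t) := \Pr[Y \leq m-t]$ for $t \geq 0$. Specializing the pointwise bound at $a = m \pm t$ gives the differential inequalities $-G'(t) = h(m+t) \leq L\, G(t)$ and $-H'(t) \leq L\, H(t)$. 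A Gr\"onwall step (equivalently, noting that $G(t)e^{Lt}$ is nondecreasing) yields $G(t) \geq G(0) e^{-Lt}$ and $H(t) \geq H(0) e^{-Lt}$. Since $\mu$ is absolutely continuous we have $G(0) + H(0) = \Pr[Y \geq m] + \Pr[Y \leq m] = 1$, so
\[
\Pr\bigl[\abs{Y - m} \geq t\bigr] = G(t) + H(t) \;\geq\; e^{-Lt} \qquad \text{for all } t \geq 0.
\]
Evaluating at $t = 1/L$ and applying Markov's inequality yields
\[
\Var(Y) \;\geq\; (1/L)^2 \cdot \Pr\!\bigl[\abs{Y-m} \geq 1/L\bigr] \;\geq\; 1/(e L^2),
\]
and taking the infimum over lines followed by square roots delivers $\tau \geq 1/(\sqrt{e}\,L)$.

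The only step warranting any care is justifying the differential inequality, but this is routine: for a probability density $h$ the tail functions $G, H$ are absolutely continuous with $G'(t) = -h(m+t)$ almost everywhere, which is precisely the regularity needed for the Gr\"onwall integration. The factor $\sqrt{e}$ in the bound appears naturally from evaluating the exponential tail at the single threshold $t = 1/L$, balancing the Markov weight $t^2$ against the tail decay $e^{-Lt}$. I would note in passing that integrating the full tail would sharpen this to $\tau \geq \sqrt{2}/L$, attained with equality by the Laplace distribution, but the weaker form $1/(\sqrt{e}\,L)$ is what the rest of the paper uses and is clean enough to state without fuss.
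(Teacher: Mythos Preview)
Your proof is correct and follows essentially the same strategy as the paper: reduce to one dimension, show $\Pr[\abs{Y-m} \geq 1/L] \geq 1/e$, and apply Markov. The only difference is in how that tail bound is established---the paper uses a one-shot shift comparison (writing $\int_{-\infty}^0 h(\gamma)\,d\gamma = \int_{-\infty}^{-1/L} h(\gamma+1/L)\,d\gamma \leq e\int_{-\infty}^{-1/L} h(\gamma)\,d\gamma$) whereas you derive the pointwise bound $h(a) \leq L\Pr[Y\geq a]$ and integrate via Gr\"onwall; these are two phrasings of the same mechanism. Your closing remark that integrating the full exponential tail yields the sharp constant $\tau \geq \sqrt{2}/L$ (attained by the Laplace distribution) is correct and a nice addition.
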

\begin{proof}
 Let $\vec v + \vec w\R$ be a line and assume that $\E[\vec x\mid \vec x \in \vec v + \vec w\R] = \vec v$ and $\norm{\vec w} = 1$. We show that with probability at least $1/e$, $\vec x$ has distance at least $1/L$ from $\vec v$.
 Conditioning on $\vec x \in \vec v + \vec w\R$, the induced probability mass is proportional to $\mu(\vec x)$.
 We can bound the fraction of the induced probability mass that is far away from the expectation by the
following calculation:
 \begin{align*}
  \int_{-\infty}^\infty \mu(\vec v + \gamma \vec w) \d \gamma &= \int_{-\infty}^0 \mu(\vec v + \gamma \vec w) \d \gamma + \int_0^\infty \mu(\vec v + \gamma \vec w) \d \gamma \\
  &= \int_{-\infty}^{-1/L} \mu(\vec v + (\gamma + 1/L) \vec w) \d \gamma + \int_{1/L}^\infty \mu(\vec v + (\gamma - 1/L) \vec w) \d \gamma \\
  &\leq e \int_{-\infty}^{-1/L} \mu(\vec v + \gamma \vec w) \d \gamma + e \int_{1/L}^\infty \mu(\vec v + \gamma \vec w) \d \gamma.
 \end{align*}
The integral on the first line exists because it is the integral of a
continuous non-negative function, and,
if the integral were infinite, then the integral along every parallel line would be infinite by log-Lipschitzness,
contradicting the fact that $\mu$ has integral $1$ over $\R^d$.
 
 Hence, $\Pr[\norm{\vec x - \vec v} \geq 1/L \mid \vec x \in \vec v + \vec w\R] = \frac{\int_{-\infty}^{-1/L} \mu(\vec v + \gamma \vec w) \d \gamma + \int_{1/L}^\infty \mu(\vec v + \gamma \vec w) \d \gamma}{\int_{-\infty}^\infty \mu(\vec v + \gamma \vec w) \d \gamma} \geq 1/e,$
 and we can lower bound the variance
 \[\Var(\vec x \mid \vec x \in \vec v + \vec w\R) \geq \frac{1}{e} (1/L)^2.\]
 Since the line $\vec v + \vec w\R$ was arbitrary, it follows that $\tau \geq 1/(\sqrt{e} L)$.
\end{proof}

\begin{lemma} \label{lem:lipschitz-vs-cutoff}
For a $d$-dimensional distribution with probability density $\mu$, where $d \geq 3$, with parameters $L,
\cutoffnormnosub$ as described above, we have the inequality $L\cutoffnormnosub(1/2) \geq d/3$.
\end{lemma}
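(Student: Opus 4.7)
Let $\vec{y}$ denote the mean of $\mu$ and write $R := R(1/2)$. My plan is to extract a \emph{universal} pointwise upper bound on the density $\mu$ from the $L$-log-Lipschitz condition alone, and then combine it with the fact that the ball $B_R(\vec{y})$ carries at least half the probability mass. This turns the claimed inequality $LR \geq d/3$ into a concrete volume comparison.

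First, I would fix any $\vec{y}_0 \in \R^d$ and use $\mu(\vec{x}) \geq \mu(\vec{y}_0)\, e^{-L\|\vec{x}-\vec{y}_0\|}$ for all $\vec{x}$. Integrating this lower bound over $\R^d$ in polar coordinates centered at $\vec{y}_0$, and using $\int_0^\infty r^{d-1} e^{-Lr}\,dr = (d-1)!/L^d$, I get
\[
1 \;=\; \int_{\R^d} \mu \;\geq\; \mu(\vec{y}_0)\cdot \vol_{d-1}(\bbS^{d-1}) \cdot \frac{(d-1)!}{L^d}.
\]
Since $\vec{y}_0$ was arbitrary, this gives the universal bound $\sup \mu \leq L^d/(\vol_{d-1}(\bbS^{d-1})\,(d-1)!)$.

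Next, the definition of $R(1/2)$ guarantees $\Pr[\vec{X}\in B_R(\vec{y})]\geq 1/2$. Bounding this probability crudely by $\sup\mu$ times $\vol(B_R) = \vol_{d-1}(\bbS^{d-1})\,R^d/d$ gives
\[
\tfrac{1}{2} \;\leq\; \sup\mu \cdot \frac{\vol_{d-1}(\bbS^{d-1})\,R^d}{d} \;\leq\; \frac{(LR)^d}{d!},
\]
so $LR \geq (d!/2)^{1/d}$. It then remains to show $(d!/2)^{1/d} \geq d/3$ for $d\geq 3$, i.e.\ $d! \geq 2(d/3)^d$, which by Stirling's lower bound $d! \geq \sqrt{2\pi d}\,(d/e)^d$ reduces to $\sqrt{2\pi d}\,(3/e)^d \geq 2$. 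This holds easily for $d\geq 3$, since $3/e > 1$ and $\sqrt{6\pi} > 4$; the few smallest cases can also be checked by hand.

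The subtle point I would pay attention to is the estimate of $\int_{B_R(\vec{y})} \mu$. The natural temptation is to bound $\mu$ on $B_R(\vec{y})$ by $\mu(\vec{y})\,e^{LR}$ (the log-Lipschitz bound at the center), but this only yields $(LR)^d e^{LR} \geq d!/2$, which translates into $LR \gtrsim 0.278\,d$ and falls short of $d/3$. The cleaner move is to use the \emph{global} bound $\sup\mu \leq L^d/(\vol_{d-1}(\bbS^{d-1})(d-1)!)$ and forget the extra $e^{LR}$ factor entirely; this is where the missing constant slack comes from.
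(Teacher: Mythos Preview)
Your proof is correct and takes a genuinely different route from the paper's.

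The paper argues by a scaling trick: it compares the mass of $\mu$ in the ball of radius $\alpha\bar{R}$ (at most $1$) to the mass in the ball of radius $\bar{R}$ (at least $1/2$), using the log-Lipschitz bound $\mu(\alpha\vec{x}) \geq e^{-(\alpha-1)L\bar{R}}\mu(\vec{x})$ on $\bar{R}\calB_2^d$. This yields $1 \geq \tfrac{\alpha^d}{2}e^{-(\alpha-1)L\bar{R}}$, and optimizing with $\alpha = d/(L\bar{R})$ (after a case split on whether $L\bar{R} \geq d$) produces the inequality.

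Your argument instead extracts a \emph{global} pointwise bound on the density: from $\mu(\vec{x}) \geq \mu(\vec{y}_0)e^{-L\|\vec{x}-\vec{y}_0\|}$ and integrability you get $\sup\mu \leq L^d/\bigl(\vol_{d-1}(\bbS^{d-1})(d-1)!\bigr)$, which is exactly the maximum of the $d$-dimensional Laplace density with log-Lipschitz constant $L$. Combining with $\tfrac12 \leq \sup\mu \cdot \vol_d(\bar{R}\calB_2^d)$ gives the clean closed form $L\bar{R} \geq (d!/2)^{1/d}$, which is asymptotically $d/e$ rather than $d/3$. Your approach avoids both the case split and the optimization over $\alpha$, isolates a reusable fact (the sup-density bound for log-Lipschitz densities), and yields a sharper constant; the paper's scaling argument is slightly more self-contained in that it never leaves the two balls under consideration.
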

\begin{proof}
 Let $\bar{R} := \cutoffnormnosub(1/2)$. If $L\bar{R} \geq d$, we are already
done, so we may assume that $L\bar{R} < d$. Also, without loss of generality, we
may assume that $\mu$ has mean $\vec 0$. For $\alpha > 1$ to be chosen later we know
 \begin{align*}
  1 &\geq \int_{\alpha \bar R \calB_2^d} \mu(\vec x) \d \vec x \\
  &= \alpha^d \int_{\bar R \calB_2^d} \mu(\alpha \vec x)\d \vec x \\
  &\geq \alpha^d e^{-(\alpha-1) L \bar R} \int_{\bar R \calB_2^d}\mu(\vec x)\d \vec x \\
  &= \frac{\alpha^d}{2}e^{-(\alpha-1) L \bar R}.
\end{align*}
 Taking logarithms, we find
 \[0 \geq d \log(\alpha) - (\alpha-1) L \bar R - \log(2).\]
 We choose $\alpha = \frac{d}{L \bar R} > 1$ and look at the resulting inequality:
 \[0 \geq d \log(\frac{d}{L\bar R}) - d + L\bar{R} - \log(2).\]
 For $d \geq 3$, this inequality can only hold if $L\bar R \geq d/3$, as needed.
\end{proof}

\subsubsection{Proving a shadow bound for parametrized distributions}
The main result of this subsection is the following parametrized shadow bound.

\begin{theorem}[Parametrized Shadow Bound] \label{thm:abstractedbound} \index{parametrized shadow bound}
Let $\vec a_1,\ldots,\vec a_n \in \R^d$, where $n \geq d \geq 3$, be independently distributed according
to $L$-log-Lipschitz distributions with centers of norm at
most $1$, line variances at least $\tau^2$, cutoff radii at most $\cutoffnorm$
and $n$-th deviations at most $\deviation$. For any fixed two-dimensional linear
subspace $W \subset \R^d$, the expected number of edges satisfies
 \[\E[\abs{\edges(\conv(\vec a_1,\dots,\vec a_n) \cap W)}] \leq O(\frac{d^{1.5}L}{\tau}(1+\cutoffnorm)(1+\deviation)).\]
\end{theorem}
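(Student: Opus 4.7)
The plan is to execute the Kelner--Spielman perimeter-to-edges strategy from Section~\ref{sec:proof-sketch}, instantiated with the four distribution parameters. Starting from the identity $\E[\mathrm{perimeter}(Q\cap W)] = \E\bigl[\sum_{e} \length(e)\bigr]$ with $Q:=\conv(\vec a_1,\dots,\vec a_n)$, I would upper bound the left side by $2\pi\,\E[\max_i\|\pi_W(\vec a_i)\|]$ (since $Q\cap W$ is contained in a disk of that radius) and lower bound the right side by $\min_I \E[\length(F_I\cap W)\mid E_I]\cdot \E[|\edges(Q\cap W)|]$, where $F_I:=\conv(\vec a_i:i\in I)$ and $E_I$ is the event that $F_I$ is a facet of $Q$ with $F_I\cap W\neq\emptyset$. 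Rearranging yields
\[
\E[|\edges(Q\cap W)|] \,\leq\, \frac{2\pi\,\E[\max_i\|\pi_W(\vec a_i)\|]}{\min_{|I|=d,\,\Pr[E_I]\geq 2/\binom{n}{d}} \E[\length(F_I\cap W)\mid E_I]} + 2,
\]
where restricting the minimum to bases of non-negligible probability is harmless because the remaining bases contribute at most $2$ in expectation to the total edge count.

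The numerator is immediate from $\|\vecb a_i\|\leq 1$, the triangle inequality, and Lemma~\ref{lem:deviation-vs-expectednorm}, which together give $\E[\max_i\|\pi_W(\vec a_i)\|]\leq 1+2r_n$. All remaining work is a lower bound of $\Omega(\tau/(d^{1.5}L(1+R_{n,d})))$ on the conditional edge length, which I would obtain by fixing a basis $I=[d]$ with $\Pr[E_I]\geq 2/\binom{n}{d}$ and executing four steps corresponding to Lemmas~\ref{lem:boundedprojectedsimplex}--\ref{lem:convexcombinationspace}. First, a diameter bound: each perturbation exceeds $R_{n,d}$ with probability at most $1/(d\binom{n}{d})$, so a union bound combined with $\Pr[E_I]\geq 2/\binom{n}{d}$ yields $\diam(F_I)=O(1+R_{n,d})$ with conditional probability at least $1/2$. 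Second, a change of variables: Theorem~\ref{prelim:blaschke} allows me to reparametrize $(\vec a_1,\dots,\vec a_d)\mapsto(\vec\theta,t,\vec b_1,\dots,\vec b_d)$ so that $H:=\{\vec x:\inner{\vec\theta}{\vec x}=t\}$ is the affine hull of $F_I$, condition on $H$, and work with the induced $L$-log-Lipschitz density on the coordinates $(\vec b_1,\dots,\vec b_d)\in(\R^{d-1})^d$, weighted by the Blaschke Jacobian $(d-1)!\vol_{d-1}(\conv(\vec b_i))$.

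With $H$ fixed, let $l:=H\cap W$. The third step is to bound the expected length of the longest chord of $F_I$ parallel to $l$: since each $\vec b_i$ has line variance at least $\tau^2$ in the direction of $l$, with constant probability two vertices of $F_I$ are separated by $\Omega(\tau)$ in that direction, and a simplex-geometry argument loses at most a $1/\sqrt{d}$ factor to yield $\Omega(\tau/\sqrt{d})$. The fourth step is to show that the actual chord $F_I\cap l$ is not much shorter than this longest parallel chord, by reasoning that sliding any single vertex of $F_I$ perpendicularly to $l$ inside $H$ by up to $\Theta(1/(L(1+R_{n,d})))$ moves the intersection through a constant fraction of the longest parallel chord; because the density is $L$-log-Lipschitz and the Jacobian varies by at most a constant factor over such slides (thanks to the diameter bound), all these configurations are comparably likely, so $l$ pierces $F_I$ near the middle with constant conditional probability. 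Multiplying the losses yields $\E[\length(F_I\cap W)\mid E_I]=\Omega(\tau/(d^{1.5}L(1+R_{n,d})))$, and combining with the numerator proves the theorem.

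The main obstacle I anticipate is the fourth step, where the conditioning ``$l$ intersects $F_I$'' and the Blaschke reparametrization interact with the log-Lipschitz bound in a delicate way: one must check that the perpendicular-slide argument incurs a density ratio of at most $\exp(L\cdot\diam(F_I))=\exp(O(L(1+R_{n,d})))$ rather than anything larger, which is where the $(1+R_{n,d})$ factor enters the denominator, and additionally account for the combinatorial cost of choosing which of the $d$ vertices to slide, which accounts for the extra factor of $d$ beyond the $\sqrt{d}$ already lost in the third step.
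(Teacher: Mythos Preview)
Your high-level plan matches the paper's proof exactly: the perimeter-to-edge-length ratio (Lemma~\ref{lem:edgesasfraction}), the perimeter bound via $r_n$ (Lemma~\ref{lem:perimeterbound}), the diameter event (Lemma~\ref{lem:boundedprojectedsimplex}), the Blaschke change of variables and conditioning on $H$ (Lemma~\ref{lem:conditiononhyperplane}), the longest-chord bound from line variance (Lemma~\ref{lem:heightofsimplex}), and the piercing-chord comparison via log-Lipschitzness (Lemma~\ref{lem:convexcombinationspace}). Even the final arithmetic is the same.

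The one place where your sketch diverges, and where there is a real gap, is the mechanism in your fourth step. You propose to slide a \emph{single} vertex of $F_I$ perpendicular to $l$ and argue that the Blaschke Jacobian ``varies by at most a constant factor over such slides (thanks to the diameter bound)''. This is false: the Jacobian is $(d-1)!\vol_{d-1}(\conv(\vec b_1,\dots,\vec b_d))$, and sliding one vertex can drive that volume to zero regardless of the diameter. The paper avoids this entirely by a further change of variables (Definition~\ref{def:projected-shape}, Lemma~\ref{lem:factorizevolume}): it separates the heights $h_i=\inner{\bar{\vec\omega}}{\vec b_i}$ from the projected shape $S$ and the shift $\vec x=\pi_{\bar{\vec\omega}^\perp}(\vec b_1)$, and shows the volume factor becomes $|\sum_i z_i h_i|\cdot\int_{\conv(S)}\|C(\vec q)\|_1\,d\vec q$. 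The second factor depends only on $S$ and drops out of the conditional density of $\vec q=\bar{\vec p}-\vec x$; the first factor is absorbed into the analysis of step~3 (and is in fact why Lemma~\ref{lem:heightofsimplex} computes $\E[X^2]/\E[|X|]$ rather than a bare variance bound). Once this is done, the marginal of $\vec q$ is $dL$-log-Lipschitz (moving $\vec q$ shifts all $d$ vertices simultaneously), and the concavity of $\|C(\vec q)\|_1$ over $\conv(S)$ together with a shrink toward its maximizer $\vec y$ by factor $\alpha=1-1/(dL(2+2R_{n,d}))$ gives the bound $e^{-2}/(dL(1+R_{n,d}))$ with density ratio $O(1)$, not $\exp(L\cdot\mathrm{diam})$. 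Your ``combinatorial cost of choosing which vertex to slide'' is not where the factor $d$ actually comes from; it comes from the $dL$-Lipschitz constant of the $\vec q$-marginal.
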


The proof is given at the end of the subsection. It  will be derived from the
sequence of lemmas given below. We refer the reader to
subsection~\ref{sec:proof-sketch} for a high-level overview
of the proof.

In the rest of the subsection, $\vec a_1,\dots,\vec a_n \in \R^d$, where $n \geq d \geq 3$, will
be as in Theorem~\ref{thm:abstractedbound}. We use $Q := \conv(\vec a_1,
\dots, \vec a_n)$ to denote the convex hull of the constraint vectors and $W$ to
denote the two-dimensional shadow plane. \index{$Q$}

The following non-degeneracy conditions on $\vec a_1,\dots,\vec a_n$ will hold with probability $1$, because $\vec a_1,\dots,\vec a_n$ are independently distributed with continuous distributions.
\begin{enumerate}
 \item Every $d+1$ vectors from $\vec a_1,\dots,\vec a_n$ are affinely independent. Thus, every facet of $Q$ is the convex hull of exactly $d$ vectors from $\vec a_1,\dots,\vec a_n$.
 \item Any $d$ distinct vectors $\vec a_{i_1},\dots,\vec a_{i_d}$, $i_1,\dots,i_d \in
[n]$, have a unique hyperplane through them. This hyperplane intersects $W$ in
a one-dimensional line, does not contain the origin $\vec 0$, and its unit
normal vector pointing away from the origin is not $- \vec e_1$. Note that the
last two conditions imply that Blaschke's transformation on $\vec a_{i_1},\dots,\vec a_{i_d}$ is uniquely defined with $\vec e_1$ as reference vector.
 \item For every edge $e \subset Q \cap W$ there is a unique facet $F$ of $Q$ such that $e = F \cap W$.
\end{enumerate}
In what follows we will always assume the above conditions hold.

For our first lemma, in which we bound the number of edges in terms of two
different expected lengths, we make a distinction between possible edges with
high probability of appearing versus edges with low probability of appearing.
The sets with probability at most $2\binom{n}{d}^{-1}$ to form an edge, together
contribute at most $2$ to the expected number of edges, as there are only
$\binom{n}{d}$ bases.

For a basis with probability at least $2\binom{n}{d}^{-1}$ of forming an edge,
we can safely condition on it forming an edge without forcing very unlikely
events to happen. Because of this, we will later be able to condition on
the vertices not being too far apart.

\begin{definition} \label{def:edgeevent}
 For $I \in \binom{[n]}{d}$, let $E_I$ denote the event that $\conv(\vec a_i : i \in I) \cap W$ forms an edge of $Q \cap W$.\index{$E$}
\end{definition}

\begin{definition} \label{def:badsets}
 We define the set $B \subseteq \binom{[n]}{d}$ to be the set of those $I
\subseteq [n]$ satisfying $\abs{I} = d$ and $\Pr[E_I] \geq 2 \binom{n}{d}^{-1}$.
\end{definition}

The next lemma is inspired by Theorem 3.2 of \cite{conf/stoc/KS06}.

\begin{lemma} \label{lem:edgesasfraction}
The expected number of edges in $Q \cap W$ satisfies \begin{displaymath}
\E[\abs{\edges(Q \cap W)}] \leq 2 + \frac{\E[\perimeter(Q \cap W)]}{\min_{I \in
B} \E[\length(\conv(\vec a_i : i \in I)\cap W) \mid E_I]}. \end{displaymath}
\end{lemma}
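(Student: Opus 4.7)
The plan is to decompose the expected perimeter of $Q \cap W$ into contributions from the candidate edges indexed by $d$-subsets $I \in \binom{[n]}{d}$, and then split the index set into the ``heavy'' bases $B$ and the ``light'' bases $\binom{[n]}{d} \setminus B$. The light bases contribute at most $2$ to the expected edge count by a union bound, while the heavy bases are controlled by dividing a perimeter upper bound by the worst conditional expected edge length.

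First, under the almost-sure non-degeneracy conditions already noted, every edge of $Q\cap W$ is uniquely of the form $F_I \cap W$ for some $I \in \binom{[n]}{d}$, so
\begin{equation*}
\abs{\edges(Q\cap W)} = \sum_{I \in \binom{[n]}{d}} \mathbbm{1}[E_I]
\quad\text{and}\quad
\perimeter(Q\cap W) = \sum_{I \in \binom{[n]}{d}} \length(F_I \cap W)\,\mathbbm{1}[E_I],
\end{equation*}
where $F_I := \conv(\vec a_i : i \in I)$. Taking expectations and using the definition of conditional expectation, I get
\begin{equation*}
\E[\abs{\edges(Q\cap W)}] = \sum_{I} \Pr[E_I]
\quad\text{and}\quad
\E[\perimeter(Q\cap W)] = \sum_{I} \E[\length(F_I\cap W)\mid E_I]\,\Pr[E_I].
\end{equation*}

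Now split the first sum along $B$. For $I \notin B$ I have $\Pr[E_I] < 2\binom{n}{d}^{-1}$, hence
\begin{equation*}
\sum_{I \notin B} \Pr[E_I] \;\leq\; \binom{n}{d} \cdot 2\binom{n}{d}^{-1} \;=\; 2.
\end{equation*}
For the remaining $I \in B$, I use that each summand in the perimeter identity is nonnegative, so restricting to $I \in B$ gives an upper bound
\begin{equation*}
\sum_{I \in B} \E[\length(F_I \cap W) \mid E_I]\,\Pr[E_I] \;\leq\; \E[\perimeter(Q \cap W)].
\end{equation*}
Factoring out the minimum conditional length over $I \in B$ yields
\begin{equation*}
\min_{I \in B} \E[\length(F_I \cap W) \mid E_I] \cdot \sum_{I \in B} \Pr[E_I]
\;\leq\; \E[\perimeter(Q \cap W)],
\end{equation*}
and dividing through and adding the contribution from $I \notin B$ gives the desired bound
\begin{equation*}
\E[\abs{\edges(Q\cap W)}] \;\leq\; 2 + \frac{\E[\perimeter(Q\cap W)]}{\min_{I \in B} \E[\length(F_I\cap W)\mid E_I]}.
\end{equation*}

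The only mild subtlety is ensuring that $\min_{I\in B}$ is well-defined, i.e.\ that we do not divide by zero; since $I \in B$ implies $\Pr[E_I] > 0$, the conditional expected length is strictly positive (the event $E_I$ forces $F_I \cap W$ to be a nondegenerate segment by our non-degeneracy conditions), and if $B = \emptyset$ the statement is trivially the inequality $\E[\abs{\edges(Q\cap W)}] \leq 2$ which follows directly from the light-basis bound. No genuine obstacle arises; the argument is essentially a clean bookkeeping of the Kelner--Spielman edge-counting identity, with the truncation to $B$ serving only to avoid pathological ``rare edge'' contributions in later applications of the bound.
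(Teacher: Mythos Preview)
Your proof is correct and follows essentially the same approach as the paper: decompose the perimeter as a sum over $I \in \binom{[n]}{d}$ of conditional expected edge lengths times $\Pr[E_I]$, restrict to $I \in B$, factor out the minimum, and use that the complement of $B$ contributes at most $2$ to the expected edge count. Your treatment of the edge cases ($B = \emptyset$, positivity of the denominator) is slightly more careful than the paper's, but the argument is otherwise identical.
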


\begin{proof}
 We give a lower bound on the perimeter of the intersection $Q \cap W$ in terms of the number of edges. By our non-degeneracy assumption, every edge can be uniquely represented as $\conv(\vec a_i : i \in I) \cap W$, for $I \in \binom{[n]}{d}$. From this we derive the first equality, and we continue from that:
 \begin{align*}
  \E[\perimeter(Q \cap W)] &= \sum_{I \in \binom{[n]}{d}} \E[\length(\conv(\vec a_i : i \in I)\cap W) \mid E_I] \Pr[E_I] \\
  &\geq \sum_{I \in B} \E[\length(\conv(\vec a_i : i \in I)\cap W) \mid E_I] \Pr[E_I] \\
  &\geq \min_{I \in B} \E[\length(\conv(\vec a_i : i \in I)\cap W) \mid E_I]
\sum_{J \in B} \Pr[E_J].
 \end{align*}
 The first line holds because whenever $E_I$ holds, $\conv(\vec a_i : i \in I)\cap W$ is an edge of $Q \cap W$, and every edge of $Q \cap W$ is formed by exactly one face $F_J$, by the non-degeneracy conditions we have assumed. By construction of $B$ and linearity of expectation, $\sum_{J \in B} \Pr[E_J] \geq \sum_{J \in \binom{[n]}{d}} \Pr[E_J] - 2 = \E[\abs{\edges(Q \cap W)}] - 2$.
 By dividing on both sides of the inequality, we can now conclude
\begin{displaymath}\E[\abs{\edges(Q \cap W)}] \leq 2 + \frac{\E[\perimeter(Q
\cap W)]}{\min_{I \in B} \E[\length(\conv(\vec a_i : i \in I)\cap W) \mid
E_I]}.\end{displaymath}
\end{proof}

Given the above, we may now restrict our task to proving an upper bound on the
expected perimeter and a lower bound on the minimum expected edge length, which
will be the focus on the remainder of the subsection.

The perimeter is bounded using a standard convexity argument. A convex shape has perimeter no more than that of any circle containing it. We exploit the fact that all centers have norm at most $1$ and the expected perturbation sizes are not too big along any fixed axis.

\begin{lemma} \label{lem:perimeterbound} \index{perimeter bound}
The expected perimeter of $Q \cap W$ is bounded by
 \[\E[\perimeter(Q \cap W)] \leq 2\pi(1+4\deviation),\]
\end{lemma}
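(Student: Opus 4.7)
}
The plan is to follow the two inequalities already sketched in the ``Edge Counting Strategy'' paragraph: first replace the perimeter of $Q\cap W$ by a circumference in $W$ containing it, and then control the radius of that circle in terms of the $n$-th deviation $r_n$ of the perturbations.

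First I would bound the perimeter geometrically. Since $Q\cap W$ is a convex subset of the two-dimensional plane $W$, the monotonicity of surface area (Lemma~\ref{lem:monotonicity}, applied in $W$) implies that $\perimeter(Q\cap W)\le 2\pi\max_{\vec x\in Q\cap W}\|\vec x\|$, because the circle in $W$ of that radius centred at the origin contains $Q\cap W$. Next, the containment $Q\cap W\subseteq \pi_W(Q)=\conv(\pi_W(\vec a_1),\dots,\pi_W(\vec a_n))$ (using $\pi_W$ of a convex hull is the convex hull of the projections) means that every point of $Q\cap W$ is a convex combination of the $\pi_W(\vec a_i)$, and hence $\max_{\vec x\in Q\cap W}\|\vec x\|\le \max_{i\in[n]}\|\pi_W(\vec a_i)\|$. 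Combining these two observations,
\[
\E[\perimeter(Q\cap W)]\;\le\; 2\pi\,\E\!\left[\max_{i\in[n]}\|\pi_W(\vec a_i)\|\right].
\]

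The second step is to bound the right-hand side using the mean-plus-noise decomposition $\vec a_i=\bar{\vec a}_i+\hat{\vec a}_i$ with $\|\bar{\vec a}_i\|\le 1$. By the triangle inequality and since $\pi_W$ is a contraction, $\|\pi_W(\vec a_i)\|\le \|\bar{\vec a}_i\|+\|\pi_W(\hat{\vec a}_i)\|\le 1+\|\pi_W(\hat{\vec a}_i)\|$, so
\[
\E\!\left[\max_{i\in[n]}\|\pi_W(\vec a_i)\|\right]\;\le\; 1+\E\!\left[\max_{i\in[n]}\|\pi_W(\hat{\vec a}_i)\|\right].
\]
Now I would reduce the two-dimensional maximum to two one-dimensional maxima. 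Fixing an orthonormal basis $\vec\theta_1,\vec\theta_2$ of $W$, I have $\|\pi_W(\hat{\vec a}_i)\|\le |\langle\hat{\vec a}_i,\vec\theta_1\rangle|+|\langle\hat{\vec a}_i,\vec\theta_2\rangle|$ (the $\ell_1$–$\ell_2$ bound in two dimensions). Taking the max inside the sum and then the expectation gives
\[
\E\!\left[\max_{i\in[n]}\|\pi_W(\hat{\vec a}_i)\|\right]\;\le\; \sum_{j=1}^{2}\E\!\left[\max_{i\in[n]}|\langle\hat{\vec a}_i,\vec\theta_j\rangle|\right].
\]

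For each direction $\vec\theta_j$ I invoke Lemma~\ref{lem:deviation-vs-expectednorm}, which says that $\E[\max_{i\in[n]}|\langle\vec\theta,\hat{\vec a}_i\rangle|]\le 2r_n$ for any unit vector $\vec\theta$, since the $\hat{\vec a}_i$ are centred with $n$-th deviation at most $r_n$. Hence $\E[\max_i\|\pi_W(\hat{\vec a}_i)\|]\le 4r_n$, and the lemma follows by chaining the inequalities. The argument is essentially routine; I do not see any genuine obstacle. The only subtle point worth double-checking is that the reduction $\|\pi_W(\vec v)\|\le|\langle\vec v,\vec\theta_1\rangle|+|\langle\vec v,\vec\theta_2\rangle|$ is used rather than $\|\pi_W(\vec v)\|\le\sqrt 2\max_j|\langle\vec v,\vec\theta_j\rangle|$, since the former commutes cleanly with max-then-expectation, whereas the latter would leave a single maximum over a $2n$-sized family and save only a constant factor.
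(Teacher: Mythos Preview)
Your proposal is correct and follows essentially the same route as the paper's proof: bound the perimeter by the circumference of the smallest containing circle via convexity, pass to $\max_i\|\pi_W(\vec a_i)\|$ using $Q\cap W\subseteq \pi_W(Q)$, split off the centers by the triangle inequality, decompose $\|\pi_W(\hat{\vec a}_i)\|$ along an orthonormal basis of $W$, and apply Lemma~\ref{lem:deviation-vs-expectednorm} to each coordinate. The paper's argument is identical in structure and in the constants obtained.
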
\index{n-th deviation@$n$-th deviation}
where $r_n$ is the $n$-deviation bound for $\vec a_1,\dots,\vec a_n$.
\begin{proof}
By convexity, the perimeter is bounded from above by $2 \pi$ times the norm of
the maximum norm point. Let $\hat{\vec a}_i := \vec a_i-\E[\vec a_i]$ denote the
perturbation of $\vec a_i$ from the center of its distribution, recalling that
$\norm{\E[\vec a_i]} \leq 1$ by assumption. We can now
derive the bound
\begin{align*}
  \E[\perimeter(Q \cap W)] &\leq 2 \pi \E[\max_{\vec x \in Q \cap W} \norm{\vec x}]\\
  &= 2 \pi \E[\max_{\vec x \in Q \cap W} \norm{\pi_W(\vec x)}] \\
  &\leq 2 \pi \E[\max_{\vec x \in Q} \norm{\pi_W(\vec x)}] \\
  &= 2 \pi \E[\max_{i \in [n]} \norm{\pi_W(\vec a_i)}] \\
  &\leq 2\pi\left(1 + \E[\max_{i \leq n} \norm{\pi_W(\hat{\vec
a}_i)}]\right)~,
 \end{align*}
where the last inequality follows since $\vec a_1,\dots,\vec a_n$ have centers
of norm at most $1$. Pick an orthogonal basis $\vec v_1,\vec v_2$ of $W$. By the
triangle inequality the expected perturbation size satisfies
 \begin{align*}
  \E[\max_{i \leq n} \norm{\pi_W(\hat{\vec a}_i)}] &\leq \sum_{j \in
\set{1,2}} \E[\max_{i \leq n} \abs{\inner{\vec v_j}{\hat{\vec a}_i}}].
  \end{align*}
  Each of the two expectations satisfies, by Lemma~\ref{lem:deviation-vs-expectednorm},
  $\E[\max_{i \leq n} \abs{\inner{\vec v_j}{\hat{\vec a}_i}}] \leq 2\deviation$, thereby
  concluding the proof.
\end{proof}

The rest of this subsection will be devoted to finding a suitable lower bound on
the denominator $\E[\length(\conv(\vec a_i : i \in I)\cap W) \mid E_I]$
uniformly over all choices of $I \in B$. Without loss of generality we assume
that $I = [d]$ and write $E := E_{[d]}$.

\begin{definition}[Containing hyperplane]
Define $H = \aff(\vec a_1,\dots,\vec a_d) = t\vec \theta + \vec \theta^\perp$, where
$\vec \theta \in \bbS^{d-1}$, $t > 0$ to be the hyperplane containing $\vec
a_1,\dots,\vec a_d$. Define $l = H \cap W$. From our non-degeneracy conditions
we know that $l$ is a line. Express $l = \vec p + \vec \omega
\cdot \R$, where $\vec \omega \in \bbS^{d-1}$ and $\vec p \in \vec \omega^\perp$.
\end{definition}

To lower bound the length $\E[\length(\conv(\vec a_1,\dots,\vec a_d)\cap W) \mid
E]$ we will need the pairwise distances between the different $\vec a_i$'s for
$i \in \set{1,\dots,d}$ to be small along ${\vec \omega}^\perp$.  This will
allow us to get ``wiggle room'' around each vertex of $\conv(\vec a_1,\dots,\vec
a_d)$ that
is proportional to the size of the facet.

\begin{definition}[Bounded diameter event]
 We define the event $D$ to hold exactly when $\norm{\pi_{\vec \omega^\perp}(\vec a_i) -
\pi_{\vec \omega^\perp}(\vec a_j)} \leq 2 + 2\cutoffnorm$ for all $i,j \in [d]$. \index{$D$}
\end{definition}

We will condition on the event $D$. This will not change the expected length by
much, because the probability that $D$ does not occur is small compared to
the probability of $E$ by our assumption that $\Pr[E] \geq \frac{2}{d\binom{n}{d}}$.

\begin{lemma} \label{lem:boundedprojectedsimplex} The expected edge length satisfies
\[
\E[\length(\conv(\vec a_1,\dots,\vec a_d)\cap W) \mid E] \geq
\E[\length(\conv(\vec a_1,\dots,\vec a_d)\cap W) \mid D,E]/2.
\]
\end{lemma}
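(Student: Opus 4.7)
The plan is to show $\Pr[D \mid E] \geq 1/2$ and then invoke non-negativity of length. Indeed, conditioning on $E$ and splitting on $D$ vs.~$D^c$ yields
\[
\E[\length(\conv(\vec a_1,\dots,\vec a_d)\cap W) \mid E] \geq \E[\length(\conv(\vec a_1,\dots,\vec a_d)\cap W) \mid D,E]\cdot \Pr[D \mid E],
\]
since the $D^c$ summand is non-negative. So it suffices to prove $\Pr[D \mid E] \geq 1/2$, which reduces via $\Pr[D^c \mid E] \leq \Pr[D^c]/\Pr[E]$ to two bounds: a lower bound on $\Pr[E]$ and an upper bound on $\Pr[D^c]$.

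For $\Pr[E]$, I use the standing assumption that $[d] \in B$, which by Definition~\ref{def:badsets} gives $\Pr[E] \geq 2\binom{n}{d}^{-1}$. For $\Pr[D^c]$, I exploit the cutoff radius $\cutoffnorm = R\!\left(\tfrac{1}{d\binom{n}{d}}\right)$: by definition, each $\vec a_i$ is within distance $\cutoffnorm$ of its center with probability at least $1-\tfrac{1}{d\binom{n}{d}}$. Since the centers have norm at most $1$, the triangle inequality yields $\norm{\vec a_i} \leq 1 + \cutoffnorm$ on this event. A union bound over $i \in [d]$ shows that with probability at least $1 - \binom{n}{d}^{-1}$, all of $\vec a_1,\dots,\vec a_d$ satisfy $\norm{\vec a_i} \leq 1 + \cutoffnorm$ simultaneously.

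On this good event, for every $i,j \in [d]$,
\[
\norm{\pi_{\vec\omega^\perp}(\vec a_i) - \pi_{\vec\omega^\perp}(\vec a_j)} \leq \norm{\vec a_i - \vec a_j} \leq \norm{\vec a_i} + \norm{\vec a_j} \leq 2 + 2\cutoffnorm,
\]
using that the orthogonal projection $\pi_{\vec\omega^\perp}$ is $1$-Lipschitz. Hence $D$ holds on this event, so $\Pr[D^c] \leq \binom{n}{d}^{-1}$. Combining:
\[
\Pr[D^c \mid E] \leq \frac{\Pr[D^c]}{\Pr[E]} \leq \frac{\binom{n}{d}^{-1}}{2\binom{n}{d}^{-1}} = \frac{1}{2},
\]
so $\Pr[D \mid E] \geq 1/2$, finishing the argument.

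There is no real obstacle here; the lemma is essentially bookkeeping that leverages the definition of $B$ to rule out that the extreme event $D^c$ dominates the conditional distribution given $E$. The only thing worth being careful about is that $\pi_{\vec\omega^\perp}$ is non-expansive, so that the bounded norms $\norm{\vec a_i} \leq 1 + \cutoffnorm$ translate directly to the bounded projected diameter required by $D$.
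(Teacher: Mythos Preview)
Your proof is correct and follows essentially the same approach as the paper: bound $\Pr[D^c] \leq \binom{n}{d}^{-1}$ via the cutoff radius and a union bound over $i \in [d]$, use the assumption $[d] \in B$ to get $\Pr[E] \geq 2\binom{n}{d}^{-1}$, and combine to obtain $\Pr[D \mid E] \geq 1/2$, from which the conclusion follows by non-negativity of length. The only cosmetic difference is that the paper phrases the key inequality as $\Pr[E \cap D] \geq \Pr[E] - \Pr[D^c] \geq \Pr[E]/2$ rather than $\Pr[D^c \mid E] \leq \Pr[D^c]/\Pr[E]$, but these are equivalent.
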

\begin{proof}
 Let the vector $\hat{\vec a}_i$ denote the perturbation $\vec a_i - \E[\vec
a_i]$. Since distances can only decrease when projecting, the complementary
event $D^c$ satisfies
 \begin{align*}
  \Pr[D^c] &= \Pr[\max_{i,j \leq d} \norm{\pi_{\vec \omega^\perp}(\vec a_i - \vec a_j)} \geq 2 + 2\cutoffnorm] \\
  &\leq \Pr[\max_{i,j \leq d} \norm{\vec a_i - \vec a_j} \geq 2 + 2\cutoffnorm] , \\
  \intertext{ by the triangle inequality and the bound of $1$ on the norms of
the centers, the line above is at most   }
  &\leq \Pr[\max_{i \leq d} \norm{\vec a_i} \geq 1 + \cutoffnorm] \\
  &\leq \Pr[\max_{i \leq d} \norm{\hat{\vec a}_i} \geq \cutoffnorm] \\
  &\leq \binom{n}{d}^{-1}.
 \end{align*}
By our assumption that $[d] \in B$, we know that $\Pr[E] \geq 2\binom{n}{d}^{-1}$. In
particular, it follows that $\Pr[E \cap D] \geq \Pr[E] - \Pr[D^c] \geq
\Pr[E]/2$. Thus, we may conclude that
\[
\E[\length(\conv(\vec a_1,\dots,\vec a_d)\cap W) \mid E] \geq
\E[\length(\conv(\vec a_1,\dots,\vec a_d)\cap W) \mid D,E]/2.
\]

\end{proof}

For the rest of this section, we use a change of variables on $\vec a_1,\dots,\vec a_d$.
The non-degeneracy conditions we have assumed at the start of this section make the
change of variables well-defined.

\begin{definition}[Change of variables] \index{change of variables}
\label{def:change-of-variables}
Recall the change of variables mapping $(\vec a_1,\dots,\vec a_d) \mapsto (\vec \theta,
t, \vec b_1,\dots,\vec b_d)$ for $\vec \theta \in \bbS^{d-1}, t > 0, \vec
b_1,\dots,\vec b_d \in \R^{d-1}$ from Theorem~\ref{prelim:blaschke}. We abbreviate
$\bar \mu_i(\vec \theta, t, \vec b_i) = \mu_i(R_\theta(\vec b_i) + t\vec \theta
)$ and we write $\bar \mu_i(\vec b_i)$ when the values of $\vec \theta, t$ are
clear. \index{$\bar \mu$}
\end{definition}
By Theorem~\ref{prelim:blaschke} of Blaschke \cite{jour/bmsrs/Blaschke35}
we know that for any fixed values of $\vec \theta, t$ the vectors $\vec
b_1,\dots,\vec b_d$ have joint probability density proportional to
\begin{equation}
\label{eq:change-var-disft}
\vol_{d-1}(\conv(\vec b_1,\dots,\vec b_d)) \prod_{i=1}^d \bar \mu_i(\vec b_i)
~.
\end{equation}

We assumed that $\vec a_1,\dots,\vec a_d$ are affinely independent,
so $\vec b_1,\dots,\vec b_d$ are affinely independent as well.

%

In the next lemma, we condition on the hyperplane $H = t\vec \theta + \vec
\theta^\perp$ and from then on we restrict our attention to what happens inside
$H$. Conditioned on $\vec a_1,\dots,\vec a_d$ lying in $H$, the set
$\conv(\vec a_1,\dots,\vec a_d)$ is a facet of $Q$ if and only if
all of $\vec a_{d+1},\dots,\vec a_n$ lie on one side of $H$. This means that
the shape of $\conv(\vec a_1,\dots,\vec a_d)$ in $H$ does not influence the
event that it forms a facet, so in studying this convex hull we can then
ignore $\vec a_{d+1},\dots,\vec a_n$.

We identify the hyperplane $H$ with $\R^{d-1}$ and define $\bar l
= \vecb p + \bar{\vec \omega}\cdot \R \subset \R^{d-1}$ corresponding to $l =
\vec p + \vec \omega \cdot \R$ by $\vecb p = R_{\vec \theta}^{-1}(\vec p - t\vec
\theta)$, $\bar{\vec \omega} = R_{\vec \theta}^{-1}(\vec \omega)$. We define
$\bar E$ as the event that $\conv(\vec b_1,\dots,\vec b_d) \cap \bar l \neq
\emptyset$. Notice that $E$ holds if and only if $\bar{E}$ and $\conv(\vec
a_1,\dots, \vec a_d)$ is a facet of $Q$. See Figure~\ref{fig:whatiswhat}.

\begin{figure}
\centering
\definecolor{ududff}{rgb}{0,0,0}
\definecolor{ffqqqq}{rgb}{1.,0.,0.}
\definecolor{ududff}{rgb}{0,0,0}
\definecolor{ffqqqq}{rgb}{1.,0.,0.}
\definecolor{ttffqq}{rgb}{0.2,1.,0.}
\begin{tikzpicture}[line cap=round,line join=round,>=triangle 45,x=2cm,y=2cm]
\clip(7.598900553265315,-11.69523380416249) rectangle (10.670516876517425,-9.374760686455328);
\draw [line width=1.pt] (8.588596881659882,-10.161240367755855)-- (7.902200875568879,-10.436853181736783);
\draw [line width=1.pt] (7.902200875568879,-10.436853181736783)-- (9.0453978875529,-11.116868685704095);
\draw [line width=1.pt] (9.0453978875529,-11.116868685704095)-- (10.375537831265433,-10.469485157715456);
\draw [line width=1.pt] (10.375537831265433,-10.469485157715456)-- (9.94374057767173,-10.266900092558686);
\draw [line width=1.pt] (10.078871049685484,-9.907470302926917)-- (9.431482863676065,-11.629440159465107);
\draw [line width=1.pt] (9.431482863676065,-11.629440159465107)-- (8.310571567942524,-10.931957472490662);
\draw [line width=1.pt] (8.310571567942524,-10.931957472490662)-- (8.385473358107015,-10.72432137603898);
\draw [line width=1.pt] (8.588596881659882,-10.161240367755855)-- (8.868401138736406,-9.385591834091308);
\draw [line width=1.pt] (8.868401138736406,-9.385591834091308)-- (10.078871049685484,-9.907470302926917);
\draw [line width=1.pt] (8.588596881659882,-10.161240367755855)-- (9.75381414296485,-10.77208014155058);
\draw [line width=1.pt] (9.048772993781144,-10.402477673079575)-- (8.852364176718286,-10.489066506408344);
\draw [line width=1.pt,dashed] (9.418557708224563,-10.239454304346433)-- (9.048772993781144,-10.402477673079575);
\draw [line width=1.pt,dashed] (9.197448530179818,-9.916764417056248)-- (9.94374057767173,-10.266900092558686);
\draw [line width=1.pt,dashed] (9.197448530179818,-9.916764417056248)-- (8.588596881659882,-10.161240367755855);
\draw [line width=1.pt,dashed] (8.588596881659882,-10.161240367755855)-- (8.385473358107015,-10.72432137603898);
\draw [line width=1.pt,color=ffqqqq] (9.04877299378121,-10.402477673079611)-- (9.244111991223017,-10.504879888505949);
\draw [line width=1.pt,color=blue] (8.852364176718286,-10.489066506408344)-- (9.151456249707275,-10.645859012214327);

\draw [line width=1.pt] (8.852364176718286,-10.489066506408344)-- (9.074411396767482,-10.76308559392418);
\draw [line width=1.pt] (9.074411396767482,-10.76308559392418)-- (9.244111991223013,-10.504879888505947);
\draw [line width=1.pt,dashed] (9.244111991223013,-10.504879888505947)-- (9.418557708224563,-10.239454304346433);

\draw[color=black] (8.15,-10.215089379210024) node {$H$};
\draw[color=black] (8.9,-9.702218790082534) node {$W$};
\draw[color=black] (9.55,-10.575786607569185) node {$l$};
\draw [fill=ududff] (9.074411396767482,-10.76308559392418) circle (1.5pt);
\draw[color=ududff] (9.106333506362482,-10.85781130927557) node {$a_1$};
\draw [fill=ududff] (8.852364176718286,-10.489066506408344) circle (1.5pt);
\draw[color=ududff] (8.75,-10.424356542110575) node {$a_2$};
\draw [fill=ududff] (9.418557708224563,-10.239454304346433) circle (1.5pt);
\draw[color=ududff] (9.443289107643025,-10.17252656641669) node {$a_3$};
\end{tikzpicture}
\caption{$\vec a_1,\dots,\vec a_d$ are conditioned for $\conv(\vec
a_1,\dots,\vec a_d)$ to intersect $W$ and lie in $H$. The red line corresponds
to induced edge. The blue line represents the longest chord parallel to $\ell$.
 \label{fig:whatiswhat}}
\end{figure}
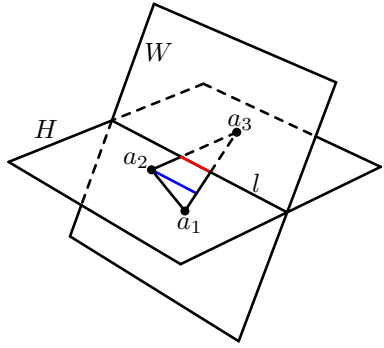

We will condition on the shape of the projected simplex.

\begin{definition}[Projected shape]
\label{def:projected-shape}
We define the projected shift variable by $\vec x := {\vec x}_{\vec \omega}(\vec
b_1) = \pi_{{\bar {\vec \omega}}^\perp}(\vec b_1)$ and shape variable $S:=S_{\omega}(\vec b_1,\dots,\vec b_d)$ by
\[
S_{\vec \omega}(\vec b_1,\dots,\vec b_d) =
(\vec 0, \pi_{{\bar {\vec \omega}}^\perp}(\vec b_2) - \vec x,\dots,
   \pi_{{\bar {\vec \omega}}^\perp}(\vec b_d) - \vec x) \text{ .}
\]
We index $S = (\vec s_1,\dots,\vec s_d)$, so $\vec s_i \in {\bar {\vec
\omega}}^\perp$ is the $i$-th vector in $S$, and furthermore define the diameter
function $\diam(S) = \max_{i,j\in [d]}\norm{\vec s_i - \vec s_j}$. We will
condition on the shape being in the set of allowed shapes
\[
\calS :=\! \set{\!(\vec s_1,\dots\!,\vec s_d) \!\in\! ({\bar {\vec \omega}}^\perp\!)^d: \vec s_1 = 0, \diam(S)
\!\leq\! 2 + 2\cutoffnorm, \mathrm{rank}(\vec s_2,\dots\!,\vec s_d) = d-2}\!.
\]
\end{definition}
Observe that $S \in \calS$ if and only if the event $D$ holds. To justify the
rank condition on $\vec s_2,\dots,\vec s_d$, note that by our non-degeneracy
conditions, we have that $\vec b_1,\dots,\vec b_d$ are affinely independent. In
particular, they do not all lie in a $d-2$-dimensional affine subspace. This
means that $\vec s_1,\dots,\vec s_d$ do not all lie in a $d-3$-dimensional
affine subspace, from which it follows that $\mathrm{rank}(\vec s_2,\dots\!,\vec
s_d) = d-2$ (recalling that $\vec s_1 = \vec 0$).
 
\begin{lemma} \label{lem:conditiononhyperplane}
Let $\vec \theta \in \bS^{d-1}, t > 0, \vec b_1,\dots,\vec b_d \in \R^{d-1}$
denote the change of variables of $\vec a_1,\dots,\vec a_n \in \R^d$ as
in Definition~\ref{def:change-of-variables}. Then, the expected length satisfies
\begin{displaymath} \E[\length(\conv(\vec a_1,\dots\!,\vec a_d)\cap W) \!\mid\! D,\!E] \!\!\!~~ \geq\! \inf_{\vec
\theta, t, S \in \calS}\E[\length(\conv(\vec b_1,\dots\!,\vec b_d)\cap \bar l) \!\mid\! \vec
\theta, t, S,\! \bar E]. \end{displaymath}
\end{lemma}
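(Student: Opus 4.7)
The plan is to translate everything through the Blaschke change of variables and then factor the conditioning into a piece depending on $(\vec\theta,t,\vec b_1,\dots,\vec b_d)$ and a piece depending on $(\vec a_{d+1},\dots,\vec a_n)$. Once the conditioning is split this way, the tower property reduces the left-hand side to an average of the expectations appearing on the right, which is bounded below by their infimum.

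First I would collect three structural facts used in the translation. \emph{(Length.)} The map $\vec b \mapsto R_{\vec\theta}(\vec b) + t\vec\theta$ is a rigid motion sending $\bar l$ to $l = H \cap W$ and $\vec b_i$ to $\vec a_i$, so
\begin{equation*}
\length(\conv(\vec a_1,\dots,\vec a_d)\cap W) = \length(\conv(\vec a_1,\dots,\vec a_d)\cap l) = \length(\conv(\vec b_1,\dots,\vec b_d)\cap \bar l).
\end{equation*}
\emph{(Diameter event.)} Since $R_{\vec\theta}$ is an isometry from $\R^{d-1}$ onto $\vec\theta^\perp$ sending $\bar{\vec\omega}$ to $\vec\omega$, it maps $\bar{\vec\omega}^\perp$ onto $\vec\omega^\perp \cap \vec\theta^\perp$, and hence $\|\pi_{\vec\omega^\perp}(\vec a_i - \vec a_j)\| = \|\pi_{\bar{\vec\omega}^\perp}(\vec b_i - \vec b_j)\| = \|\vec s_i - \vec s_j\|$. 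Combined with the almost-sure rank condition from non-degeneracy, this gives $D = \{S \in \calS\}$. \emph{(Edge event.)} Under the non-degeneracy conditions at the start of the subsection, $E = \bar E \cap F$, where $F$ is the event that $\vec a_{d+1},\dots,\vec a_n$ all lie on one fixed side of $H$, i.e.\ that $\conv(\vec a_1,\dots,\vec a_d)$ is a facet of $Q$.

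The crux of the proof is a conditional independence claim. Because $(\vec\theta,t,\vec b_1,\dots,\vec b_d)$ is a measurable function of $(\vec a_1,\dots,\vec a_d)$, which is independent of $(\vec a_{d+1},\dots,\vec a_n)$, the tuple $(\vec\theta,t,\vec b_1,\dots,\vec b_d)$ is independent of $(\vec a_{d+1},\dots,\vec a_n)$. Consequently, given $(\vec\theta,t)$, the random variables $\length$, $\bar E$, and $S$ (all functions of the $\vec b_i$) are independent of $F$ (a function of the remaining $\vec a_j$ once $(\vec\theta,t)$ is fixed). Therefore conditioning on $F$ does not alter the relevant conditional expectation, and the tower property yields
\begin{equation*}
\E[\length(\conv(\vec a_1,\dots,\vec a_d)\cap W) \mid D, E] = \E_{\vec\theta,t,S \mid D,E}\!\left[\E[\length(\conv(\vec b_1,\dots,\vec b_d)\cap \bar l) \mid \vec\theta, t, S, \bar E]\right],
\end{equation*}
which is at least $\inf_{\vec\theta, t, S\in \calS}\E[\length(\conv(\vec b_1,\dots,\vec b_d)\cap \bar l) \mid \vec\theta, t, S, \bar E]$, as desired.

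The main obstacle I anticipate is verifying cleanly that Blaschke's change of variables preserves the needed independence structure. Concretely, one needs that after changing variables on $(\vec a_1,\dots,\vec a_d)$ to $(\vec\theta,t,\vec b_1,\dots,\vec b_d)$, the joint density with $(\vec a_{d+1},\dots,\vec a_n)$ still factors; this follows from the product form of the original density $\prod_i \mu_i$ combined with the fact that the Blaschke Jacobian depends only on $(\vec\theta,t,\vec b_1,\dots,\vec b_d)$, but it should be stated explicitly before the tower computation is invoked.
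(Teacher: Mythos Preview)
Your proposal is correct and follows essentially the same approach as the paper: both arguments use the change of variables, identify $E$ as the intersection of the facet event (your $F$, the paper's $E_0$) with $\bar E$, exploit that conditioned on $(\vec\theta,t)$ the positions $\vec b_1,\dots,\vec b_d$ are independent of $\vec a_{d+1},\dots,\vec a_n$ to remove the facet conditioning, and then bound a weighted average by its infimum. The paper writes this out as an explicit ratio $\frac{\E_{\vec\theta,t}[\Pr[E_0\mid\vec\theta,t]\cdot g(\vec\theta,t)]}{\E_{\vec\theta,t}[\Pr[E_0\mid\vec\theta,t]]}\geq\inf g$ rather than invoking the tower property, but the content is identical; note also that the independence you worry about in your final paragraph does not actually require the Blaschke Jacobian, since $(\vec\theta,t,\vec b_1,\dots,\vec b_d)$ is simply a measurable function of $(\vec a_1,\dots,\vec a_d)$.
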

\begin{proof}
To derive the desired inequality, we first understand the effect of conditioning
on $E$. Let $E_0$ denote the event that $F := \conv(\vec a_1,\dots,\vec a_d)$
induces a facet of $Q$. Note that $E$ is equivalent to $E_0 \cap \bar E$, where
$\bar E$ is as above. We now perform the change of variables from $\vec a_1,\ldots,\vec a_d \in \R^d$
to $\vec \theta \in \bbS^{d-1}, t \in \R_+$, $\vec b_1,\dots,\vec b_d \in
\R^{d-1}$ as in Definition~\ref{def:change-of-variables}. The set $F$ is a
facet of $Q$ if and only if $\inner{\vec \theta}{\vec a_{d+i}} \leq t$ for all $i \in [n-d]$
or $\inner{\vec \theta}{\vec a_{d+i}} \geq t$ for all $i \in [n-d]$.
Given this, we see that
\begin{equation}
\label{eq:cond-hyp-1}
\begin{split}
&\E[\length(\conv(\vec a_1,\dots,\vec a_d)\cap W) \mid D,E] \\
&=
\E[\length(\conv(\vec b_1,\dots,\vec b_d)\cap \bar l) \mid D,E_0,\bar{E}] \\
&= \frac{\E[~\mathbbm{1}[E_0] \cdot
\length(\conv(\vec b_1,\dots,\vec b_d)\cap \bar l) \mid
D,\bar{E}]}{\Pr[E_0 \mid D,\bar{E}]} \\
&= \frac{\E_{\vec \theta,t}[~\E[\mathbbm{1}[E_0]
\cdot
\length(\conv(\vec b_1,\dots,\vec b_d)\cap \bar l) \mid
\vec \theta,t,D,\bar{E}]~]}{\E_{\vec \theta,t}[~\Pr[E_0 \mid \vec \theta,t,D,\bar{E}]~]}
\end{split}
\end{equation}
Since $\vec a_1,\dots,\vec a_n$ are independent, conditioned on $\vec \theta,
t$, the random vectors $\vec b_1,\dots,\vec b_d$ are independent of $\inner{\vec \theta}{\vec
a_{d+1}},\dots,\inner{\vec \theta}{\vec a_{n}}$. Since the events $D$ and
$\bar{E}$ only depend on $\vec b_1,\dots,\vec b_d$, continuing
from~\eqref{eq:cond-hyp-1}, we get
that
\begin{align*}
&\frac{\E_{\vec \theta,t}[~\E[\mathbbm{1}[E_0]
\cdot
\length(\conv(\vec b_1,\dots,\vec b_d)\cap \bar l) \mid
\vec \theta,t,D,\bar{E}]~]}{\E_{\vec \theta,t}[~\Pr[E_0 \mid \vec \theta,t,D,\bar{E}]~]} \\
& = \frac{\E_{\vec \theta,t}[\Pr[E_0 \mid \vec \theta,t]
\cdot
\E[\length(\conv(\vec b_1,\dots,\vec b_d)\cap \bar l) \mid
\vec \theta,t,D,\bar{E}]~]}{\E_{\vec \theta,t}[~\Pr[E_0 \mid \vec \theta,t]]} \\
& \geq \inf_{\vec \theta \in \bS^{d-1},t > 0} \E[\length(\conv(\vec b_1,\dots,\vec b_d)\cap \bar l) \mid
\vec \theta,t,D,\bar{E}] .
\end{align*}
The last inequality uses that $\frac{\int f(x) g(x) \d x}{\int f(x) \d x} \geq \inf g(x)$ if $f$ is non-negative and has finite integral.

Lastly, since the event $D$ is equivalent to $S := S_{\omega}(\vec
b_1,\dots,\vec b_d) \in \calS$ as in Definition~\ref{def:projected-shape},
we have that 
 \begin{equation*} \E[\length(\conv(\vec a_1,\dots\!,\vec a_d)\cap W) \!\mid\!
D,\!E] \!\!\!~~ \geq \inf_{\vec
\theta, t, S \in \cal S} \E[\length(\conv(\vec b_1,\dots\!,\vec b_d)\cap \bar l) \!\mid\! \vec
\theta, t, S,\! \bar E]. \end{equation*}
\end{proof}

\begin{definition}[Kernel combination] \index{kernel combination}
\label{def:kernel-combination}
For $S \in \calS$, define the combination $\vec z :=
\vec z(S)$ to be the unique (up to sign) $\vec z = (z_1,\dots,z_d) \in \R^d$
satisfying \[\sum_{i=1}^d z_i \vec s_i = \vec 0,\, \sum_{i=1}^d z_i = 0,\,\norm{\vec
z}_1 = 1.\]
\end{definition}

To justify the above definition, it suffices to show that the system of equations
(i) $\sum_{i=1}^d z_i
\vec s_i = \vec 0, \sum_{i=1}^d z_i = 0$ has a one-dimensional solution space.
Since $\vec s_1,\dots,\vec s_d$ live in a $d-2$ dimensional space, the solution
space has dimension at least $1$ by dimension counting. Next, note that $\vec z$
is a solution to (i) iff $z_1 = - \sum_{i=2}^d z_i$ and (ii) $\sum_{i=2}^d z_i
\vec s_i = \vec 0$ (since $\vec s_1 = \vec 0$). Thus, the solution space of (i)
and (ii) have the same dimension. Given our assumption that $\mathrm{rank}(\vec
s_2,\dots,\vec s_d) = d-2$, it follows that (ii) is one-dimensional, as needed.

Observe that for $S := S_{\vec \omega}(\vec
b_1,\dots,\vec b_d)$, $\vec z$ satisfies $\pi_{{\bar {\vec \omega}}^\perp}(\sum_{i=1}^d z_i
\vec b_i) = \vec 0$.

The vector $\vec z$ provides us with a unit to measure lengths in ``convex
combination space''. We make this formal with the next definition:

\begin{definition}[Chord combinations] \label{def:convcombspace} \index{chord combinations} \index{$\norm{C_S(\vec q)}_1$}
 We define the set of convex combinations of the shape $S = (\vec s_1,\dots,\vec s_d) \in
\calS$
that equal $\vec q \in \bar{\vec \omega}^\perp$ by \[C_S(\vec q) := \set{(\lambda_1,\dots,\lambda_d) \geq \vec 0
~:~ \sum_{i=1}^d \lambda_i = 1, ~\sum_{i=1}^d \lambda_i \vec s_i = \vec q}
\subset \R^d.\] When $S$ is clear we drop the subscript.
\end{definition}

Observe that $C(\vec q)$ is a line segment of the form $C(\vec q) =
\vec \lambda_{\vec q} +  \vec z \cdot [0,d_{\vec q}]$. We write $\norm{C(\vec q)}_1$
for the $\ell_1$-diameter of $C(\vec q)$. Since $C(\vec q)$ is a line segment,
$\norm{C(\vec q)}_1 = d_{\vec q}$. We prove two basic properties of $\norm{C(\vec q)}_1$
as a function of $\vec q$.

\begin{lemma}[Properties of chord combinations] \label{lem:chord-combinations}
Let $\vec y := \vec y(S) = \sum_{i = 1}^d \abs{z_i}\vec s_i$, with $\vec z := \vec z(S)$ as in Definition~\ref{def:convcombspace}. Then the following holds:
\begin{itemize} \index{$\vec y(S)$}
 \item $\norm{C(\vec q)}_1$ is a concave function for $\vec q \in \conv(S)$.
 \item $\max_{\vec q \in \conv(S)} \norm{C(\vec q)}_1 = \norm{C(\vec y)}_1 = 2$.
\end{itemize}
\end{lemma}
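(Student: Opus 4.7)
The plan is to exploit the explicit parametrization $C(\vec q) = \vec\lambda_{\vec q} + \vec z \cdot [0, d_{\vec q}]$ already recorded after Definition~\ref{def:convcombspace}, together with $\|\vec z\|_1 = 1$, which immediately yields $\|C(\vec q)\|_1 = d_{\vec q}$. Both bullet points then reduce to reasoning about the scalar $d_{\vec q}$.

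For concavity, I would prove, for $\alpha \in [0,1]$ and $\vec q_1, \vec q_2 \in \conv(S)$, the set inclusion
\[
\alpha C(\vec q_1) + (1-\alpha) C(\vec q_2) \subseteq C(\alpha \vec q_1 + (1-\alpha) \vec q_2),
\]
by noting that a convex combination of convex combinations of $\vec s_1,\dots,\vec s_d$ is itself such a convex combination (the coordinate nonnegativity and the sum-to-one condition are preserved, and the induced $\vec q$ is the corresponding weighted average). Since each $C(\vec q_i)$ is a segment aligned with $\vec z$, the weighted Minkowski sum on the left is again a segment aligned with $\vec z$ of length $\alpha d_{\vec q_1} + (1-\alpha) d_{\vec q_2}$, so the inclusion gives $d_{\alpha \vec q_1 + (1-\alpha) \vec q_2} \geq \alpha d_{\vec q_1} + (1-\alpha) d_{\vec q_2}$, which is exactly concavity of $\|C(\cdot)\|_1$.

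For the second bullet point I would first establish the upper bound $d_{\vec q} \leq 2$ uniformly. Take any endpoints $\vec \lambda$ and $\vec \lambda + d_{\vec q} \vec z$ of $C(\vec q)$; both lie in the standard simplex $\Delta_d$, so $\lambda_i \geq 0$ and $\lambda_i + d_{\vec q} z_i \geq 0$ for all $i$. For indices $i$ with $z_i < 0$ the second inequality gives $\lambda_i \geq d_{\vec q} |z_i|$, and summing over those indices together with $\sum_i \lambda_i = 1$ yields
\[
1 \geq d_{\vec q} \sum_{i : z_i < 0} |z_i| = d_{\vec q}/2,
\]
where the last equality uses $\sum_i z_i = 0$ and $\|\vec z\|_1 = 1$ to conclude that the positive and negative parts of $\vec z$ each have $\ell_1$-mass $1/2$. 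Hence $\|C(\vec q)\|_1 \leq 2$ for every $\vec q \in \conv(S)$.

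For attainment at $\vec y$, I would observe that $(|z_1|,\dots,|z_d|)$ is a valid convex combination (nonnegative with total mass $\|\vec z\|_1 = 1$) whose induced point is $\sum_i |z_i| \vec s_i = \vec y$, so this vector lies in $C(\vec y)$. Moving from it along $\vec z$ by a scalar $t$, the $i$-th coordinate becomes $|z_i|(1 + \mathrm{sgn}(z_i)\, t)$, which is nonnegative precisely when $t \in [-1, 1]$. Therefore $d_{\vec y} = 2$, so $\|C(\vec y)\|_1 = 2$, matching the upper bound and completing the proof. I do not anticipate any real obstacle: each step is a direct convex-combination bookkeeping, and the key algebraic input (that the positive and negative parts of $\vec z$ each carry half the $\ell_1$-mass) is forced by the defining conditions $\sum_i z_i = 0$ and $\|\vec z\|_1 = 1$.
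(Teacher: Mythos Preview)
Your proposal is correct and follows essentially the same approach as the paper: the concavity argument via the Minkowski-sum inclusion $\alpha C(\vec q_1)+(1-\alpha)C(\vec q_2)\subseteq C(\alpha\vec q_1+(1-\alpha)\vec q_2)$ is identical, and your direct upper bound $d_{\vec q}\le 2$ (summing $\lambda_i\ge d_{\vec q}|z_i|$ over negative $z_i$) together with the explicit witness $|z_i|+tz_i\ge 0$ for $t\in[-1,1]$ at $\vec y$ matches the paper's argument, which merely phrases the upper bound as a contradiction rather than directly.
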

\begin{proof}
For the first claim, take $\vec x,\vec y\in \conv(S)$. Let $\vec \alpha \in C(\vec x)$ and $\vec \beta \in C(\vec y)$.
Then we see that, for all $\gamma \in [0,1]$,
\[\gamma\vec \alpha + (1-\gamma)\vec \beta \geq \vec 0,\quad \sum_{i=1}^d \gamma\alpha_i + (1-\gamma)\beta_i = 1,\quad \sum_{i=1}^d (\gamma\alpha_i + (1-\gamma)\beta_i)\vec s_i = \gamma \vec x + (1-\gamma)\vec y,\]
from which we derive that \[\gamma C(\vec x) + (1-\gamma)C(\vec y) \subseteq C(\gamma\vec x + (1-\gamma)\vec y),\]
and hence $\norm{C(\gamma\vec x + (1-\gamma)\vec y)}_1 \geq \norm{\gamma C(\vec x) + (1-\gamma)C(\vec y)}_1 = \gamma \norm{C(\vec x)}_1 + (1-\gamma)\norm{C(\vec y)}_1$.

For the second claim, we look at the
combination $\vec y := \sum_{i=1}^n  \abs{z_i}\vec s_i \in \conv(S)$. For all $\gamma
\in [-1,1]$,  we have $\sum_{i=1}^d (|z_i|+\gamma z_i)\vec s_i = \vec y$,
$\sum_{i=1}^d |z_i|+\gamma z_i = \norm{\vec z}_1 = 1$ and $|z_i|+\gamma z_i \geq 0$,
$\forall i \in [d]$. Hence, $\norm{C(\vec y)}_1 \geq 2$. Now suppose there is some $\vec y'$ with $\norm{C(\vec y')}_1 > 2$. That means there is some convex combination $\vec\lambda = (\lambda_1,\dots,\lambda_d) \geq \vec 0$, $\norm{\vec\lambda}_1 = 1$, with $\sum_{i=1}^d \lambda_i \vec s_i = \vec y'$ such that $\vec \lambda + \vec z > \vec 0$ and $\vec \lambda - \vec z > \vec 0$. Let $I \cup J$ be a partition of $[d]$ such that $z_i \geq 0$ for $i \in I$ and $z_j < 0$ for $j \in J$.
We know that $\sum_{i=1}^d z_i = 0$, so $\sum_{i \in I} z_i = -\sum_{j \in J} z_j$. This makes $1 = \norm{\vec z}_1 = \sum_{i\in I} z_i + \sum_{j \in J} - z_i = 2\sum_{i\in I} z_i$, so $\sum_{i \in I}z_i = 1/2$.
The combination $\vec \lambda$ satisfies \[\sum_{i\in I} \lambda_i > \sum_{i\in I} z_i = 1/2, \quad \sum_{j\in J}\lambda_j > \sum_{j\in J}-z_j = 1/2,\]
so $\norm{\vec \lambda}_1 > 1$. By contradiction we conclude that $\max_{\vec q \in \conv(S)} \norm{C(\vec q)}_1 = 2$.
\end{proof}

The $\ell_1$-diameter $\norm{C(\vec q)}_1$ specified by $\vec q
\in \conv(S(\vec b_1,\dots,\vec b_d))$ directly relates to the length of the
chord $(\vec q + \vec x + {\bar {\vec \omega}} \cdot \R) \cap \conv(\vec b_1,\dots,\vec b_d)$,
which projects to $\vec q + \vec x$ under $\pi_{{\bar {\vec \omega}}^\perp}$. Specifically, $\norm{C(\vec q)}_1$
measures how long the chord is compared to the longest chord through the simplex. The exact relation is
given below.

\begin{lemma} \label{lem:length-as-product}
 Let $(h_1,\dots,h_d) = (\inner{{\bar {\vec \omega}}}{\vec b_1},
\dots,\inner{{\bar {\vec \omega}}}{\vec b_d})$, $(\vec s_1,\dots,\vec s_d) =
S(\vec b_1,\dots,\vec b_d)$,  $\vec x = \pi_{{\bar{\vec \omega}}^\perp}(\vec
b_1)$. For any $\vec q \in \conv(S)$ the following equality holds:
 \[\length((\vec x + \vec q + {\bar {\vec \omega}}\cdot \R)\cap \conv(\vec b_1,\dots,\vec b_d)) =
\norm{C(\vec q)}_1 \cdot \abs{\sum_{i=1}^d z_i h_i}.\]
\end{lemma}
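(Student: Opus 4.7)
The plan is to parametrize the chord explicitly via the $\bar{\vec\omega}$-direction, convert the question into one about convex combinations, and then use the already-established line-segment structure of $C(\vec q)$ to compute the length.

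First I would write every point of the ambient line as $\vec x + \vec q + t\bar{\vec\omega}$ with $t\in\R$, and decompose each vertex as $\vec b_i = \pi_{\bar{\vec\omega}^\perp}(\vec b_i) + h_i\bar{\vec\omega} = \vec x + \vec s_i + h_i\bar{\vec\omega}$, using the definition $\vec s_i = \pi_{\bar{\vec\omega}^\perp}(\vec b_i)-\vec x$. Then for any $\vec\lambda = (\lambda_1,\dots,\lambda_d)$ with $\lambda_i\geq 0$, $\sum_i\lambda_i=1$,
\[
\sum_{i=1}^d\lambda_i \vec b_i \;=\; \vec x + \sum_{i=1}^d\lambda_i \vec s_i + \Bigl(\sum_{i=1}^d \lambda_i h_i\Bigr)\bar{\vec\omega},
\]
so the equation $\sum_i\lambda_i\vec b_i = \vec x + \vec q + t\bar{\vec\omega}$ splits, by uniqueness of the orthogonal decomposition along $\bar{\vec\omega}$ and $\bar{\vec\omega}^\perp$, into the pair of conditions $\sum_i\lambda_i\vec s_i = \vec q$ and $\sum_i\lambda_i h_i = t$. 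The first condition together with $\vec\lambda\geq\vec 0$ and $\sum_i\lambda_i=1$ is exactly $\vec\lambda\in C(\vec q)$.

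Next, I would invoke the fact (established just before the lemma) that $C(\vec q)$ is a line segment $\vec\lambda_{\vec q} + \vec z\cdot [0,\norm{C(\vec q)}_1]$, where $\vec z = \vec z(S)$ is the kernel combination from Definition~\ref{def:kernel-combination}. The set of admissible $t$-values is therefore the image of this segment under the affine map $\vec\lambda\mapsto \sum_i\lambda_i h_i$, i.e.
\[
\Bigl\{\sum_{i=1}^d(\lambda_{\vec q})_i h_i + \alpha\sum_{i=1}^d z_i h_i \;:\; \alpha\in[0,\norm{C(\vec q)}_1]\Bigr\},
\]
which is an interval of length $\norm{C(\vec q)}_1\cdot\bigl|\sum_{i=1}^d z_i h_i\bigr|$.

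Finally, since $\bar{\vec\omega}$ is a unit vector, the map $t\mapsto \vec x+\vec q+t\bar{\vec\omega}$ is an isometry from $\R$ onto the line, so the Euclidean length of $(\vec x+\vec q+\bar{\vec\omega}\cdot\R)\cap\conv(\vec b_1,\dots,\vec b_d)$ equals the length of the $t$-interval, which gives the claimed identity. The argument is entirely mechanical; the only subtlety is checking that the two coordinate conditions really decouple as stated, which follows from $\vec s_i\in\bar{\vec\omega}^\perp$ and $\vec q\in\bar{\vec\omega}^\perp$. No obstacles are anticipated.
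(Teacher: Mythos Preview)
Your proof is correct and follows essentially the same approach as the paper: both use the line-segment description $C(\vec q)=\vec\lambda_{\vec q}+\vec z\cdot[0,\norm{C(\vec q)}_1]$ and the fact that $\sum_i z_i\vec b_i$ lies entirely in the $\bar{\vec\omega}$-direction to compute the chord length. The paper identifies the two endpoints of the chord directly and takes the norm of their difference, whereas you parametrize by the $\bar{\vec\omega}$-coordinate $t$ and measure the image interval; these are cosmetic variations of the same argument.
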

\begin{proof}
By construction there is a convex combination $\lambda_1,\dots,\lambda_d\geq 0$,
$\sum_{i=1}^d\lambda_i = 1$ satisfying $\sum_{i=1}^d \lambda_i \vec s_i = \vec
q$ such that $C(\vec q) = [\vec\lambda, \vec\lambda + \norm{C(\vec q)}_1\vec z]$ and hence \[
(\vec x + \vec q + {\bar {\vec \omega}} \cdot \R)\cap  \conv(\vec b_1,\dots,\vec b_d) =
[\sum_{i=1}^d \lambda_i \vec b_i, \sum_{i=1}^d (\lambda_i + \norm{C(\vec q)}_1 z_i)\vec b_i].
\]
From this we deduce
\begin{align*}
\!  \length((\vec x + \vec q + {\bar {\vec \omega}} \cdot \R) \cap \conv(\vec b_1,\dots,\vec b_d)) &=
\norm*{\sum_{i=1}^d (\lambda_i + \norm{C(\vec q)}_1 z_i)\vec b_i - \sum_{i=1}^d \lambda_i \vec b_i}\\
  &= \norm*{\sum_{i=1}^d \norm{C(\vec q)}_1 z_i\vec b_i} \\
  &= \norm{C(\vec q)}_1 \cdot \abs{\sum_{i=1}^d z_i h_i}.
 \end{align*}
 The third equality follows from the definition of $z_1,\dots,z_d$: as
$\pi_{{\bar {\vec \omega}}^\perp}(\sum_{i=1}^d z_i \vec b_i) = \vec 0$,
we must have
$\norm{\sum_{i=1}^d z_i \vec b_i} = \norm{\sum_{i=1}^d z_i h_i \bar{\vec
\omega}} = \abs{\sum_{i=1}^d z_i h_i}$.
\end{proof}

We can view the terms in the above product as follows:
the length of the longest chord of $\conv(\vec b_1,\dots,\vec b_d)$ parallel to $\bar l$ is $2\abs{\sum_{i=1}^d z_i h_i}$,
and the ratio of the length of the chord $\conv(\vec b_1,\dots,\vec b_d)\cap \bar l$ to the length of the longest chord parallel to $\bar l$ equals $\norm{C(\vec q)}_1/2$. This follows from Lemma~\ref{lem:chord-combinations} since $\norm{C(\vec q)}_1$ achieves a maximum value of $2$ at $\vec q = \vec y$. As discussed in the high-level description, we will bound the expected values of these two quantities separately.

The term $\abs{\sum_{i=1}^d z_i h_i}$ can also be used to simplify the volume term in the probability density of $\vec b_1,\dots,\vec b_d$ after we condition on the shape $S$. We prove this in the next lemma.

\begin{lemma} \label{lem:factorizevolume}
For fixed $\vec \theta \in \bbS^{d-1}, t > 0, S \in \calS$, define $\vec x \in {\bar {\vec
\omega}}^\perp,h_1,\dots,h_d \in \R$ conditioned on $\vec \theta,t,S$ to have joint probability density function proportional to
\begin{displaymath} \abs{\sum_{i=1}^d z_i h_i} \cdot \prod_{i=1}^d \bar
\mu_i(\vec x + \vec s_i + h_i{\bar {\vec \omega}}), \end{displaymath}
where $\vec z:=\vec z(S)$ is as in Definition~\ref{def:kernel-combination}. Then for $\vec
b_1,\dots,\vec b_d \in \R^{d-1}$ distributed as in Lemma~\ref{lem:conditiononhyperplane},
conditioned on $\vec \theta, t$ and the shape $S = (\vec s_1,\dots, \vec s_d)$,
where $\vec s_1 = \vec 0$, we have equivalence of the distributions
\begin{displaymath} (\vec b_1,\dots,\vec b_d)\mid \vec\theta,t,S \equiv (\vec x + \vec s_1 +
h_1{\bar {\vec \omega}},\dots,\vec x + \vec s_d + h_d{\bar {\vec \omega}}) \mid
\vec \theta,t,S. \end{displaymath}
\end{lemma}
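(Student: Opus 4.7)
} The plan is to rewrite the joint density of $\vec b_1,\dots,\vec b_d$ given by Blaschke's formula~\eqref{eq:change-var-disft} in the new coordinates $(\vec x, \vec s_2,\dots,\vec s_d, h_1,\dots,h_d)$ (with $\vec s_1 := \vec 0$), and then condition on the shape $S$. The key observation is that the simplex volume $\vol_{d-1}(\conv(\vec b_1,\dots,\vec b_d))$ factors as a function of $S$ times the scalar $\abs{\sum_{i=1}^d z_i h_i}$, so that conditioning on $S$ leaves precisely the weight appearing in the statement.

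First I would set up the change of variables $(\vec b_1,\dots,\vec b_d) \mapsto (\vec x, \vec s_2,\dots,\vec s_d, h_1,\dots,h_d)$ defined by $\vec b_i = \vec x + \vec s_i + h_i \bar{\vec \omega}$. This is an invertible linear map and, since for each $i$ it only implements the orthogonal decomposition of $\vec b_i$ along $\bar{\vec \omega}$ and $\bar{\vec \omega}^\perp$ (followed by subtracting the fixed vector $\vec x$ from the perpendicular components for $i \geq 2$), its Jacobian determinant is $1$. Under this change of variables the product $\prod_{i=1}^d \bar\mu_i(\vec b_i)$ becomes $\prod_{i=1}^d \bar\mu_i(\vec x + \vec s_i + h_i\bar{\vec \omega})$.

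Next I would factor the volume term. By Fubini applied to slices parallel to $\bar{\vec \omega}$,
\begin{equation*}
\vol_{d-1}(\conv(\vec b_1,\dots,\vec b_d)) = \int_{\conv(S) + \vec x} \length\bigl((\vec p + \bar{\vec \omega}\R) \cap \conv(\vec b_1,\dots,\vec b_d)\bigr)\, \d \vec p,
\end{equation*}
and after the substitution $\vec p = \vec x + \vec q$, Lemma~\ref{lem:length-as-product} rewrites the integrand as $\norm{C_S(\vec q)}_1 \cdot \abs{\sum_{i=1}^d z_i h_i}$. Since $\norm{C_S(\vec q)}_1$ depends only on $S$, pulling the factor $\abs{\sum_{i=1}^d z_i h_i}$ outside gives
\begin{equation*}
\vol_{d-1}(\conv(\vec b_1,\dots,\vec b_d)) = \abs*{\sum_{i=1}^d z_i h_i} \cdot g(S), \qquad g(S) := \int_{\conv(S)} \norm{C_S(\vec q)}_1 \d\vec q .
\end{equation*}
Here the dimension/rank assumptions on $S \in \calS$ ensure both that $\vec z = \vec z(S)$ is well-defined and that $g(S)$ is a positive constant depending only on $S$.

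Combining the two steps, the joint density of $(\vec x, \vec s_2,\dots,\vec s_d, h_1,\dots,h_d)$ given $\vec \theta, t$ is proportional to
\begin{equation*}
g(S) \cdot \abs*{\sum_{i=1}^d z_i h_i} \cdot \prod_{i=1}^d \bar\mu_i(\vec x + \vec s_i + h_i \bar{\vec \omega}).
\end{equation*}
Conditioning on the shape $S$ absorbs the $S$-dependent factor $g(S)$ into the normalization, leaving exactly the density for $(\vec x, h_1,\dots,h_d)$ claimed in the lemma. I do not expect a serious obstacle here: the main technical point is justifying the factorization of the volume via Lemma~\ref{lem:length-as-product}, and the remaining manipulations are a routine change of variables with unit Jacobian followed by normalization under conditioning.
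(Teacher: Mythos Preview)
Your proposal is correct and follows essentially the same approach as the paper: change variables to $(\vec x,\vec s_2,\dots,\vec s_d,h_1,\dots,h_d)$ with constant Jacobian, use Fubini together with Lemma~\ref{lem:length-as-product} to factor $\vol_{d-1}(\conv(\vec b_1,\dots,\vec b_d))$ as $\abs{\sum_i z_i h_i}$ times a function of $S$ alone, and then condition on $S$ to absorb that factor. The only cosmetic difference is that the paper merely notes the Jacobian is constant (being linear), whereas you argue it equals $1$; either suffices since constants are absorbed into the proportionality.
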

\begin{proof}
 By Definition~\ref{def:change-of-variables}, the variables $\vec b_1,\dots,\vec b_d$ conditioned on $\vec
\theta, t$ have density proportional to \[\vol_{d-1}(\conv(\vec b_1,\dots,\vec
b_d))\prod_{i=1}^d\bar \mu_i(\vec b_i).\] We make a change of variables from
$\vec b_1,\dots,\vec b_d$ to $\vec x, \vec s_2,\dots,\vec s_d \in {\bar {\vec
\omega}}^\perp, h_1,\dots,h_d \in \R$, defined by
\begin{displaymath} (\vec b_1,\dots,\vec b_d) =
(\vec x + h_1{\bar {\vec \omega}}, \vec x + \vec s_2 + h_d{\bar {\vec \omega}},
\dots,\vec x + \vec s_d + h_d{\bar {\vec \omega}}). \end{displaymath}
 Recall that any invertible linear transformation has constant Jacobian.
 We observe that
 \begin{displaymath} \vol_{d-1}(\conv(\vec b_1,\dots,\vec b_d)) = \int_{\conv(S)} \length((\vec x
+ \vec q + {\bar {\vec \omega}}\cdot \R) \cap \conv(\vec b_1,\dots,\vec b_d)) \d
\vec q. \end{displaymath}
 By Lemma \ref{lem:length-as-product} we find
 \begin{displaymath} \vol_{d-1}(\conv(\vec b_1,\dots,\vec b_d)) =
\abs{\sum_{i=1}^d z_i h_i} \int_{\conv(S)} \norm{C(\vec q)}_1 \d \vec q.
\end{displaymath}
The integral of $\norm{C(\vec q)}_1$ over $\conv(S)$ is independent of $\vec
x, h_1,\dots,h_d$. Thus, for fixed $\vec \theta \in \bS^{d-1},t > 0, S \in
\calS$, the random variables $\vec x, h_1,\dots,h_d$ have joint probability
density proportional to
\begin{displaymath} \abs{\sum_{i=1}^d z_i h_i} \cdot \prod_{i=1}^d \bar
\mu_i(\vec x + \vec s_i + h_i{\bar {\vec \omega}}).\end{displaymath}
\end{proof}

Recall that $\bar l = \vecb p + {\bar {\vec \omega}} \cdot \R$. The event $\bar E$
that $\conv(\vec b_1,\dots,\vec b_d) \cap \bar l \neq \emptyset$ occurs if and
only if $\vecb p \in \vec x + \conv(S)$, hence if and only if $\vecb p - \vec x \in \conv(S)$.

\begin{lemma} \label{lem:factorizeedgelength}
Let $\vec \theta \in \bbS^{d-1}, t > 0, S \in \calS$ be fixed, and have random
variables
$\vec b_1,\dots, \vec b_d \in \R^{d-1}$,$h_1,\dots,h_d \in \R$, $\vec x \in
\omega^\perp$ be distributed as in
Lemma~\ref{lem:factorizevolume}. Define $\vec q := \vecb p
- \vec x$. Then, the expected edge length satisfies
\begin{align*}
 \E[\length(\conv(\vec b_1,\dots,\vec b_d)\cap \bar l) \mid \vec \theta, t, S,
\bar E] \geq &\E[\norm{C(\vec q)}_1\mid \vec \theta, t, S, \bar E] \\
 &\cdot \inf_{\vec x \in \bar{\omega}^\perp} \E[\abs{\sum_{i=1}^d z_i h_i} \mid \vec \theta, t, S, \vec x].
\end{align*}
\end{lemma}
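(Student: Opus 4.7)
My plan is to apply Lemma~\ref{lem:length-as-product} to rewrite the length as a product of two factors, one depending only on $\vec x$ and the other only on the heights $h_1, \dots, h_d$, and then to factor the resulting expectation via the tower property.

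Concretely, conditioning on $\bar E$ ensures that $\vec q = \vecb p - \vec x \in \conv(S)$, so Lemma~\ref{lem:length-as-product} gives $\length(\conv(\vec b_1,\dots,\vec b_d) \cap \bar l) = \norm{C(\vec q)}_1 \cdot \abs{\sum_{i=1}^d z_i h_i}$. Since $\vecb p$ is determined by $\vec \theta$ and $t$, the first factor $\norm{C(\vec q)}_1$ depends only on $\vec x$. Applying the tower property by conditioning on $\vec x$ inside the outer conditioning yields
\begin{align*}
&\E\big[\norm{C(\vec q)}_1 \cdot \abs{\textstyle\sum_i z_i h_i} \mid \vec \theta, t, S, \bar E\big] \\
&\quad= \E\big[\norm{C(\vec q)}_1 \cdot \E[\abs{\textstyle\sum_i z_i h_i} \mid \vec x, \vec \theta, t, S, \bar E] \mid \vec \theta, t, S, \bar E\big].
\end{align*}

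The key observation simplifying the inner expectation is that, once $(\vec\theta, t, S, \vec x)$ are fixed, the event $\bar E$ is deterministic: it is equivalent to $\vecb p - \vec x \in \conv(S)$. Hence conditioning on $\bar E$ on top of $\vec x$ does not alter the conditional distribution of $(h_1, \dots, h_d)$, which by Lemma~\ref{lem:factorizevolume} has density in $h$ proportional to $\abs{\sum_i z_i h_i} \prod_i \bar \mu_i(\vec x + \vec s_i + h_i \bar\omega)$ regardless of whether we further condition on $\bar E$. Therefore the inner expectation reduces to $\E[\abs{\sum_i z_i h_i} \mid \vec x, \vec\theta, t, S]$.

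Finally, using $\norm{C(\vec q)}_1 \geq 0$ I would lower bound this inner conditional expectation by its infimum over $\vec x \in \bar\omega^\perp$, pull the infimum outside the remaining expectation, and thereby obtain the stated inequality. This argument has no real technical obstacle; the one subtlety worth emphasizing is precisely the redundancy of the $\bar E$-conditioning inside the inner expectation, which is what allows the infimum to range over all of $\bar\omega^\perp$ rather than only those $\vec x$ compatible with $\bar E$ and yields the clean product form of the bound.
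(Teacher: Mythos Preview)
Your proposal is correct and follows essentially the same route as the paper: apply Lemma~\ref{lem:length-as-product} to write the length as $\norm{C(\vec q)}_1 \cdot \abs{\sum_i z_i h_i}$, use the tower property conditioning on $\vec x$, and then bound the inner expectation by its infimum over $\vec x \in \bar{\vec \omega}^\perp$. Your explicit justification that $\bar E$ is deterministic once $\vec x$ is fixed (so it can be dropped from the inner conditioning) is exactly the point the paper uses implicitly when it writes the inner expectation as $\E_{h_1,\dots,h_d}[\abs{\sum_i z_i h_i} \mid \vec\theta,t,S,\vec x]$ without $\bar E$.
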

\begin{proof}
 We start with the assertion of Lemma \ref{lem:length-as-product}:
 \[\length((\vec x + \vec q + \bar{\vec \omega}\cdot \R)\cap \conv(\vec b_1,\dots,\vec b_d)) =
\norm{C(\vec q)}_1 \cdot \abs{\sum_{i=1}^d z_i h_i}.\]
 We take expectation on both sides to derive the equality
 \begin{align*}
  \E[\length(\conv(\vec b_1,\dots,\vec b_d)\cap \bar l) \mid \vec \theta, t, S,
\bar E] &= \E[\norm{C(\vec q)}_1 \cdot \abs{\sum_{i=1}^d z_i h_i} \mid \vec
\theta, t, S, \bar E].
 \end{align*}
 Since $\norm{C(\vec q)}_1$ and $\abs{\sum_{i=1}^d z_i h_i}$ do not share any of their variables, we separate the two expectations:
 \begin{align*}
  \E[\norm{C(\vec q)}_1 \cdot \abs{\sum_{\mathclap{i=1}}^d z_i h_i} \mid\! \vec
\theta, t, S, \bar E] &= \E_{\vec x, h_1,\dots,h_d}[\norm{C(\vec q)}_1 \cdot
\abs{\sum_{i=1}^d z_i h_i} \mid\! \vec \theta, t, S,\bar E]  \\
  &\hspace{-3em}= \E_{\vec x}[\norm{C(\vec q)}_1 \E_{h_1,\dots,h_d}[\abs{\sum_{i=1}^d z_i
h_i} \mid \vec \theta, t, S, \vec x] \mid\! \vec \theta, t, S, \bar E] \\
  &\hspace{-3em}\geq \E_{\vec x}[\norm{C(\vec q)}_1\mid\! \vec \theta, t, S, \bar E] \inf_{\!\vec
x \in \bar{\vec \omega}^\perp\!\!} \E_{h_1,\dots,h_d}[\abs{\sum_{\mathclap{i=1}}^d z_i h_i} \mid\! \vec \theta, t, S, \vec x].
\end{align*}
~
\end{proof}

We will first bound the expected $\ell_1$-diameter of $C(\vec q)$, where $\vec q
= \vecb p - \vec x$, which depends on where $\vecb p - \vec x$ intersects the
projected simplex $\conv(S)$: where this quantity tends to get smaller as we
approach to boundary of $\conv(S)$. We recall that $\bar E$ occurs if and only
if $\vec q \in \conv(S)$.

\index{convex combination space} \index{chord combinations}
\begin{lemma}[Chord combination bound] \label{lem:convexcombinationspace} Let $\vec \theta \in
\bbS^{d-1}, t > 0$ and $S \in \calS$ be fixed. Let $\vec q = \vecb p - \vec
x$ be distributed as in Lemma~\ref{lem:factorizeedgelength}. Then, the expected $\ell_1$-diameter of $C(\vec q)$ satisfies
 \[\E[\norm{C(\vec q)}_1  \mid \vec \theta,t,S,\bar E] \geq \frac{e^{-2}}{dL(1+\cutoffnorm)}\]
\end{lemma}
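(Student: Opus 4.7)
The plan is to combine a layer-cake representation of expectation with the concavity of $\norm{C(\cdot)}_1$ from Lemma~\ref{lem:chord-combinations} and the log-Lipschitz smoothness of the conditional density of $\vec q$ given $\vec \theta,t,S,\bar E$. Let $\tilde g$ denote that density. Because the joint density in Lemma~\ref{lem:factorizevolume} factors the $\vec x$-dependence (recall $\vec x = \vecb p - \vec q$) through $\prod_{i=1}^d \bar\mu_i(\vec x + \vec s_i + h_i\bar{\vec \omega})$, a product of $d$ factors each $L$-log-Lipschitz in $\vec x$ (rotations are isometries), and since the weight $|\sum_i z_i h_i|$ is independent of $\vec x$, integrating out $h_1,\dots,h_d$ shows that $\tilde g$ is $dL$-log-Lipschitz in $\vec q$. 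The event $\bar E$ simply restricts $\vec q$ to $\conv(S)$, so $\tilde g$ is a $dL$-log-Lipschitz density on $\conv(S)$.

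Writing $f(\vec q) := \norm{C(\vec q)}_1$, which is concave with $f(\vec y) = 2 = \max f$ by Lemma~\ref{lem:chord-combinations}, I set $T_s := \vec y + s(\conv(S) - \vec y) \subseteq \conv(S)$. Concavity gives $f(\vec q) \geq 2(1-s)$ for all $\vec q \in T_s$, so by the layer cake formula and the substitution $s = 1-\alpha/2$,
\[
\E_{\tilde g}[f] \;=\; \int_0^2 \Pr_{\tilde g}[f \geq \alpha]\,d\alpha \;\geq\; 2\int_0^1 \Pr_{\tilde g}[\vec q \in T_s]\,ds.
\]
To estimate $\Pr_{\tilde g}[\vec q \in T_s]$, I would apply the linear change of variables $\vec q = \vec y + s(\vec q' - \vec y)$ with Jacobian $s^{d-2}$ (since $\vec q$ lives in the $(d-2)$-dimensional space $\bar{\vec \omega}^\perp$) and combine $dL$-log-Lipschitzness of $\tilde g$ with $\diam(\conv(S)) \leq 2(1+\cutoffnorm)$, forced by $S \in \calS$, to obtain $\tilde g(\vec y + s(\vec q' - \vec y)) \geq \tilde g(\vec q')\, e^{-2dL(1-s)(1+\cutoffnorm)}$. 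Dividing by $\int_{\conv(S)} \tilde g$ yields $\Pr_{\tilde g}[\vec q \in T_s] \geq s^{d-2}\, e^{-\beta(1-s)}$ with $\beta := 2dL(1+\cutoffnorm)$, so
\[
\E_{\tilde g}[f] \;\geq\; 2\int_0^1 s^{d-2}\, e^{-\beta(1-s)}\,ds.
\]

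The main obstacle is evaluating this final integral with the correct $\Omega(1/\beta)$ scaling rather than an exponentially small answer, and this is where Lemma~\ref{lem:lipschitz-vs-cutoff} enters. I would restrict to $s \in [1-\tfrac{1}{d-1},1]$, on which $s^{d-2} \geq ((d-2)/(d-1))^{d-2} \geq e^{-1}$ for $d \geq 3$, reducing the task to bounding $e^{-1}(1-e^{-\beta/(d-1)})/\beta$. A case split handles the two regimes: if $\beta \geq d-1$, then $1 - e^{-\beta/(d-1)} \geq 1 - 1/e$ and the lower bound is $\Omega(1/\beta) = \Omega(1/(dL(1+\cutoffnorm)))$; if $\beta < d-1$, then $1 - e^{-\beta/(d-1)} \geq \beta/(2(d-1))$ yields $\Omega(1/(d-1))$, which is again $\Omega(1/(dL(1+\cutoffnorm)))$ because Lemma~\ref{lem:lipschitz-vs-cutoff} implies $L(1+\cutoffnorm) \geq L\cutoffnormnosub(1/2) \geq d/3$, whence $dL(1+\cutoffnorm) \geq d^2/3 \geq d-1$ for $d \geq 3$. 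Tracking constants carefully through both cases delivers the claimed lower bound $e^{-2}/(dL(1+\cutoffnorm))$.
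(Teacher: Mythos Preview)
Your argument is correct and is essentially the paper's proof with a layer-cake detour. Both proofs rest on the same four ingredients: the $dL$-log-Lipschitzness of the conditional density of $\vec q$, the concavity and maximum of $\norm{C(\cdot)}_1$ from Lemma~\ref{lem:chord-combinations}, the diameter bound $\diam(\conv(S))\le 2(1+\cutoffnorm)$, and Lemma~\ref{lem:lipschitz-vs-cutoff}. The only difference is packaging: the paper lower-bounds $\E[\norm{C(\vec q)}_1]$ by $2(1-\alpha)\Pr[\vec q\in T_\alpha]$ for a single well-chosen $\alpha=1-1/\beta$, whereas you write the layer-cake identity $\E[\norm{C(\vec q)}_1]\ge 2\int_0^1 \Pr[\vec q\in T_s]\,ds$ and then immediately restrict the integral to $s\in[1-1/(d-1),1]$, which is the same single-scale choice in disguise. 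Your use of the correct Jacobian $s^{d-2}$ (the paper writes $\alpha^d$, which is only weaker) and the fact that $(1-1/(d-1))^{d-2}\ge e^{-1}$ for $d\ge 3$ are fine; note also that your case $\beta<d-1$ is actually vacuous, since Lemma~\ref{lem:lipschitz-vs-cutoff} already forces $\beta=2dL(1+\cutoffnorm)\ge 2d^2/3\ge d-1$ for $d\ge 3$, so only the first case is needed.
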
\index{cutoff norm}\index{log-Lipschitz}
\begin{proof}
\index{$\hat \mu$}
To get a lower bound on the expected value of $\norm{C(\vec q)}_1$, we will use the
concavity of
$\norm{C(\vec q)}_1$ over $\conv(S) = \conv(\vec s_1,\dots, \vec
s_d)$ and that $\max_{\vec q \in \conv(S)}
\norm{C(\vec q)}_1 = 2$. These facts are proven in Lemma~\ref{lem:chord-combinations}.
We show that shifting the projected simplex does not change the
probability density too much (using log-Lipschitzness), and use the properties of
$\norm{C(\vec q)}_1$ mentioned above.

Let $\hat \mu$ denote the probability density of $\vec q$ conditioned on $\vec
\theta,t,S,\bar E$. Note that $\hat \mu$ is supported on $\conv(S)$ and has
density proportional to \[\dotsint \prod_{i=1}^d \bar \mu_i(\bar{\vec p} -\vec q
+ \vec s_i + h_i {\bar {\vec \omega}}) \d h_1 \cdots \d h_d.\] We claim that
$\hat \mu$ is $dL$-log-Lipschitz. To see this, note that since $\bar
\mu_1,\dots,\bar \mu_d$ are $L$-log-Lipschitz, for $\vec v, \vec v' \in \conv(S)$
we have that
\begin{align*}
&\dotsint \prod_{i=1}^d \bar \mu_i(\vecb p -\vec v + \vec s_i + h_i {\bar {\vec
\omega}}) \d h_1 \cdots \d h_d \\
&\leq \dotsint
\prod_{i=1}^d e^{L\norm{\vec v'-\vec v}} \bar \mu_i(\vecb p -\vec v' + \vec s_i + h_i {\bar {\vec
\omega}}) \d h_1 \cdots \d h_d \\
&= e^{d L \norm{\vec v'-\vec v}} \dotsint \prod_{i=1}^d \bar \mu_i(\vecb p -\vec v' + \vec s_i + h_i {\bar {\vec
\omega}}) \d h_1 \cdots \d h_d \text{ , as needed .}
\end{align*}

Let $\alpha \in (0,1)$ be a scale factor to be chosen later, and let $\vec y = \vec y(S) \in \conv(S)$ be as in Lemma~\ref{lem:chord-combinations}. Now we can write
\begin{align}
  \E[\norm{C(\vec q)} \mid \vec \theta,t,S,\bar E] &=
\int_{\conv(S)} \norm{C(\vec q)}_1 \hat \mu(\vec q) \d \vec q  \nonumber \\
  &\geq \int_{\alpha\conv(S)+(1-\alpha)\vec y} \norm{C(\vec q)}_1\hat \mu(\vec
q) \d \vec q,\label{eq:comb-space-1}\\
  \intertext{because the integrand is non-negative. By concavity of $\norm{C(\vec q)}_1$
we have the lower bound $\norm{C(\alpha \vec q + (1-\alpha)\vec y)} \geq 2(1-\alpha)$ for
all $\vec q \in \conv(S)$. Therefore,~\eqref{eq:comb-space-1} is lower bounded by }
  &\geq \int_{\alpha \conv(S) + (1-\alpha) \vec y} 2(1-\alpha) \hat \mu(\vec q) \d
\vec q \nonumber\\
  &= 2\alpha^{d}(1-\alpha) \int_{\conv(S)}\hat \mu(\alpha \vec q + (1-\alpha)\vec y) \d \vec q
\nonumber\\
  &\geq 2\alpha^{d}(1-\alpha) e^{-\max_{\vec q \in \conv(S)} (1-\alpha)
\norm{\vec q -\vec y}\cdot
dL} \int_{\conv(S)} \hat \mu(\vec q) \d \vec q, \nonumber \\
  &= 2\alpha^{d}(1-\alpha) e^{-\max_{i \in [d]} (1-\alpha) \norm{\vec s_i-\vec y}\cdot
dL},\label{eq:comb-space-2}\\
  \intertext{where we used a change of variables in the first equality, the
$dL$-log-Lipschitzness of $\hat \mu$ in the second inequality, and the convexity
of the $\ell_2$ norm in the last equality. Using
the diameter bound of $2+2\cutoffnorm$ for $\conv(S)$,~\eqref{eq:comb-space-2}
is lower bounded by}
  &\geq 2\alpha^{d}(1-\alpha) e^{-(1-\alpha) dL(2+2\cutoffnorm)} .
\label{eq:comb-space-3} \\
  \intertext{Setting $\alpha = 1-\frac{1}{dL(2+2\cutoffnorm)} \geq 1-1/d$ (by
Lemma~\ref{lem:lipschitz-vs-cutoff}) gives a lower bound for~\eqref{eq:comb-space-3} of}
  &\geq e^{-2} \frac{1}{dL(1+\cutoffnorm)}~. \nonumber
 \end{align}
 \end{proof}

Recall that we have now fixed the position $\vec x$ and shape $S$ of the
projected simplex. The randomness we have left is in the positions
$h_1,\dots,h_d$ of $\vec b_1,\dots,\vec b_d$ along lines parallel to the vector
${\bar {\vec \omega}}$. As $\vec \theta$ and $t$ are also fixed, restricting
$\vec b_i$ to lie on a line is the same as restricting $\vec a_i$ to lie on a
line.

Thus, were it not for the correlation between $h_1,\dots,h_d$, i.e.,~the factor
$\abs{\sum_{i=1}^d z_i h_i}$ in the joint probability density function, each
$h_i$ would be independent and have variance $\tau^2$ by assumption, and thus
one might expect $\E[\abs{\sum_{i=1}^d z_i h_i}] = \Omega(\norm{\vec z} \tau)$.
The following lemmas establish this, and show that in fact, the correlation term
only helps.

\begin{lemma} \label{lem:EXsquaredoverEX}
 Let $X$ be a random variable with $\E\left[X\right] = \mu$ and $\Var(X) = \tau^2$. Then $X$ satisfies
 \[\frac{\E\left[X^2\right]}{\E\left[\abs{X}\right]} \geq (\abs{\mu}+\tau)/2.\]
\end{lemma}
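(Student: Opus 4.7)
The plan is to bound $\E[|X|]$ from above by $|\mu|+\tau$ and combine this with a straightforward lower bound on $\E[X^2]=\mu^2+\tau^2$ via the AM--QM inequality. The key observation is that the Cauchy--Schwarz bound $\E[|X|]\le\sqrt{\mu^2+\tau^2}$ is too loose for our purposes (it gives only a $\sqrt{2}$ factor rather than the desired $2$), but centering around $\mu$ before applying Jensen gives a sharp linear bound.

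Concretely, I would first write
\[
\E[|X|] \;\le\; \E[|X-\mu|] + |\mu| \;\le\; \sqrt{\E[(X-\mu)^2]} + |\mu| \;=\; \tau + |\mu|,
\]
where the first inequality is the triangle inequality and the second is Jensen's inequality applied to the concave function $\sqrt{\cdot}$ (equivalently, Cauchy--Schwarz against the constant $1$). Since $\E[|X|]>0$ (otherwise the statement is trivial as $\E[X^2]=0$ forces $\mu=\tau=0$), dividing $\E[X^2]=\mu^2+\tau^2$ by this upper bound yields
\[
\frac{\E[X^2]}{\E[|X|]} \;\ge\; \frac{\mu^2+\tau^2}{|\mu|+\tau}.
\]

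The last step is the elementary inequality $\mu^2+\tau^2 \ge \tfrac{1}{2}(|\mu|+\tau)^2$, which is simply $(|\mu|-\tau)^2\ge 0$ rearranged. Dividing through by $|\mu|+\tau$ gives
\[
\frac{\mu^2+\tau^2}{|\mu|+\tau} \;\ge\; \frac{|\mu|+\tau}{2},
\]
and chaining the two displays produces the claimed bound. There is no real obstacle here; the only subtlety is recognizing that the naive Cauchy--Schwarz bound on $\E[|X|]$ is not tight enough and that one must center the random variable first. The degenerate cases ($\E[|X|]=0$, or $|\mu|+\tau=0$) are all trivially consistent with the inequality.
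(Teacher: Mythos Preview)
Your proof is correct and follows essentially the same route as the paper: bound $\E[|X|]\le|\mu|+\tau$ via the triangle inequality and Cauchy--Schwarz/Jensen on $|X-\mu|$, then use $\mu^2+\tau^2\ge\tfrac12(|\mu|+\tau)^2$ to conclude. The paper's argument is identical in substance, including the same handling of the degenerate case.
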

\begin{proof}
 By definition one has $\E\left[X^2\right] = \mu^2 + \tau^2$. We will show that
$\E\left[\abs{X}\right] \leq \abs{\mu} + \tau$ so that we can use the fact that
$\mu^2 + \tau^2 \geq 2\abs{\mu}\tau$ to derive that $\mu^2 + \tau^2 \geq
(\abs{\mu} + \tau)^2/2$. It then follows that
$\E\left[X^2\right]/\E\left[\abs{X}\right] \geq (\abs{\mu} + \tau)/2$.
 
 The expected absolute value $\E[\abs{X}]$ satisfies \[\E\left[\abs{X}\right]
\leq \abs{\mu} + \E\left[\abs{X -\mu}\right] \leq \abs{\mu} + \E\left[(X
-\mu)^2\right]^{1/2}\] by Cauchy-Schwarz, hence $\E\left[\abs{X}\right] \leq
\abs{\mu} + \tau$.
\end{proof}

\begin{lemma}[Height of simplex bound] \label{lem:heightofsimplex} \index{height of simplex} Let $\vec \theta \in \bbS^{d-1}, t
\geq 0$, $S \in \calS, \vec x \in \bar{\omega}^\perp$ be fixed and let $\vec z
:= \vec z(S)$ be as in Definition~\ref{def:kernel-combination}. Then for $h_1,\dots,h_d \in \R$ distributed as
in Lemma~\ref{lem:factorizeedgelength}, the expected inner product satisfies
\[
\inf_{\vec x \in \bar{\vec \omega}^\perp} \E[\abs{\sum_{i=1}^d z_i
h_i} \mid \vec \theta, t, S, \vec x] \geq \tau/(2\sqrt{d}).
\]
\end{lemma}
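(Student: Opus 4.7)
The plan is to exploit the structure of the joint density from Lemma~\ref{lem:factorizevolume}, which factors as the correlating weight $\abs{\sum_i z_i h_i}$ times a product $\prod_i \bar\mu_i(\vec x + \vec s_i + h_i \bar{\vec\omega})$. Let $Y := \sum_{i=1}^d z_i h_i$ and let $\pi$ denote the \emph{unweighted} product density (dropping the $\abs{Y}$ factor and renormalizing). Under $\pi$ the coordinates $h_1,\dots,h_d$ are independent, with each $h_i$ distributed as the restriction of $\bar\mu_i$ to the line $\vec x + \vec s_i + \R\,\bar{\vec\omega}$, so by assumption $\Var_\pi(h_i) \geq \tau^2$. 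The first key observation is the identity
\[
\E[\abs{Y} \mid \vec\theta,t,S,\vec x] \;=\; \frac{\int \abs{Y(h)}\cdot\abs{Y(h)}\,\pi(h)\,dh}{\int \abs{Y(h)}\,\pi(h)\,dh} \;=\; \frac{\E_\pi[Y^2]}{\E_\pi[\abs{Y}]},
\]
which reduces the task to bounding this ratio for the independent product distribution $\pi$, where we no longer have to fight the correlation introduced by the $\abs{Y}$ factor.

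Next, I would apply Lemma~\ref{lem:EXsquaredoverEX} directly to $Y$ viewed as a scalar random variable under $\pi$, which yields $\E_\pi[Y^2]/\E_\pi[\abs{Y}] \geq (\abs{\E_\pi Y} + \sigma_Y)/2 \geq \sigma_Y/2$, where $\sigma_Y = \sqrt{\Var_\pi(Y)}$. By independence of the $h_i$ under $\pi$,
\[
\sigma_Y^2 \;=\; \sum_{i=1}^d z_i^2 \Var_\pi(h_i) \;\geq\; \tau^2 \norm{\vec z}_2^2.
\]

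Finally I use the normalization $\norm{\vec z}_1 = 1$ from Definition~\ref{def:kernel-combination}: by Cauchy--Schwarz $1 = \norm{\vec z}_1 \leq \sqrt{d}\,\norm{\vec z}_2$, so $\norm{\vec z}_2 \geq 1/\sqrt{d}$, giving $\sigma_Y \geq \tau/\sqrt{d}$ and therefore $\E[\abs{Y}\mid \vec\theta,t,S,\vec x] \geq \tau/(2\sqrt{d})$. The bound is uniform in $\vec x$, so taking the infimum over $\vec x \in \bar{\vec\omega}^\perp$ preserves it.

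I do not expect any real obstacle here — the conceptual content is just the rewriting $\E_{\pi'}[\abs{Y}] = \E_\pi[Y^2]/\E_\pi[\abs{Y}]$, which turns the correlating weight into a virtue rather than an obstruction, after which the bound falls out of independence, the definition of the line variance, and the $\ell_1$-normalization of $\vec z$. The only thing that needs a moment of care is verifying that the factorization from Lemma~\ref{lem:factorizevolume} is exactly of the form ``weight $\cdot$ product,'' so that $\pi$ really is a product measure and the $h_i$'s are independent under it; this is immediate from the statement of that lemma.
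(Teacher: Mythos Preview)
Your proposal is correct and essentially identical to the paper's proof: the paper also introduces the unweighted product density (calling the independent variables $g_1,\dots,g_d$ rather than working under a measure named $\pi$), rewrites the conditional expectation as $\E_\pi[Y^2]/\E_\pi[\abs{Y}]$, applies Lemma~\ref{lem:EXsquaredoverEX}, and then uses independence plus $\norm{\vec z}_2 \geq \norm{\vec z}_1/\sqrt d = 1/\sqrt d$ to finish.
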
\index{line variance}
\begin{proof}
For fixed $\theta,t,S,\vec x$, let $g_1,\dots,g_d \in \R$ be
independent random variables with respective probability densities $\tilde
\mu_1,\dots,\tilde \mu_d$, where $\tilde \mu_i$, $i \in [d]$, is defined by
\[
\tilde \mu_i(g_i) := \bar \mu(\vec x + \vec s_i + g_i{\bar {\vec \omega}}) =
\mu(R_{\vec \theta}(\vec x + \vec s_i + g_i{\bar {\vec \omega}}) + t \vec
\theta) \text{ .}
\]
Note that, by assumption, the variables $g_1,\dots,g_d$ each have variance at least $\tau^2$.
We recall from Lemma~\ref{lem:factorizevolume} that the joint probability density of
$h_1,\dots,h_d$ is proportional to $\abs{\sum_{i=1}^d z_i h_i} \prod_{i=1}^d
\tilde\mu_i(h_i)$. Thus, we may rewrite the above expectation as
  \begin{align*}
  \E[\abs{\sum_{i=1}^d z_i h_i} \mid \vec \theta, t, S, \vec x] &=
\frac{\dotsint_{\R} \abs{\sum_{i=1}^d z_i h_i}^2 \prod_{i=1}^d \tilde\mu_i(h_i)
\d h_1 \cdots \d h_d}{\dotsint_{\R} \abs{\sum_{i=1}^d z_i h_i} \prod_{i=1}^d
\tilde\mu_i(h_i) \d h_1 \cdots \d h_d} \\
&= \frac{\E[\abs{\sum_{i=1}^d z_i g_i}^2]}{\E[\abs{\sum_{i=1}^d z_i g_i}]}~,
\end{align*}
where $g_1,\dots,g_d$ are distributed independently with densities $\tilde \mu_1,\dots,\tilde \mu_d$.
By the additivity of variance for independent random variables, we see that
\[\Var(\sum_{i=1}^d z_i g_i) = \sum_{i=1}^d z_i^2 \Var(g_i) \geq \tau^2 \norm{\vec z}^2 \geq \tau^2
\norm{\vec z}^2_1/d = \tau^2/d .\]
We reach the desired conclusion by applying Lemma~\ref{lem:EXsquaredoverEX}:
\[
\frac{\E[\abs{\sum_{i=1}^d z_i g_i}^2]}{\E[\abs{\sum_{i=1}^d z_i g_i}]} \geq
\frac{\abs{\E[\sum_{i=1}^d z_i g_i]} + \sqrt{\Var(\sum_{i=1}^d z_i g_i)}}{2}
\geq \tau/(2\sqrt{d}).
\]

\end{proof}

Using the bounds from the preceding lemmas, the proof of our main theorem is now given below.

\begin{proof}[Proof of Theorem~\ref{thm:abstractedbound} (Parametrized Shadow Bound)]
By Lemma~\ref{lem:edgesasfraction},
we derive the shadow bound by combining
an upper bound on $\E[\perimeter(Q \cap W)]$ and a uniform lower
bound on $\E[\length(\conv(\vec a_i : i \in I)\cap W)
\mid E_I]$ for all $I \in B$. For the perimeter upper bound, by
Lemma~\ref{lem:perimeterbound} we have that
\begin{equation}
\label{eq:perim-bnd-abstract}
\E[\perimeter(Q \cap W)] \leq 2\pi(1+4\deviation) .
\end{equation}
For the edge length bound, we assume w.l.o.g. as above that $I = [d]$.
Combining prior lemmas, we have that
\begin{equation}
\label{eq:edge-bnd-abstract}
\begin{split}
&\E[\length(\conv(\vec a_1,\dots,\vec a_d) \cap W) \mid E] \\
&\geq \frac{1}{2} \cdot \E[\length(\conv(\vec a_1,\dots, \vec a_d) \cap W) \mid D,E]
\quad \left(\text{ Lemma~\ref{lem:boundedprojectedsimplex} }\right) \\
&\geq \frac{1}{2} \cdot \inf_{\substack{\vec \theta \in \bbS^{d-1} \\ t > 0}}
\E[\length(\conv(\vec b_1,\dots, \vec b_d) \cap \bar l) \mid \vec \theta,t,S \in
\calS,\bar{E}]
\hspace{1em} \left(\text{ Lemma~\ref{lem:conditiononhyperplane} }\right)  \\
&\geq \frac{1}{2} \cdot \inf_{\substack{\vec \theta \in \bbS^{d-1} \\ t > 0,S \in \calS}}
\left( \E[\norm{C(\bar{\vec p}-\vec x)}_1 \mid \vec \theta,t,S,\bar E]
\cdot \inf_{\vec x \in \bar{\vec \omega}^\perp} \E[\abs{\sum_{i=1}^d z_i h_i} \mid
\vec \theta,t,S,\vec x] \right)\\
&\hspace{27em} \left(\text{ Lemma~\ref{lem:factorizeedgelength} }\right)  \\
&\geq \frac{1}{2} \cdot \frac{e^{-2}}{dL(1+R_{n,d})} \cdot \frac{\tau}{2\sqrt{d}}
\quad \left(\text{ Lemmas~\ref{lem:convexcombinationspace}
and~\ref{lem:heightofsimplex} }\right) .
\end{split}
\end{equation}
The theorem now follows by taking the ratio of~\eqref{eq:perim-bnd-abstract} and~\eqref{eq:edge-bnd-abstract}.
\end{proof}

\subsection{Shadow bound for Laplace perturbations}
\label{sec:shadow-laplace}

Theorem~\ref{thm:abstractedbound} is most naturally used to prove shadow bounds
or distributions where all parameters are bounded, which we illustrate here
for Laplace-distributed perturbations. The Laplace distribution is
defined in section~\ref{sub:laplace}. To achieve the shadow bound, we
use the abstract shadow bound as a black box, and we bound the necessary
parameters of the Laplace distribution below.

\begin{lemma}\label{lem:propertiesofLaplace}
 For $n \geq d \geq 3$, the Laplace distribution $L_d(\vecb a, \sigma)$, satisfies the following properties:
 \begin{enumerate}
  \item The density is $\sqrt{d}/\sigma$-log-Lipschitz.
  \item Its cutoff radius satisfies $\cutoffnorm \leq 14\sigma \sqrt{d} \log n$.
  \item The $n$-th deviation satisfies $\deviation \leq 7\sigma \log n$.
  \item The variance after restricting to any line satisfies $\tau \geq \sigma/\sqrt{de}$.
 \end{enumerate}
\end{lemma}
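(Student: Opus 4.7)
The four claims decouple cleanly, so I would prove them one at a time, in order of increasing effort.

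For (1), the log-Lipschitz bound is immediate from the definition of the Laplace density. Everything in $\log\mu(\vec x)$ except the term $-\norm{\vec x - \vecb{a}}\sqrt{d}/\sigma$ is a constant in $\vec x$, and the reverse triangle inequality gives $\abs{\norm{\vec x-\vecb a} - \norm{\vec y - \vecb a}} \leq \norm{\vec x-\vec y}$, so $\abs{\log\mu(\vec x) - \log\mu(\vec y)} \leq (\sqrt{d}/\sigma)\norm{\vec x - \vec y}$.

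For (4), I would invoke Lemma~\ref{lem:linevariance-vs-lipschitz} with $L = \sqrt{d}/\sigma$ from part (1), yielding $\tau \geq 1/(\sqrt{e}L) = \sigma/\sqrt{de}$ on the nose. For (2), I would apply the multidimensional Laplace tail bound \eqref{eq:laplace-full-d-2}: setting $t = 14\log n \geq 2$, one gets $\Pr[\norm{\vec X} \geq 14\sigma\sqrt{d}\log n] \leq e^{-2d\log n} = n^{-2d}$. Since $d\binom{n}{d} \leq d\cdot n^d \leq n\cdot n^d \leq n^{2d}$ (using $n \geq d \geq 3$), this tail bound is $\leq 1/(d\binom{n}{d})$, so $R_{n,d} \leq 14\sigma\sqrt{d}\log n$.

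The main work is (3). I need to show that for $r_n := 7\sigma\log n$ and every unit vector $\vec\theta$,
\[
\int_{r_n}^\infty \Pr[\abs{\pr{\vec X}{\vec\theta}} \geq t]\,dt \leq r_n/n.
\]
Rescaling by $\sigma$ and using the one-dimensional Laplace tail bound \eqref{eq:laplace-1d}, I would split on whether the threshold $7\log n$ falls in the exponential or sub-Gaussian regime of \eqref{eq:laplace-1d}. In the regime $7\log n \geq 2\sqrt{d}$, only the exponential bound is needed and a direct computation gives $\int_{7\sigma\log n}^\infty e^{-\sqrt{d}t/(7\sigma)}\,dt = (7\sigma/\sqrt{d})\,n^{-\sqrt{d}}$, which is comfortably below $r_n/n = 7\sigma\log n/n$ for $n \geq d \geq 3$. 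In the complementary regime $7\log n < 2\sqrt{d}$ (forcing $d$ fairly large), I would split the integral at $2\sigma\sqrt{d}$: the sub-Gaussian piece is bounded via the standard estimate $\int_a^\infty e^{-t^2/c}\,dt \leq (c/(2a))e^{-a^2/c}$ applied with $a = 7\sigma\log n$, $c = 16\sigma^2$, producing a factor $n^{-49\log n/16}$ that is negligible; the exponential piece contributes $(7\sigma/\sqrt{d})e^{-2d/7}$, which is extremely small while $r_n/n$ is at least $7\sigma\log d/n \geq 7\sigma\log d/e^{2\sqrt{d}/7}$, and an elementary comparison shows $e^{-2d/7}$ is much smaller than this latter quantity for $d$ in the relevant range. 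Combining the two regimes yields $r_n \leq 7\sigma\log n$.

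The only step with any real content is the case analysis in (3); the other three properties are essentially one-line consequences of already-established lemmas.
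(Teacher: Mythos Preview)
Your proposal is correct and follows essentially the same approach as the paper: parts (1), (2), and (4) are handled identically (direct Lipschitz check, the tail bound \eqref{eq:laplace-full-d-2} with $t=14\log n$, and Lemma~\ref{lem:linevariance-vs-lipschitz} respectively), and for (3) the paper performs the same split into the regimes $7\log n \geq 2\sqrt d$ and $7\log n < 2\sqrt d$, with the same cut at $2\sigma\sqrt d$ in the second regime. The only cosmetic difference is that the paper bounds the sub-Gaussian piece crudely by (interval length)$\times$(maximum integrand), $4\sigma\sqrt d\, e^{-(7\log n)^2/16}$, rather than via the Mills-ratio-type estimate you suggest; either works.
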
 \index{cutoff norm}\index{n-th deviation@$n$-th deviation}\index{line variance}\index{log-Lipschitz}
\begin{proof} By shift invariance of the parameters, we may assume w.l.o.g.~that $\vecb a = \vec 0$.
Let $\vec X$ be distributed as $L_d(\vec 0,\sigma)$ for use below.

\paragraph{1.} The density of the Laplace distribution is proportional to
$e^{-\norm{\vec x}\sqrt{d}/\sigma}$, for $\vec x \in \R^d$, and thus the logarithm of the density differs an additive constant from $-\norm{\vec x}\sqrt{d}/\sigma$, which is clearly $\sqrt{d}/\sigma$-Lipschitz.
 
\paragraph{2.} The second property follows from Lemma~\ref{lem:laplace-tails}:
\begin{align*}
  \Pr[\norm{\vec X} \geq 14 \sigma \sqrt{d} \log n]
&\leq e^{-2 d \log n} = n^{-2d} \\ & \leq \frac{1}{d\binom{n}{d}}.
\end{align*}

\paragraph{3.} Again from Lemma~\ref{lem:laplace-tails}. If $7\log n \geq
2\sqrt{d}$, we get that
 \begin{align*}
  \int_{7\sigma \log n}^\infty \Pr[\abs{\inner{\vec X}{\vec \theta}} \geq t] \d t &\leq \int_{7\sigma \log n}^\infty e^{-\sqrt{d} t/(7\sigma)} \d t \\
  &= \frac{7\sigma}{\sqrt{d}} n^{-\sqrt{d}\log n} \leq \frac{7 \sigma \log n}{n}.
 \end{align*}
 If $7\log n \leq 2\sqrt{d}$, then
 \begin{align*}
  \int_{7\sigma \log n}^\infty \Pr[\abs{\inner{\vec X}{\vec \theta}} \geq t] \d t
&= \int_{7\sigma \log n}^{2\sigma \sqrt{d}}  \Pr[\abs{\inner{\vec X}{\vec \theta}}
\geq t] \d t +  \int_{2\sigma \sqrt{d}}^\infty  \Pr[\abs{\inner{\vec X}{\vec
\theta}} \geq t] \d t \\
  &\leq \int_{7\sigma \log n}^{2\sigma \sqrt{d}} 2 e^{-t^2/(16\sigma^2)} \d t +  \int_{2\sigma \sqrt{d}}^\infty  e^{-\sqrt{d} t/(7\sigma)} \d t \\
  &\leq 4\sigma \sqrt{d} e^{-(7\log n)^2/16} + \frac{7\sigma}{\sqrt{d}} e^{-2d/7} \\
  &\leq 4\sigma \sqrt{d} / n^3 + 7\sigma/(\sqrt{d} n^{\sqrt{d}}) \leq \frac{7\sigma \log n}{n} .
 \end{align*}
 
\paragraph{4.} This follows from the $\sqrt{d}/\sigma$-log-Lipschitzness and Lemma~\ref{lem:linevariance-vs-lipschitz}.
\end{proof}

\begin{proof}[Proof of Theorem \ref{thm:laplace} (Shadow bound for Laplace perturbations)]
We get the desired result by plugging in the bounds from Lemma~\ref{lem:propertiesofLaplace}
for $L,\cutoffnorm,\deviation$ and $\tau$ into the upper bound
$O((d^{1.5}L/\tau)(1+\cutoffnorm)(1+\deviation))$ from Theorem~\ref{thm:abstractedbound}.
\end{proof}

\subsection{Shadow bound for Gaussian perturbations}
\label{sec:shadow-gaussian}
\index{Gaussian shadow bound}
In this subsection, we prove our shadow bound for Gaussian perturbations.

The Gaussian distribution is not log-Lipschitz, so we can not directly apply
Theorem \ref{thm:abstractedbound}. We will define a \emph{smoothed out} version
of the Gaussian distribution to remedy this problem, which we call the
Laplace-Gaussian distribution. The Laplace-Gaussian distribution, defined below,
matches the Gaussian distribution in every meaningful parameter,
while also being log-Lipschitz. We will first bound the shadow size for
Laplace-Gaussian perturbations, and then show that the expected number of
edges of $Q \cap W$ for Gaussian perturbations is at most $1$ larger.

\begin{definition}
We define a random variable $\vec X \in \R^d$ to be
\emph{$(\sigma,r)$-Laplace-Gaussian distributed with mean $\vecb a$}, or
$\vec X \sim LG_d(\vecb a, \sigma,r)$, if its density is proportional to $f_{(\vecb a, \sigma,r)} : \R^d \to \R_+$ given by
 \[f_{(\vecb a, \sigma,r)}(\vec x) = \begin{cases}
 e^{-\norm{\vec x - \vecb a}^2/(2\sigma^2)} & \mbox{ if } \norm{\vec x - \vecb a} \leq r\sigma \\
 e^{-\norm{\vec x - \vecb a}r/\sigma + r^2/2} & \mbox{ if } \norm{\vec x - \vecb a} \geq r\sigma .
 \end{cases}\]
Note that at $\norm{\vec x - \vecb a} = r\sigma$, both cases give the density
$e^{-r^2/2}$, and hence $f_{(\vecb a,\sigma,r)}$ is well-defined and continuous
on $\R^d$. For distributions with mean $\vec 0$, we abbreviate $f_{(\sigma, r)} := f_{(\vec
0, \sigma, r)}$ and $LG_d(\sigma,r) := LG_d(\vec 0,\sigma,r)$. \index{Laplace-Gaussian distribution} \index{$LG_d(\vecb a, \sigma, r)$|see {Laplace-Gaussian distribution}}
\end{definition}

Just like for the shadow size bound for Laplace perturbations, we need strong
enough tail bounds. We state these tail bounds here, and defer their proofs till
the end of the section.

\begin{lemma}[Laplace-Gaussian tail bounds]  \label{lem:lg-tails}
Let $\vec{X} \in \R^d$ be $(\sigma, r)$-Laplace-Gaussian distributed with
mean $\vec 0$, where $r:= c\sqrt{d \log n}$, $c \geq 4$. Then for $t \geq r$,
\begin{equation}
\label{eq:lg-full-d}
\Pr[\norm{\vec{X}} \geq \sigma t] \leq e^{-(1/4)r t} \text{ .}
\end{equation}
For $\vec\theta \in \bS^{d-1}$, $t \geq 0$,
\begin{equation}
\label{eq:lg-1d}
\Pr[\abs{\pr{\vec X}{\vec \theta}} \geq \sigma t] \leq
\begin{cases}
e^{-(1/4)r t} &: t \geq r \\
3e^{-t^2/4} &: 0 \leq t \leq r.
\end{cases}
\end{equation}
\end{lemma}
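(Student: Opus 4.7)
The plan is to deduce both tail bounds from one structural observation about $f := f_{(\sigma,r)}$, together with a direct spherical computation on the Laplace tail. The key identity is that in the Laplace region $\|x\| \geq r\sigma$ one has
\begin{equation*}
-\tfrac{r\|x\|}{\sigma} + \tfrac{r^2}{2} = -\tfrac{\|x\|^2}{2\sigma^2} + \tfrac{1}{2}\Big(\tfrac{\|x\|}{\sigma} - r\Big)^2,
\end{equation*}
so $f$ dominates the unnormalized Gaussian density pointwise on all of $\R^d$ (with equality on the ball $\|x\| \leq r\sigma$). Integrating, the normalizing constant $Z := \int f$ satisfies $Z \geq Z_G := (2\pi\sigma^2)^{d/2}$, and this is the only bound on $Z$ I will need.

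For~\eqref{eq:lg-full-d}, the event $\{\|\vec X\| \geq \sigma t\}$ lies entirely in the Laplace region since $t \geq r$. Polar coordinates and the substitution $u = rs/\sigma$ give
\begin{equation*}
\int_{\|x\|\geq \sigma t} f(x) \d x = e^{r^2/2}\,\vol_{d-1}(\bbS^{d-1})\,(\sigma/r)^d\,\Gamma(d,rt),
\end{equation*}
and since $rt \geq r^2 \geq 16\,d\log n$, the standard estimate $\Gamma(d, a) \leq 2 a^{d-1}e^{-a}$ applies (obtainable from the fact that $u^{d-1}e^{-u/2}$ is decreasing on $u \geq 2(d-1)$). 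Dividing by $Z \geq Z_G$ and expanding $\vol_{d-1}(\bbS^{d-1}) = 2\pi^{d/2}/\Gamma(d/2)$, the target $e^{-rt/4}$ reduces to an inequality of the form $C\, t^{d-1} \leq r\,2^{d/2}\Gamma(d/2)\,e^{3rt/4 - r^2/2}$ for an absolute constant $C$; with $r = c\sqrt{d\log n}$ and $c \geq 4$, the slack factor $e^{r^2/4} \geq n^{4d}$ comfortably absorbs the polynomial $t^{d-1}$ after a Stirling lower bound on $\Gamma(d/2)$, and the LHS is moreover decreasing in $t$ for $t \geq r$ (derivative nonpositive once $(d-1)/t \leq 3r/4$), so it suffices to check the inequality at $t = r$.

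The 1D bound~\eqref{eq:lg-1d} then splits cleanly. For $t \geq r$, it follows from $|\pr{\vec X}{\vec \theta}| \leq \|\vec X\|$ and~\eqref{eq:lg-full-d}. For $0 \leq t \leq r$, I decompose $\{|\pr{\vec X}{\vec \theta}| \geq \sigma t\}$ by whether $\vec X$ lies in the Gaussian or the Laplace region. On the Gaussian region $f$ coincides with the Gaussian density, so $Z \geq Z_G$ and Lemma~\ref{lem:gaussian-tails} give a contribution of at most $(Z_G/Z)\, \Pr_{\vec Y \sim N_d(\sigma)}[|\pr{\vec Y}{\vec \theta}| \geq \sigma t] \leq 2e^{-t^2/2}$. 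On the Laplace region, the condition $\|\vec X\| \geq r\sigma \geq \sigma t$ is automatic, so the contribution is at most $\Pr[\|\vec X\| \geq r\sigma] \leq e^{-r^2/4} \leq e^{-t^2/4}$ by~\eqref{eq:lg-full-d} at $t=r$, yielding $2e^{-t^2/2}+e^{-t^2/4} \leq 3e^{-t^2/4}$. The only step that requires real care is the Stirling comparison in~\eqref{eq:lg-full-d}, which is precisely where the hypothesis $c \geq 4$ is used; everything else follows cleanly from the pointwise inequality $f \geq $ Gaussian and the existing Gaussian tail bound.
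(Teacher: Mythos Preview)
Your proof is correct and follows essentially the same approach as the paper. Both arguments rest on the pointwise inequality $f_{(\sigma,r)} \geq e^{-\|x\|^2/(2\sigma^2)}$ (equivalently $Z \geq Z_G$), and the decomposition for~\eqref{eq:lg-1d} into the Gaussian and Laplace regions is identical. For~\eqref{eq:lg-full-d}, the paper writes $\Pr[\|\vec X\|\geq \sigma t] = \Pr[\|\vec Y\|\geq \sigma t]\cdot \Pr[\|\vec X\|\geq r\sigma]/\Pr[\|\vec Y\|\geq r\sigma]$ for $\vec Y$ Laplace and invokes the already-proven Laplace tail bound, whereas you compute $\tfrac{1}{Z}\int_{\|x\|\geq \sigma t} f$ directly via the incomplete gamma function; unwinding the paper's identity shows the two computations are the same expression organized differently. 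Your route is self-contained but requires the Stirling comparison you sketch; the paper's route reuses Lemma~\ref{lem:laplace-tails} and avoids that step.
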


\begin{lemma} \label{lem:propertiesofLG}
 For $n \geq d \geq 3$, the $(\sigma,4\sqrt{d\log n})$-Laplace-Gaussian
distribution in $\R^d$ with mean $\vecb a$ satisfies the following properties:
 \begin{enumerate}
  \item The density is $4\sigma^{-1}\sqrt{d\log n}$-log-Lipschitz.
  \item Its cutoff radius satisfies $\cutoffnorm \leq 4\sigma \sqrt{d\log n}$.
  \item The $n$-th deviation is $\deviation \leq 4\sigma \sqrt{\log n}$.
  \item The variance after restricting to any line satisfies $\tau \geq \sigma/4$.
 \end{enumerate}
\end{lemma}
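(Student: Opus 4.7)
The plan is to verify the four properties directly, using the tail bounds of Lemma~\ref{lem:lg-tails} for (2) and (3). By shift invariance of all four parameters I may take $\vecb a = \vec 0$, and set $r := 4\sqrt{d\log n}$ throughout.

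Properties (1), (2), and (3) are routine. For (1), a piecewise computation gives $\nabla \log f_{(\sigma,r)}(\vec x) = -\vec x/\sigma^2$ on $\{\|\vec x\| \leq r\sigma\}$ and $-(r/\sigma)\vec x/\|\vec x\|$ outside, both of norm at most $r/\sigma = 4\sqrt{d\log n}/\sigma$, and the two formulas agree on the boundary sphere. For (2), \eqref{eq:lg-full-d} at $t = r$ yields $\Pr[\|\vec X\| \geq \sigma r] \leq e^{-r^2/4} = n^{-4d} \leq 1/(d\binom{n}{d})$, so $\cutoffnorm \leq \sigma r$. For (3), I set $r_n := 4\sigma\sqrt{\log n} \leq r\sigma$ and split $\int_{r_n}^\infty \Pr[|\langle \vec X,\vec\theta\rangle| \geq t]\,\d t$ at $r\sigma$, applying \eqref{eq:lg-1d} on each piece and using $\int_a^\infty e^{-u^2}\,\d u \leq e^{-a^2}/(2a)$ on the Gaussian portion; both contributions are easily dominated by $r_n/n$ for $n \geq d \geq 3$.

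The main obstacle is (4): the log-Lipschitz bound of Lemma~\ref{lem:linevariance-vs-lipschitz} combined with (1) would only give $\tau = \Omega(\sigma/\sqrt{d\log n})$, far too weak, so one must instead exploit the explicit shape of $f_{(\sigma,r)}$ near its mode. I write $-\log f_{(\sigma,r)}(\vec x) = \phi(\|\vec x\|)$ where $\phi(t) = t^2/(2\sigma^2)$ for $t \leq r\sigma$ and $\phi(t) = tr/\sigma - r^2/2$ for $t \geq r\sigma$. The two pieces agree in value and first derivative at $t = r\sigma$, so $\phi$ is convex and $C^1$, hence $f_{(\sigma,r)}$ is log-concave on $\R^d$. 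For any line $\ell$ at distance $R$ from $\vec 0$, the conditional density is a symmetric log-concave density $g(s) \propto f_{(\sigma,r)}(\sqrt{R^2+s^2})$ with mode at $s = 0$.

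The quantitative core is the bound $\phi(\sqrt{R^2+\sigma^2/4}) - \phi(R) \leq 1/8$, which I verify by three cases according to the regimes of the two endpoints. If both lie in the Gaussian regime the difference is exactly $(\sigma^2/4)/(2\sigma^2) = 1/8$; if both lie in the Laplace regime, then $R \geq r\sigma$ and the difference equals $(r/\sigma)(\sqrt{R^2+\sigma^2/4}-R) \leq (r/\sigma)\cdot \sigma^2/(8R) \leq 1/8$; for the transition case, parametrizing $R = \sigma\sqrt{r^2-\epsilon}$ with $\epsilon \in [0,1/4]$ reduces the claim to $r\sqrt{r^2-\epsilon+1/4} - r^2 + \epsilon/2 \leq 1/8$, whose left side is nondecreasing in $\epsilon$ (its $\epsilon$-derivative is $\tfrac{1}{2}(1 - r/\sqrt{r^2-\epsilon+1/4}) \geq 0$) and equals $1/8$ at $\epsilon = 1/4$. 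Exponentiating gives $g(\sigma/2) \geq g(0)e^{-1/8}$, log-concavity extends this to $g(s) \geq g(0)e^{-1/8}$ for all $|s| \leq \sigma/2$, and integrating yields $g(0) \leq e^{1/8}/\sigma$. Finally, since $g$ is symmetric on $\R$ with $\|g\|_\infty = g(0) =: M$, the distribution of $|Y|$ stochastically dominates $\mathrm{Unif}[0, 1/(2M)]$, hence $\mathrm{Var}(Y) \geq 1/(12M^2) \geq \sigma^2 e^{-1/4}/12 > \sigma^2/16$, giving $\tau \geq \sigma/4$.
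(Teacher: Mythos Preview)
Your proof is correct. Properties (1)--(3) follow the paper's argument almost verbatim. For property (4) you take a genuinely different route: the paper restricts to the segment $[-\sigma,\sigma]$ along the line, shows the density ratio $f_{(\sigma,r)}(\vec y,\lambda)/f_{(\sigma,r)}(\vec y,0)\geq e^{-1/2}$ there by treating the cases $\|\vec y\|\leq r\sigma$ and $\|\vec y\|\geq r\sigma$ separately, and then lower-bounds $\E[X_d^2]$ on the segment via $(\sigma^2/4)\Pr[|X_d|\geq\sigma/2]$. You instead observe that the conditional density $g$ is log-concave, bound the mode $g(0)\leq e^{1/8}/\sigma$ via the inequality $\phi(\sqrt{R^2+\sigma^2/4})-\phi(R)\leq 1/8$, and conclude by stochastic domination of $|Y|$ over ${\rm Unif}[0,1/(2g(0))]$. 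Your three-case verification of the $1/8$ bound (including the transition case via monotonicity in $\epsilon$) is clean and sharp; the paper's two-case split avoids the transition case by exploiting that the Laplace tail dominates the Gaussian tail pointwise, so the Gaussian density always lower-bounds $f_{(\sigma,r)}$. Both arguments yield the same constant $\sigma/4$, and your stochastic-domination step is a reusable device that would apply to any symmetric log-concave conditional with a mode bound.
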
 \index{cutoff norm}\index{n-th deviation@$n$-th deviation}\index{line variance}\index{log-Lipschitz}
\begin{proof} As before, by shift invariance, we may assume w.l.o.g~that $\vecb
a = \vec 0$. Let $\vec X \sim LG_d(\sigma,4\sqrt{d\log n})$ and
let $r := 4\sqrt{d \log n}$.

\paragraph{1.} The gradient of the function $\log(f_{(\sigma, r)}(\vec x))$ has
norm bounded by $4\sigma^{-1}\sqrt{d\log n}$ wherever it is
defined, which by continuity implies $f_{(\sigma,r)}$ is $4\sigma^{-1}\sqrt{d\log n}$-log-Lipschitz.

\paragraph{2.} Applying the tail bound from Lemma~\ref{lem:lg-tails}, we get that
\begin{align*}
\Pr[\norm{\vec X} \geq 4\sigma \sqrt{d\log n}] &\leq e^{-4d\log n} \leq \frac{1}{d \binom{n}{d}}.
\end{align*}

\paragraph{3.} Again using Lemma~\ref{lem:lg-tails},
 \begin{align*}
  \int_{4\sigma \sqrt{\log n}}^\infty \Pr[\abs{\inner{\vec X}{\vec \theta}} \geq t] \d t
  &= \int_{4\sigma \sqrt{\log n}}^{r\sigma}  \Pr[\abs{\inner{\vec X}{\vec
\theta}} \geq t] \d t +  \int_{r\sigma}^\infty
\Pr[\abs{\inner{\vec X}{\vec \theta}} \geq t] \d t \\
  &\leq \int_{4\sigma \sqrt{\log n}}^{r\sigma}
3e^{-t^2/(4\sigma^2)} \d t + \int_{r\sigma}^\infty e^{-\sqrt{d\log n}t/\sigma} \d t \\
  &\leq 4\sigma \sqrt{d \log n} (3 n^{-4}) + \frac{\sigma}{\sqrt{d \log n}} n^{-4d} \\
  &\leq 4\sigma\sqrt{\log n}/n .
 \end{align*}

\paragraph{4.}  For the line variance, by rotational symmetry, we may
without loss of generality assume that $l := (\vec y, 0) + \vec e_d\R$, where $\vec y \in
\R^{d-1}$, and so $(\vec y, 0)$ is the point on $l$ closest to the origin. Since
$f_{(\sigma,r)}(\vec y,\lambda) = f_{(\sigma,r)}(\vec y,-\lambda)$ for
every $\lambda \in \R$, the expectation $\E[\vec X \mid \vec X \in l] = (\vec y,
0)$. Thus, $\Var(\vec X \mid \vec X \in l) = \E[X_d^2 \mid \vec X \in l]$.

Let $\bar{l} = (\vec y, 0) + [-\sigma,\sigma] \cdot \vec e_d$. Since $\abs{X_d}$ is
larger on $l \setminus \bar{l}$ than on $\bar{l}$, we clearly have $\E[X_d^2
\mid \vec X \in l] \geq \E[X_d^2 \mid \vec X \in \bar l]$, so it suffices to
lower bound the latter quantity.

For each $\vec y$ with $\norm{\vec y} \leq \sigma r$ we have for all $\lambda \in
[-\sigma,\sigma]$ the inequality
\begin{equation}
\label{eq:lg-var-1}
1 \geq \frac{f_{(\sigma,r)}(\vec
y,\lambda)}{f_{(\sigma,r)}(\vec y, 0)} \geq \frac{e^{-\norm{(\vec
y, \lambda)}^2/(2\sigma^2)}}{e^{-\norm{(\vec y, 0)}^2/(2\sigma^2)}} =
e^{-\lambda^2/(2\sigma^2)} \geq e^{-1/2} \text{ .}
\end{equation}
Given the above, we have that
\begin{equation}
\label{eq:lg-var-2}
\begin{split}
\E[X_d^2 \mid \vec X \in \bar l] &\geq (\sigma^2/4) \Pr[|X_d| \geq \sigma/2 \mid
\vec X \in \bar l] \\
&= (\sigma^2/4) \frac{\int_{\sigma/2}^\sigma f_{(\sigma,r)}(\vec y, t)
\d t}{\int_0^\sigma f_{(\sigma,r)}(\vec y,t) \d t} \\
&\geq (\sigma^2/4) \frac{\int_{\sigma/2}^\sigma f_{(\sigma,r)}(\vec y, 0)
e^{-1/2} \d t}{\int_0^\sigma f_{(\sigma,r)}(\vec y,0) \d t} \quad \left(\text{ by~\eqref{eq:lg-var-1} } \right) \\
&= (\sigma^2/4)(e^{-1/2}/2) \geq \sigma^2/16~\text{ , as needed .}
\end{split}
\end{equation}
For $\vec y$ with $\norm{\vec y} \geq \sigma r$, $\lambda \in [-\sigma,\sigma]$, we similarly have
\begin{align*}
\norm{(\vec y,\lambda)} &= \sqrt{\norm{\vec y}^2 + \lambda^2} \\
&\leq \norm{\vec y} + \frac{\lambda^2}{2\norm{\vec y}}
\leq \norm{\vec y} + \frac{\lambda^2}{2r\sigma}.
\end{align*}
In particular, we get that
\begin{equation}
\label{eq:lg-var-3}
1 \geq \frac{f_{(\sigma,r)}(\vec
y,\lambda)}{f_{(\sigma,r)}(\vec y, 0)} = \frac{e^{-\norm{(\vec
y, \lambda)}(r/\sigma)}}{e^{-\norm{(\vec y,0)}(r/\sigma)}} \geq
e^{-\lambda^2/(2\sigma^2)} \geq e^{-1/2} \text{ .}
\end{equation}
The desired lower bound now follows by combining~\eqref{eq:lg-var-2},~\eqref{eq:lg-var-3}.
\end{proof}

Given any unperturbed unit LP given by $\vec c, \bar {\vec a}_1,\dots,\bar {\vec a}_n$,
we denote by $\E_{N_d(\sigma)}$ the expectation when its vertices are
perturbed with noise distributed according to the Gaussian distribution of
standard deviation $\sigma$ and we write $\E_{LG_d(\sigma,r)}$ for the
expectation when its vertices are perturbed by $(\sigma,r)$-Laplace-Gaussian
noise. In the next lemma we prove that, for $r := 4\sqrt{d\log n}$, the expected
number of edges for Gaussian distributed perturbations is not much bigger
than the expected number for Laplace-Gaussian perturbations. We use the
strong tail bounds we have on the two distributions along with the knowledge
that restricted to a ball of radius $r\sigma$ the probability densities are equal.
Recall that we use $\hat{\vec a}_i$ to denote the perturbation
$\vec a_i - \E[\vec a_i]$.

\begin{lemma} \label{lem:mightaswellLG}
For $d \geq 3$, the number of edges in $\conv(\vec a_1,\dots,\vec a_n) \cap W$ satisfies
\[
\E_{N_d(\sigma)}[\abs{\edges(\conv(\vec a_1,\dots,\vec a_n))}] \leq 1 +
\E_{LG_d(\sigma,4 \sqrt{ d \log n})}[\abs{\edges(\conv(\vec a_1,\dots,\vec a_n))}].
\]
\end{lemma}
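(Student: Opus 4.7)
The plan is to construct a total-variation coupling between the two product measures. The key observation is that, for each $i$, the Gaussian density $N_d(\vecb a_i,\sigma)$ and the Laplace-Gaussian density $LG_d(\vecb a_i,\sigma,r)$ with $r := 4\sqrt{d\log n}$ are both proportional to the same unnormalized Gaussian $e^{-\norm{\vec x-\vecb a_i}^2/(2\sigma^2)}$ on the ball $B(\vecb a_i,r\sigma)$; only outside this ball does the Laplace-Gaussian density differ. Consequently the two product measures can be coupled so that, with very high probability, all $n$ perturbations coincide, and on that event the edge count is literally the same in both worlds.

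First, I would bound the total variation distance between $N_d(\vecb a_i,\sigma)$ and $LG_d(\vecb a_i,\sigma,r)$. Writing $Z_N$ and $Z_{LG}$ for the two normalizing constants, the pointwise inequality $f_{(\vecb a_i,\sigma,r)}(\vec x) \geq e^{-\norm{\vec x-\vecb a_i}^2/(2\sigma^2)}$ (with equality exactly on the ball) forces $Z_{LG}\geq Z_N$, and hence $p_{LG}\leq p_N$ on the ball. A short calculation then gives
\[
\mathrm{TV}(N_d(\vecb a_i,\sigma),\,LG_d(\vecb a_i,\sigma,r)) \;\leq\; \Pr_{N_d(\sigma)}[\norm{\vec X}>r\sigma] \;+\; \Pr_{LG_d(\sigma,r)}[\norm{\vec X}>r\sigma].
\]
Lemma~\ref{lem:lg-tails} with $t=r$ yields $\Pr_{LG}[\norm{\vec X}>r\sigma] \leq e^{-r^2/4}=n^{-4d}$, and Lemma~\ref{lem:gaussian-tails} gives an even sharper estimate for the Gaussian tail, so the per-coordinate TV distance is at most $2n^{-4d}$.

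Next, I would couple the two product measures in the standard maximal way coordinate by coordinate, and invoke a union bound to conclude that all $n$ Gaussian samples coincide with their Laplace-Gaussian partners except on an event of probability at most $2n\cdot n^{-4d}=2n^{1-4d}$. On the good event the random quantity $\abs{\edges(\conv(\vec a_1,\dots,\vec a_n)\cap W)}$ is identical under both measures. On the complementary event I would use the deterministic worst-case bound $\abs{\edges(\conv(\vec a_1,\dots,\vec a_n)\cap W)}\leq\binom{n}{d}\leq n^d$, since each edge of the planar polygon $Q\cap W$ arises from a unique facet of $Q$, and every facet is a convex hull of exactly $d$ of the points.

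Combining the coupling-failure probability with this worst-case edge count,
\[
\E_{N_d(\sigma)}[\abs{\edges}]-\E_{LG_d(\sigma,r)}[\abs{\edges}] \;\leq\; 2n^{1-4d}\cdot n^d \;=\; 2n^{1-3d} \;\leq\; 1
\]
for $d\geq 3$. The only quantitatively delicate point is the calibration of the constant in $r=4\sqrt{d\log n}$: it must be large enough that $e^{-r^2/4}$ beats the combinatorial bound $\binom{n}{d}$ by more than a factor of $n$, and the constant $4$ is chosen to make this comfortably so. No new geometric ingredients beyond TV coupling and the tail bounds of Lemmas~\ref{lem:gaussian-tails} and~\ref{lem:lg-tails} are required.
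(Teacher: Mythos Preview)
Your proposal is correct. The route, however, differs from the paper's in packaging. The paper does not set up a total-variation coupling; instead it conditions directly on the ``good'' event $\{\forall i:\ \|\hat{\vec a}_i\|\le r\sigma\}$. On that event the conditional law under $N_d(\sigma)$ and under $LG_d(\sigma,r)$ coincide (same unnormalized density on the ball), so the conditional expectations of the edge count agree; the paper then controls the two correction terms --- the Gaussian ``bad'' part via $\Pr_{N}[\text{bad}]\cdot\binom{n}{d}\le 1/4$, and the passage from the conditional $LG$ expectation back to the unconditional one via $1/\Pr_{LG}[\text{good}]\le 1+n^{-d}/2$. Your TV-coupling argument achieves the same conclusion in one stroke: $|\E_N[f]-\E_{LG}[f]|\le \|f\|_\infty\cdot \mathrm{TV}(N^{\otimes n},LG^{\otimes n})$. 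In fact your per-coordinate TV bound can be sharpened to just $\Pr_{LG}[\|\vec X\|>r\sigma]$ (since $p_{LG}\le p_N$ on the ball forces $\int\min(p_N,p_{LG})\ge \Pr_{LG}[\text{ball}]$), though the weaker two-term bound already suffices. Both arguments rest on the same three ingredients --- agreement of the unnormalized densities on the ball, the tail bounds of Lemmas~\ref{lem:gaussian-tails} and~\ref{lem:lg-tails}, and the crude $\binom{n}{d}$ cap on the edge count --- so the difference is one of presentation rather than substance; your version is slightly more black-box and avoids the $(1-\epsilon)^{-1}\le 1+2\epsilon$ bookkeeping.
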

\begin{proof}
Let us abbreviate $\edges := \edges(\conv(\vec a_1,\dots,\vec a_n))$ and let $r
:= 4\sqrt{d \log n}$. We make use of the fact that $N_d(\sigma)$ and
$LG_d(\sigma,r)$ are equal when restricted to distance at most
$\sigma r$ from their centers.
 \begin{align}
  \E_{N(\sigma)}[\abs{\edges}] &= \Pr_{N_d(\sigma)}[\exists i \in [n] ~
\norm{\hat{\vec a}_i} > \sigma r]\E_{N_d(\sigma)}[\abs{\edges}
\mid \exists i \in [n] ~ \norm{\hat{\vec a}_i} > \sigma r] \nonumber \\
  &\,\,+ \Pr_{N_d(\sigma)}[\forall i \in [n] ~ \norm{\hat{\vec a}_i} \leq
\sigma r]\E_{N_d(\sigma)}[\abs{\edges} \mid \forall i \in [n] ~
\norm{\hat{\vec a}_i} \leq \sigma r]. \label{eq:mas-lg-1} \\
  \intertext{
By Lemma~\ref{lem:lg-tails}, the first probability is at most $n^{-4d}
\leq n^{-d}/4$, so we upper bound the first number of edges by
$\binom{n}{d}$ making a total contribution of less than $1/4$. Now we use the
fact that within radius $4 \sigma \sqrt{ d \log n}$ we have equality of densities
between $N_d(\sigma)$ and $LG_d(\sigma,r)$. Continuing from~\eqref{eq:mas-lg-1},}
  &\leq 1/4 + \E_{N_d(\sigma)}[\abs{\edges} \mid \forall i \in [n] ~
\norm{\hat{\vec a}_i} \leq \sigma r] \nonumber \\
   &= 1/4 + \E_{LG_d(\sigma,r)}[\abs{\edges} \mid \forall i \in
[n] ~ \norm{\hat{\vec a}_i} \leq \sigma r] \nonumber \\
  &\leq 1/4 + \E_{LG_d(\sigma,r)}[\abs{\edges}]/\Pr_{LG_d(\sigma,
r)}[\forall i \in [n] ~ \norm{\hat{\vec a}_i} \leq \sigma r].  \label{eq:mas-lg-2} \\
  \intertext{The inequality above is true by non-negativity of the number of
edges. Next we lower bound the denominator and continue~\eqref{eq:mas-lg-2},}
  &\leq 1/4 + \E_{LG_d(\sigma,r)}[\abs{\edges}]/(1-n^{-d}/4) \nonumber \\
  &\leq 1/4 + (1+n^{-d}/2)\E_{LG_d(\sigma,r)}[\abs{\edges}]. \label{eq:mas-lg-3}
 \end{align}
 The last inequality we deduce from the fact that $(1-\epsilon)(1+2\epsilon) = 1
+ \epsilon - 2\epsilon^2$, which is bigger than $1$ for $0 < \epsilon < 1/2$. Again using the trivial upper bound of $\binom{n}{d}$
edges, we arrive at our desired conclusion that
\[
  \E_{N_d(\sigma)}[\abs{\edges}] \leq 1 + \E_{LG_d(\sigma,r)}[\abs{\edges}].
  \]
  
\end{proof}

We now have all the ingredients to prove our bound on the expected number of edges for Gaussian perturbations.
\begin{proof}[Proof of Theorem \ref{thm:gaussian} (Shadow bound for Gaussian perturbations)]~
By Lemma \ref{lem:mightaswellLG}, we know that
\[\E_{N_d(\sigma)}[\abs{\edges(\conv(\vec a_1,\dots,\vec a_n))}] \leq 1 +
\E_{LG_d(\sigma,4 \sqrt{ d \log n})}[\abs{\edges(\conv(\vec a_1,\dots,\vec
a_n))}].\] We now derive the shadow bound for Laplace-Gaussian perturbations by
combining the parameter bounds in Lemma~\ref{lem:propertiesofLG} with the
parameterized shadow bound in Theorem~\ref{thm:abstractedbound}.
\end{proof}

We now prove the tail bounds for Laplace-Gaussian distributions. Recall that
we set $r:= c\sqrt{d \log n}$ with $c \geq 4$.

\begin{proof}[Proof of Lemma~\ref{lem:lg-tails} (Tail bound for Laplace-Gaussian distribution)] By homogeneity, we may
w.l.o.g.~assume that $\sigma = 1$. Define
auxiliary random variables $\vec{Y} \in \R^d$ distributed as
$(\vec{0},1/(c\sqrt{\log n}))$-Laplace and $\vec{Z} \in \R^d$ be distributed as
$N_d(\vec{0},1)$.

Since $\vec{X}$ has density proportional to $f_{(1,r)}(\vec{x})$, which equals
$e^{-\norm{\vec{x}}^2/2}$ for $\norm{\vec{x}} \leq r$ and $e^{-r\norm{\vec{x}}+r^2/2}$ for
$\norm{\vec{x}} \geq r$, we immediately see that
\begin{equation}
\label{eq:lg-equiv}
\begin{split}
\vec{Z} \mid \norm{\vec{Z}} \leq r & \equiv \vec{X} \mid \norm{\vec{X}} \leq r \\
\vec{Y} \mid \norm{\vec{Y}} \geq r & \equiv \vec{X} \mid \norm{\vec{X}} \geq r
\end{split}
\end{equation}
 
\paragraph{Proof of~\eqref{eq:lg-full-d}} By the above, for any $t \geq r$,
we have that
\begin{equation}
\label{eq:tailbound-1}
\Pr[\norm{\vec{X}} \geq t] = \Pr[\norm{\vec{Y}} \geq t] \cdot \frac{\Pr[\norm{\vec{X}}
\geq r]}{\Pr[\norm{\vec{Y}} \geq r]} \text{ .}
\end{equation}
For the first term, by the Laplace tail bound~\eqref{eq:laplace-full-d}, we get that
\begin{equation}
\label{eq:lg-full-d-2}
\Pr[\norm{\vec{Y}} \geq t] \leq e^{-rt - d\log(\frac{c\sqrt{\log n}t}{\sqrt{d}})-d} \text{ .}
\end{equation}
For the second term,
\begin{equation}
\label{eq:lg-full-d-3}
\begin{split}
\frac{\Pr[\norm{\vec{X}} \geq r]}{\Pr[\norm{\vec{Y}} \geq r]}
&=  e^{r^2/2} \frac{\int_{\R^n} e^{-r\norm{\vec x}} \d \vec x}
                   {\int_{\R^n} f_{(\sigma,r)}(\vec x) \d \vec x}
\leq  e^{r^2/2} \frac{\int_{\R^n} e^{-r \norm{\vec x}}\d \vec x}
                      {\int_{\R^n} e^{-\norm{\vec{x}}^2/2} \d \vec x}  \\
&\leq e^{r^2/2}\frac{r^{-d}d!\vol_d(\calB_2^d)}{\sqrt{2\pi}^d} \leq
e^{(d c^2 \log n)/2 }(\frac{\sqrt{e}}{c \sqrt{\log n}})^d \\
&\leq e^{(d c^2 \log n)/2} \text{ ,}
\end{split}
\end{equation}
where we have used the upper bound $\vol_d(\calB_2^d) \leq (2\pi e/d)^{d/2}$, $r
= c\sqrt{d\log n}$ and $c \geq \sqrt{e}$. Combining~\eqref{eq:lg-full-d-2},~\eqref{eq:lg-full-d-3} and
that $t \geq r$, $c \geq 4$, we get
\begin{equation}
\label{eq:lg-full-d-4}
\begin{split}
\Pr[\norm{\vec{X}} \geq t] &\leq e^{-rt - d\log(\frac{c\sqrt{\log
n}t}{\sqrt{d}})-d} \cdot e^{(d c^2 \log n)/2} \\
&\leq e^{-rt/2 - d\log(\frac{c\sqrt{\log n}t}{\sqrt{d}})-d}
 = e^{-d(\frac{rt}{2d}-\log(\frac{rt}{d})-1)} \\
&\leq e^{-d(\frac{rt}{4d})} = e^{-rt/4} ,
\end{split}
\end{equation}
where the last inequality follows from $x/2-\log(x)-1 \geq x/4$, for $x \geq rt/d \geq c^2 \geq 16$.

\paragraph{Proof of~\eqref{eq:lg-1d}} For $t \geq r$, using the
bound~\eqref{eq:lg-full-d}, we get
\begin{equation}
\label{eq:lg-1d-1}
\Pr[\abs{\pr{\vec X}{\vec \theta}} \geq t] \leq \Pr[\norm{\vec{X}} \geq t] \leq e^{-c\sqrt{d \log
n } t/4} \text{ . }
\end{equation}
For $t \leq r$, we see that
\begin{equation}
\label{eq:lg-1d-2}
\begin{split}
\Pr[\abs{\pr{\vec X}{\vec \theta}} \geq t] &\leq \Pr[\abs{\pr{\vec X}{\vec \theta}} \geq t, \norm{\vec{X}} \leq r]
+ \Pr[\norm{\vec{X}} \geq r] \\
&\leq \Pr[\abs{\pr{\vec X}{\vec \theta}} \geq t, \norm{\vec{X}} \leq r] +
e^{-r^2/4} \text{ .}
\end{split}
\end{equation}
By the identity~\eqref{eq:lg-equiv}, for the first term, using the Gaussian
tail bound~\eqref{eq:gauss-1d}, we have that
\begin{equation}
\label{eq:lg-1d-3}
\begin{split}
\Pr[\abs{\pr{\vec X}{\vec \theta}} \geq t, \norm{\vec{X}} \leq r]
&= \Pr[\abs{\pr{\vec Z}{\vec \theta}} \geq t, \norm{\vec{Z}} \leq r] \cdot
\frac{\Pr[\norm{\vec{X}} \leq r]}{\Pr[\norm{\vec{Z}} \leq r]} \\
&= \Pr[\abs{\pr{\vec Z}{\vec \theta}} \geq t, \norm{\vec{Z}} \leq r] \cdot
\frac{\int_{\R^n} e^{-\norm{\vec x}^2/2} \d \vec x}{\int_{\R^n} f_{(1,r)}(\vec x) \d \vec x} \\
&\leq \Pr[\abs{\pr{\vec Z}{\vec \theta}} \geq t] \leq 2e^{-t^2/2} \text{ .}
\end{split}
\end{equation}
The desired inequality~\eqref{eq:lg-1d} now follows directly
by combining~\eqref{eq:lg-1d-1},~\eqref{eq:lg-1d-2},~\eqref{eq:lg-1d-3}, noting
that $2e^{-t^2/2} + e^{-r^2/4} \leq 3 e^{-t^2/4}$ for $0 \leq t \leq r$.
\end{proof}

\section{Simplex algorithms}
\label{sec:algorithms}

In this section, we describe how to use the shadow bound
to bound the complexity of a complete shadow vertex simplex algorithm.
We restrict our
attention here to Gaussian perturbations, as the details for Laplace
perturbations are similar. We will follow the two-stage
interpolation strategy given by Vershynin in~\cite{jour/siamjc/Vershynin09}.
The Random Vertex algorithm of \cite{jour/siamjc/Vershynin09}
was shown in the same paper to work for any
$\sigma \leq \min(\frac{c_1}{\sqrt{d\log n}}, \frac{c_1}{d^{3/2}\log d})$
for some $c_1 > 0$.
The constraint on $\sigma$ is always achievable by scaling down the matrix
$\vec A$, though it will be reflected in the running time of the algorithm.

We will describe a modification of the RV algorithm that further relaxes the
condition on the perturbation size to
$\sigma \leq O(\frac{1}{\sqrt{d\log n}})$, for an expected
$O(d^2\sqrt{\log n}~\sigma^{-2} + d^3\log^{1.5} n)$ pivot steps. Hence our
algorithm is faster than both Vershynin's~\cite{jour/siamjc/Vershynin09} RV algorithm and Borgwardt's
dimension-by-dimension algorithm~\cite{Borgwardt87} for our shadow bound.

\noindent To recall, our goal is to solve the smoothed LP

\begin{align*}
\max & ~ \vec{c}^\T \vec{x} \hspace{11em} \text{(Smooth LP)} \\
     & ~ \vec{A} \vec{x} \leq \vec{b}
\end{align*}
where $\vec{A} \in \R^{n \times d}$, $\vec{b} \in \R^n$, $\vec c \in \R^{d}
\setminus \set{\vec 0}$ and $n \geq d \geq 3$.
Here each row $(\vec{a}_i,b_i)$, $i \in [n]$, of $(\vec{A},\vec{b})$ is a
variance $\sigma^2$ Gaussian random vector with mean $(\vecb{a}_i,\bar{b}_i) :=
\E[(\vec{a}_i,b_i)]$ of $\ell_2$ norm at most $1$. We will say that (Smooth LP)
is unbounded (bounded) if the system $\vec{c}^\T \vec{x} > 0, \vec{A} \vec{x}
\leq \vec 0$ is feasible (infeasible). Note that (Smooth LP) can be unbounded
and infeasible under this definition. If (Smooth LP) is bounded and
feasible, then it has an optimal solution.

For the execution of the algorithms as stated, we assume the non-degeneracy
conditions listed in Theorem~\ref{thm:geometric-characterization-shadow-path-size}.
That is, we assume both the feasible polyhedron and shadows to be non-degenerate.
These conditions hold with probability $1$.

\begin{theorem}\label{thm:algorithms} (Smooth LP) can be solved by a two-phase shadow
simplex method using an expected number of pivots of
$O(d^2\sqrt{\log n}~\sigma^{-2} + d^3\log^{1.5} n)$.
\end{theorem}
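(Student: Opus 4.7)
The plan is to implement Vershynin's two-phase interpolation strategy, using the new Symmetric Random Vertex (SRV) algorithm in Phase~1, and to bound the pivot count of each phase by applying Theorem~\ref{thm:gaussian} (via Theorem~\ref{thm:geometric-characterization-shadow-path-size}) to an appropriate unit LP with centers of norm $O(1)$ and i.i.d.\ Gaussian perturbations of variance $\sigma^2$.

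For Phase~1, I would first reduce to solving the auxiliary unit LP $\max\{\vec c^\T\vec x : \vec A\vec x \leq \vec 1\}$. Draw a direction $\vec d \in \bS^{d-1}$ uniformly and independently of $(\vec A,\vec b)$, and augment the constraints with $2d-2$ artificial halfspaces whose outer normals are $\pm\vec u_1,\dots,\pm\vec u_{d-1}$ for an orthonormal basis of $\linsp(\vec d)^\perp$, each placed at a common distance from the origin. Together with a single bounding halfspace with outer normal $\vec d$, this yields an explicit feasible vertex that optimizes $\vec d$ over the augmented polyhedron; the symmetry guarantees this happens with probability one, avoiding the $1/\poly(d)$ failure probability of Vershynin's asymmetric random vertex construction. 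Running Algorithm~\ref{alg:shadow-simplex} with objective interpolation $\vec d \to \vec c$ then traces a shadow path in the plane $W_1 = \linsp(\vec d,\vec c)$, whose length is bounded via Lemma~\ref{lem:polar} by the number of edges of the polar polytope intersected with $W_1$. Crucially, because $\vec d$ is independent of the perturbations, conditioning on $\vec d$ fixes $W_1$ and Theorem~\ref{thm:gaussian} applies; the $O(d)$ deterministic artificial constraint vectors contribute only an $O(d)$ additive term to the edge count, yielding an expected pivot count of $O(\calD_g(n+O(d),d,\sigma))$.

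For Phase~2, I would lift to $\R^{d+1}$ via an interpolation parameter $\lambda$, rewriting $\vec A\vec x\leq(1-\lambda)\vec 1+\lambda\vec b$ as the unit LP $(\vec A,\vec 1-\vec b)(\vec x,\lambda)^\T\leq\vec 1$ (with additional bounds on $\lambda$). Its rows are Gaussian of variance $\sigma^2$ with centers of norm $O(1)$ provided $\|\vec b-\vec 1\|_\infty=O(1)$, which holds with overwhelming probability under $\sigma\leq O(1/\sqrt{d\log n})$ by Lemma~\ref{lem:gaussian-tails}. Starting from the Phase~1 optimum lifted to $\lambda=0$, interpolating the objective from $(\vec c,-M)$ to $(\vec c,M)$ for large $M$ runs a shadow path in the fixed plane $W_2=\linsp((\vec c,0),\vec e_{d+1})$, which terminates at $\lambda=1$ with an optimum of the original LP, or detects unboundedness via a facet containing $\vec 0$ in the polar. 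Since $W_2$ depends only on $\vec c$, Theorem~\ref{thm:gaussian} bounds the expected pivot count here by $\calD_g(n+2,d+1,\sigma)$.

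The main obstacle is the handoff between phases: the Phase~1 optimum for the unit LP must be feasible at $\lambda=0$ in the lifted system (which is automatic by construction), but the hypothesis that the lifted centers have norm $O(1)$ requires $\vec b$ to be close enough to $\vec 1$; under the standing $\sigma\leq O(1/\sqrt{d\log n})$, Gaussian concentration ensures $\|\hat{\vec b}\|_\infty=O(\sigma\sqrt{\log n})=O(1)$, with the negligible failure event absorbed into the trivial $\binom{n}{d}$ bound on the shadow. Summing the two phase pivot counts yields a total of $O(\calD_g(n,d,\sigma))=O(d^2\sqrt{\log n}\,\sigma^{-2}+d^{2.5}\log n\,\sigma^{-1}+d^{2.5}\log^{3/2} n)$, and the cross term is absorbed by AM-GM, since $d^{2.5}\log n\,\sigma^{-1}\leq\tfrac{1}{2}(d^2\sqrt{\log n}\,\sigma^{-2}+d^3\log^{3/2} n)$, giving the claimed $O(d^2\sqrt{\log n}\,\sigma^{-2}+d^3\log^{3/2} n)$ bound.
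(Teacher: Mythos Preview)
Your Phase~2 analysis is essentially correct and matches the paper's Lemma~\ref{lem:smooth-via-int}; the lifted rows $(\vec a_i,1-b_i)$ are variance-$\sigma^2$ Gaussians with centers of norm at most~$2$ unconditionally (since $\|(\vecb a_i,\bar b_i)\|\leq 1$), so the side condition on $\|\vec b-\vec 1\|_\infty$ is unnecessary.

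The real gap is in Phase~1. You propose \emph{deterministic} artificial constraint vectors $\pm\vec u_1,\dots,\pm\vec u_{d-1}$ and then assert they ``contribute only an $O(d)$ additive term to the edge count.'' This is not justified: Theorem~\ref{thm:gaussian} bounds $|\edges(\conv(\cdot)\cap W)|$ only when \emph{all} points are independent Gaussians with bounded centers, and there is no general lemma saying that adjoining $k$ deterministic points to a Gaussian convex hull adds at most $O(k)$ edges to a fixed 2D slice. A single added point can lie on many facets of the augmented hull, several of which may cross $W$. The paper does \emph{not} use deterministic artificial constraints. Instead it perturbs the added rows with Gaussian noise as well, taking $\vec v_i^{\pm}=\vec R(4\vec e_d\pm l\vec e_i\pm\vec g_i)$ with $\vec g_i\sim N_d(\vec 0,\sigma)$, so that all $n+2d-2$ rows are Gaussians with centers of norm $O(1)$. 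Because $\vec v_i^+$ and $\vec v_i^-$ are anti-correlated, Theorem~\ref{thm:gaussian} still does not apply as a black box; Lemma~\ref{lem:unpackblackbox} reopens the proof and uses that any facet \emph{other} than the one inducing $\vec x_0$ contains at most one of each pair $\{\vec v_i^+,\vec v_i^-\}$, hence consists of independent vectors.

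You are also missing the feasibility and admissibility analysis and the accompanying restart loop. Once the artificial rows are random, one must show (a) the induced vertex $\vec x_0$ is feasible for $\vec A\vec x\leq\vec 1$ and (b) the added constraints do not cut off the true optimum (or unbounded ray). Both hold only with constant probability and only when $\sigma\leq\bar\sigma=\Theta(1/\sqrt{d\log n})$ (Lemmas~\ref{lem:pr-feasible} and~\ref{lem:pr-admissible}); the algorithm restarts on failure, and Theorem~\ref{thm:success-bnd} packages this into the bound $O(\calD_g(d,n+2d-2,\min\{\sigma,\bar\sigma\}/5))$. The additive $d^3\log^{3/2}n$ in the theorem therefore arises from evaluating $\calD_g$ at $\bar\sigma$ when $\sigma>\bar\sigma$, not merely from AM--GM on the middle term (though your AM--GM inequality is correct and gives the same final expression).
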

\begin{proof}
Combining Lemma~\ref{lem:smooth-via-int} and
Theorem~\ref{thm:success-bnd}, the expected number of simplex pivots is bounded by
\[
10 + \calD_g(d+1,n,\sigma/2) + 5\calD_g(d,n+2d-2,\min \{\sigma,\bar{\sigma}\}/5) \text{ ,}
\] 
where $\bar{\sigma}$ is as defined in \eqref{eq:sigmabar}. Noting that $1/\bar{\sigma} =
O(\sqrt{d\log n})$, by the smoothed Gaussian shadow
bound~(Theorem \ref{thm:gaussian}), the above is bounded by
\[
O(\calD_g(d,n,\sigma) + \calD_g(d,n,(\sqrt{d\log n})^{-1}))
=O(d^2 \sqrt{\log n} \sigma^{-2} + d^3 \log^{1.5} n) \text{ ,}
\]
as needed.
\end{proof}

\paragraph{Two-Phase Interpolation Method} Define the Phase I Unit LP:
\begin{align*}
\max & ~ \vec{c}^\T \vec{x} \hspace{10em} \text{(Unit LP)} \\
     & ~ \vec{A} \vec{x} \leq \vec{1}
\end{align*}
and the Phase II interpolation LP with parametric objective for $\theta \in (-\pi/2,\pi/2)$:
\begin{align*}
\max & ~ \cos(\theta) \vec{c}^\T \vec{x} + \sin(\theta) \lambda \hspace{3em} \text{(Int.~LP)} \\
     & ~ \vec{A} \vec{x} + (\vec{1}-\vec{b}) \lambda \leq \vec{1} \\
     & ~0 \leq \lambda \leq 1.
\end{align*}
The above form of interpolation was first introduced in the context of smoothed
analysis by Vershynin~\cite{jour/siamjc/Vershynin09}.

Let us assume for the moment that (Smooth LP) is bounded and feasible (i.e.,~has
an optimal solution). Since boundedness is a property of $\vec{A}$ and not
$\vec{b}$, note that this implies that (Unit LP) is also bounded (and clearly
always feasible).

To understand the Phase II interpolation LP, the key observation is that for $\theta$
sufficiently close to $-\pi/2$, the maximizer will be the optimal solution to
(Unit LP), i.e.,~will satisfy $\lambda=0$, and for $\theta$ sufficiently close to
$\pi/2$ the maximizer will be the optimal solution to (Smooth LP), i.e.,~will
satisfy $\lambda=1$. Thus given an optimal solution to the Phase I unit LP one
can initialize a run of shadow vertex starting at $\theta$ just above $-\pi/2$,
moving towards $\pi/2$ until the optimal solution to (Smooth LP) is found. The
corresponding shadow plane is generated by $(\vec{c},0)$ and $(\vec{0},1)$
(associating $\lambda$ with the last coordinate), and as usual the size of the
shadow bounds the number of pivots.

If (Smooth LP) is unbounded (i.e.,~the system $\vec{c}^\T \vec{x} > 0,
\vec{A}\vec{x} \leq \vec{0}$ is feasible), this will be detected during Phase I
as (Unit LP) is also unbounded. If (Smooth LP) is infeasible but bounded, then
the shadow vertex run will terminate at a vertex having $\lambda < 1$. Thus,
all cases can be detected by the two-phase procedure (see~\cite[Proposition
4.1]{jour/siamjc/Vershynin09} for a formal proof).

We bound the number of pivot steps taken to solve (Int.~LP) given a solution to (Unit LP), and after that we describe how to solve (Unit LP).

Consider polyhedron $P' = \set{(\vec x,\lambda) \in \R^{d+1} : \vec A \vec x + (\vec 1 -
\vec b)\lambda \leq \vec 1}$, the slab $H = \set{(\vec x,\lambda) \in
\R^{d+1} : 0 \leq \lambda \leq 1}$ and let $W = \linsp(\vec c, \vec
e_\lambda)$. In this notation, $P'\cap H$ is the feasible set of (Int.~LP)
and $W$ is the shadow plane of (Int.~LP). We bound the number of vertices
in the shadow $\pi_W(P'\cap H)$ of (Int.~LP) by relating it to $\pi_W(P')$.

The constraints defining $P'$ are of smoothed unit type. Namely, the rows of
$(\vec{A},\vec{1}-\vec{b})$ are variance $\sigma^2$ Gaussians centered at means of
norm at most $2$. We derive this from the triangle inequality. Thus,
we know $\pi_W(P')$ has at most $\calD_g(d+1,n,\sigma/2)$ expected vertices.
We divide $\sigma$ by $2$ because the centers have norm at most $2$.

Since the shadow plane contains the normal vector $(\vec{0},1)$ to the inequalities
$0 \leq \lambda \leq 1$, these constraints intersect the shadow plane $W$ at right angles.
It follows that $\pi_W(P'\cap H) = \pi_W(P') \cap H$. Adding $2$
constraints to a 2D polyhedron can add at most $2$ new edges, hence
the constraints on $\lambda$ can add at most $4$ new vertices.
By combining these observations, we directly derive the following
lemma of Vershynin~\cite{jour/siamjc/Vershynin09}.

\begin{lemma}
\label{lem:smooth-via-int}
If (Unit LP) is unbounded, then (Smooth LP) is unbounded. If (Unit LP) is bounded,
then given an optimal solution to (Unit LP) one can solve (Smooth LP) using at
most an expected $\calD_g(d+1,n,\sigma/2)+4$ shadow vertex pivots over (Int.~LP).
\end{lemma}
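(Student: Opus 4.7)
The plan is to reduce both claims to the geometric shadow-size machinery developed earlier in the paper. For the unboundedness claim, I would use the recession-cone characterization: since (Unit LP) is always feasible at $\vec 0$, its unboundedness is equivalent to the existence of $\vec y \in \R^d$ with $\vec A \vec y \leq \vec 0$ and $\vec c^\T \vec y > 0$. But any such $\vec y$ lies in the recession cone $\set{\vec x : \vec A \vec x \leq \vec 0}$ of $\set{\vec A \vec x \leq \vec b}$ regardless of $\vec b$, and so also certifies unboundedness of (Smooth LP).

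For the bounded case, suppose $\vec x^*$ is the optimal vertex of (Unit LP) produced by Phase I. I would first verify that $(\vec x^*, 0) \in \R^{d+1}$ is a vertex of the feasible region of (Int.~LP) and that it uniquely maximizes $\cos\theta\, \vec c^\T \vec x + \sin\theta\, \lambda$ for $\theta$ slightly above $-\pi/2$: the coefficient $\sin\theta < 0$ of $\lambda$ dominates there, forcing $\lambda = 0$, and the residual problem is exactly (Unit LP), whose optimum is $\vec x^*$. I would then initialize Algorithm~\ref{alg:shadow-simplex} at $(\vec x^*, 0)$ with starting objective at $\theta \to -\pi/2^+$ and target objective at $\theta \to \pi/2^-$, following the parametric path in the plane $W = \linsp((\vec c, 0), (\vec 0, 1))$. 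By Theorem~\ref{thm:geometric-characterization-shadow-path-size}, the number of pivots executed is bounded by $\abs{\vertices(\pi_W(P' \cap H))}$.

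It remains to bound $\E[\abs{\vertices(\pi_W(P' \cap H))}]$. Because $\vec e_\lambda \in W$, the slab inequalities $0 \leq \lambda \leq 1$ commute with projection onto $W$, so $\pi_W(P' \cap H) = \pi_W(P') \cap H$. Each of the two half-planes defining $H$ cuts the planar polygon $\pi_W(P')$ along at most one chord, adding at most $2$ new vertices each, for a total of at most $4$. To bound $\E[\abs{\vertices(\pi_W(P'))}]$, observe that the constraint vectors $(\vec a_i, 1 - b_i) \in \R^{d+1}$ of $P'$ are variance-$\sigma^2$ Gaussian with centers $(\vecb a_i, 1 - \bar b_i)$ of norm at most $2$ by the triangle inequality $\norm{(\vecb a_i, 1 - \bar b_i)} \leq \norm{(\vecb a_i, \bar b_i)} + \norm{(\vec 0, 1)} \leq 2$. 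Rescaling every constraint by $1/2$ (equivalently replacing each $(\vec a_i, 1 - b_i)$ by its half) leaves the combinatorial structure of $\pi_W(P')$ unchanged; via Lemma~\ref{lem:polar} the vertex count is bounded by the edge count of $\conv(\cdot) \cap W$ for vectors which now satisfy exactly the hypotheses of Theorem~\ref{thm:gaussian} in dimension $d+1$ with parameter $\sigma/2$, giving $\calD_g(d+1, n, \sigma/2)$.

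The main subtlety I anticipate is not the counting but the boundary behavior of the parametric path near $\theta = \pm\pi/2$: one must argue that when (Smooth LP) has a finite optimum the path terminates at a vertex with $\lambda = 1$, while infeasibility of (Smooth LP) manifests as termination at a vertex with $\lambda < 1$, and that the path does reach every relevant $\theta \in (-\pi/2, \pi/2)$. This case analysis is precisely the one carried out in~\cite[Proposition 4.1]{jour/siamjc/Vershynin09}, which I would invoke rather than re-derive. The remaining work is purely bookkeeping: verifying the triangle inequality bound on center norms and checking that the $1/2$ rescaling preserves the shadow structure used by Lemma~\ref{lem:polar}.
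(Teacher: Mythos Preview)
Your proposal is correct and follows essentially the same approach as the paper: the recession-cone argument for unboundedness, the identity $\pi_W(P' \cap H) = \pi_W(P') \cap H$ via $\vec e_\lambda \in W$, the $+4$ from the two slab constraints, the triangle-inequality bound of $2$ on the center norms, and the rescaling to apply Theorem~\ref{thm:gaussian} in dimension $d+1$ with parameter $\sigma/2$. The paper's exposition is terser and defers the interpolation case analysis to \cite[Proposition~4.1]{jour/siamjc/Vershynin09} just as you do.
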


Given the above, our main task is now to solve (Unit LP), i.e.,~either to find an
optimal solution or to determine unboundedness.  The simplest algorithm is
Borgwardt's dimension-by-dimension (DD) algorithm, which was first used
in the context of smoothed analysis by Schnalzger~\cite{thesis/Schnalzger14}.
Due to its simplicity, we describe it briefly below as a warm-up.

\paragraph{DD algorithm}As outlined in the introduction, the DD algorithm
solves Unit LP by iteratively solving the restrictions:
\begin{align*}
\max &~\vec{c}^\T_k \vec{x} \hspace{5em} \text{(Unit LP$_k$)} \\
     &~\vec{A}\vec{x} \leq \vec 1 \\
     &~x_i = 0, ~\forall i \in \set{k+1,\dots,d}\text{, }
\end{align*}
where $k \in \set{1,\dots,d}$ and $\vec{c}_k := (c_1,\dots,c_k,0,\dots,0)$. The
main idea here is that the solution of (Unit LP$_k$), $k \in \set{1,\dots,d-1}$,
is generically on an edge of the shadow of (Unit LP$_{k+1}$) on the span of
$\vec{c}_k$ and $\vec{e}_{k+1}$, which is sufficient to initialize the shadow
simplex path in the next step. We note that Borgwardt's algorithm can be applied
to any LP with a known feasible point as long as appropriate
non-degeneracy conditions hold (which occur with probability $1$ for smoothed LPs). To avoid
degeneracy, we will assume that $\vec{c}_k \neq \vec{0}$ for all $k \in
\set{1,\dots,d}$, which can always be achieved by permuting the coordinates.
Note that (Unit LP$_1$) can be trivially solved, as the feasible region is an interval
whose endpoints are easy to compute.

\begin{theorem}[\cite{jour/zor/Borgwardt82}]
\label{thm:borgwardt-dd}
Let $W_k$, $k \in \{2,\dots,d\}$, denote the shadow of (Unit LP$_k$) on the span
of $\vec{c}_{k-1}$ and $\vec{e}_k$. Then, if each (Unit LP$_k$) and
shadow $W_k$ is non-degenerate, for $k \in \set{2,\dots,d}$, there exists a shadow
simplex method which solves (Unit LP) using at most $\sum_{k=2}^d \abs{\vertices(W_k)}$ number of pivots.
\end{theorem}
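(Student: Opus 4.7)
The plan is to prove this by induction on $k$, maintaining the invariant that after stage $k$ we have an optimal basis for (Unit LP$_k$). The base case $k=1$ is immediate, since (Unit LP$_1$) collapses to the one-dimensional problem $\max c_1 x_1$ subject to $a_{i,1} x_1 \leq 1$ for $i \in [n]$, which can be solved by inspecting the $n$ univariate constraints without any simplex pivots.

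For the inductive step, the first key structural observation I would establish is that the optimal vertex $\vec v^*_{k-1}$ of (Unit LP$_{k-1}$) is feasible for (Unit LP$_k$) and lies on a $\vec c_{k-1}$-optimal edge of (Unit LP$_k$). Indeed, since $\vec c_{k-1}$ has no $x_k$-component, relaxing the equality $x_k = 0$ from (Unit LP$_{k-1}$) produces a one-dimensional face of $\vec c_{k-1}$-maximizers in direction $\vec e_k$; by the assumed non-degeneracy of (Unit LP$_k$), this face is a proper edge. A single ratio test from $\vec v^*_{k-1}$ in direction $\pm \vec e_k$ (declaring unboundedness of (Unit LP) if both sides of the edge are unbounded) produces a vertex $\vec u_{k-1}$ of (Unit LP$_k$) that is again $\vec c_{k-1}$-optimal, yielding the feasible basis required by Algorithm~\ref{alg:shadow-simplex}.

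From $\vec u_{k-1}$ I would then invoke Algorithm~\ref{alg:shadow-simplex} on (Unit LP$_k$) with starting objective $\vec d := \vec c_{k-1}$ and target $\vec c := \vec c_k$. The shadow plane is exactly $W_k = \linsp(\vec c_{k-1}, \vec c_k) = \linsp(\vec c_{k-1}, \vec e_k)$, because $\vec c_k - \vec c_{k-1} = c_k \vec e_k$ with $c_k \neq 0$ by assumption. Theorem~\ref{thm:geometric-characterization-shadow-path-size} then bounds the pivot count of this invocation by $\abs{\vertices(\pi_{W_k}(\text{Unit LP}_k))} = \abs{\vertices(W_k)}$, and the algorithm either returns an optimal basis for (Unit LP$_k$) or declares (Unit LP$_k$) unbounded (in which case so is (Unit LP)). Summing over $k = 2, \ldots, d$ gives the bound $\sum_{k=2}^d \abs{\vertices(W_k)}$.

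The main obstacle is accounting for the ratio-test pivot that promotes $\vec v^*_{k-1}$ to a vertex of (Unit LP$_k$): strictly speaking this is an extra pivot per stage, not covered by Theorem~\ref{thm:geometric-characterization-shadow-path-size}. The fix is geometric: this pivot traverses an edge of (Unit LP$_k$) whose direction $\vec e_k$ lies in $W_k$, so it projects to an actual edge of the shadow polygon $\pi_{W_k}(\text{Unit LP}_k)$, placing $\vec u_{k-1}$ already on the $\vec c_{k-1}$-maximum boundary of that polygon. Choosing the sign of $\pm \vec e_k$ so that $\vec u_{k-1}$ lies on the side of the shadow boundary continuing to the $\vec c_k$-maximizer, the initial ratio test and the subsequent shadow path together traverse a contiguous sub-arc of the boundary of $\pi_{W_k}(\text{Unit LP}_k)$, whose vertex count is still at most $\abs{\vertices(W_k)}$. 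This absorbs the extra pivot cleanly into the stated bound and completes the induction.
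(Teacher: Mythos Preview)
Your overall induction plan is right, and the paper's description of Borgwardt's algorithm is essentially the same sketch you give: solve (Unit LP$_{k-1}$), observe that its optimum lies on an edge of the shadow of (Unit LP$_k$), and use this to initialize the shadow path for (Unit LP$_k$). However, two of your key structural claims in the inductive step are incorrect.

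First, the set of $\vec c_{k-1}$-maximizers in (Unit LP$_k$) is \emph{not} generically an edge: by the assumed non-degeneracy it is a single vertex, because the objective $\vec c_{k-1}$ having a zero $k$-th coordinate does not make it parallel to any facet of the $k$-dimensional feasible polyhedron. Second, the edge of (Unit LP$_k$) containing $\vec v^*_{k-1}$ does \emph{not} have direction $\vec e_k$. That edge is determined by the $k-1$ constraints of $\vec A$ tight at $\vec v^*_{k-1}$, and its direction is the kernel of the corresponding $(k-1)\times k$ submatrix; since the $k$-th column of $\vec A$ is generic, this kernel direction is not $\vec e_k$. Consequently your ratio test ``in direction $\pm\vec e_k$'' does not traverse an edge of (Unit LP$_k$), the resulting point $\vec u_{k-1}$ is not a vertex, and it is certainly not $\vec c_{k-1}$-optimal, so you cannot invoke Algorithm~\ref{alg:shadow-simplex} with $\vec d=\vec c_{k-1}$ as written.

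The repair is the one the paper alludes to. The vertex $\vec v^*_{k-1}$ is tight at exactly $k-1$ rows $\vec a_{i_1},\dots,\vec a_{i_{k-1}}$ of $\vec A$, so it lies in the relative interior of an edge $e$ of (Unit LP$_k$). Optimality of $\vec v^*_{k-1}$ in (Unit LP$_{k-1}$) gives nonnegative multipliers $\lambda_j$ with $\sum_j \lambda_j \vec a_{i_j}^{(1:k-1)} = \vec c^{(1:k-1)}$; extending to the $k$-th coordinate yields $\sum_j \lambda_j \vec a_{i_j}^{(1:k)} = \vec c_{k-1} + \alpha\,\vec e_k$ for some scalar $\alpha$. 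Thus the normal cone of $e$ meets $W_k = \linsp(\vec c_{k-1},\vec e_k)$, so $e$ is a shadow edge and its endpoints are shadow vertices. Pivot along $e$ (in its true direction) to an endpoint $\vec u_{k-1}$; this is a vertex of (Unit LP$_k$) optimal for the objective $\vec c_{k-1}+\alpha\,\vec e_k \in W_k$, and from here the shadow path to $\vec c_k$ stays on the boundary of $\pi_{W_k}(\text{Unit LP}_k)$. The pivot onto $\vec u_{k-1}$ and the subsequent path together visit at most $|\vertices(W_k)|$ shadow vertices, exactly as you argue in your final paragraph --- but now with the correct edge and the correct starting objective.
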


The shadow bounds of Theorem~\ref{thm:gaussian} only hold for $d \geq 3$,
though sufficiently good bounds have been proven by
\cite{thesis/Schnalzger14,devillers2016smoothed}
to derive the following immediate corollary.

\begin{corollary}
\label{cor:dd-phaseI}
The smoothed (Unit LP) can be solved by the DD algorithm using an expected $\sum_{k=2}^d
\calD_g(k,n,\sigma) = O(d^3\sqrt{\log n}~\sigma^{-2} + d^{3.5}\sigma^{-1}\log n + d^{3.5}\log^{3/2} n)$ number of shadow vertex pivots.
\end{corollary}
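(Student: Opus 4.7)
The plan is to combine Theorem~\ref{thm:borgwardt-dd} with the Gaussian shadow bound of Theorem~\ref{thm:gaussian} applied to each subproblem (Unit LP$_k$), and then sum over $k$. First, by Theorem~\ref{thm:borgwardt-dd}, the total number of pivots used by the DD algorithm is at most $\sum_{k=2}^d \abs{\vertices(W_k)}$, where $W_k$ is the shadow of (Unit LP$_k$) onto $\linsp(\vec c_{k-1}, \vec e_k)$. Taking expectations and using linearity, it suffices to bound $\E[\abs{\vertices(W_k)}]$ separately for each $k$.

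For $k \geq 3$, I would identify (Unit LP$_k$) with a smoothed unit LP in $k$ variables: restricting to the subspace $\{x_{k+1} = \dots = x_d = 0\} \cong \R^k$, the constraint rows $\vec a_i$ become their orthogonal projections onto $\R^k$, which are again independent Gaussians of variance $\sigma^2$ (since coordinate projection of a standard Gaussian is Gaussian with the same coordinate variance) centered at the projected means, whose norms are at most $1$ by contractivity of projection. Then Lemma~\ref{lem:polar} bounds $\abs{\vertices(W_k)}$ by the number of edges of $\conv(\pi_{\R^k}(\vec a_1),\dots,\pi_{\R^k}(\vec a_n)) \cap W_k$, so Theorem~\ref{thm:gaussian} gives
\[
\E[\abs{\vertices(W_k)}] \leq \calD_g(k,n,\sigma) = O(k^2 \sqrt{\log n}\,\sigma^{-2} + k^{2.5}\log n\,\sigma^{-1} + k^{2.5}\log^{3/2} n).
\]

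The one genuine wrinkle, and the main obstacle worth flagging, is the base case $k=2$, where Theorem~\ref{thm:gaussian} does not apply (our shadow bound is stated for $d\geq 3$). For this case I would invoke the two-dimensional smoothed convex-hull bounds cited earlier from~\cite{thesis/Schnalzger14,devillers2016smoothed}, which yield an expected shadow size in two dimensions that is dominated by $\calD_g(3,n,\sigma)$ (indeed it is $O(\sigma^{-2} + \log n)$, comfortably absorbed into the sum). This lets us write $\E[\abs{\vertices(W_2)}] = O(\calD_g(2,n,\sigma))$ with the understanding that the $k=2$ term contributes only lower-order factors.

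Finally I would sum: using $\sum_{k=2}^d k^2 = O(d^3)$ and $\sum_{k=2}^d k^{2.5} = O(d^{3.5})$,
\[
\sum_{k=2}^d \calD_g(k,n,\sigma) = O\!\left(d^3 \sqrt{\log n}\,\sigma^{-2} + d^{3.5} \log n\,\sigma^{-1} + d^{3.5}\log^{3/2} n\right),
\]
which matches the claimed bound. The non-degeneracy conditions required by Theorem~\ref{thm:borgwardt-dd} hold with probability $1$ under the Gaussian perturbation model (after a generic coordinate permutation ensuring $\vec c_k \neq \vec 0$), so no extra work is needed there, completing the argument.
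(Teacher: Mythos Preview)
Your proposal is correct and matches the paper's intended argument exactly: the paper states this as an ``immediate corollary'' of Theorem~\ref{thm:borgwardt-dd} combined with Theorem~\ref{thm:gaussian} for each $k\geq 3$, handling the $k=2$ case via the two-dimensional bounds of~\cite{thesis/Schnalzger14,devillers2016smoothed}, and summing just as you do.
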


\paragraph{Random vertex method}
Vershynin's approach for initializing the shadow
simplex method on (Unit LP) is to add a random smoothed system of $d$ linear constraints
to its description. These constraints are meant to induce a
\emph{known random} vertex $\vec{v}$ and corresponding maximizing objective
$\vec{d}$ which are effectively uncorrelated with the original system. Starting
at this vertex $\vec{v}$, we then follow the shadow path induced by rotating $\vec{d}$
towards $\vec{c}$. The main difficulty with this approach is to guarantee that
the randomly generated system (i) adds a vertex which (ii) is optimized at $\vec d$ and
(iii) does not cut off the optimal solution or all unbounded rays. Fortunately, each of
these conditions is easily checkable, and hence if they fail (which will occur with
constant probability), the process can be attempted again.

One restriction imposed by this approach is that the perturbation size needs to
be rather small, namely
\begin{equation*}
\sigma \leq \sigma_1 := \frac{c_1}{\max \set{\sqrt{d \log n}, d^{1.5}\log d}}
\end{equation*}
in~\cite{jour/siamjc/Vershynin09} for some $c_1 > 0$. A more careful analysis of Vershynin's algorithm
can relax the restriction to
\begin{equation*}
\sigma \leq \sigma_2 := \frac{c_2}{\max \set{\sqrt{d \log n}, \sqrt{d} \log d}}
\end{equation*}
for some $c_2 > 0$. This restriction is necessary due to the fact that we wish
to predict the effect of smoothing the system, in particular, the smoothing
operation should not negate (i), (ii), or (iii).
Recall that one can always artificially decrease $\sigma$ by
scaling down the matrix $\vec{A}$ as this does not change the structure of (Unit
LP). The assumption on $\sigma$ is thus without loss of generality. When
stating running time bounds however, this restriction will be reflected by a
larger additive term that does not depend on $\sigma$.

We adapt the Random Vertex algorithm to make (ii) guaranteed to hold,
allowing us to relax the constraint on the perturbation size to
\begin{equation}
\label{eq:sigmabar}
\sigma \leq \bar{\sigma} := \frac{1}{36\sqrt{d \log n}}.
\end{equation}
Instead of adding $d$ constraints, each with their own perturbation,
we add $d-1$ pairs of constraints with mirrored perturbations.
This forces the desired objective to be maximized at the random vertex
whenever this vertex exists.

\begin{algorithm}
\caption{Symmetric Random Vertex algorithm}
\label{alg:upgraded-random-vertex}
\begin{algorithmic}[1]
	\REQUIRE $\vec{c} \in \R^d \setminus \set{\vec 0}$, $\vec{A} \in \R^{n \times d}$,
	$\vec{A}$ is standard deviation $\sigma \leq \bar{\sigma}$ Gaussian with
	rows having centers of norm at most $1$.
	\ENSURE Decide whether (Unit LP) $\max \vec{c}^\T \vec{x}, \vec{A} \vec{x} \leq \vec 1$ is
unbounded or return an optimal solution.
\STATE If some row of $\vec A$ has norm greater than $2$,
    solve $\max \vec c^\T\vec x$,st.$\vec A\vec x \leq \vec 1$ using
    any simplex method that takes at most $\binom{n}{d}$ pivot steps.
\LOOP
\STATE Let $l = 1/6\sqrt{\log d}$.
\STATE Sample a rotation matrix $\vec R \in O(d)$ uniformly at random.
\STATE Sample $\vec g_1,\dots,\vec g_{d-1} \sim N(\vec 0,\sigma^2\vec I)$ independently.
\STATE Set $\vec v_i^+ = \vec R(4\vec e_d + l\vec e_i + \vec g_i)$,
$\vec v_i^- = \vec R(4\vec e_d - l\vec e_i - \vec g_i)$
 for all $i \in [d-1]$.
\STATE Put $\vec V = (\vec v_1^+,\vec v_1^-,\vec v_2^+,\dots,
 \vec v_{d-1}^+, \vec v_{d-1}^-)^\T$,
 $\vec d = \vec R\vec e_d$.
\STATE Find $\vec x_0$ such that $\vec V\vec x_0 = \vec 1$.
\STATE If not $\vec A\vec x_0 < \vec 1$, restart the loop.
\STATE Solve $\sum_{i=1}^{d-1} \lambda_i\vec R(l\vec e_i + \vec g_i) = \vec c + \lambda_d\vec d$.
If $\lambda_d + \sum_{i=1}^{d-1}4\abs{\lambda_i} \leq 0$, restart the loop.
(This corresponds to $\vec x_0$ being optimal for $\vec c$.)
\STATE Follow the shadow path from $\vec d$ to $\vec c$ on
\begin{align*}
\max \vec c^\T&\vec x \\\tag{Unit LP'}
\vec A\vec x &\leq \vec 1 \\
\vec V\vec x &\leq \vec 1,
\end{align*}
starting from the vertex $\vec x_0$.
For the first pivot, follow the edge which is tight at the constraints in
$\vec B_0 = (\vec v_1^{\sgn(\lambda_1)},
 \dots,\vec v_{d-1}^{\sgn(\lambda_{d-1})})$. All other pivot steps are
 as in Algorithm~\ref{alg:shadow-simplex}.
\STATE  If (Unit LP') is unbounded, return ``unbounded''.
\STATE  If (Unit LP') is
 bounded and the optimal vertex $\vec{x}^*$ satisfies
 $\vec V\vec x^* < \vec 1$, return $\vec{x}^*$ as the
 optimal solution to (Unit LP).
\STATE  Otherwise, restart the loop.\;
\ENDLOOP
\end{algorithmic}
\end{algorithm}

We begin with some preliminary remarks for Algorithm~\ref{alg:upgraded-random-vertex}.
First, the goal of defining $\vec V$ is to create a new artificial LP, (Unit LP') $\max
\vec{c}^\T \vec x, \vec{A}\vec{x} \leq \vec{1}$, $\vec{V}\vec{x} \leq \vec{1}$, such
that $\vec{x_0}$ is a vertex of the corresponding system which maximizes $\vec{d}$.
On line 9 and 10, the algorithm checks if $\vec x_0$ is feasible and whether
it is not the optimizer of $\vec c$ on (Unit LP').
Having passed these checks, (Unit LP') is solved
via shadow vertex initialized at vertex $\vec{x}_0$ with objective $\vec{d}$.
An unbounded solution to (Unit LP') is always an unbounded
solution to (Unit LP).
Lastly, it is checked on line 13 whether the bounded solution (if it exists)
to (Unit LP') is a solution to (Unit LP).
Correctness of the algorithm's output is thus straightforward.
We do have to make sure that every step of the algorithm can be executed as described.
\begin{lemma}\label{lem:RVcorrectness}
In (Unit LP') as defined on lines 3-11 of Algorithm~\ref{alg:upgraded-random-vertex},
with probability $1$, $\vec x_0$ is well-defined, and, when entering the shadow
simplex routine, the point $\vec x_0$ is a shadow vertex and the edge defined by
$\vec B_0$ is a shadow edge on (Unit LP'). Moreover, $\vec x_0$ is the only degenerate
vertex.
\end{lemma}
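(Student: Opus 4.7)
The plan is to exploit the symmetric structure of $\vec V$, captured by the identities $\vec v_i^+ + \vec v_i^- = 8\vec d$ and $\vec v_i^+ - \vec v_i^- = 2\vec w_i$ for each $i \in [d-1]$, where $\vec w_i := \vec R(l\vec e_i + \vec g_i)$. These let me rewrite the overdetermined system $\vec V \vec x = \vec 1$ as the $d$-equation system $\vec d^\T \vec x = 1/4$ together with $\vec w_i^\T \vec x = 0$ for $i \in [d-1]$; under $\vec R^\T$ the coefficient vectors become $\vec e_d, l\vec e_1 + \vec g_1, \dots, l\vec e_{d-1} + \vec g_{d-1}$, which are linearly independent almost surely by the absolute continuity of the Gaussians $\vec g_i$. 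Hence $\vec x_0$ exists uniquely a.s., and every row of $\vec V$ is tight at it.

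For the shadow vertex and shadow edge claims, I invoke Lemma~\ref{lem:shadow-vertices-are-vertices-of-shadow}, which reduces them to finding objectives in $W$ in the relative interiors of the relevant normal cones. Line~9 ensures $\vec A \vec x_0 < \vec 1$, so the normal cone at $\vec x_0$ is $\mathrm{cone}(\vec v_1^+, \vec v_1^-,\dots,\vec v_{d-1}^+, \vec v_{d-1}^-)$; writing $\vec d = \frac{1}{8(d-1)}\sum_{i=1}^{d-1}(\vec v_i^+ + \vec v_i^-)$ as a strictly positive combination of all active normals places $\vec d \in W$ in the relative interior of this cone, so $\vec x_0$ is the unique maximizer of $\vec d$. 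For the edge claim, substituting $\vec w_i = \sgn(\lambda_i)(\vec v_i^{\sgn(\lambda_i)} - 4\vec d)$ into the equation on line~10 and rearranging yields
\begin{equation*}
\vec c + \alpha \vec d = \sum_{i=1}^{d-1} |\lambda_i|\, \vec v_i^{\sgn(\lambda_i)},\qquad \alpha := \lambda_d + 4\sum_{i=1}^{d-1} |\lambda_i|,
\end{equation*}
where the check on line~10 is precisely $\alpha > 0$ and the $|\lambda_i|$ are non-zero a.s. This displays $\vec c + \alpha \vec d \in W$ as a strictly positive combination of the $\vec B_0$-normals and of no other active constraint, placing it in the relative interior of the edge's normal cone and confirming that the edge defined by $\vec B_0$ is a shadow edge.

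For the uniqueness of $\vec x_0$ as a degenerate vertex, the key structural observation is that if both $\vec v_i^+$ and $\vec v_i^-$ are tight at a point $\vec y$, then $\vec d^\T \vec y = 1/4$; combined with $\vec v_j^+ \vec y + \vec v_j^- \vec y = 2$ and feasibility $\vec v_j^\pm \vec y \leq 1$ for every $j$, this forces $\vec V \vec y = \vec 1$, i.e., $\vec y = \vec x_0$. Hence any vertex $\vec y \neq \vec x_0$ has at most one row of each pair tight, giving at most $d-1$ tight $\vec V$-constraints. Since any $d$ distinct rows of $\vec V$ must include a pair by pigeonhole among the $d-1$ indices, a $d$-th tight $\vec V$-row would again force $\vec y = \vec x_0$; additional tight $\vec A$-constraints occur with probability $0$ by the absolute continuity of Gaussian perturbations of $\vec A$, where a finite union bound over the polynomially many over-determined subsystems suffices. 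The main obstacle, in my view, is the shadow-edge claim: one must verify both that $\vec c + \alpha \vec d$ is in the relative interior of the $\vec B_0$-cone (not merely on its boundary) and that the edge direction is not orthogonal to $W$; the former follows from the genericity of the $\lambda_i$'s and the sign check on line~10, while the latter reduces to the almost-sure non-vanishing of $\vec d \cdot \vec u$ where $\vec u$ spans the kernel of $\vec B_0$.
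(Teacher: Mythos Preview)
Your argument is correct and rests on the same structural identities as the paper's proof ($\vec v_i^+ + \vec v_i^- = 8\vec d$ and $\vec c + \alpha\vec d = \sum_i|\lambda_i|\,\vec v_i^{\sgn(\lambda_i)}$). Two differences are worth flagging. First, the paper establishes that $\vec B_0$ actually defines an edge of the feasible polyhedron by explicitly constructing points $\vec x_t$ along the candidate edge and checking $\inner{\vec v_i^{-s_i}}{\vec x_t} = 1-t < 1$; your normal-cone argument presupposes the edge exists, so you should either add that one-line feasibility check or make explicit that the vectors $\vec v_i^\pm = 4\vec d \pm \vec w_i$ form a cross-polytope around $4\vec d$, whence every $\cone(\vec v_1^{s_1},\dots,\vec v_{d-1}^{s_{d-1}})$ is a facet of the normal cone at $\vec x_0$. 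Second, your degeneracy argument is slightly slicker than the paper's: rather than fixing a basis containing a pair $\{\vec v_i^+,\vec v_i^-\}$ and showing its basic solution is infeasible at some $\vec v_j^\pm$ with $j$ outside the basis, you argue directly that feasibility together with one tight pair forces all of $\vec V$ to be tight and hence $\vec y = \vec x_0$; the two routes are equivalent.
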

\begin{proof}
Without loss of generality, we assume $\vec R = \vec I_{d\times d}$. With probability
$1$, the coefficients $\lambda_1,\dots,\lambda_d$ exist and are uniquely defined.

We now show that $\vec x_0$ is well-defined. Let $\vec x_0^+$ be the solution to
the following system of $d$ equalities
\begin{align}
\inner{\vec v_1^+}{\vec x_0^+} = 1, \inner{\vec v_2^+}{\vec x_0^+} = 1, \dots,
\inner{\vec v_{d-1}^+}{\vec x_0^+} = 1, \qquad \inner{8\vec e_d}{\vec x_0^+} = 2.
\end{align}
This system of equations almost surely has a single solution. We claim that
$\vec V\vec x_0^+=\vec 1$. By writing $\vec v_i^- = 8\vec e_d - \vec v_i^+$,
we find that $\inner{\vec v_i^-}{\vec x_0^+} = 1$ for all $i \in [d-1]$.
Therefore, $\vec x_0 = \vec x_0^+$ is indeed well-defined.

By definition, upon entering the shadow simplex routine, $\vec x_0$ satisfies
$\vec A\vec x < \vec 1$, $\vec V\vec x \leq \vec 1$ and is thus a vertex.

For all $t > 0$, define $\vec x_t$ to be the solution to $\inner{8\vec e_d}{\vec x_t} = 2-t,
\vec B_0\vec x_t = 1$. For any $\vec v_i^s \not\in \vec B_0$, we have
$\inner{\vec v_i^{s}}{\vec x_t} = 1-t < 1$. As the $\vec x_t$ lie on a line
and $\vec A\vec x_0 < \vec 1$, there exists some $\eps > 0$ such that $\vec x_t$ is
feasible for all $t \leq \eps$. Hence the constraints in $\vec B_0$ define an edge
of the feasible set.

The point $\vec x_0$ is tight at the inequalities $\vec V\vec x \leq \vec 1$, and
$\frac{1}{8d-8}\sum_{i=1}^{d-1}(\vec v_i^++\vec v_i^-) = \vec d$ is a corresponding
dual solution, so we know that $\vec x_0$ is optimal for objective $\vec d$ and thus
a shadow vertex.

Assume that $\vec x_0$ is not optimal for objective $\vec c$. One outgoing edge of
$\vec x_0$ is tight at the inequalities $\inner{\vec v_i^s}{\vec x}\leq 1$
for all $\vec v_i^s \in \vec B_0$ and that edge is on the shadow path exactly
if the cone spanned by $\vec B_0$ intersects $\cone(\vec c,\vec d)$ outside $\set{\vec 0}$.
This intersection is exactly the ray spanned by
\begin{align*}
\sum_{i=1}^{d-1}\abs{\lambda_i}\vec v_i^{\sgn(\lambda_i)}
&=  \sum_{i=1}^{d-1}\lambda_i(l\vec e_i + \vec g_i) + 4\abs{\lambda_i}\vec d \\
&= \vec c + \lambda_d \vec d + \sum_{i=1}^{d-1}4\abs{\lambda_i}\vec d,
\end{align*}
and we know that $\lambda_d + \sum_{i=1}^{d-1}4\abs{\lambda_i} > 0$
as otherwise we would have a certificate that
$\vec c \in \cone(\vec V\vec \lambda : \vec\lambda\in\R^d_+)$.
We conclude that $\sum_{i=1}^{d-1}\abs{\lambda_i}\vec v_i^{\sgn(\lambda_i)}$
is a non-negative linear combination of $\vec c,\vec d$ and
hence our description of the first shadow vertex pivot step is correct.

Lastly, we show that any vertex other than $\vec x_0$
is tight at exactly $d$ independently distributed constraint vectors.
Fix any basis $B$ such that there exists an $i \in [d-1]$ with
$\vec v_i^+,\vec v_i^- \in B$ and which does not define the vertex $\vec x_0$.
Let $\vec x_B$ be such that $\inner{\vec a}{\vec x_B} = 1$ for all $\vec a \in B$.
There exists some $j \in [d-1]$ such that both $\vec v_j^+,\vec v_j^- \notin B$,
for otherwise we would have $\vec x_B = \vec x_0$. We show that, almost surely,
$\inner{\vec v_j^+}{\vec x_B} > 1$ or $\inner{\vec v_j^-}{\vec x_B} > 1$,
which implies that $\vec x_B$ is almost surely not feasible.
We know that $\inner{\vec v_i^+}{\vec x_B} = \inner{\vec v_i^-}{\vec x_B} = 1$,
and hence $\inner{4\vec d}{\vec x_B} = 1$. It follows that
$\inner{\vec v_j^+}{\vec x_B} = 2 - \inner{\vec v_j^-}{\vec x_B}$,
The only way to have both $\inner{\vec v_j^+}{\vec x_B} \leq 1$ and
$\inner{\vec v_j^-}{\vec x_B} \leq 1$ would be if
$\inner{\vec v_j^+}{\vec x_B} = 1$. However, $\vec x_B$
and $\vec v_j^+$ are independently distributed and $\vec v_j^+$ has a continuous
probability distribution, so $\vec x_B$ is a vertex with probability $0$.
\end{proof}

To bound the expected running time of Algorithm~\ref{alg:upgraded-random-vertex},
we bound the expected number of pivot steps per iteration of the loop,
and the expected number of iterations of the loop.

First, we bound the expected shadow size in a single iteration. Because the
constraint vectors $\vec v_i^+,\vec v_i^-$ are not independently distributed for
any $i \in [d-1]$, we are unable to apply Theorem~\ref{thm:gaussian} in a
completely black-box way. As we show below, in this new setting, the proof of
Theorem~\ref{thm:gaussian} still goes through essentially without modification.

In the rest of this section, we abbreviate
\[\conv(\vec A, \vec V) := \conv(\vec a_1,\dots,\vec a_n,\vec v_1^+,\dots,\vec v_{d-1}^+, \vec v_1^-,\dots,\vec v_{d-1}^-).\]

\begin{lemma}\label{lem:unpackblackbox} Let $\vec A$ have independent
standard deviation $\sigma$ Gaussian rows with centers of norm at most $1$
and let $\vec V$ be sampled, independently from $\vec A$, as in lines
4-7 of Algorithm~\ref{alg:upgraded-random-vertex} with $l \leq 1$.
The shadow size $\E[\abs{\edges(\conv(\vec A, \vec V) \cap \linsp(\vec c,\vec d))}]$ is bounded by
$\calD_g(d,n+2d-2,\min(\sigma,\bar{\sigma})/5)+1$.
\end{lemma}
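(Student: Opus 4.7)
The approach is to imitate the proof of Theorem~\ref{thm:gaussian}, adapting it to the fact that within $\vec V$ each pair $(\vec v_i^+,\vec v_i^-)$ is perfectly coupled by the identity $\vec v_i^+ + \vec v_i^- = 8\vec d$. First I would rescale all constraint vectors by $1/5$: since $l\leq 1$, each $\vec v_i^+$ and $\vec v_i^-$ has center of norm at most $\sqrt{16+l^2}\leq 5$, so after rescaling every one of the $N := n + 2(d-1)$ rows of $(\vec A,\vec V)$ is marginally Gaussian with center of norm at most $1$ and standard deviation $\sigma/5$, and the combinatorial structure of $\conv(\vec A,\vec V)\cap W$ is unchanged. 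Since $\calD_g(d,N,\cdot)$ is monotone decreasing in its last argument, replacing $\sigma/5$ by $\min(\sigma,\bar\sigma)/5$ only enlarges the upper bound.

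Next I would re-run the proof of Theorem~\ref{thm:gaussian} on these $N$ marginally Gaussian vectors. Substituting each Gaussian density by the matching Laplace-Gaussian density as in Lemma~\ref{lem:mightaswellLG} costs the additive $+1$. After this, one applies the parametrized shadow bound Theorem~\ref{thm:abstractedbound}: the distribution parameters $L,\tau,R_{N,d},r_N$ from Lemma~\ref{lem:propertiesofLG} are defined purely marginally, and the perimeter bound (Lemma~\ref{lem:perimeterbound}), the diameter-event bound (Lemma~\ref{lem:boundedprojectedsimplex}), and the deviation estimate (Lemma~\ref{lem:deviation-vs-expectednorm}) all rely on marginal tail bounds and union bounds rather than on mutual independence. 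These steps therefore transfer verbatim to our setting with $n$ replaced by $N$.

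The main obstacle is Lemma~\ref{lem:conditiononhyperplane}, whose proof invokes mutual independence of the constraint rows to factor $\E[\mathbbm{1}[E_0]\cdot\length\mid\vec\theta,t]=\Pr[E_0\mid\vec\theta,t]\cdot\E[\length\mid\vec\theta,t]$. The key observation is that this factorization survives for every potential facet $F$ that contains at most one member of each pair. Indeed, if $F$ contains $\vec v_i^+$ but not $\vec v_i^-$, then substituting $\vec v_i^-=8\vec d-\vec v_i^+$ together with $\inner{\vec\theta}{\vec v_i^+}=t$ turns the side constraint $\inner{\vec\theta}{\vec v_i^-}\leq t$ into the deterministic inequality $t\geq 4\inner{\vec\theta}{\vec d}$ on $(\vec\theta,t)$ alone; an analogous reduction applies if $F$ contains $\vec v_i^-$ instead, and the sum identity $\inner{\vec\theta}{\vec v_j^+}+\inner{\vec\theta}{\vec v_j^-}=8\inner{\vec\theta}{\vec d}$ gives the same deterministic condition for any pair with neither member in $F$. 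Combined with the genuine independence of $F$ from the non-facet rows of $\vec A$, the event $E_0$ becomes conditionally independent of the hyperplane-internal positions $\vec b_1,\dots,\vec b_d$ given $(\vec\theta,t)$, so the factorization goes through and the uniform edge-length lower bound of Theorem~\ref{thm:abstractedbound} applies.

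Finally, I would handle the remaining bases, those containing both $\vec v_i^+$ and $\vec v_i^-$ for some $i$, by a purely geometric argument. For any such facet $F$ the equalities $\inner{\vec\theta}{\vec v_i^+}=\inner{\vec\theta}{\vec v_i^-}=t$ force $\aff(F)\ni 4\vec d$, and since $\aff(F)$ is a supporting hyperplane of $Q$ with $4\vec d\in Q$ this gives $4\vec d\in F$. As $4\vec d\in W$, any edge of $Q\cap W$ induced by such a facet must contain $4\vec d$; by non-degeneracy of the shadow, $4\vec d$ is almost surely not a vertex of $Q\cap W$, so it belongs to at most one edge. This contributes the final $+1$ in the stated bound, and the hard part of executing this plan is verifying cleanly that case-by-case reduction of the paired-partner side constraints to a single deterministic inequality on $(\vec\theta,t)$ really suffices to preserve the conditional-independence factorization inside Lemma~\ref{lem:conditiononhyperplane}.
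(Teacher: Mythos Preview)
Your plan matches the paper's proof closely: rescale by $5$, re-run the Gaussian shadow-bound argument, note that the perimeter and diameter estimates use only marginal tail bounds, and verify that the factorisation in Lemma~\ref{lem:conditiononhyperplane} survives for bases containing at most one member of each pair $\{\vec v_i^+,\vec v_i^-\}$. Two points need attention.

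First, you never decouple the shadow plane from the constraints. The plane is $W=\linsp(\vec c,\vec d)$ with $\vec d=\vec R\vec e_d$, so $W$ is random and correlated with $\vec V$; Theorem~\ref{thm:gaussian} requires a fixed plane. The paper handles this in one line by conditioning on $\vec R$ at the outset (after which the $\vec g_i$ remain i.i.d.\ Gaussians and all centers are fixed). Your later use of the identity $\vec v_i^++\vec v_i^-=8\vec d$ already presupposes this conditioning, so you should state it explicitly.

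Second, your treatment of a pair $\{\vec v_j^+,\vec v_j^-\}$ with \emph{neither} member in the facet is not right: the sum identity $\inner{\vec\theta}{\vec v_j^+}+\inner{\vec\theta}{\vec v_j^-}=8\inner{\vec\theta}{\vec d}$ pins down the sum, not the individual signs of $\inner{\vec\theta}{\vec v_j^\pm}-t$, so the side-of-hyperplane condition is \emph{not} deterministic in $(\vec\theta,t)$. The correct (and simpler) reason is that $\vec g_j$ is independent of the vectors forming the facet, so once $(\vec\theta,t)$ is fixed the event ``both $\vec v_j^\pm$ on the same side'' is independent of the in-hyperplane positions $\vec b_1,\dots,\vec b_d$; this is exactly what the paper invokes. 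Your deterministic reduction is only needed, and only valid, for the partner $\vec v_i^{-}$ of a vector $\vec v_i^{+}$ that \emph{is} in the facet. Your geometric argument for the residual $+1$ via $4\vec d\in F\cap W$ is a clean alternative to the paper's appeal to Lemma~\ref{lem:RVcorrectness} and reaches the same conclusion.
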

\begin{proof}
We fix the choice of $\vec R$. The distribution of constraint vectors is now
independent of the two-dimensional plane.
\[
\E[\abs{\edges(\conv(\vec A, \vec V)\cap\linsp(\vec c,\vec d))}]
\leq
\max_{\vec R} \E[\abs{\edges(\conv(\vec A, \vec V)\cap\linsp(\vec c,\vec R\vec e_d))}\mid \vec R].
\]

The rows of $\vec A$ have centers of norm at most $1$ and the rows
of $\vec V$ have centers of norm at most $4+l \leq 5$. After an appropriate
rescaling, we can assume all $n+2d-2$ constraints have expectations
of norm at most $1$ and
standard deviation $\sigma \leq \bar{\sigma}/5$.

To get the desired bound, we bound the number of edges other than
the one induced by $\vec x_0$,
$W \cap \set{\vec y \in \R^d : \inner{\vec y}{\vec x_0} = 1}$,
which yields the $+1$ in the final bound. The proof is essentially
identical to that of Theorem~\ref{thm:gaussian}, i.e.~we bound the ratio of the
expected perimeter divided by the minimum expected edge of the polar polygon. We
sketch the key points below. Firstly, notice that the perimeter bound in
Lemma~\ref{lem:perimeterbound} does not require independence of the
perturbations, so it still holds. For the minimum edge length, we
restrict to the bases $B$ as in Lemma~\ref{lem:edgesasfraction} (which also does
not require independence) after removing those which induce $\vec x_0$ as a
vertex (it has already been counted). By Lemma~\ref{lem:RVcorrectness}, the
remaining bases in $B$ contain at most one of each pair $\set{\vec v_i^-,\vec
v_i^+}$, $i \in [d-1]$, since bases containing two such vectors correspond to an
edge different from the one induced by $\vec x_0$ with probability $0$. In
particular, every basis in $B$ consists of only independent random vectors.

From here, the only remaining detail for the bound to go through is to
to check that the conclusion of Lemma~\ref{lem:conditiononhyperplane} still
holds, i.e., that the position of vectors within their containing
hyperplane does not affect the probability that these vectors form
a facet of the convex hull. Without loss of generality,
we consider the vectors $\vec a_1,\dots,\vec a_i$, $\vec v_1^+,\dots,\vec v_j^+
$ with $i+j=d$. Define $\vec\theta\in\bbS^{d-1},t\geq 0$
by $\inner{\vec\theta}{\vec a_k}=t$ for all $k \in [i]$, $\inner{\vec\theta}{\vec v_k^+}=t$
for all $k\in[j]$.
The set $\conv(\vec a_1,\dots,\vec a_i,\vec v_1^+,\dots,\vec v_j^+)$
is a facet of the convex hull of the constraint vectors when either (1)
$\inner{\vec\theta}{\vec a_k} < t$ for all $k > i$,
$\inner{\vec\theta}{\vec v_k^-} < t$ for all $k \in [j]$ and $\inner{\vec\theta}{\vec v_k^\pm} < t$
for all $k > j$
or (2) when
$\inner{\vec\theta}{\vec a_k} > t$ for all $k > i$,
$\inner{\vec\theta}{\vec v_k^-} > t$ for all $k \in [j]$ and $\inner{\vec\theta}{\vec v_k^\pm} > t$
for all $k > j$.
The only one of these properties that is not
independent of $\vec a_1,\dots,\vec a_i, \vec v_1^+,\dots,\vec v_j^+$ is whether
$\inner{\vec\theta}{\vec v_k^-} < t$ or $\inner{\vec\theta}{\vec v_k^-} > t$
for $k \in [j]$, but we know that
$\inner{\vec\theta}{\vec v_k^-} = 8\inner{\vec\theta}{\vec d} -
\inner{\vec\theta}{\vec v_k^+} = 8\inner{\vec\theta}{\vec d} - t$ for all $k \in
[j]$, and so the value $\inner{\vec\theta}{\vec v_k^-}$ does not depend on the
positions of $\vec a_1,\dots,\vec a_i, \vec v_1^+,\dots,\vec v_j^+$ within their
containing hyperplane.  We conclude that the expected number of edges is bounded by
$\calD_g(d,n+2d-2, \min(\sigma,\bar \sigma)/5)+1$.
\end{proof}

All that is left, is to show that the success probability of each
loop is lower bounded by a constant.

\begin{definition}
For a matrix $\vec M \in \R^{d \times d}$, we define its operator norm by
\[\norm{\vec M } = \max_{\vec x \in \R^d \setminus \set{\vec 0}} \frac{\norm{\vec M \vec x}}{\norm{\vec x}}\]
and its maximum and minimum singular values by
\[ s_{\max}(\vec M) = \norm{\vec M}, \qquad s_{\min}(\vec M) = \min_{\vec x \in
\R^d \setminus \set{\vec 0}}\frac{\norm{\vec M\vec x}}{\norm{\vec x}}.\]
\end{definition}

Using the Gaussian tailbound \eqref{eq:gauss-full-d} together with a $1/2$-net
on the sphere (which has size at most $8^d$, see e.g., \cite{matouvsek2002lectures}, page 314),
we immediately obtain the following tail bound for the operator
norm of random Gaussian matrices.

\begin{lemma}\label{lemma:gaussian-matrix-norm}
For a random $d \times d$ matrix $G$ with independent standard normal entries, one has
\[\Pr[\norm{\vec G} > 2 t \sqrt{d}] \leq 8^d e^{-d(t-1)^2/2}.\]
\end{lemma}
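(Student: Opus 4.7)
The plan is a textbook $\varepsilon$-net argument with $\varepsilon = 1/2$. First, fix a $1/2$-net $N \subset \bbS^{d-1}$ of the unit sphere with $\abs{N} \leq 8^d$, as cited from \cite{matouvsek2002lectures}. The first step is to reduce bounding $\norm{\vec G}$ to bounding $\max_{\vec x \in N} \norm{\vec G \vec x}$: if $\vec y \in \bbS^{d-1}$ achieves the operator norm and $\vec x \in N$ satisfies $\norm{\vec y - \vec x} \leq 1/2$, then
\[
\norm{\vec G} = \norm{\vec G \vec y} \leq \norm{\vec G \vec x} + \norm{\vec G (\vec y - \vec x)} \leq \norm{\vec G \vec x} + \tfrac{1}{2}\norm{\vec G},
\]
and rearranging gives $\norm{\vec G} \leq 2 \max_{\vec x \in N}\norm{\vec G \vec x}$.

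Second, for any fixed $\vec x \in \bbS^{d-1}$, each coordinate of $\vec G \vec x$ is an inner product of $\vec x$ with an independent standard Gaussian row, hence $\vec G \vec x \sim N_d(\vec 0, 1)$ by rotational invariance of the standard Gaussian. Applying the Gaussian tail bound \eqref{eq:gauss-full-d} (with $\sigma = 1$) to $\vec G \vec x$ gives, for $t \geq 1$,
\[
\Pr[\norm{\vec G \vec x} \geq t\sqrt{d}] \leq e^{-d(t-1)^2/2}.
\]

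Third, take a union bound over the at most $8^d$ points of $N$ and combine with the net inequality:
\[
\Pr[\norm{\vec G} > 2 t\sqrt{d}]
\leq \Pr[\max_{\vec x \in N} \norm{\vec G \vec x} > t\sqrt{d}]
\leq 8^d e^{-d(t-1)^2/2}.
\]
For $t < 1$ the right-hand side exceeds $1$, so the inequality is trivially true there. There is no serious obstacle in this proof; the only mild care required is checking the net approximation inequality and noting that $\vec G \vec x \sim N_d(\vec 0,1)$ for each fixed unit $\vec x$, both of which are standard.
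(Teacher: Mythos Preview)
Your proof is correct and follows exactly the approach sketched in the paper: the paper states that the bound follows ``using the Gaussian tailbound \eqref{eq:gauss-full-d} together with a $1/2$-net on the sphere (which has size at most $8^d$)'', and you have filled in precisely those details. There is nothing to add.
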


\begin{lemma}\label{lem:pr-feasible}
Let $\vec A \in \R^{n\times d}$ have rows of norm at most $2$
and $\sigma \leq \frac{l}{6\sqrt{d}}$. For $\vec x_0$ sampled
as in lines 4-8 of Algorithm~\ref{alg:upgraded-random-vertex},
with probability at least $0.98$, the point $\vec x_0$
satisfies $\vec A\vec x_0 < \vec 1$.
\end{lemma}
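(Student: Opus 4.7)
Since $\vec A$ is independent of $\vec x_0$ and each row of $\vec A$ has norm at most $2$, the Cauchy–Schwarz inequality gives $\inner{\vec a_k}{\vec x_0} \leq 2\|\vec x_0\|$ for every $k \in [n]$. Hence it suffices to show $\|\vec x_0\| < 1/2$ with probability at least $0.98$. The plan is to derive an explicit formula for $\vec x_0$ in the rotated frame induced by $\vec R$, and then bound its norm via concentration of two independent Gaussian objects.

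The explicit formula comes from the symmetric structure of the constraints. Since $\vec v_i^+ + \vec v_i^- = 8\vec R\vec e_d$, adding the equations $\inner{\vec v_i^+}{\vec x_0} = 1 = \inner{\vec v_i^-}{\vec x_0}$ yields $\inner{\vec R\vec e_d}{\vec x_0} = 1/4$, and then subtracting gives $\inner{\vec R(l\vec e_i + \vec g_i)}{\vec x_0} = 0$ for $i \in [d-1]$. Setting $\vec y := \vec R^\T \vec x_0$ and writing $\vec y = (\vec u, 1/4)$ with $\vec u \in \R^{d-1}$, these relations become the linear system $(l\vec I + \vec G)\vec u = -\vec h/4$, where $\vec G \in \R^{(d-1)\times(d-1)}$ has entries $\vec G_{ij} = (\vec g_i)_j$ (for $i,j \in [d-1]$) and $\vec h \in \R^{d-1}$ has entries $h_i = (\vec g_i)_d$. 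Critically, the $(d-1)^2 + (d-1)$ Gaussian entries of $\vec G$ and $\vec h$ are jointly independent with variance $\sigma^2$. Consequently $\vec u = -(l\vec I + \vec G)^{-1}\vec h/4$ almost surely, and since $\vec R$ is orthogonal, $\|\vec x_0\|^2 = \|\vec y\|^2 = \|\vec u\|^2 + 1/16$. The target $\|\vec x_0\| < 1/2$ reduces to $\|\vec u\|^2 < 3/16$.

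To bound $\|\vec u\|$, I will use the submultiplicative estimate
\[
\|\vec u\| \leq \frac{\|\vec h\|}{4\,(l - \|\vec G\|)}\quad\text{on the event }\{\|\vec G\| < l\}.
\]
The required tail bounds are independent: $\|\vec h\|$ is the norm of a standard Gaussian in $\R^{d-1}$ scaled by $\sigma$, controlled by Lemma~\ref{lem:gaussian-tails}, while $\|\vec G\|$ is the operator norm of a Gaussian matrix, controlled by Lemma~\ref{lemma:gaussian-matrix-norm}. Both concentrate at scale $O(\sigma\sqrt{d})$; substituting $\sigma \leq l/(6\sqrt d)$ shows that, on a high probability event, $\|\vec h\|$ and $\|\vec G\|$ are each a small constant multiple of $l/6$, which pushes $\|\vec u\|$ far below the threshold $\sqrt{3}/4$. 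A union bound over the two tail events then yields the claimed $0.98$ success probability.

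The main obstacle will be squeezing sharp enough constants out of the matrix operator-norm tail bound so that, combined with the chi-square concentration for $\|\vec h\|^2$, the total failure probability is at most $0.02$ uniformly in $d \geq 3$. The saving grace is the large factor of slack between the typical magnitude of $\|\vec u\|$, which is of order $\sigma\sqrt d/l \leq 1/6$, and the requirement $\|\vec u\| < \sqrt 3/4$; this gives enough room to absorb the (somewhat loose) factor $8^{d-1}$ in Lemma~\ref{lemma:gaussian-matrix-norm} at the cost of picking a sufficiently generous deviation parameter $t$, and to use the sub-Gaussian concentration of $\|\vec h\|$ via Lemma~\ref{lem:gaussian-tails} at a correspondingly large level, with all the exponential decay rates dominating the fixed failure budget.
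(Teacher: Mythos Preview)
Your proposal is correct and follows essentially the same approach as the paper: derive the explicit formula for $\vec x_0$ from the symmetric structure of the constraints (yielding the same linear system $(l\vec I+\vec G)\vec u=-\vec h/4$), then bound $\|\vec u\|$ via $s_{\min}(l\vec I+\vec G)\geq l-\|\vec G\|$ together with the Gaussian tail bounds of Lemmas~\ref{lem:gaussian-tails} and~\ref{lemma:gaussian-matrix-norm}. The only cosmetic differences are that the paper sets $\vec R=\vec I$ up front and uses the triangle inequality $\|\vec x_0\|\leq\|\vec e_d/4\|+\|\vec u\|$ (target $\|\vec u\|\leq 1/4$) rather than your Pythagorean computation (target $\|\vec u\|<\sqrt{3}/4$), which gives you marginally more slack but does not change the argument.
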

\begin{proof} Without
loss of generality, we assume $\vec R = \vec I_{d\times d}$.
We claim that, with sufficient probability,
$\norm{\vec x_0-\vec e_d/4} < 1/4$. Together with the triangle
inequality and the assumption that $\norm{\vec a_i} \leq 2$ for all
$i\in[n]$, this suffices to show $\vec A\vec x_0 < \vec 1$.

Elementary calculations show that $\vec x_0-\vec e_d/4$ satisfies
$\inner{\vec e_d}{(\vec x_0-\vec e_d/4)}=0$ and, for every $i \in [d-1]$,
$\inner{(l\vec e_i + \vec g_i)}{(\vec x_0-\vec e_d/4)} = -\inner{\vec g_i}{\vec e_d}/4$.
Let $\vec G$ be the matrix with rows consisting of the first $d-1$ entries of each
of $\vec g_1,\dots,\vec g_{d-1}$, and $\vec g$ be the vector consisting of
the $d$'th entries of $\vec g_1,\dots,\vec g_{d-1}$. From the above equalites we derive
\begin{align*}
\begin{pmatrix}
l\vec I_{d-1}+\vec G & \vec g \\ \vec 0^\T & 1
\end{pmatrix}(\vec x_0 - \vec e_d/4) &= \frac{1}{4}
\begin{pmatrix}-\vec g\\0\end{pmatrix}\\
\begin{pmatrix}
l\vec I_{d-1}+\vec G & \vec 0 \\ \vec 0^\T & 1
\end{pmatrix}(\vec x_0 - \vec e_d/4) &=
\frac{1}{4}\begin{pmatrix}-\vec g\\0\end{pmatrix} \\
\vec x_0 - \vec e_d/4 &= \frac{1}{4}
\begin{pmatrix}-\begin{pmatrix}
l\vec I_{d-1}+\vec G
\end{pmatrix}^{-1}\vec g\\0\end{pmatrix}.
\end{align*}
Note that the matrix is almost surely invertible. We abbreviate
$\vec M = l\vec I_{d-1}+\vec G$
and bound $\norm{\vec x_0 - \vec e_d/4} \leq \norm{\vec M^{-1}}\norm{\vec g}/4$.
Using that $\sigma \leq \frac{l}{6\sqrt{d}}$, we apply \eqref{eq:gauss-full-d}
to get $\norm{\vec g} \leq l/2$ with probability at least $0.99$.

The operator norm of the inverse matrix satisfies
$\norm{\vec M^{-1}} = \frac{1}{s_{\min}(l\vec I + \vec G)}$,
and by the triangle inequality we derive
\[ s_{\min}(l\vec I + \vec G) \geq s_{\min}(l\vec I) - s_{\max}(\vec G) = l - s_{\max}(\vec G).\]
By Lemma~\ref{lemma:gaussian-matrix-norm}, we have
$\norm{\vec G} \leq 3\sqrt{d}\sigma \leq l/2$ with
probability at least $0.99$. Putting the pieces together, we conclude that
\[\frac{1}{4}\norm{\vec M^{-1}}\norm{\vec g} \leq
\frac{1}{4}\cdot\frac{1}{l-l/2}\cdot\frac{l}{2} \leq 1/4.\]
We take the union bound over the two bad events and thus conclude that
$\inner{\vec a_i}{\vec x_0} \leq \norm{\vec a_i}\norm{\vec x_0} < 1$
for all $i\in[n]$ with probability at least $0.98$.
\end{proof}

Lastly, we need to prove that the conditionals on lines 10, 12 and 13 of
Algorithm~\ref{alg:upgraded-random-vertex} succeeds with sufficient
probability.

\begin{lemma}[Adapted from \cite{jour/siamjc/Vershynin09}]\label{lem:pr-admissible}
Let $l \leq 1/6\sqrt{\log d}$ and $\sigma \leq 1/8\sqrt{d \log d}$.
For fixed $\vec A$ and $\vec V$ sampled as in lines
4-7 of Algorithm~\ref{alg:upgraded-random-vertex},
let $\vec x^*$ be the optimal solution to (Unit LP') if it exists.
With probability at least $0.24$,
(Unit LP) being unbounded implies that (Unit LP') is unbounded
and (Unit LP) being bounded implies $\vec V\vec x^* < \vec 1$.
\end{lemma}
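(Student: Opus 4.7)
My plan is to treat the bounded and unbounded cases in parallel, since after an algebraic simplification that exploits the mirror structure of $\vec V$ they both reduce to essentially the same deviation inequality for a random direction on the sphere. In the bounded case, the event $\vec V\vec x^* < \vec 1$ (for $\vec x^*$ the optimum of (Unit LP')) is equivalent to the optimum of (Unit LP) lying strictly inside the $\vec V$-constraints, so I may fix $\vec x^*$ to be an optimum of (Unit LP). In the unbounded case I take an unbounded ray $\vec r$ with $\inner{\vec c}{\vec r} > 0$, $\vec A\vec r \leq \vec 0$, normalized so $\norm{\vec r} = 1$, and verify $\vec V\vec r \leq \vec 0$; this witnesses unboundedness of (Unit LP') via the same ray.

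The first step is to reduce the $\vec V$-constraints to a clean form. Since $\vec v_i^+ + \vec v_i^- = 8\vec R\vec e_d$ and $\vec v_i^+ - \vec v_i^- = 2\vec R(l\vec e_i + \vec g_i)$, for any vector $\vec y$ the two inequalities $\inner{\vec v_i^\pm}{\vec y} \leq \alpha$ are jointly equivalent to
\[
4\inner{\vec R\vec e_d}{\vec y} + \bigl|\,l\inner{\vec R\vec e_i}{\vec y} + \inner{\vec R\vec g_i}{\vec y}\,\bigr| \leq \alpha.
\]
Next I exploit Haar invariance on $O(d)$: conditional on $\vec A$ (hence on the deterministic point $\vec x^*$ or $\vec r$), the joint law of the relevant projections is obtained by setting $\vec R = \vec I$ and replacing the point by $\norm{\vec x^*}\vec u$, for $\vec u$ uniform on $\bbS^{d-1}$ and independent of $\vec g_1,\dots,\vec g_{d-1}$. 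The condition to verify becomes, for each $i \in [d-1]$,
\[
4\norm{\vec x^*} u_d + \norm{\vec x^*}\bigl|\,lu_i + \inner{\vec g_i}{\vec u}\,\bigr| < 1,
\]
and analogously, with $\norm{\vec r}=1$ and right-hand side $0$, in the unbounded case.

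The core of the argument is to identify three events whose intersection forces the left-hand side to be non-positive uniformly in $\norm{\vec x^*}$. For constants $c, C_1, C_2$ to be fixed below, define event $A = \set{u_d \leq -c/\sqrt{d}}$, controlled by the explicit marginal density $\propto (1-t^2)^{(d-3)/2}$ of a coordinate of the uniform distribution on $\bbS^{d-1}$ (with $\Pr[A] \to \Phi(-c)$ as $d \to \infty$); event $B = \set{\max_{i<d}|u_i| \leq C_1\sqrt{\log d}/\sqrt{d}}$, whose failure probability is bounded via the sphere-cap estimate used inside Lemma~\ref{lem:laplace-tails} together with a union bound; and event $C = \set{\max_{i<d}|\inner{\vec g_i}{\vec u}| \leq C_2\sigma\sqrt{\log d}}$, whose failure probability (conditional on $\vec u$) follows from the one-dimensional Gaussian tail of Lemma~\ref{lem:gaussian-tails} and a union bound. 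Under $A \cap B \cap C$, the hypotheses $l \leq 1/(6\sqrt{\log d})$ and $\sigma \leq 1/(8\sqrt{d\log d})$ give $|lu_i + \inner{\vec g_i}{\vec u}| \leq (C_1/6 + C_2/8)/\sqrt{d}$ while $4u_d \leq -4c/\sqrt{d}$. Choosing $c$ so that $4c > C_1/6 + C_2/8$ makes the left-hand side non-positive, hence less than $1$ (respectively $\leq 0$), independently of $\norm{\vec x^*}$.

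The main obstacle is numerical constant chasing. Because $\Pr[A] \leq 1/2$ and decreases as $c$ grows, I must tune $c, C_1, C_2$ simultaneously so that $\Pr[A] \geq \Pr[B^c] + \Pr[C^c] + 0.24$. Taking $C_1, C_2$ moderately large (e.g.\ $C_1 = 4$, $C_2 = 3$, both $O(1)$ in $d$) makes each failure probability $O(d^{-\alpha})$ for some $\alpha > 1$ and thus at most $0.05$ for $d \geq 3$, leaving room to choose $c$ close to $1/4$ so that $\Pr[A] \gtrsim 0.4$ for all $d \geq 3$ while still $4c$ dominates $C_1/6 + C_2/8$. Verifying $\Pr[A]$ uniformly for all $d \geq 3$, not merely asymptotically, requires direct estimation using the sphere-coordinate density rather than appealing to its Gaussian limit, and is the most delicate piece of the argument.
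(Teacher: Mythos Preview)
Your proposal is correct and follows essentially the same approach as the paper's proof: fix $\vec A$ (hence the deterministic optimum or unbounded direction $\vec\omega$), reduce to showing $\vec V\vec\omega < \vec 0$ with probability $\geq 0.24$, decompose each $\inner{\vec v_i^{\pm}}{\vec\omega}$ into the $4\inner{\vec d}{\vec\omega}$ term, the $l\vec e_i$ term, and the $\vec g_i$ term, and then bound each via sphere-marginal/cap estimates and the Gaussian tail, combining by union bound. Your explicit absolute-value reformulation and the Haar-invariance substitution $\vec u = \vec R^\T\vec\omega$ are cosmetic cleanups of what the paper does more tersely (``fix $\vec R = \vec I$ without loss of generality''); the paper's concrete constants are $c = \tfrac{1}{4}\sqrt{2}$ (giving $\Pr[A]\geq 0.26$), $t=3\sqrt{\log d}$ for the $u_i$ bound, and $4\sigma\sqrt{\log d}$ for the Gaussian term, yielding $0.26 - 2(d-1)\cdot 0.01/d \geq 0.24$, which you may find simpler to verify than your proposed $(c,C_1,C_2)\approx(0.26,4,3)$.
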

\begin{proof}
Let $\vec x$ be the maximizer of (Unit LP) if it exists, or otherwise
a generator for an unbounded ray in (Unit LP), and let
$\vec \omega = \vec x/\norm{\vec x}$. We aim to prove that
$\vec V\vec \omega < \vec 0$ with probability at least $0.24$ over the randomness
in $\vec V$, which is sufficient for the lemma to hold.

We fix $\vec A$, and hence $\vec \omega$ as well. We decompose
\begin{align}\label{eq:numbspace}
\inner{\vec v_i^+}{\vec \omega} = 4\inner{\vec d}{\vec \omega} +
\inner{(l\vec R\vec e_i)}{\vec \omega} +
\inner{(\vec R\vec g_i)}{\vec \omega},
\end{align}
for all $i\in[d-1]$ and similarly for $\vec v_i^-$,
and we will bound the different terms separately.

The inner product $\inner{\vec d}{\vec \omega}$ has probability density
proportional to $\sqrt{1-t^2}^{d-3}$, as it is the one-dimensional marginal distribution
over the sphere $\bbS^{d-1}$ (see e.g., \cite{fang1990symmetric}, equation 1.26).
which can differ over the interval $[-\sqrt{\frac{2}{d-1}},\sqrt{\frac{2}{d-1}}]$
by at most a factor $1/e$.
We lower bound the probability that $\inner{\vec d}{\vec \omega}$ is far from being positive:
\begin{align*}
\Pr[\inner{\vec d}{\vec \omega} < -\frac{1}{4}\sqrt{\frac{2}{d-1}}] &= \frac{1}{2}\Pr[\inner{\vec d}{\vec \omega} < -\frac{1}{4}\sqrt{\frac{2}{d-1}} \mid \inner{\vec d}{\vec \omega} \leq 0] \\
&\geq \frac{1}{2}\Pr[\inner{\vec d}{\vec \omega} < -\frac{1}{4}\sqrt{\frac{2}{d-1}} \mid
\inner{\vec d}{\vec \omega} \in [-\sqrt{\frac{2}{d-1}}, 0]] \\
&\geq \frac{1}{2}\cdot \frac{\frac{3}{4e}}{\frac{3}{4e}+\frac{1}{4}} \\
&\geq 0.26.
\end{align*}
Hence, for $\vec d$ a randomly chosen unit vector independent of $\omega$, we have
$4\inner{\vec d}{\vec \omega} < -\sqrt{\frac{2}{d-1}}$ with probability at least $0.26$.
Now we will give an upper bound on the second and third terms in
\eqref{eq:numbspace} with sufficient probability.

By the same measure concentration argument as in the proof of \eqref{eq:laplace-1d}
we know that
$\Pr[\abs{\vec e_i^\T\vec R^\T\vec\omega} > t/\sqrt{d-1}] \leq e^{-t^2/2}$.
We apply the above statement with $t = 3\sqrt{\log d}$ and find that
\[\abs{l\vec e_i^\T\vec R^\T\vec\omega} < tl/\sqrt{d-1} \leq 1/2\sqrt{d-1}\]
with probability at least $1-\frac{0.01}{d}$.

For the last part, fix $\vec R = \vec I$ without loss of generality. The inner product
$\inner{\vec g_i}{\vec \omega}$ is $N(0,\sigma^2)$ distributed, hence
$\Pr[\abs{\inner{\vec g_i}{\vec \omega}} < 4\sigma\sqrt{\log d}] \geq 1-\frac{0.01}{d}$
by standard Gaussian tail bounds. Recall that $4\sigma\sqrt{\log d} \leq 1/2\sqrt{d-1}$.

Putting it all together, we take the union bound over the three terms in \eqref{eq:numbspace}
and all $\vec v_i^+,\vec v_i^-$ with $i \in [d-1]$
and find that $\inner{\vec v_i^+}{\vec \omega} < 0$
and $\inner{\vec v_i^-}{\vec \omega} < 0$ for all $i\in[d-1]$ with probability
at least $0.26-(d-1)\frac{0.01}{d} - (d-1)\frac{0.01}{d} \geq 0.24$.
\end{proof}

\begin{theorem}
\label{thm:success-bnd}
For $\sigma \leq \bar{\sigma}$,
Algorithm~\ref{alg:upgraded-random-vertex} solves (Unit LP) in at most an expected
$6+5\calD_g(d,n+2d-2,\sigma/5)$ number of shadow vertex pivots.
\end{theorem}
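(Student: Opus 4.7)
The plan is to decompose the expected pivot count into the contribution of the safeguarding solve on line~1 and the contribution of the random-vertex loop, and to analyze the latter via a Wald-style argument combining a per-iteration pivot bound with a per-iteration success probability.

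For line~1, I would combine the Gaussian tail bound~\eqref{eq:gauss-full-d} applied to each perturbation $\hat{\vec a}_i$ at scale $t = 1/(\sigma\sqrt{d})$ (which satisfies $t \geq 36\sqrt{\log n}$ since $\sigma \leq \bar{\sigma}$) with a union bound over $i \in [n]$, to show that the event ``some row of $\vec A$ has norm exceeding $2$'' occurs with probability at most $\binom{n}{d}^{-1}$. Since the trivial branch uses at most $\binom{n}{d}$ pivots, its contribution to the expected pivot count is at most $1$.

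For the main loop, I would first bound the expected number of shadow vertex pivots in a single iteration. By Theorem~\ref{thm:geometric-characterization-shadow-path-size}, the shadow run on line~11 uses at most $\abs{\edges(\conv(\vec A,\vec V) \cap \linsp(\vec c, \vec d))}$ pivots, which by Lemma~\ref{lem:unpackblackbox} has expectation at most $\calD_g(d, n+2d-2, \sigma/5) + 1$. Because each iteration draws fresh independent randomness $\vec R, \vec g_1, \ldots, \vec g_{d-1}$, this per-iteration bound holds conditional on any history of past failures. I would then lower bound the per-iteration success probability $p$. An iteration succeeds if (a)~the candidate vertex satisfies $\vec A \vec x_0 < \vec 1$ (line~9), (b)~$\lambda_d + \sum_i 4|\lambda_i| > 0$ (line~10), and (c)~the returned point/ray is a valid solution to (Unit~LP). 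Lemma~\ref{lem:pr-feasible} gives $\Pr[\text{(a)}] \geq 0.98$, and Lemma~\ref{lem:pr-admissible} gives $\Pr[E^*] \geq 0.24$ for the event $E^* := \{\vec V \vec \omega < \vec 0\}$, where $\vec \omega$ is the optimizing/unbounded direction of (Unit~LP); the event $E^*$ implies~(c), since under $E^*$ either the optimizer $\vec x^*$ of (Unit~LP) remains optimal for (Unit~LP') with $\vec V \vec x^* < \vec 1$, or the unbounded ray of (Unit~LP) extends through $\vec x_0$ to an unbounded ray of (Unit~LP'). A union bound on failures gives $p \geq 1 - 0.02 - 0.76 = 0.22$, so $1/p \leq 5$.

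Combining these ingredients via Wald's identity, the expected number of pivots in the loop is at most $(1/p)(\calD_g(d,n+2d-2,\sigma/5)+1) \leq 5\calD_g(d,n+2d-2,\sigma/5) + 5$, and adding the $\leq 1$ contribution from line~1 yields the stated $6 + 5\calD_g(d,n+2d-2,\sigma/5)$ bound. The main technical point requiring verification is the almost-sure implication (a)$\cap E^* \Rightarrow$(b): if $\vec x_0$ is feasible for $\vec A$ and $E^*$ holds, then either (Unit~LP) is unbounded in direction $\vec\omega$ (in which case the ray through $\vec x_0$ in direction $\vec\omega$ is feasible for (Unit~LP') and strictly improves $\vec c^\T \vec x_0$) or $\vec c^\T \vec x^* > \vec c^\T \vec x_0$ almost surely by continuity of the Gaussian perturbation, so $\vec x_0$ is not the $\vec c$-optimizer of (Unit~LP'), giving (b) by the characterization in Lemma~\ref{lem:RVcorrectness}.
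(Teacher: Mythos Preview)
Your approach is essentially the same as the paper's: decompose into line~1 and the main loop, bound line~1 via Gaussian tails, and for the loop combine the per-iteration shadow bound from Lemma~\ref{lem:unpackblackbox} with the success-probability lower bound from Lemmas~\ref{lem:pr-feasible} and~\ref{lem:pr-admissible}.

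There is one subtle gap in your Wald-style reasoning. You write that ``this per-iteration bound holds conditional on any history of past failures'' because each iteration draws fresh $\vec R, \vec g_1,\dots,\vec g_{d-1}$. That is not correct: the past failures depend on $\vec A$, and the bound $\E[P(\vec A,\vec V_k)] \leq \calD_g(d,n+2d-2,\sigma/5)+1$ from Lemma~\ref{lem:unpackblackbox} is an expectation over $\vec A$ as well as over $\vec V_k$. Conditioning on past failures conditions on $\vec A$, and you have no bound on $\E[P(\vec A,\vec V_k)\mid \vec A]$ for fixed $\vec A$. The paper avoids this by observing that what \emph{is} bounded uniformly in $\vec A$ (under the norm event $L$) is the success probability: Lemmas~\ref{lem:pr-feasible} and~\ref{lem:pr-admissible} are both stated for fixed $\vec A$. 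Writing $\E[\sum_k P(\vec A,\vec V_k)\bbone[N\geq k]]$ and using that $\bbone[N\geq k]$ depends only on $\vec V_1,\dots,\vec V_{k-1}$, the paper factors this as $\E_{\vec A,\vec V_k}[P(\vec A,\vec V_k)\Pr[N>1\mid\vec A]^{k-1}]$, then uses $\Pr[N>1\mid\vec A]\leq 0.78$ uniformly to pull the geometric series outside the expectation over $\vec A$. Your argument is easily repaired along these lines, but as written the independence claim is false.

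Your treatment of the implication $(\text{a})\cap E^*\Rightarrow(\text{b})$ is fine and in fact slightly more explicit than the paper's, which simply asserts that Lemma~\ref{lem:pr-admissible} handles both line~10 and line~14.
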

\begin{proof}
Let $\vec{a}_1,\dots,\vec{a}_n \in \R^d$ denote the rows of $\vec{A}$, where we
recall that the centers $\vecb{a}_i := \E[\vec{a}_i]$, $i \in [n]$, have norm at
most $1$.

\paragraph{Pivots from line 1} Let $L$ denote the event that the rows of
$\vec{a_1},\dots,\vec{a_n}$ all have norm at most $2$. Noting that each $\vec{a}_i$, $i \in [n]$,
is a variance $\sigma^2$ Gaussian and $1/\sigma \geq 5\sqrt{d \log n}$, by
Lemma~\ref{lem:gaussian-tails} (Gaussian concentration), we have that
\begin{align*}
\Pr[L^c] &= \Pr[\exists i \in [n]: \|\vec{a}_i\| \geq 2]
\leq n \Pr[\|\vec{a}_1-\vecb{a}_1\| \geq 1] \\
&\leq n \Pr[\|\vec{a}_1-\vecb{a}_1\| \geq 5 \sqrt{d \log n} \sigma]
\leq e^{-(d/2)(5\sqrt{\log n}-1)^2} \leq n^{-d} \text{ .}
\end{align*}
Therefore, the simplex run on line 1 is executed with
probability at most $n^{-d}$ incurring at most $n^{-d} \binom{n}{d} \leq 1$
pivots on expectation.

\paragraph{Pivots from the main loop}
Let $\vec V_1,\vec V_2,\dots$ be independent samples of $\vec V$ as described
in lines 3-7 of Algorithm~\ref{alg:upgraded-random-vertex}.
Define the random variable $N = N(A, \vec V_i : i \in \N) \geq 0$ as the number
of iterations of the main loop if Algorithm~\ref{alg:upgraded-random-vertex}
were run on input $\vec A,\vec c$ and the value of $\vec V$ in iteration $i$ equals
$\vec V_i$. Note that $N = 0$ exactly if $L^c$.
Note that the value of $\vec V_i$ unique specifies the value of $\vec d_i$.
Define the event $F_i$ that the checks on lines 9 and 10 would pass
on data $\vec V_i$.
Lastly, let $P(\vec A,\vec V_i)$ denote the number
of pivot steps that an iteration of the main loop would perform on the data
$\vec A,\vec V_i$. In particular, $P(\vec A,\vec V_i) > 0$ exactly when $L$ and $F_i$.

The total number of pivot steps is given by the expectation
\begin{align*}
\E[\sum_{k=1}^N P(\vec A,\vec V_k)]
&=
\E[\sum_{k=1}^\infty P(\vec A,\vec V_k) \bbone[N \geq k]] \\
&=
\sum_{k=1}^\infty \E[P(\vec A,\vec V_k) \bbone[N \geq k]].
\end{align*}
For any $k$, the event $N \geq k$ depends solely on $\vec V_1,\dots,\vec V_{k-1}$, hence
we get
\begin{align*}
\sum_{k=1}^\infty \E[P(\vec A,\vec V_k) \bbone[N \geq k]]
&=
\sum_{k=1}^\infty \E_{\vec A, \vec V_k}[P(\vec A,\vec V_k) \E_{\vec V_1,\dots,\vec V_{k-1}}[\bbone[N \geq k\mid \vec A]] \\
&=
\sum_{k=1}^\infty \E[P(\vec A,\vec V_k) \Pr[N \geq k \mid \vec A]] \\
&=
\sum_{k=1}^\infty \E[P(\vec A,\vec V_k) \Pr[N > 1 \mid \vec A]^{k-1}],
\end{align*}
where the last line follows from the observation that the
seperate trials are independent when $\vec A$ is fixed.
When $\vec A$ is such that $L^c$ holds, then $\Pr[N > 1 \mid \vec A] = 0$.
Now we appeal to
Lemma~\ref{lem:pr-feasible}, Lemma~\ref{lem:pr-admissible}.
The first shows that the Algorithm~\ref{alg:upgraded-random-vertex}
does not restart on line 9 with probability at least $0.98$
and the second shows that the algorithm
does not restart on lines 10 and 14 with probability at least $0.24$.
By the union bound, this implies that $\Pr[N > 1 | \vec A] \leq 1-0.22$
for any $\vec A$ such that $L$ holds. Hence we get
\begin{align*}
\sum_{k=1}^\infty \E_{\vec A, \vec V_k}[P(\vec A,\vec V_k) \Pr[N > 1 \mid \vec A]^{k-1}]
&\leq
\sum_{k=1}^\infty \E[P(\vec A,\vec V_k) (1-0.22)^{k-1}] \\
&=
\frac{1}{0.22} \E[P(\vec A,\vec V_1)].
\end{align*}
The number of pivot steps $P(\vec A,\vec V_1)$ is nonzero exactly when
$L$ and $F_1$ hold, and is always bounded by the shadow size
according to Theorem~\ref{thm:geometric-characterization-shadow-path-size}.
We bound this quantity using Lemma~\ref{lem:unpackblackbox} and get
\begin{align*}
\frac{1}{0.22} \E[P(\vec A,\vec V_1)]
&\leq
5 \E[\bbone_{F_1\cap L}\abs{\edges(\conv(\vec A,\vec V_1) \cap \linsp(\vec c,\vec d_1))}] \\
&\leq
5 \E[\abs{\edges(\conv(\vec A,\vec V_1) \cap \linsp(\vec c,\vec d_1))}] \\
&\leq
5\calD_g(d,n+2d-2,\min(\sigma,\bar\sigma)/5)+5.
\end{align*}

\paragraph{Final Bound} Combining the results from the above paragraphs, we
get that the total expected number of simplex pivots in
Algorithm~\ref{alg:upgraded-random-vertex} is bounded by:
\[
\Pr[L^c] \binom{n}{d} + \E[\sum_{k=1}^N P(\vec A,\vec V_k)]
\leq 6 + 5 \calD_g(d,n+2d-2,\sigma/5) \text{ ,}
\]
as needed.
\end{proof}

\section{Conclusions and Open Problems}
We have given a substantially simplified and improved shadow bound and used it
to derive a faster simplex method. We are hopeful that our modular approach to
the shadow bound will help spur the development of a more robust smoothed
analysis of the simplex method, in particular, one that can deal with a much wider class
of perturbations such as those coming from bounded distributions.

There remains a gap between upper and lower bounds on the smoothed shadow size.
The lower bound of $\Omega(d^{3/2}\sqrt{\log n})$ by Borgwardt~\cite{Borgwardt87}
does not depend on $\sigma$ and is only proven for $n \to \infty$,
and the lower bound of just over $\Omega(\min(n,\frac{1}{\sqrt \sigma}))$ by Devillers, Glisse, Goaoc, and Thomasse~\cite{devillers2016smoothed}
is only proven to hold for $d=2$. Nonetheless, both of these lower bounds
are significantly lower than our upper bound of $O(d^2\sqrt{\log n}~\sigma^{-2}
+ d^{2.5}(\sigma\sqrt{\log n} + \log n))$.

Lastly, there are no known diameter bounds for smoothed polytopes other than
the general results mentioned above.

\subsection{Acknowledgment} The authors would like to thank the Simons Institute
program on ``Bridging Continuous and Discrete Optimization'' where some of this
work was done.

\bibliographystyle{./halpha}
\bibliography{references}

\newcommand{\etalchar}[1]{$^{#1}$}
\begin{thebibliography}{DLKOS09}

\bibitem[AB14]{jour/mor/AB14}
Karim~A. Adiprasito and Bruno Benedetti.
\newblock The {H}irsch conjecture holds for normal flag complexes.
\newblock {\em Mathematics of Operations Research}, 39(4):1340--1348, 2014.

\bibitem[AC78]{col/pc/AC78}
David Avis and Vasek Chv{\'a}tal.
\newblock Notes on {B}land�s pivoting rule.
\newblock In {\em Polyhedral Combinatorics}, pages 24--34. Springer, 1978.

\bibitem[Adl83]{report/Adler83}
Ilan Adler.
\newblock The expected number of pivots needed to solve parametric linear
  programs and the efficiency of the self-dual simplex method.
\newblock Technical report, University of California, Dept. of IE and OR, 1983.

\bibitem[AKS87]{jour/jc/AKS87}
Ilan Adler, Richard~M. Karp, and Ron Shamir.
\newblock A simplex variant solving an {$m\times d$} linear program in {$O({\rm
  min}(m^2, d^2))$} expected number of pivot steps.
\newblock {\em J. Complexity}, 3(4):372--387, 1987.

\bibitem[AM85]{jour/jacm/AM85}
Ilan Adler and Nimrod Megiddo.
\newblock A simplex algorithm whose average number of steps is bounded between
  two quadratic functions of the smaller dimension.
\newblock {\em Journal of the ACM (JACM)}, 32(4):871--895, 1985.

\bibitem[AMR11]{jour/jacm/AMR11}
David Arthur, Bodo Manthey, and Heiko R\"oglin.
\newblock Smoothed analysis of the {$k$}-means method.
\newblock {\em J. ACM}, 58(5):Art. 19, 31, 2011.
\newblock Preliminary version in FOCS '09.

\bibitem[AZ98]{jour/cm/AZ98}
Nina Amenta and G{\"u}nter~M. Ziegler.
\newblock Deformed products and maximal shadows.
\newblock {\em Contemporary Math.}, 223:57--90, 1998.

\bibitem[Bal84]{jour/mor/Balinski84}
Michel~L. Balinski.
\newblock The {H}irsch conjecture for dual transportation polyhedra.
\newblock {\em Math. Oper. Res.}, 9(4):629--633, 1984.

\bibitem[Bal97]{jour/flge/Ball97}
Keith Ball.
\newblock An elementary introduction to modern convex geometry.
\newblock {\em Flavors of geometry}, 31:1--58, 1997.

\bibitem[Bar74]{jour/dm/Barnette74}
David Barnette.
\newblock An upper bound for the diameter of a polytope.
\newblock {\em Discrete Math.}, 10:9--13, 1974.

\bibitem[BCM{\etalchar{+}}15]{jour/siamjc/BCMRR15}
Tobias Brunsch, Kamiel Cornelissen, Bodo Manthey, Heiko R\"oglin, and Clemens
  R\"osner.
\newblock Smoothed analysis of the successive shortest path algorithm.
\newblock {\em SIAM J. Comput.}, 44(6):1798--1819, 2015.
\newblock Preliminary version in SODA `13.

\bibitem[BDLF17]{jour/mapr/BLF17}
Steffen Borgwardt, Jes\'us~A. De~Loera, and Elisabeth Finhold.
\newblock The diameters of network-flow polytopes satisfy the {H}irsch
  conjecture.
\newblock {\em Mathematical Programming}, Jul 2017.

\bibitem[BDSE{\etalchar{+}}14]{jour/dcg/BSEHN14}
Nicolas Bonifas, Marco Di~Summa, Friedrich Eisenbrand, Nicolai H{\"a}hnle, and
  Martin Niemeier.
\newblock On sub-determinants and the diameter of polyhedra.
\newblock {\em Discrete Comput. Geom.}, 52(1):102--115, 2014.
\newblock Preliminary version in SOCG `12.

\bibitem[BF87]{bonnesen1987theory}
Tommy Bonnesen and Werner Fenchel.
\newblock {\em Theory of convex bodies}.
\newblock BCS Associates, 1987.
\newblock Translated from the German and edited by L. Boron, C. Christenson and
  B. Smith.

\bibitem[BFG{\etalchar{+}}00]{col/smo/BFGRW00}
Robert~E. Bixby, Mary Fenelon, Zonghao Gu, Ed~Rothberg, and Roland Wunderling.
\newblock M{IP}: theory and practice---closing the gap.
\newblock In {\em System modelling and optimization ({C}ambridge, 1999)}, pages
  19--49. Kluwer Acad. Publ., Boston, MA, 2000.

\bibitem[BGR15]{conf/stacs/BGR15}
Tobias Brunsch, Anna Gro{\ss}wendt, and Heiko R{\"o}glin.
\newblock Solving totally unimodular {LP}s with the shadow vertex algorithm.
\newblock In {\em Proceedings of the 32nd Symposium on Theoretical Aspects of
  Computer Science}, STACS `15, 2015.

\bibitem[Bix12]{jour/docmath/Bixby12}
Robert~E. Bixby.
\newblock A brief history of linear and mixed-integer programming computation.
\newblock {\em Doc. Math.}, pages 107--121, 2012.

\bibitem[Bla35]{jour/bmsrs/Blaschke35}
Wilhelm Blaschke.
\newblock {INTEGRALGEOMETRIE} 2.
\newblock {\em Bulletin math{\'e}matique de la Soci{\'e}t{\'e} Roumaine des
  Sciences}, 37(1):3--11, 1935.

\bibitem[Bor77]{thesis/Borgwardt77}
Karl-Heinz Borgwardt.
\newblock {\em Untersuchungen zur Asymptotik der mittleren Schrittzahl von
  Simplexverfahren in der linearen Optimierung}.
\newblock PhD thesis, Universit{\"a}t Kaiserslautern, 1977.

\bibitem[Bor82]{jour/zor/Borgwardt82}
Karl-Heinz Borgwardt.
\newblock The average number of pivot steps required by the simplex-method is
  polynomial.
\newblock {\em Z. Oper. Res. Ser. A-B}, 26(5):A157--A177, 1982.

\bibitem[Bor87]{Borgwardt87}
Karl-Heinz Borgwardt.
\newblock {\em The simplex method: A probabilistic analysis}, volume~1 of {\em
  Algorithms and Combinatorics: Study and Research Texts}.
\newblock Springer-Verlag, Berlin, 1987.

\bibitem[Bor99]{jour/mor/Borgwardt99}
Karl-Heinz Borgwardt.
\newblock A sharp upper bound for the expected number of shadow vertices in
  {LP}-polyhedra under orthogonal projection on two-dimensional planes.
\newblock {\em Math. Oper. Res.}, 24(3):544--603, 1999.
\newblock A corrected version of Figure 1 appears in Volume 24(4), pp.
  925�984.

\bibitem[BvdHS06]{jour/combinatorica/BHS06}
Graham Brightwell, Jan van~den Heuvel, and Leen Stougie.
\newblock A linear bound on the diameter of the transportation polytope.
\newblock {\em Combinatorica}, 26(2):133--139, 2006.

\bibitem[Dan48]{report/Dantzig48}
George~B. Dantzig.
\newblock Programming in a linear structure.
\newblock Technical report, U.S. Air Force Comptroller, USAF, Washington, D.C.,
  1948.

\bibitem[Dan51]{col/aapa/Dantzig51}
George~B. Dantzig.
\newblock Maximization of a linear function of variables subject to linear
  inequalities.
\newblock In {\em Activity {A}nalysis of {P}roduction and {A}llocation}, Cowles
  Commission Monograph No. 13, pages 339--347. John Wiley \& Sons, Inc., New
  York, N. Y.; Chapman \& Hall, Ltd., London, 1951.

\bibitem[Dan59]{Dantzig59}
George~B. Dantzig.
\newblock {\em Linear programming and extensions}.
\newblock Princeton university press, 1959.

\bibitem[DF94]{jour/mapr/DF94}
Martin Dyer and Alan Frieze.
\newblock Random walks, totally unimodular matrices, and a randomised dual
  simplex algorithm.
\newblock {\em Mathematical Programming}, 64(1):1--16, 1994.

\bibitem[DGGT15]{devillers2015smoothed}
Olivier Devillers, Marc Glisse, Xavier Goaoc, and R{\'e}my Thomasse.
\newblock On the smoothed complexity of convex hulls.
\newblock In {\em 31st International Symposium on Computational Geometry}.
  Lipics, 2015.

\bibitem[DGGT16]{devillers2016smoothed}
Olivier Devillers, Marc Glisse, Xavier Goaoc, and R{\'e}my Thomasse.
\newblock Smoothed complexity of convex hulls by witnesses and collectors.
\newblock {\em Journal of Computational Geometry}, 7(2):101--144, 2016.

\bibitem[DGKK79]{jour/networks/DGKK79}
Robert Dial, Fred Glover, David Karney, and Darwin Klingman.
\newblock A computational analysis of alternative algorithms and labeling
  techniques for finding shortest path trees.
\newblock {\em Networks}, 9(3):215--248, 1979.

\bibitem[DH16]{jour/dcg/DH16}
Daniel Dadush and Nicolai H\"ahnle.
\newblock On the shadow simplex method for curved polyhedra.
\newblock {\em Discrete Comput. Geom.}, 56(4):882--909, 2016.
\newblock Preliminary version in SOCG `15.

\bibitem[DLKOS09]{jour/jct/LKOS09}
Jes{\'u}s~A. De~Loera, Edward~D. Kim, Shmuel Onn, and Francisco Santos.
\newblock Graphs of transportation polytopes.
\newblock {\em J. Combin. Theory Ser. A}, 116(8):1306--1325, 2009.

\bibitem[DS04]{damerow2004extreme}
Valentina Damerow and Christian Sohler.
\newblock Extreme points under random noise.
\newblock In {\em European Symposium on Algorithms}, pages 264--274. Springer,
  2004.

\bibitem[DS05]{conf/focs/DS05}
Amit Deshpande and Daniel~A. Spielman.
\newblock Improved smoothed analysis of the shadow vertex simplex method.
\newblock In {\em {P}roceedings of the 46th {A}nnual {IEEE} {S}ymposium on
  {F}oundations of {C}omputer {S}cience}, FOCS `05, pages 349--356, 2005.

\bibitem[ERV14]{jour/algorithmica/ERV14}
Matthias Englert, Heiko R\"oglin, and Berthold V\"ocking.
\newblock Worst case and probabilistic analysis of the 2-opt algorithm for the
  {TSP}.
\newblock {\em Algorithmica}, 68(1):190--264, 2014.
\newblock Preliminary version in SODA `07.

\bibitem[EV17]{jour/mapr/EV17}
Friedrich Eisenbrand and Santosh Vempala.
\newblock Geometric random edge.
\newblock {\em Mathematical Programming}, pages 1--15, 2017.

\bibitem[FG92]{jour/mapr/FG92}
John~J. Forrest and Donald Goldfarb.
\newblock Steepest-edge simplex algorithms for linear programming.
\newblock {\em Math. Programming}, 57(3, Ser. A):341--374, 1992.

\bibitem[FHZ11]{conf/stoc/FHZ11}
Oliver Friedmann, Thomas~Dueholm Hansen, and Uri Zwick.
\newblock Subexponential lower bounds for randomized pivoting rules for the
  simplex algorithm.
\newblock In {\em Proceedings of the Forty-third Annual ACM Symposium on Theory
  of Computing}, STOC `11, pages 283--292, New York, NY, USA, 2011. ACM.

\bibitem[FKN90]{fang1990symmetric}
K.T. Fang, S.~Kotz, and K.W. Ng.
\newblock {\em Symmetric Multivariate and Related Distributions}.
\newblock Monographs on Statistics and Applied Probability. Springer US, 1990.

\bibitem[Fri11]{conf/ipco/Friedmann11}
Oliver Friedmann.
\newblock A subexponential lower bound for {Z}adeh's pivoting rule for solving
  linear programs and games.
\newblock In Oktay G{\"u}nl{\"u}k and Gerhard~J. Woeginger, editors, {\em
  Proceedings of 15th International Conference on Integer Programming and
  Combinatorial Optimization}, IPCO `11, pages 192--206, Berlin, Heidelberg,
  2011. Springer Berlin Heidelberg.

\bibitem[GGT91]{jour/mapr/GGT91}
Andrew~V. Goldberg, Michael~D. Grigoriadis, and Robert~E. Tarjan.
\newblock Use of dynamic trees in a network simplex algorithm for the maximum
  flow problem.
\newblock {\em Math. Programming}, 50(3, (Ser. A)):277--290, 1991.

\bibitem[GH90]{jour/mapr/GH90}
Donald Goldfarb and Jianxiu Hao.
\newblock A primal simplex algorithm that solves the maximum flow problem in at
  most {$nm$} pivots and {$O(n^2m)$} time.
\newblock {\em Mathematical Programming}, 47(1):353--365, 1990.

\bibitem[GH92]{jour/algorithmica/GH92}
Donald Goldfarb and Jianxiu Hao.
\newblock Polynomial-time primal simplex algorithms for the minimum cost
  network flow problem.
\newblock {\em Algorithmica}, 8(2):145--160, 1992.

\bibitem[GHK90]{jour/or/GHK90}
Donald Goldfarb, Jianxiu Hao, and Sheng-Roan Kai.
\newblock Efficient shortest path simplex algorithms.
\newblock {\em Oper. Res.}, 38(4):624--628, 1990.

\bibitem[Gol76]{jour/sparsemc/Goldfarb76}
Donald Goldfarb.
\newblock Using the steepest-edge simplex algorithm to solve sparse linear
  programs.
\newblock {\em Sparse matrix computations}, pages 227--240, 1976.

\bibitem[Gol83]{report/Goldfarb83}
Donald Goldfarb.
\newblock Worst case complexity of the shadow vertex simplex algorithm.
\newblock Technical report, Columbia University, New York, 1983.

\bibitem[GS55]{jour/nrlq/GS55}
Saul Gass and Thomas Saaty.
\newblock The computational algorithm for the parametric objective function.
\newblock {\em Naval Res. Logist. Quart.}, 2:39--45, 1955.

\bibitem[GS79]{jour/dam/GS79}
Donald Goldfarb and William~Y. Sit.
\newblock Worst case behavior of the steepest edge simplex method.
\newblock {\em Discrete Appl. Math.}, 1(4):277--285, 1979.

\bibitem[Hai83]{report/Haimovich83}
M.~Haimovich.
\newblock The simplex method is very good! {O}n the expected number of pivot
  steps and related properties of random linear programs.
\newblock Technical report, Columbia University, 1983.

\bibitem[H{\"o}f95]{thesis/Hofner95}
Gabriele H{\"o}fner.
\newblock {\em Lineare Optimierung mit dem Schatteneckenalgorithmus:
  Untersuchungen zum mittleren Rechenaufwand und Entartungsverhalten}.
\newblock PhD thesis, Universit{\"a}t Augsburg, 1995.

\bibitem[Hui18]{thesis/huiberts}
Sophie Huiberts.
\newblock {How Large is the Shadow? Smoothed Analysis of the Simplex Method}.
\newblock Master's thesis, Utrecht University, the Netherlands, 2018.

\bibitem[Hun83]{jour/or/Hung83}
Ming~S. Hung.
\newblock A polynomial simplex method for the assignment problem.
\newblock {\em Oper. Res.}, 31(3):595--600, 1983.

\bibitem[HZ15]{conf/stoc/HZ15}
Thomas~Dueholm Hansen and Uri Zwick.
\newblock An improved version of the random-facet pivoting rule for the simplex
  algorithm.
\newblock In {\em Proceedings of the Forty-seventh Annual ACM Symposium on
  Theory of Computing}, STOC `15, pages 209--218, New York, NY, USA, 2015. ACM.

\bibitem[Jer73]{jour/dm/Jeroslow73}
Robert~G. Jeroslow.
\newblock The simplex algorithm with the pivot rule of maximizing criterion
  improvement.
\newblock {\em Discrete Math.}, 4:367--377, 1973.

\bibitem[Kal92]{conf/stoc/Kalai92}
Gil Kalai.
\newblock A subexponential randomized simplex algorithm (extended abstract).
\newblock In {\em Proceedings of the Twenty-fourth Annual ACM Symposium on
  Theory of Computing}, STOC `92, pages 475--482, New York, NY, USA, 1992. ACM.

\bibitem[Kar84]{jour/comb/Karmarkar84}
Narendra Karmarkar.
\newblock A new polynomial-time algorithm for linear programming.
\newblock {\em Combinatorica}, 4(4):373--395, 1984.

\bibitem[Kha79]{jour/dansssr/Khachiyan79}
Leonid~G. Khachiyan.
\newblock A polynomial algorithm in linear programming.
\newblock {\em Dokl. Akad. Nauk SSSR}, 244(5):1093--1096, 1979.

\bibitem[KK92]{jour/bams/KK92}
Gil Kalai and Daniel~J. Kleitman.
\newblock A quasi-polynomial bound for the diameter of graphs of polyhedra.
\newblock {\em Bull. Amer. Math. Soc. (N.S.)}, 26(2):315--316, 1992.

\bibitem[KM70]{report/KM70}
Victor Klee and George~J. Minty.
\newblock How good is the simplex algorithm.
\newblock Technical report, University of Washington, 1970.

\bibitem[KS06]{conf/stoc/KS06}
Jonathan~A. Kelner and Daniel~A. Spielman.
\newblock A randomized polynomial-time simplex algorithm for linear
  programming.
\newblock In {\em {P}roceedings of the 38th {A}nnual {ACM} {S}ymposium on
  {T}heory of {C}omputing}, STOC `06, pages 51--60. ACM, New York, 2006.

\bibitem[Lar70]{jour/plms/Larman70}
David~G. Larman.
\newblock Paths of polytopes.
\newblock {\em Proc. London Math. Soc. (3)}, 20:161--178, 1970.

\bibitem[Lem65]{jour/ms/Lemke65}
Carlton~E. Lemke.
\newblock Bimatrix equilibrium points and mathematical programming.
\newblock {\em Management science}, 11(7):681--689, 1965.

\bibitem[LS14]{conf/focs/LS14}
Yin~Tat Lee and Aaron Sidford.
\newblock Path finding methods for linear programming: Solving linear programs
  in {$\sqrt{\rm rank}$} iterations and faster algorithms for maximum flow.
\newblock In {\em Proceedings of the {IEEE} 55th Annual Symposium on
  Foundations of Computer Science}, FOCS `14, pages 424--433, Oct 2014.

\bibitem[Mat02]{matouvsek2002lectures}
Ji{\v{r}}{\'\i} Matou{\v{s}}ek.
\newblock {\em Lectures on discrete geometry}, volume 212.
\newblock Springer New York, 2002.

\bibitem[Meg85]{jour/mmor/Megiddo85}
Nimrod Megiddo.
\newblock A note on the generality of the self-dual algorithm with various
  starting points.
\newblock {\em Methods of operations research}, 49:271--275, 1985.

\bibitem[Meg86]{jour/mapr/Megiddo86}
Nimrod Megiddo.
\newblock Improved asymptotic analysis of the average number of steps performed
  by the self-dual simplex algorithm.
\newblock {\em Math. Programming}, 35(2):140--172, 1986.

\bibitem[Meh92]{jour/sjopt/Mehrotra92}
Sanjay Mehrotra.
\newblock On the implementation of a primal-dual interior point method.
\newblock {\em SIAM J. Optim.}, 2(4):575--601, 1992.

\bibitem[MSW96]{jour/algorithmica/MSW96}
Jiri Matou\v{s}ek, Micha Sharir, and Emo Welzl.
\newblock A subexponential bound for linear programming.
\newblock {\em Algorithmica}, 16(4-5):498--516, 1996.

\bibitem[MSW15]{jour/plms/MSW15}
Benjamin Matschke, Francisco Santos, and Christophe Weibel.
\newblock The width of five-dimensional prismatoids.
\newblock {\em Proc. Lond. Math. Soc. (3)}, 110(3):647--672, 2015.

\bibitem[Mur80]{jour/mapr/Murty80}
Katta~G. Murty.
\newblock Computational complexity of parametric linear programming.
\newblock {\em Math. Programming}, 19(2):213--219, 1980.

\bibitem[Nad89]{jour/mapr/Naddef89}
Denis Naddef.
\newblock The {H}irsch conjecture is true for {$(0,1)$}-polytopes.
\newblock {\em Math. Programming}, 45(1, Ser. B):109--110, 1989.

\bibitem[OPT93]{jour/mapr/OPT93}
James~B. Orlin, Serge~A. Plotkin, and \'Eva Tardos.
\newblock Polynomial dual network simplex algorithms.
\newblock {\em Math. Programming}, 60(3, Ser. A):255--276, 1993.

\bibitem[Orl84]{report/Orlin84}
James~B. Orlin.
\newblock Genuinely polynomial simplex and non-simplex algorithms for the
  minimum cost flow problem.
\newblock Technical report, Massachusetts Institute of Technology, 1984.

\bibitem[PY15]{jour/mor/PY15}
Ian Post and Yinyu Ye.
\newblock The simplex method is strongly polynomial for deterministic markov
  decision processes.
\newblock {\em Mathematics of Operations Research}, 40(4):859--868, 2015.
\newblock Preliminary version in SODA `14.

\bibitem[Ren88]{jour/mapr/Renegar88}
James Renegar.
\newblock A polynomial-time algorithm, based on {N}ewton's method, for linear
  programming.
\newblock {\em Math. Programming}, 40(1, (Ser. A)):59--93, 1988.

\bibitem[San12]{jour/am/Santos12}
Francisco Santos.
\newblock A counterexample to the {H}irsch conjecture.
\newblock {\em Ann. of Math. (2)}, 176(1):383--412, 2012.

\bibitem[Sch14]{thesis/Schnalzger14}
Emanuel Schnalzger.
\newblock {\em Lineare Optimierung mit dem Schatteneckenalgorithmus im Kontext
  probabilistischer Analysen}.
\newblock PhD thesis, Universit{\"a}t Augsburg, 2014.
\newblock Original in German. English translation by K.H. Borgwardt available
  at
  \url{https://www.math.uni-augsburg.de/prof/opt/mitarbeiter/Ehemalige/borgwardt/Downloads/Abschlussarbeiten/Doc_Habil.pdf}.

\bibitem[Sha87]{jour/ms/Shamir87}
Ron Shamir.
\newblock The efficiency of the simplex method: a survey.
\newblock {\em Management Sci.}, 33(3):301--334, 1987.

\bibitem[Sma83]{jour/mapr/Smale83}
Steve Smale.
\newblock On the average number of steps of the simplex method of linear
  programming.
\newblock {\em Mathematical programming}, 27(3):241--262, 1983.

\bibitem[SST06]{jour/siamjmaa/SST06}
Arvind Sankar, Daniel~A. Spielman, and Shang-Hua Teng.
\newblock Smoothed analysis of the condition numbers and growth factors of
  matrices.
\newblock {\em SIAM J. Matrix Anal. Appl.}, 28(2):446--476, 2006.

\bibitem[ST03]{jour/mapr/ST03}
Daniel~A. Spielman and Shang-Hua Teng.
\newblock Smoothed analysis of termination of linear programming algorithms.
\newblock {\em Math. Program.}, 97(1-2, Ser. B):375--404, 2003.
\newblock ISMP, 2003 (Copenhagen).

\bibitem[ST04]{jour/jacm/ST04}
Daniel~A. Spielman and Shang-Hua Teng.
\newblock Smoothed analysis of algorithms: why the simplex algorithm usually
  takes polynomial time.
\newblock {\em J. ACM}, 51(3):385--463 (electronic), 2004.

\bibitem[Suk18]{jour/dcg/sukegawa18}
Noriyoshi Sukegawa.
\newblock An asymptotically improved upper bound on the diameter of polyhedra.
\newblock {\em Discrete \& Computational Geometry}, pages 1--10, Jun 2018.

\bibitem[Tod86]{jour/mapr/Todd86}
Michael~J. Todd.
\newblock Polynomial expected behavior of a pivoting algorithm for linear
  complementarity and linear programming problems.
\newblock {\em Mathematical Programming}, 35(2):173--192, 1986.

\bibitem[Tod14]{jour/sjdm/Todd14}
Michael~J. Todd.
\newblock An improved {K}alai--{K}leitman bound for the diameter of a
  polyhedron.
\newblock {\em SIAM Journal on Discrete Mathematics}, 28(4):1944--1947, 2014.

\bibitem[Ver09]{jour/siamjc/Vershynin09}
Roman Vershynin.
\newblock Beyond {H}irsch conjecture: walks on random polytopes and smoothed
  complexity of the simplex method.
\newblock {\em SIAM J. Comput.}, 39(2):646--678, 2009.
\newblock Preliminary version in FOCS `06.

\bibitem[Ye11]{jour/mor/Ye11}
Yinyu Ye.
\newblock The simplex and policy-iteration methods are strongly polynomial for
  the {M}arkov decision problem with a fixed discount rate.
\newblock {\em Mathematics of Operations Research}, 36(4):593--603, 2011.

\end{thebibliography}

\end{document}